\RequirePackage{fix-cm}
\documentclass[12pt,a4paper]{article}

\usepackage{authblk}
\usepackage[T1]{fontenc}
\usepackage{lmodern}
\usepackage[a4paper,margin=1in]{geometry}
\usepackage{amsmath,amssymb,amsthm,mathtools,mathrsfs}
\usepackage{array,booktabs}
\usepackage{graphicx,caption}
\usepackage{float}
\usepackage{needspace}
\usepackage{placeins}
\usepackage{xcolor}
\usepackage{enumitem}
\usepackage{tikz}
\usetikzlibrary{arrows.meta,positioning,fit}
\usepackage{hyperref}

\definecolor{LinkBlue}{RGB}{0,76,153}
\hypersetup{
  colorlinks=true,
  linkcolor=LinkBlue,
  citecolor=LinkBlue,
  urlcolor=LinkBlue,
  linktoc=all,
  bookmarksopen=true,
  bookmarksnumbered=true,
  pdftitle={Systematic construction of multiqubit unextendible product bases},
  pdfsubject={Unextendible product bases and unextendible orthogonal matrices},
  pdfkeywords={unextendible product basis, unextendible orthogonal matrix,
  size spectrum, one-factorization, matching, finite verification}
}

\setlist{itemsep=.25ex,topsep=.5ex}
\newtheorem{theorem}{Theorem}
\newtheorem{lemma}[theorem]{Lemma}
\newtheorem{proposition}[theorem]{Proposition}
\newtheorem{corollary}[theorem]{Corollary}
\theoremstyle{definition}
\newtheorem{definition}[theorem]{Definition}
\newtheorem{example}[theorem]{Example}

\theoremstyle{remark}
\newtheorem{remark}[theorem]{Remark}
\newtheorem{question}[theorem]{Question}
\newtheorem{memo}[theorem]{Memo}
\theoremstyle{plain}

\newtheorem{conjecture}[theorem]{Conjecture}

\newcommand{\R}{\mathcal R}
\newcommand{\F}{\mathcal F}
\newcommand{\Sfam}{\mathscr S}
\newcommand{\bcap}{\operatorname{bcap}}

\newcommand{\Comp}{\operatorname{Comp}}
\newcommand{\Sym}{\operatorname{Sym}}

\newcommand{\disj}{\mathbin{\dot\cup}}

\newcommand{\certsite}{\href{https://github.com/chechche123/upb-research}{online resource}~\cite{UPBOnlineSource}}

\newcommand{\bcj}{\begin{conjecture}}
\newcommand{\ecj}{\end{conjecture}}
\newcommand{\bcr}{\begin{corollary}}
\newcommand{\ecr}{\end{corollary}}
\newcommand{\bd}{\begin{definition}}
\newcommand{\ed}{\end{definition}}
\newcommand{\bea}{\begin{eqnarray}}
\newcommand{\eea}{\end{eqnarray}}
\newcommand{\bem}{\begin{enumerate}}
\newcommand{\eem}{\end{enumerate}}
\newcommand{\bex}{\begin{example}}
\newcommand{\eex}{\end{example}}
\newcommand{\bim}{\begin{itemize}}
\newcommand{\eim}{\end{itemize}}
\newcommand{\bl}{\begin{lemma}}
\newcommand{\el}{\end{lemma}}
\newcommand{\bma}{\begin{bmatrix}}
\newcommand{\ema}{\end{bmatrix}}
\newcommand{\bpf}{\begin{proof}}
\newcommand{\epf}{\end{proof}}
\newcommand{\bpp}{\begin{proposition}}
\newcommand{\epp}{\end{proposition}}
\newcommand{\bqu}{\begin{question}}
\newcommand{\equ}{\end{question}}
\newcommand{\br}{\begin{remark}}
\newcommand{\er}{\end{remark}}
\newcommand{\bt}{\begin{theorem}}
\newcommand{\et}{\end{theorem}}
\newcommand{\bmm}{\begin{memo}}
\newcommand{\emm}{\end{memo}}
\newcommand{\btb}{\begin{tabular}}
\newcommand{\etb}{\end{tabular}}
\newcommand{\beq}{\begin{equation}}
\newcommand{\eeq}{\end{equation}}
\newcommand{\bal}{\begin{aligned}}
\newcommand{\eal}{\end{aligned}}

\let\MentorTexta\a
\let\MentorTextb\b
\let\MentorTextc\c
\let\MentorTextd\d
\let\MentorTexti\i
\let\MentorTextk\k
\let\MentorTextl\l
\let\MentorTextr\r
\let\MentorTextt\t
\let\MentorTextu\u
\let\MentorTextL\L
\let\MentorTextO\O
\let\MentorTextP\P
\let\MentorTextS\S
\DeclareRobustCommand{\a}{\ifmmode\alpha\else\expandafter\MentorTexta\fi}
\DeclareRobustCommand{\b}{\ifmmode\beta\else\expandafter\MentorTextb\fi}

\DeclareRobustCommand{\d}{\ifmmode\delta\else\expandafter\MentorTextd\fi}
\newcommand{\e}{\epsilon}

\DeclareRobustCommand{\t}{\ifmmode\theta\else\expandafter\MentorTextt\fi}
\DeclareRobustCommand{\i}{\ifmmode\iota\else\expandafter\MentorTexti\fi}
\DeclareRobustCommand{\k}{\ifmmode\kappa\else\expandafter\MentorTextk\fi}
\DeclareRobustCommand{\l}{\ifmmode\lambda\else\expandafter\MentorTextl\fi}

\newcommand{\x}{\xi}

\DeclareRobustCommand{\r}{\ifmmode\rho\else\expandafter\MentorTextr\fi}

\DeclareRobustCommand{\u}{\ifmmode\upsilon\else\expandafter\MentorTextu\fi}

\DeclareRobustCommand{\c}{\ifmmode\chi\else\expandafter\MentorTextc\fi}

\DeclareRobustCommand{\L}{\ifmmode\Lambda\else\expandafter\MentorTextL\fi}
\newcommand{\X}{\Xi}
\DeclareRobustCommand{\P}{\ifmmode\Pi\else\expandafter\MentorTextP\fi}
\DeclareRobustCommand{\S}{\ifmmode\Sigma\else\expandafter\MentorTextS\fi}

\DeclareRobustCommand{\O}{\ifmmode\Omega\else\expandafter\MentorTextO\fi}

\newcommand{\Dbar}{\leavevmode\lower.6ex\hbox to 0pt{\hskip-.23ex\accent"16\hss}D}

\begin{document}

\title{Systematic construction of multiqubit unextendible product bases}

\author{%
Caohan Cheng$^{1,*}$ and
Lin Chen$^{1,*,\dagger}$\\
{\normalsize\itshape
$^{1}$LMIB(Beihang University), Ministry of Education,\\
School of Mathematical Sciences, Beihang University,
Beijing 100191, China\\
{\upshape$^{\dagger}$Corresponding author:
\href{mailto:linchen@buaa.edu.cn}{linchen@buaa.edu.cn}}}
}

\date{}

\maketitle

\begin{abstract}
The construction of multiqubit unextendible product bases (UPBs) has been
a long-standing problem.  We give a systematic construction of multiqubit
UPBs by studying their unextendible orthogonal matrices (UOMs) with
graph-theoretic methods.  Row decompositions of small UOMs produce the main
intervals.  Completion graphs settle the remaining small UPB sizes, except
for one size obtained from a good column structure.  Nonexistence of the
other UPB sizes follows from counts of fibres and row pairs.  The
computer is used only for a few finite UOM checks, whose exact tests are
stated in the certificate appendix.  These arguments determine the complete
multiqubit UPB size spectrum in every dimension.
\end{abstract}

Keyword: multiqubit system, unextendible product basis, graph theory

\section{Introduction}
\label{sec:introduction}

Quantum-information processing such as teleportation \cite{Bennett1993Teleporting} relies heavily on multiqubit systems \cite{Horodecki2009Quantum}. The realization of multiqubit states has received a lot of attentions in past decades \cite{Ransford_2026}. Theoretically, multiqubit positive-partial-transpose (PPT) entangled states \cite{Horodecki1998Mixed} can be constructed by multiqubit unextendible product bases (UPBs).
Actually, UPBs were firstly proposed to show the nonlocality without entanglement \cite{bennett1999quantum,bennett1999unextendible,divincenzo2003unextendible}. Bipartite UPBs have been used to construct genuinely entangled subspaces \cite{demianowicz2018unextendible,wang20194,Demianowicz2024Completely}, which further help construct distillable entanglement across every bipartition  \cite{Agrawal2019Genuinely}. Besides, 
multipartite UPBs can be used to construct completely entangled subspace \cite{bhat2006completely} and study Bell inequalities with no quantum violation  \cite{Augusiak2011Bell,PhysRevA.85.042113}. Recently, the so-called  
strong quantum non-
locality for UPB has been proposed in heterogeneous systems \cite{Shi2022Strong,Shi2022Strongly,He2024Strong}.

The construction of multiqubit UPBs have received lots of attentions over the past decades. Three-qubit UPBs of size four are the firstly constructed three-qubit UPB 
\cite{bravyi2004unextendible}, used to construct some three-qubit PPT entangled states of rank four. Further, four-qubit PPT entangled symmetric states have been constructed  \cite{tura2012four}. On the other hand, multiqubit UPBs of minimum size and up to seven qubits have been also studied \cite{johnston2013minimum, johnston2014structure,chen2015minimum}. Further, a full classification of orthogonal product bases of four-qubit system has been given \cite{chen2017orthogonal}, continued with a method of formally orthogonal matrices  \cite{chen2018multiqubit}. Next, it has been shown that $n$-qubit unextendible
product bases of size $2^n-5$ do not exist \cite{chen2018nonexistence}. Seven-qubit UPB of size ten has been constructed \cite{wang2020construction}. UPBs can also be built from tile structures and their local entanglement-assisted distinguishability \cite{Shi2020Unextendible,You2023Unextendible}. In spite of these achievements, a complete determination of the possible multiqubit UPB sizes is still missing and remains a long-standing challenge. This is the main motivation and problem studied in this paper. The construction of UPBs has been intimately related to graph theory \cite{Alon2001Unextendible,feng2006unextendible,shi2023graph}, and we will present some novel notions from graph theory for the construction of UPBs.

For the convenience of readers, we briefly introduce the findings of this
paper.  An $m\times n$ UOM is the formal-matrix representation of an
$n$-qubit UPB with $m$ product vectors; the precise definition and its
relation with UPBs are given in Section~\ref{sec:preliminaries}
\cite{chen2018multiqubit}.  For every positive integer $n$, we set
\begin{equation}
 \Theta_n=\{m\in\mathbb Z_{>0}:\text{there is an }m\times n\text{ UOM}\}.
 \label{eq:intro-theta}
\end{equation}
Every $m\in\Theta_n$ satisfies $n+1\le m\le 2^n$.  In this paper a complete
orthogonal product basis is also regarded as a UOM, so that
$2^n\in\Theta_n$ \cite{chen2017orthogonal}.  If the term UPB is used only
for an incomplete basis, one should delete $2^n$ from each spectrum below.

For integers $a\leq b$, write
$[a,b]=\{a,a+1,\ldots,b\}$; for a positive integer $s$, abbreviate
$[s]=[1,s]$.  Every interval throughout this paper is an integer interval.

Several parts of the answer were known earlier.  For $n\ge9$, the minimum
size is $n+1$ for odd $n$, $n+2$ when $n\equiv2\pmod4$, and $n+4$ when
$n\equiv0\pmod4$ \cite{johnston2013minimum}.  In the odd case, the size
$n+2$ is impossible \cite{johnston2014structure}.  At the other end of the
spectrum, for $n\ge3$, the sizes $2^n-4$ and $2^n$ occur, whereas
$2^n-5,2^n-3,2^n-2$, and $2^n-1$ do not occur
\cite{johnston2014structure,chen2018nonexistence}.
The spectra for $n\le4$ were determined in
\cite{johnston2014structure}.  Chen and \DJ okovi\'c completed the cases
$n=5,6,8$ in \cite{chen2018multiqubit}.  For $n=7$, the last two missing
sizes, $10$ and $11$, were supplied in
\cite{wang2020construction,SunWang2024}.  The new constructions and
obstructions in this paper determine all the sizes left between these known
endpoints.  The first display below records the earlier results; the remaining
three formulas are completed in this paper.

\begin{theorem}
\label{thm:complete-spectrum}
\begin{enumerate}[label=(\roman*)]
\item
For $1\le n\le8$,
\begin{equation}
\begin{aligned}
\Theta_1&=\{2\},\qquad
\Theta_2=\{4\},\qquad
\Theta_3=\{4,8\},\\
\Theta_4&=[6,10]\cup\{12,16\},\\
\Theta_5&=\{6\}\cup[8,26]\cup\{28,32\},\\
\Theta_6&=[8,58]\cup\{60,64\},\\
\Theta_7&=\{8\}\cup[10,122]\cup\{124,128\},\\
\Theta_8&=[11,250]\cup\{252,256\}.
\end{aligned}
 \label{eq:small-spectra}
\end{equation}
\item
For odd $n\ge9$,
\begin{equation}
 \begin{aligned}
 \Theta_n
  &=\{n+1\}\cup[n+3,2^n-6]\\
  &\quad\cup\{2^n-4,2^n\}.
 \end{aligned}
 \label{eq:odd-spectrum}
\end{equation}
\item
For $n\ge10$ with $n\equiv2\pmod4$,
\begin{equation}
 \begin{aligned}
 \Theta_n
  &=\{n+2\}\cup[n+4,2^n-6]\\
  &\quad\cup\{2^n-4,2^n\}.
 \end{aligned}
 \label{eq:two-spectrum}
\end{equation}
\item
For $n\ge12$ with $n\equiv0\pmod4$,
\begin{equation}
 \begin{aligned}
 \Theta_n
  &=[n+4,2^n-6]\\
  &\quad\cup\{2^n-4,2^n\}.
 \end{aligned}
 \label{eq:zero-spectrum}
\end{equation}
\end{enumerate}
\qed
\end{theorem}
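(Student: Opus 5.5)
Part (i) is the earlier literature: $n\le4$ from \cite{johnston2014structure}, $n=5,6,8$ from \cite{chen2018multiqubit}, and $n=7$ completed by \cite{wang2020construction,SunWang2024}. For $n\ge9$ we prove two inclusions: every listed size is realized by a UOM, and no other size occurs. The nonexistence half is mostly known. The minimum size $n+1$, $n+2$ or $n+4$ is from \cite{johnston2013minimum}. The impossibility of $n+2$ for odd $n$, and the exclusion of $2^n-5,2^n-3,2^n-2,2^n-1$, are from \cite{johnston2014structure,chen2018nonexistence}. The points $2^n-4$ and $2^n$ are known to occur. The genuinely new nonexistence statement is therefore the missing value $n+3$ in case (iii), i.e.\ $m=n+3$ for $n\equiv2\pmod4$. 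It will come from counting the fibres of each column (the classes of rows sharing a vector or its orthogonal partner in that column) and the row pairs each column separates. Every pair of distinct rows must be orthogonal in at least one column. The counting should show that $m=n+3$ rows cannot be separated by $n$ columns under the parity constraints forced by $n\equiv2\pmod4$, in the same spirit as the minimum-size arguments.

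The existence half reduces to filling $[n+3,2^n-6]$ (odd $n$), $[n+4,2^n-6]$ (even $n$), plus the minimum sizes already constructed. I would induct on $n$ with two operations.

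\emph{Column-splitting (row decomposition).} Take UOMs $A$ with $a$ rows and $B$ with $b$ rows on $n-1$ qubits. Put $A$ over $B$ and add a new column whose entry is $x$ on the $A$-rows and $x^\perp$ on the $B$-rows. This gives an $n$-qubit UOM of size $a+b$. The sumset $\Theta_{n-1}+\Theta_{n-1}$ then covers most of $[2n+2,2^n-8]$. The top sizes $2^n-8,2^n-6$ come from sums such as $(2^{n-1})+(2^{n-1}-6)$, and the gaps near $2^{n-1}\pm$ small are handled by mixing in sizes such as $2^{n-1}-4$.

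\emph{Small sizes.} The range $[n+3,2n+1]$ cannot be obtained by such sums. For these sizes I would build UOMs directly from a one-factorization of $K_{2k}$. Each perfect matching determines a column that separates the matched pairs. Extra columns are then adjoined from a completion graph so that no product vector orthogonal to all rows remains, with a parameter that adds one row at a time.

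I expect the main obstacle to be the small end near the minimum. Unextendibility there is delicate: one must show that every choice of one vector per column is non-orthogonal to some row. This is a covering condition on the fibres and is not monotone under adding rows. A plausible failure is a single residual size, for instance $n+5$ in one residue class, where the matching construction does not close. For that case I would try a dedicated column structure, taking a known small UOM and duplicating columns with generic new vectors. Finite base cases ($n=9,10,11,12$) would be verified as explicit UOMs by an exact orthogonality and unextendibility check before the induction is run.
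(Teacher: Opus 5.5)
Your reduction is sound. Part~(i), the minimum sizes, the exclusion of $n+2$ for odd $n$, and the behaviour near $2^n$ are in the literature. Your stacking operation is the paper's direct sum (Lemma~\ref{lem:direct-sum}); through Lemma~\ref{lem:upper-bootstrap} it gives the top of the spectrum, as in the paper.

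\textbf{The existence gap below $2n$.} Everything below about $2n$ is where the paper's work lies, and your plan there is not a construction.
\begin{itemize}
\item A completion graph supplies rows, not columns.
\item A maximal clique cannot be enlarged, so there is no ``parameter that adds one row at a time.'' You would need, for every $n$, maximal completion cliques of every order from $3$ up to about $n$. Nothing in your sketch produces them, and such orders need not form an interval: for the five-point kernel they are $\{1\}\cup[4,19]$.
\end{itemize}
The paper gets the low range from three ingredients you are missing.
\begin{enumerate}
\item Theorem~\ref{thm:square-transfer}(i) shows that a maximal clique for a fixed $5\times5$ square kernel stays maximal after the near-one-factorization is lifted to any odd $q\ge11$. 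One finite computation then yields $[n+4,n+19]$ for all odd $n\ge11$.
\item The block cover condition and the compressed joining of two inputs let odd-order kernels reach even $n$.
\item A fixed kernel only gives a bounded window of offsets. So a multi-input construction is needed: about $n/6$ six-block inputs, joined by weighted added columns and controlled by the joint block capacity bound in Theorem~\ref{thm:row-decomposition-construction}(i). This fills up to $2n$.
\end{enumerate}
Beyond these, the endpoints $n+3,n+4,n+6$ come from the padded $Y_5$ and the singleton $(q+1)\times q$ UOM, compressed when $n$ is even. The size $n+5$ for $n\equiv0\pmod4$ comes from the good-column lift of an explicit $13\times8$ matrix. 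Your fallback fails: appending a fresh-symbol column, or a relabelled copy of a column, to the $(q+1)\times q$ singleton UOM makes it extendible. Your seam also leaves gaps. For $n\equiv1\pmod4$, your sums from $\Theta_{n-1}$ start at $2n+6$, so $[2n+2,2n+5]$ is covered by neither mechanism. The paper treats $n=9,10,12,14,16,18$ separately, not only $9$--$12$.

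\textbf{The nonexistence gap.} For the one new exclusion, $n+3\notin\Theta_n$ with $n\equiv2\pmod4$, row-pair counting alone cannot succeed. Orthogonal $(n+3)\times n$ matrices exist for all $n\ge3$, and a column with large fibres covers order $n^2$ pairs, so unextendibility must enter the count.
\begin{itemize}
\item The missing lemma is Lemma~\ref{even:lem-union-inequality}: in an $(n+3)\times n$ UOM, $i$ fibres from distinct columns cover at most $i+2$ rows. Otherwise the leftover rows can be assigned to distinct unused columns, giving a full fibre cover.
\item This caps fibres at three rows and forbids three disjoint doubletons from distinct columns.
\item Mate completeness (a fibre without a mate fibre lets you alter one row into an extension) together with $m=4k+5$ odd forces an odd number of doubletons in every column.
\item Only then is a triple-free column with $f$ doubletons limited to $2k+2+f$ pairs, against the requirement $\binom{4k+5}{2}=(4k+2)(2k+3)+2k+4$.
\end{itemize}
The slack is still positive. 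The proof therefore runs through column types: no triple, two triples in one column, or at most one triple per column. It bounds the number of three-doubleton columns by Lemma~\ref{even:lem-complete-transversal-certificate}, itself a finite search, and in the tight cases it counts repeated pairs.

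Finally, $n=10$ is not covered by the general count. There four three-doubleton columns exactly meet the requirement, $70+8=78$, so Theorem~\ref{thm:ten-boundary} needs three further exhaustive searches. Your sketch names the right objects but none of these steps.
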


Theorem~\ref{thm:complete-spectrum} determines the possible cardinalities
for every number of qubits.  The equivalence classes at each cardinality
remain a separate problem.  The earlier continuous interval in
\cite[Theorem~4]{johnston2014structure} starts at
$(n^2+3n-30)/2$ for $n\ge7$.
In \eqref{eq:odd-spectrum}--\eqref{eq:zero-spectrum}, the continuous
interval starts at $n+3$, $n+4$, and $n+4$, respectively.
Each new starting point is exact: the preceding integer is excluded by
the known minimum or parity result, or by
Theorems~\ref{thm:even-obstruction} and \ref{thm:ten-boundary}.
Thus the remaining problem is at the lower end, even though the earlier
interval already contains almost all sizes as $n$ grows.

Theorem~\ref{thm:row-decomposition-construction} gives a condition under
which row decompositions can be joined.  Its part~(ii) gives a necessary
and sufficient condition when the added columns come from matchings.
The condition uses the joint block capacity in
\eqref{eq:joint-capacity}.
It records which blocks can be covered in the original columns with
disjoint column supports.  The two-input construction for
$(N+1)\times N$ UOMs in \cite[Proposition~14]{chen2018multiqubit}
is the case with singleton blocks and matching added columns.
Theorem~\ref{thm:row-decomposition-construction} permits larger blocks,
more inputs, and weighted added columns.
These choices give the two-, four-, and six-block constructions.

Theorem~\ref{thm:square-transfer}(i) gives a second construction.
A maximal completion clique of size $c$ for a square kernel of odd order
$p$ gives a $(q+c)\times q$ UOM for every $q\in Q_p$ in
\eqref{eq:Qp}.  The proof shows that the extended clique is still maximal
in the larger completion graph.  It also gives a decomposition satisfying
the block cover condition in Definition~\ref{def:block-cover-condition}.
Part~(ii) combines two such inputs and reduces the number of added columns.
Its capacity bound is an application of
Theorem~\ref{thm:row-decomposition-construction}(i).
The completion-clique sizes therefore give whole ranges of UOM sizes.
The padding and good-column constructions supply the remaining endpoints.

Theorems~\ref{thm:even-obstruction} and \ref{thm:ten-boundary} give the
new exclusion $n+3\notin\Theta_n$ for $n\ge10$ with
$n\equiv2\pmod4$.  This exclusion applies to every UOM.
The fibre-union bound in Lemma~\ref{even:lem-union-inequality} restricts
the column types.  Their capacities bound the total number of orthogonal
row pairs.  The equality cases require a further count of pairs repeated
in different columns.  The finite matching checks enter only after these
reductions.  These exclusions, the existence constructions, and the known
endpoint results give the complete spectrum.

For $n\ge10$ with $n\equiv2\pmod4$, restricting
\eqref{eq:two-spectrum} to the interval $[n+2,2^n-6]$ gives
\[
 \Theta_n\cap[n+2,2^n-6]
 =[n+2,2^n-6]\setminus\{n+3\}.
\]
Hence $n+3$ is the only missing integer between the minimum $n+2$ and the
last point $2^n-6$ of the continuous interval.  The two further sizes
$2^n-4$ and $2^n$ are the isolated upper points in
\eqref{eq:two-spectrum}.
Figure~\ref{fig:intro-spectrum-atlas} summarizes the three spectra. Further, Table~\ref{tab:intro-result-guide} gives a short guide to the main methods.
The last column points to the figures which explain the corresponding
construction or synthesis.

\begin{table}[htbp]
\centering
\renewcommand{\arraystretch}{1.12}
\begin{tabular}{@{}
>{\raggedright\arraybackslash}p{.25\textwidth}
>{\raggedright\arraybackslash}p{.37\textwidth}
>{\raggedright\arraybackslash}p{.27\textwidth}@{}}
\toprule
Method & Main result & Visual guide \\
\midrule
Completion graphs
 & Lemma~\ref{lem:full-link}
 & Figures~\ref{fig:completion-example} and \ref{fig:full-link-picture} \\
Row decompositions
 & Theorem~\ref{thm:row-decomposition-construction}
 & Figures~\ref{fig:two-added-columns} and \ref{twon:fig-four-group-lift} \\
Square kernels
 & Theorem~\ref{thm:square-transfer}
 & Figures~\ref{fig:group-perfect-square-core-lift} and
   \ref{fig:group-perfect-two-parent-join} \\
Odd padding
 & Theorem~\ref{thm:odd-padding}
 & Figure~\ref{fig:group-perfect-universal-odd-lift} \\
Interval synthesis
 & Theorem~\ref{thm:existence-intervals}
 & Figure~\ref{fig:proof-architecture} \\
\bottomrule
\end{tabular}
\caption{Where the principal constructions and their pictures are used.}
\label{tab:intro-result-guide}
\end{table}

\begingroup
\newcommand{\SpecDot}[2]{\fill (#1,#2) circle[radius=1.8pt];}
\newcommand{\SpecCross}[2]{%
  \draw[line width=.75pt] (#1-.085,#2-.085) -- (#1+.085,#2+.085);
  \draw[line width=.75pt] (#1-.085,#2+.085) -- (#1+.085,#2-.085);}
\newcommand{\SpecTick}[3]{%
  \draw[line width=.38pt] (#1,#2-.08) -- (#1,#2+.08);
  \node[tick] at (#1,#2-.28) {#3};}
\newcommand{\LowerAxis}[1]{%
  \draw[axis] (-3.55,#1) -- (2.35,#1);
  \draw[axis] (2.75,#1) -- (6.45,#1);
  \node[font=\rmfamily\scriptsize,inner sep=0pt] at (2.55,#1) {$\cdots$};
  \SpecTick{-3.15}{#1}{$n+1$}
  \SpecTick{-1.90}{#1}{$n+2$}
  \SpecTick{-.65}{#1}{$n+3$}
  \SpecTick{.60}{#1}{$n+4$}
  \SpecTick{6.15}{#1}{$2^n-6$}}
\newcommand{\LowerInterval}[2]{%
  \draw[present] (#1,#2) -- (2.35,#2);
  \draw[present] (2.75,#2) -- (6.15,#2);
  \SpecDot{#1}{#2}
  \SpecDot{6.15}{#2}}

\begin{figure}[!t]
\centering
\begin{tikzpicture}[
  x=.92cm,y=1cm,
  font=\rmfamily\small,
  axis/.style={draw=black!65,line width=.42pt},
  present/.style={draw=black,line width=2.2pt,line cap=butt},
  tick/.style={font=\rmfamily\scriptsize,inner sep=0pt},
  class/.style={font=\rmfamily\bfseries\small,anchor=east,align=right},
  panel/.style={font=\rmfamily\bfseries\small,anchor=west}
]
\node[panel] at (-6.55,4.55) {(a) Lower end, separated by the parity of $n$};

\def\yodd{3.55}
\node[class] at (-4.05,\yodd) {odd $n\ge9$};
\LowerAxis{\yodd}
\SpecDot{-3.15}{\yodd}
\SpecCross{-1.90}{\yodd}
\LowerInterval{-.65}{\yodd}

\def\ytwo{2.25}
\node[class] at (-4.05,\ytwo) {$n\equiv2\pmod4$\\$n\ge10$};
\LowerAxis{\ytwo}
\SpecCross{-3.15}{\ytwo}
\SpecDot{-1.90}{\ytwo}
\SpecCross{-.65}{\ytwo}
\LowerInterval{.60}{\ytwo}

\def\yzero{.95}
\node[class] at (-4.05,\yzero) {$n\equiv0\pmod4$\\$n\ge12$};
\LowerAxis{\yzero}
\SpecCross{-3.15}{\yzero}
\SpecCross{-1.90}{\yzero}
\SpecCross{-.65}{\yzero}
\LowerInterval{.60}{\yzero}

\node[panel] at (-6.55,-.05) {(b) The common upper end for all $n\ge9$};
\def\yupper{-1.05}
\draw[axis] (-3.45,\yupper) -- (6.45,\yupper);
\SpecTick{-3.15}{\yupper}{$2^n-6$}
\SpecTick{-1.60}{\yupper}{$2^n-5$}
\SpecTick{-.05}{\yupper}{$2^n-4$}
\SpecTick{1.50}{\yupper}{$2^n-3$}
\SpecTick{3.05}{\yupper}{$2^n-2$}
\SpecTick{4.60}{\yupper}{$2^n-1$}
\SpecTick{6.15}{\yupper}{$2^n$}
\SpecDot{-3.15}{\yupper}
\SpecCross{-1.60}{\yupper}
\SpecDot{-.05}{\yupper}
\SpecCross{1.50}{\yupper}
\SpecCross{3.05}{\yupper}
\SpecCross{4.60}{\yupper}
\SpecDot{6.15}{\yupper}
\end{tikzpicture}
\caption{The exact spectra for $n\ge9$.  Panel (a) shows the lower end and
the continuous interval ending at $2^n-6$; the centered ellipsis suppresses
the long middle part of each axis.  Panel (b) shows the upper end, which is the same in all three
parity classes.  A thick segment contains every integer between its
endpoints, a filled point is included, and a cross is excluded.  These are
the three formulas in \eqref{eq:odd-spectrum}--\eqref{eq:zero-spectrum}.}
\label{fig:intro-spectrum-atlas}
\end{figure}
\endgroup
\FloatBarrier

\begingroup
The rest of the paper is organized as follows.
Section~\ref{sec:preliminaries} introduces UOMs and fibres in
Definitions~\ref{def:uom}(iii) and \ref{def:equality-fibre}(i), proves the fibre
cover test in Theorem~\ref{thm:fibre-cover}, and collects the graph tools in
Theorem~\ref{thm:standard-graph-tools} before defining row decompositions,
added column systems, completion graphs, and square kernels in
Definitions~\ref{def:row-decomposition}, \ref{def:added-column-system},
\ref{def:completion-graph}(ii), and \ref{def:square-kernel}.
Figures~\ref{fig:completion-example} and \ref{fig:full-link-picture} show
the completion-graph correspondence.
Section~\ref{sec:master-constructions} proves the row-decomposition
construction in Theorem~\ref{thm:row-decomposition-construction} and the
square-kernel transfer in Theorem~\ref{thm:square-transfer}, including its
compressed two-input form in part~(ii).  These constructions are pictured in
Figures~\ref{fig:two-added-columns},
\ref{fig:group-perfect-square-core-lift}, and
\ref{fig:group-perfect-two-parent-join}.
Section~\ref{sec:existence-synthesis} records the finite inputs in
Lemmas~\ref{lem:five-kernel-cliques} and
\ref{lem:certified-row-decompositions}, then combines the square-kernel
ranges, padding in Theorem~\ref{thm:odd-padding}, the six-block construction,
the \(n+5\) construction, and the upper interval in
Corollaries~\ref{cor:low-kernel-intervals} and
\ref{cor:padded-endpoints}, Proposition~\ref{prop:six-block-interval}, and
Theorems~\ref{thm:n-plus-five-endpoint} and \ref{thm:uniform-tail}.  The
result is Theorem~\ref{thm:existence-intervals};
Figure~\ref{fig:group-perfect-universal-odd-lift} shows the padding step,
and Figure~\ref{fig:proof-architecture} shows how the component ranges
overlap.
Section~\ref{sec:nonexistence} establishes the two new obstructions in
Theorems~\ref{thm:even-obstruction} and \ref{thm:ten-boundary}.
Section~\ref{sec:finite-ledger} handles the remaining finite dimensions in
Theorems~\ref{thm:finite-new-certificates} and
\ref{thm:finite-boundary-spectra}, and completes the spectra.

Appendix~\ref{app:row-decomposition-constructions} gives the two-, four-, and
six-block constructions and the upper interval; see
Figures~\ref{twon:fig-four-group-lift} and
\ref{fig:row-replacement-picture}.  Appendix~\ref{app:square-kernel}
proves the square-kernel lemmas and odd padding.  Appendix~\ref{app:good-column-lift}
proves the good-column lift and verifies its \(13\times8\) initial matrix.
Appendices~\ref{even:sec-global-obstruction} and \ref{sec:13-by-10} contain
the full proofs of Theorems~\ref{thm:even-obstruction} and
\ref{thm:ten-boundary}, respectively.  Appendix~\ref{app:certificate-contracts}
states the exact finite tests; Appendix~\ref{app:explicit-matrices} gives
three matrices.  The data and three programs are in the \certsite.
\par\endgroup
\section{Preliminaries}
\label{sec:preliminaries}

The fibre cover test in Theorem~\ref{thm:fibre-cover} is the main tool
for unextendibility.  It uses row orthogonality from
Definition~\ref{def:uom}(i) and fibres from
Definition~\ref{def:equality-fibre}(i).
Definition~\ref{def:graph-notation}(i)--(iv) fixes the graph terms.
The remaining notions prepare the two constructions:
row decompositions with added columns, and square kernels with completion graphs.

For a positive integer $k$, put
\begin{equation}
 \mathbb Z_k=\{0,\ldots,k-1\},
 \label{eq:basic-index-notation}
\end{equation}
where all calculations in $\mathbb Z_k$ are taken modulo $k$.  For a finite
set $V$, let $\Sym(V)$ denote its permutation group.  When $V=[k]$, write
$\Sym_k=\Sym([k])$; thus $\Sym_k$ permutes the whole set $[k]$, not the
one-element set $\{k\}$.
The notation $A\disj B$ is used for the union of two disjoint sets.  If $t$
is a positive integer and $A,B$ are sets of integers, put
\[
 A+B=\{a+b:a\in A,\ b\in B\},
 \qquad
 tA=\underbrace{A+\cdots+A}_{t\text{ times}}.
\]

\subsection{Formal matrices and fibres}

Every column has its own alphabet.  This is the set of symbols allowed in
that column.  The symbols are divided into pairs.  We call them mate pairs.
If $x$ is one symbol in a pair, the other one is denoted by $\bar x$.
When more symbols are needed, take a pair not used earlier in this column.
The mate rule is
\begin{equation}
 \bar{\bar x}=x,
 \qquad
 \bar x\ne x.
 \label{eq:mate-rule}
\end{equation}
Alphabets in different columns are separate.  The same printed name in two
columns does not give any relation between their symbols.

\medskip
\noindent\emph{Formal rows.}
A formal row with $N$ columns is a tuple $r=(r_1,\ldots,r_N)$.  Its entry
$r_j$ belongs to the alphabet of column $j$.  A list of formal rows is a
formal matrix.  The row set of a formal matrix $X$ is denoted by $R(X)$.

\begin{definition}
\label{def:uom}
\begin{enumerate}[label=(\roman*)]
\item Two rows $r$ and $s$ are orthogonal, written $r\perp s$, if
\begin{equation}
 r_j=\overline{s_j}
 \qquad\text{for some column }j.
 \label{eq:row-orthogonal}
\end{equation}
\item A formal matrix is orthogonal if every two different rows are
orthogonal.
\item An orthogonal formal matrix $X$ is an \emph{unextendible orthogonal
matrix} (UOM) if
\[
 \nexists\,y\quad \text{such that} \quad y\perp r\text{ for every }r\in R(X).
\]
\end{enumerate}
\qed
\end{definition}

\medskip
\noindent\emph{Realizations.}
A formal matrix records only the mate relation.  It does not yet assign
qubit states.  A realization supplies this assignment.  Each symbol $x$ in
column $j$ is replaced by a qubit ray $\phi_j(x)$, where a qubit ray means a
one-dimensional subspace of $\mathbb C^2$.  Mate symbols must give orthogonal
rays:
\[
 \phi_j(\bar x)=\phi_j(x)^\perp.
\]
The realization is called generic if
\[
 \phi_j(x)\ne\phi_j(y),
 \qquad
 \phi_j(x)\not\perp\phi_j(y)
\quad\text{if }y\notin\{x,\bar x\}.
\]
Under a generic realization, the rows of a UOM form a multiqubit UPB.
Conversely, every multiqubit UPB has a UOM description
\cite{chen2018multiqubit}.  The rest of the paper uses these formal matrices.

The next definition records the row sets that one entry of an extension
can cover.  It leads to Theorem~\ref{thm:fibre-cover}.

\begin{definition}
\label{def:equality-fibre}
\begin{enumerate}[label=(\roman*)]
\item Let $X$ have $N$ columns. For a symbol $a$ in the alphabet of column
$j$, we put
\begin{equation}
\begin{aligned}
 F_{X,j}(a)&=\{r\in R(X):r_j=a\},\\
 \F_{X,j}&=\{F_{X,j}(a):a\text{ is in the alphabet of column }j,\\[-1mm]
 &\hspace{42mm}F_{X,j}(a)\ne\varnothing\}.
\end{aligned}
 \label{eq:equality-fibre}
\end{equation}
We call $F_{X,j}(a)$ the fibre of $a$ in column $j$.  It may be empty.
The family $\F_{X,j}$ consists of the nonempty fibres in column $j$.
For fixed $j$, its members form a partition of $R(X)$:
\[
 R(X)=\bigsqcup_{F\in\F_{X,j}}F.
\]

\item For $T\subseteq R(X)$, a set $S\subseteq[N]$ is a fibre support of
$T$ if one fibre from each column in $S$ can be chosen, and they cover
$T$. The family of all fibre supports is
\begin{equation}
 \Sfam_X(T)
  =\bigl\{S\subseteq[N]:
       \exists F_j\in\F_{X,j}\ (j\in S),\ 
       T\subseteq\bigcup_{j\in S}F_j\bigr\}.
\label{eq:support-family}
\end{equation}

\Needspace{6\baselineskip}
\item If $T$ has a fibre support, its cover cost is the least size of one:
\begin{equation}
 \kappa_X(T)=\min\{|S|:S\in\Sfam_X(T)\}.
 \label{eq:cover-cost}
\end{equation}
If $T$ has no fibre support, put $\kappa_X(T)=\infty$.
\end{enumerate}
\qed
\end{definition}

A row $r$ belongs to the fibre $F_{X,j}(r_j)$.
The family $\F_{X,j}$ contains these fibres, rather than their individual
rows; see Definition~\ref{def:equality-fibre}(i).

Here is an example of fibre support. Let
$X=\left(\begin{smallmatrix}a&b\\a&\bar b\\\bar a&\bar b\end{smallmatrix}\right)$.
Number its rows by $r_1,r_2,r_3$.  Take $T=R(X)$.  Then
$F_{X,1}(a)=\{r_1,r_2\}$ and
$F_{X,2}(\bar b)=\{r_2,r_3\}$.  These two fibres cover $T$, so
$S=\{1,2\}$ is a fibre support of $T$ by
Definition~\ref{def:equality-fibre}(ii).  No single fibre covers
all three rows.  Thus $\kappa_X(T)=2$. Note that the set $S$ consists of column numbers.

Next, a formal row $y=(y_1,\ldots,y_N)$ extends $X$ if $y\perp r$ for every
$r\in R(X)$, as in Definition~\ref{def:uom}(iii).  Fix column $j$.  The entry
$y_j$ has only one mate, $\bar y_j$.  Thus, in this column, $y$ is
orthogonal exactly to the rows in $F_{X,j}(\bar y_j)$; this fibre may be
empty.  This fact gives the following test.

\Needspace{7\baselineskip}
\begin{theorem}
\phantomsection
\label{thm:fibre-cover}
\noindent\textup{(i)}
An orthogonal $m\times N$ matrix $X$ is extendible if and only if the
following condition holds:

\begin{equation}
\begin{gathered}
 R(X)=\bigcup_{j\in J}F_{X,j}(a_j)
 \quad\text{for some }J\subseteq[N],\\
 \text{with at most one fibre from each column}.
\end{gathered}
 \label{eq:fibre-cover}
\end{equation}

\medskip
\noindent\textup{(ii)}
Put $\R_0=\{\varnothing\}$. We introduce two families
\[
\begin{aligned}
 \R_1={}&\{\varnothing\}\cup\F_{X,1},\\
 \R_2={}&\{\varnothing\}\cup\F_{X,1}\cup\F_{X,2}\\
 &\quad\cup\{F_1\cup F_2:
 F_1\in\F_{X,1},\ F_2\in\F_{X,2}\}.
\end{aligned}
\]
In general,
\[
 \R_j=
 \left\{\bigcup_{\ell\in S}F_\ell:
 S\subseteq[j],\ F_\ell\in\F_{X,\ell}\text{ for every }\ell\in S\right\},
\]
where the union is $\varnothing$ when $S=\varnothing$.
Equivalently, these families satisfy
\begin{equation}
 \R_j=\R_{j-1}\cup
 \{U\cup F:U\in\R_{j-1},\ F\in\F_{X,j}\},
 \quad 1\le j\le N.
 \label{eq:fibre-dp}
\end{equation}
Then
\begin{equation}
 X\text{ is a UOM}
 \quad\Longleftrightarrow\quad
 X\text{ is orthogonal and }R(X)\notin\R_N.
 \label{eq:fibre-dp-uom}
\end{equation}
\end{theorem}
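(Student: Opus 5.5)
For part~(i), I will build extensions from fibre covers and read fibre covers off extensions, using the observation made just before the statement: in column $j$, an entry $y_j$ is orthogonal to exactly the rows of $F_{X,j}(\bar y_j)$.

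\emph{Extension gives a cover.} Suppose $y$ extends $X$. For each row $r$ there is a column $j$ with $r_j=\overline{y_j}$, so $r\in F_{X,j}(\bar y_j)$. Hence
\[
 R(X)=\bigcup_{j\in[N]}F_{X,j}(\bar y_j).
\]
Let $J$ be the set of $j$ with $F_{X,j}(\bar y_j)\neq\varnothing$ and put $a_j=\bar y_j$. This is a cover as in \eqref{eq:fibre-cover}, with one fibre from each column of $J$.

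\emph{Cover gives an extension.} Suppose a cover with at most one fibre $F_{X,j}(a_j)$ per column $j\in J$ is given. For $j\in J$ set $y_j=\bar a_j$. For $j\notin J$ choose $y_j$ arbitrarily; this column does not need to contribute orthogonality. A fresh symbol is always admissible, since a new mate pair may be added to the alphabet. Every row lies in some $F_{X,j}(a_j)$, so $r_j=a_j=\overline{y_j}$ and $y\perp r$. Thus $y$ extends $X$. The only point needing care is that each column has exactly one entry, which is why at most one fibre per column is allowed. The mate rule \eqref{eq:mate-rule} gives $\bar{\bar a}=a$, so setting $y_j=\bar a_j$ is consistent.

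For part~(ii), I first prove the recursion \eqref{eq:fibre-dp} from the explicit description of $\R_j$ by induction on $j$. Take a union indexed by $S\subseteq[j]$. If $j\notin S$, the union lies in $\R_{j-1}$. If $j\in S$, it equals $U\cup F_j$ with $U\in\R_{j-1}$ indexed by $S\setminus\{j\}$ and $F_j\in\F_{X,j}$. Both steps reverse, so the two descriptions agree.

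Then $R(X)\in\R_N$ says exactly that $R(X)$ is a union of nonempty fibres, at most one from each column. By part~(i) this is equivalent to extendibility. Nonempty fibres suffice because empty fibres can be dropped from any cover. Therefore $X$ is a UOM if and only if $X$ is orthogonal and $R(X)\notin\R_N$, which is \eqref{eq:fibre-dp-uom}. I expect no real obstacle here; the only subtle step is the bookkeeping that arbitrary entries are allowed in unused columns.
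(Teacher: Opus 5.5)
Your proposal is correct and follows essentially the same argument as the paper. An extension row's entries select the covering fibres $F_{X,j}(\bar y_j)$; conversely, setting $y_j=\bar a_j$ on the chosen columns gives an extension (the paper fills unused columns with a fresh mate pair, and your arbitrary choice works equally well). Part (ii) is the same induction on the last column combined with part (i), and your remark that empty fibres can be dropped, so that covers in (i) correspond exactly to members of $\R_N$, makes explicit a step the paper leaves implicit.
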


\begin{proof}
For (i), an extension row $y$ gives the cover in \eqref{eq:fibre-cover}.
Indeed, \eqref{eq:row-orthogonal} puts each row $r$ in
$F_{X,j}(\bar y_j)$ for some $j$.
These fibres therefore cover $R(X)$.

Conversely, suppose that \eqref{eq:fibre-cover} holds.
Set $y_j=\bar a_j$ in each selected column.
Fill every unused position with a symbol from a new mate pair.
The chosen fibres show that $y$ is orthogonal to every row of $X$.

For (ii), \eqref{eq:fibre-dp} lists exactly the unions obtained from the
first $j$ columns.  This is the induction claim.
It holds at $j=0$.
If column $j$ is unused, the union remains in $\R_{j-1}$.
Otherwise, it is $U\cup F$ for some $U\in\R_{j-1}$ and
$F\in\F_{X,j}$.
These are exactly the choices in \eqref{eq:fibre-dp}, so induction gives
the claim.  Part (i) now gives \eqref{eq:fibre-dp-uom}.
\end{proof}

\Needspace{10\baselineskip}
\begin{example}
Let
\[
 X_0=\begin{pmatrix}a&b\\ a&\bar b\end{pmatrix}.
\]
The rows are orthogonal in the second column. In the first column,
$F_{X_0,1}(a)=R(X_0)$, so Theorem~\ref{thm:fibre-cover}(i) says that $X_0$ is
extendible.  Indeed, if $c$ is a new symbol in the second column, then
$(\bar a,c)$ is orthogonal to both rows.  Thus an orthogonal matrix need not
be a UOM.
\end{example}

In the finite calculation, repeated row sets in \eqref{eq:fibre-dp}
are listed once.  Appendix~\ref{app:certificate-contracts} gives the
implementation of \eqref{eq:fibre-dp-uom} and the checks used later.

\subsection{Graph notations and results}

All graphs in this paper are simple and undirected.  Those used in the
following results are finite.

\medskip
\noindent\emph{Basic graph notation.}
For a graph $G$, its vertex and edge sets are denoted by $V(G)$ and $E(G)$.
The degree of a vertex $v$ is $\deg_G(v)$.  The maximum vertex degree is
$\Delta(G)$.  When $u$ and $v$ are joined, write $u\sim v$; the edge
$\{u,v\}$ may also be written as $uv$.  For $U\subseteq V(G)$, the set
$N_G(U)$ contains the vertices joined to at least one vertex of $U$.  A
$d$-regular graph has degree $d$ at every vertex.

On a finite vertex set $V$, the complete graph is denoted by $K_V$.  We use
$K_n$ when the vertex set is $[n]$.  The complete bipartite graph with parts
$A$ and $B$ is $K_{A,B}$.  If $|A|=r$ and $|B|=s$, the notation
$K_{r,s}[A,B]$ may be used instead.  The symbol $K_{q_1,\ldots,q_t}$ denotes
the complete multipartite graph with part sizes $q_1,\ldots,q_t$.  Its
vertices are joined exactly when they lie in different parts.

If $H$ is a subgraph of $G$, the graph $G-H$ is obtained by deleting the
edges of $H$.  The notation $mG$ means the disjoint union of $m$ copies of
$G$, while $C_\ell$ is a cycle of length $\ell$.  An even cycle with parts
$A$ and $B$ may also be written as $C_{2\ell}[A,B]$.

\begin{definition}
\label{def:graph-notation}
\begin{enumerate}[label=(\roman*)]
\item A matching is a set of edges with no common endpoints.  A matching
covering every vertex is a one-factor, or perfect matching.  A
one-factorization partitions $E(G)$ into one-factors. When $|V(G)|$ is odd,
a near-one-factor misses one vertex, and a near-one-factorization partitions
$E(G)$ into such factors.

\item A clique is a set of pairwise joined vertices. 
It is inclusion-maximal if no outside vertex can be added.  Throughout the paper,
``maximal'' has this meaning, not largest.  The order of a clique $C$ is
$|C|$.

\item An edge colouring assigns a colour to every edge.  It is proper if
edges with a common endpoint have different colours.  The edges with one
fixed colour form its colour class.  In a proper edge colouring, every
nonempty colour class is a matching.

The chromatic index $\chi'(G)$
is the least number of colours in a proper edge colouring.

\item For an indexed family $\mathcal E=(E_i)_{i\in I}$ of edge sets on one
vertex set, a transversal matching of size $r$ consists of distinct indices
$i_1,\ldots,i_r$ and pairwise disjoint edges $e_j\in E_{i_j}$.  The indices
remain distinct even if some members of $\mathcal E$ are equal.
\end{enumerate}
\qed
\end{definition}
A transversal matching also requires distinct family indices; see
Definition~\ref{def:graph-notation}(iv).
For example, $\{12,34\}$ is a matching in $E_1=\{12,34\}$.
It cannot give a transversal matching of size two if $E_1$ is the only
indexed set.

The graph $G=K_4$ gives an example of all four parts of the definition.
Its vertex set is a maximal clique of order four by
Definition~\ref{def:graph-notation}(ii).

Put
$M_1=\{12,34\}$, $M_2=\{13,24\}$, and $M_3=\{14,23\}$.  Each $M_i$
is a one-factor.  These three sets form a one-factorization of $G$ by
Definition~\ref{def:graph-notation}(i). Assign colour $i$ to the edges
of $M_i$.  This gives a proper edge colouring with three colour classes by
Definition~\ref{def:graph-notation}(iii).  For
Definition~\ref{def:graph-notation}(iv), take two indexed copies
$E_1=E_2=\{12,34\}$.  Then $12\in E_1$ and $34\in E_2$ form a
transversal matching of size $2$.  The edge sets are equal.  Their indices
are different.

For a family of colour classes, the distinct indices in
Definition~\ref{def:graph-notation}(iv) mean distinct colours.

For later use, we collect the following standard results from graph
theory.

\begin{theorem}
\label{thm:standard-graph-tools}
Let $h,q$ and $t$ be positive integers, and let $d$ be a nonnegative
integer.
\begin{enumerate}[label=(\roman*),itemsep=.15em,parsep=0pt]
\item The graph $K_{2h}$ has a one-factorization into $2h-1$ one-factors;
one may require it to contain any prescribed one-factor $M$
\cite{LovaszPlummer1986,feng2006unextendible}.

\item The graph $K_{2h+1}$ has a near-one-factorization into $2h+1$ factors,
labelled by their missed vertices \cite{LovaszPlummer1986}.

\item Every finite $d$-regular bipartite graph has a one-factorization into
$d$ one-factors \cite{LovaszPlummer1986}.

\item Let $t\ge2$.  If $qt$ is even, then the complete multipartite graph
with equal part sizes
\begin{equation}
 K_{\underbrace{q,\ldots,q}_{t\text{ parts}}}
 \label{eq:equal-multipartite-graph}
\end{equation}
has a one-factorization with $q(t-1)$ one-factors
\cite{HoffmanRodger1992}.

\item Every finite simple graph $G$ has a proper edge colouring with at most
$\Delta(G)+1$ colours \cite{Vizing1964}.

\item Let $\mathcal E$ be a nonempty family of edges in a bipartite graph.
If every two edges in $\mathcal E$ have a common endpoint, then there is
one vertex which is an endpoint of every edge in $\mathcal E$.
\end{enumerate}
\end{theorem}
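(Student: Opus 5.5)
Theorem~\ref{thm:standard-graph-tools} collects standard facts, so the plan is to cite the classical sources for (i)--(v) and give a short self-contained argument for (vi), which is the only item not literally a textbook theorem.

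For (i), I would use the round-robin construction. Place vertex $2h$ at the centre and $\mathbb Z_{2h-1}$ on a circle, and let
\[
 M_i=\{\{i,2h\}\}\cup\{\{i-k,i+k\}:1\le k\le h-1\}, \qquad i\in\mathbb Z_{2h-1}.
\]
Each $M_i$ is a one-factor, and every edge lies in exactly one $M_i$. For the prescribed factor $M$, note that $\Sym_{2h}$ acts transitively on one-factors of $K_{2h}$. Apply a permutation sending $M_0$ to $M$; this gives a one-factorization containing $M$.

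For (ii), delete the centre from the same construction on $K_{2h+2}$. Then $M_i$ minus its central edge misses exactly $i$, which gives the labelling by missed vertices.

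For (iii), use K\"onig's theorem. A $d$-regular bipartite graph with $d\ge1$ satisfies Hall's condition, since $d|U|\le d|N(U)|$. So it has a perfect matching, and removing it leaves a $(d-1)$-regular bipartite graph. Induct on $d$; the case $d=0$ is trivial.

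For (iv), I would simply cite Hoffman--Rodger. The graph is $q(t-1)$-regular on $qt$ vertices, and the case $t$ even with $q$ odd needs their construction, so I would not reprove it.

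For (v), cite Vizing's theorem.

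For (vi), take an edge $uv\in\mathcal E$. Every other edge meets $\{u,v\}$. Suppose some edge $e$ avoids $u$ and some edge $f$ avoids $v$. Then $e=vx$ and $f=uy$ with $x\ne u$ and $y\ne v$. The edges $e$ and $f$ must share an endpoint. The possibility $x=y$ would create the triangle $u,v,x$, which is impossible in a bipartite graph. The possibilities $v=y$ and $x=u$ are excluded by hypothesis, and $u\ne v$ since $uv$ is an edge. This is a contradiction. So either all edges contain $u$ or all contain $v$.

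The main obstacle is only the prescribed-factor clause in (i) and the existence in (iv), and both are handled by the symmetry argument or by citation.
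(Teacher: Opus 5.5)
Your proposal is correct and follows the paper's route. Items (i)--(v) are cited as standard, and your round-robin construction, the relabelling for the prescribed factor, and the deletion of the centre for (ii) are exactly the paper's two remarks. Your argument for (vi) rests on the same fact as the paper's, namely that a bipartite graph has no triangle; the paper phrases it by fixing two edges through a common vertex $v$ and noting that an edge avoiding $v$ would have to join two vertices of the same part.

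One side remark about (iv) should be corrected. The case $t$ even is the easy one: factor $K_t$ by (i), then factor each copy of $K_{q,q}$ by (iii). The case $q$ even also follows elementarily, by halving the parts and using (i) with the prescribed factor pairing the halves. Citing Hoffman--Rodger, as the paper does, is still perfectly adequate.
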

Parts (i)--(v) are standard.
The prescribed factor in (i) is obtained by relabelling a fixed
one-factorization.  Deleting one vertex from a factorization of
$K_{2h+2}$ gives (ii).

\begin{proof}[Proof of {\rm(vi)}]
The case of at most one edge is immediate.  Otherwise, choose two
edges in $\mathcal E$ with common endpoint $v$.  Every edge in
$\mathcal E$ contains $v$.  Indeed, an edge not containing
$v$ would have to join the other endpoints of these two edges.  Those
endpoints lie in the same part of the bipartite graph, a contradiction.
\end{proof}

\begin{example}
The three sets
\[
 \{12,34\},\qquad \{13,24\},\qquad \{14,23\}
\]
form a one-factorization of $K_4$.  Indeed, every set is a matching which
covers all four vertices, and every edge of $K_4$ appears in exactly one
set.  Apart from the single edge $K_2$, this is the smallest example of
Theorem~\ref{thm:standard-graph-tools} (i).
\end{example}

\begin{lemma}
\label{lem:balanced-color-classes}
Let $G$ be a finite graph and let $k$ be a positive integer.  If $G$ has
a proper edge colouring using at most $k$ colours, then $G$ also has a
proper edge colouring with $k$ colour classes
$P_1,\ldots,P_k$ such that
\begin{equation}
 \bigl||P_i|-|P_j|\bigr|\le1
 \qquad(i,j\in[k]).
 \label{eq:balanced-color-classes}
\end{equation}
where some colour classes are allowed to be empty.
\end{lemma}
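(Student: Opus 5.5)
This is the classical equitable edge-colouring argument (McDiarmid–de Werra), and I would prove it by a potential-function argument. Start from the given proper colouring with at most $k$ colours. Pad it with empty classes so that there are exactly $k$ colour classes $P_1,\ldots,P_k$; the colouring remains proper, since empty classes impose no condition. Among all proper colourings of $G$ with $k$ (possibly empty) classes, I would choose one minimizing $\Phi=\sum_{i=1}^k|P_i|^2$. Such a minimizer exists because there are finitely many colourings and the set is nonempty. The aim is then to show that a minimizer satisfies \eqref{eq:balanced-color-classes}.

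Suppose instead that some pair has $|P_i|\ge|P_j|+2$. Consider the subgraph $H$ with edge set $P_i\cup P_j$. Each of $P_i$ and $P_j$ is a matching, so every vertex of $H$ has degree at most $2$. Hence every component of $H$ is a path or an even cycle, with edges alternating between $P_i$ and $P_j$. Two cases arise.
\begin{enumerate}[label=(\alph*)]
\item In an even cycle, the two colours occur equally often.
\item In a path, the counts of the two colours differ by at most one.
\end{enumerate}
Since $|P_i|>|P_j|$, summing over components shows that some component is a path $Q$ that has exactly one more $P_i$-edge than $P_j$-edge. Such a path starts and ends with $P_i$-edges.

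Now swap the colours $i$ and $j$ along $Q$. I must check that the result is still proper. Inside $Q$ the colours still alternate. At an endpoint $v$ of $Q$, $v$ has degree one in $H$, so no other edge of $P_j$ meets $v$. Therefore recolouring the end edge of $Q$ to $j$ creates no conflict there. Edges of other colours are unaffected, so the new colouring is proper with the same $k$ classes.

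The swap changes the class sizes to $|P_i|-1$ and $|P_j|+1$. The change in $\Phi$ is
\[
(|P_i|-1)^2+(|P_j|+1)^2-|P_i|^2-|P_j|^2=2(|P_j|-|P_i|)+2\le-2<0 .
\]
This contradicts the minimality of $\Phi$, so the minimizer satisfies \eqref{eq:balanced-color-classes}.

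The only point requiring care, which I expect to be the main obstacle, is the component analysis. One must justify that an unbalanced component exists and that it is a path with both end edges coloured $i$, so that the swap preserves properness at the endpoints. Everything else in the argument is routine.
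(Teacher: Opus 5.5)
Your proof is correct and is essentially the paper's argument: the same two-colour subgraph $P_i\cup P_j$, the same path component with one extra $i$-edge whose colours are swapped, and the same potential $\sum_c|P_c|^2$. The only difference is presentational: you take a minimizer of the potential, whereas the paper repeats the exchange and uses strict decrease of the potential to show that the process terminates.
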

\begin{proof}
The proof corrects a violation of \eqref{eq:balanced-color-classes} by
exchanging two colours along a path.  First find a path with one extra edge in the larger class.
Next check that the exchange preserves properness.
Finally, strict decrease of the nonnegative integer in
\eqref{eq:balanced-color-potential} will show that the exchanges stop with
\eqref{eq:balanced-color-classes}.

Indeed, add empty classes if needed, so that there are $k$ classes
$P_1,\ldots,P_k$.
If \eqref{eq:balanced-color-classes} fails, choose two colour classes with
$|P_i|\ge |P_j|+2$.

The exchange needs a path with one more edge of colour $i$ than of
colour $j$.  Consider the subgraph $H$ on $V(G)$ with edge set
\begin{eqnarray}
\label{eq:Pi cup Pj}
P_i\cup P_j.
\end{eqnarray}
Every vertex of $H$ has degree at most two.
Indeed, each colour class is a matching by
Definition~\ref{def:graph-notation}(iii).
Each nontrivial component, meaning one with an edge, is therefore a path
or a cycle.  Properness makes the two colours alternate.
Thus every cycle in this two-colour subgraph has even length.

\Needspace{10\baselineskip}
We claim that there are at least two paths with the required excess. Actually, for each component $C$, put
$d(C)=|E(C)\cap P_i|-|E(C)\cap P_j|$.
Alternation gives $d(C)=0$ for cycles and
$d(C)\in\{-1,0,1\}$ for paths.
Isolated vertices contribute zero.
Each positive summand is one.
The component sum therefore proves the claim:
\begin{equation*}
 \sum_C d(C)=|P_i|-|P_j|\ge2.
 \tag{\ref*{eq:Pi cup Pj}a}\label{eq:component-imbalance}
\end{equation*}

\Needspace{9\baselineskip}
We now change the colouring to reduce the difference between the two class sizes.
Choose one component path $C$ with $d(C)=1$ from
\eqref{eq:component-imbalance}.
On this path, change every edge of colour $i$ to colour $j$.
Change every edge of colour $j$ to colour $i$ at the same time.
All edges outside $C$ keep their colours.
Since $d(C)=1$, this exchange reduces $|P_i|$ by one.
It increases $|P_j|$ by one.

The new colouring is proper by Definition~\ref{def:graph-notation}(iii).
Indeed, each internal vertex of $C$ still meets both colours.
An endpoint meets no other edge of colour $i$ or $j$, because $C$ is a whole
component of $H$.
Edges of other colours are unchanged.

Repeat this change whenever \eqref{eq:balanced-color-classes} fails.
At each step, choose two classes with $|P_i|\ge |P_j|+2$.
To show that this process stops, define
\begin{equation}
 Q(P_1,\ldots,P_k)=\sum_{c=1}^k|P_c|^2.
 \label{eq:balanced-color-potential}
\end{equation}
Put $a=|P_i|$ and $b=|P_j|$ before an exchange.
The two class sizes become $a-1$ and $b+1$.
All other class sizes stay the same.
Thus the decrease in \eqref{eq:balanced-color-potential} is
\[
 a^2+b^2-(a-1)^2-(b+1)^2=2(a-b-1)>0.
\]
The number $Q$ is a nonnegative integer, so it cannot strictly decrease forever.
Hence the process stops after finitely many exchanges.
At that point, \eqref{eq:balanced-color-classes} holds, since an exchange is
possible whenever it fails.
\end{proof}

\begin{lemma}
\label{lem:split-factorization}
Let $S$ and $B$ be disjoint sets with even sizes $r,b>0$ respectively, where $r\le b$. Let $H$ be a simple $(b-r)$-regular graph on $B$.  Then
\begin{equation}
 K_{S,B}\cup H
 \label{eq:split-factorization}
\end{equation}
has a one-factorization consisting of $b$ one-factors.
\end{lemma}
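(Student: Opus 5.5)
The plan is to count what each one-factor must look like, then build the factors in two stages: first the $H$-part by a balanced edge colouring, then the $K_{S,B}$-part by a bipartite one-factorization.

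\emph{Degrees and the shape of a factor.} Write $G=K_{S,B}\cup H$. A vertex of $S$ has degree $b$. A vertex of $B$ has degree $r+(b-r)=b$. So $G$ is $b$-regular on $r+b$ vertices, an even number, and a one-factorization must have exactly $b$ factors. There are no edges inside $S$. Hence every one-factor matches $S$ into $B$ by exactly $r$ edges of $K_{S,B}$, and covers the remaining $b-r$ vertices of $B$ by $(b-r)/2$ edges of $H$. The aim is therefore to split $E(H)$ into $b$ matchings $P_1,\ldots,P_b$, each with exactly $(b-r)/2$ edges. Each $P_i$ then misses a set $T_i\subseteq B$ with $|T_i|=r$. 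The second task is to decompose $K_{S,B}$ into perfect matchings $S\to T_i$, one for each $i$.

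\emph{Step 1: the $H$-part.} By Theorem~\ref{thm:standard-graph-tools}(v), $H$ has a proper edge colouring with at most $(b-r)+1$ colours. Since $r\ge2$, this number is at most $b$. Lemma~\ref{lem:balanced-color-classes} with $k=b$ then gives $b$ colour classes $P_1,\ldots,P_b$ whose sizes differ by at most one. Their total size is $|E(H)|=b(b-r)/2$, so every class has exactly $(b-r)/2$ edges, as required. A vertex $v\in B$ lies on $b-r$ edges of $H$, each in a distinct class. Hence $v$ is missed by exactly $r$ of the $P_i$, that is, $v$ lies in exactly $r$ of the sets $T_i$.

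\emph{Step 2: the $K_{S,B}$-part.} Form the bipartite graph $\Gamma$ with parts $B$ and $[b]$, joining $v$ to $i$ exactly when $v\in T_i$. By Step~1, $\Gamma$ is $r$-regular on both sides. Theorem~\ref{thm:standard-graph-tools}(iii) decomposes $\Gamma$ into $r$ perfect matchings, which we read as bijections $\sigma_1,\ldots,\sigma_r\colon B\to[b]$. Write $S=\{s_1,\ldots,s_r\}$, and give the edge $s_kv$ the colour $\sigma_k(v)$. We then check three things.
\begin{itemize}
\item Each $s_k$ sees every colour exactly once, because $\sigma_k$ is a bijection.
\item A vertex $v$ sees the colours $\sigma_1(v),\ldots,\sigma_r(v)$. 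These are distinct, since the $\sigma_k$ come from disjoint matchings of $\Gamma$, and they are exactly the neighbours of $v$ in $\Gamma$, namely $\{i:v\in T_i\}$.
\item Colour class $i$ consists of one edge $s_kv$ for each $k$, with $v=\sigma_k^{-1}(i)\in T_i$. These $v$ are distinct, since they are distinct $\Gamma$-neighbours of $i$. So colour $i$ is a perfect matching between $S$ and $T_i$.
\end{itemize}
Set $M_i=P_i\cup(\text{colour class } i)$. Each $M_i$ covers $S\cup T_i\cup(B\setminus T_i)$ exactly once, and together the $M_i$ partition $E(G)$. This gives the required one-factorization into $b$ one-factors.

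\emph{Main obstacle.} The delicate point is Step~1: every $H$-matching must have exactly the size $(b-r)/2$. This is where Vizing's bound $\Delta(H)+1\le b$ (using $r\ge2$) and the balancing Lemma~\ref{lem:balanced-color-classes} are needed. Once the sizes are exact, the regularity of $\Gamma$, and hence the bipartite step, is automatic.
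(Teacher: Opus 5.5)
Your proof is correct and follows essentially the same route as the paper. Both arguments take a Vizing colouring of $H$ and balance it with Lemma~\ref{lem:balanced-color-classes} into $b$ matchings of size exactly $(b-r)/2$. Both then factor the $r$-regular bipartite graph between colours and vertices of $B$ (joining each colour to the vertices it misses) into $r$ one-factors labelled by $S$, and use these to colour the edges of $K_{S,B}$.
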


The proof in Appendix~\ref{app:square-kernel} first balances the colour
classes using Theorem~\ref{thm:standard-graph-tools}(v) and
Lemma~\ref{lem:balanced-color-classes}.
It then extends each class to a one-factor.

\subsection{Row decompositions and added columns}

An input UOM is a UOM used to construct a larger one.  Its existing columns
are called original columns.  The rows are divided into groups, which will be
treated as blocks when new columns are added.

\begin{definition}
\label{def:row-decomposition}
A row decomposition of an input UOM $X$ is a partition
\begin{equation}
 R(X)=A_1\disj\cdots\disj A_q,
 \label{eq:row-decomposition}
\end{equation}
into nonempty sets.  The sets $A_i$ are called decomposition blocks.
\qed
\end{definition}

\medskip
\noindent\emph{Block supports.}
For $I\subseteq[q]$, write
\begin{equation}
\Sfam_X(I)=\Sfam_X\!\left(\bigcup_{i\in I}A_i\right).
\end{equation}
The family on the right is the family of fibre supports in
Definition~\ref{def:equality-fibre}(ii).
Here \eqref{eq:row-decomposition} specifies the rows to be covered.

\medskip
\noindent\emph{Joint block capacity.}
Let $X^{(1)},\ldots,X^{(t)}$ have the same $N$ original columns and $q$
decomposition blocks.  Give the input UOMs disjoint alphabets.  Their
joint block capacity is
\begin{equation}
\begin{aligned}
 &\bcap(X^{(1)},\ldots,X^{(t)})\\
 &\quad=\max\Bigl\{\sum_{v=1}^t|I_v|:
 \begin{array}{l}
 I_v\subseteq[q],\ S_v\in\Sfam_{X^{(v)}}(I_v),\\[-1mm]
 S_u\cap S_v=\varnothing\quad(u\ne v)
 \end{array}\Bigr\}.
\end{aligned}
\label{eq:joint-capacity}
\end{equation}
Here $I_v$ lists the covered blocks.  The set $S_v$ is their fibre support
in Definition~\ref{def:equality-fibre}(ii).
The input alphabets are disjoint, so one original column can cover
rows from only one input.  This explains why the supports must be disjoint.
Thus \eqref{eq:joint-capacity} counts the largest total number of whole
blocks that the original columns can cover.
The data $(I_v,S_v)_{v=1}^t$ are called feasible when the conditions in the
maximum hold.

\Needspace{8\baselineskip}
\medskip
\noindent\emph{Demand graph.}
The added columns must make rows from different inputs orthogonal.
For $t$ inputs with $q$ blocks, these requirements form the demand graph
\begin{equation}
 D_{q,t}=K_{\underbrace{q,\ldots,q}_{t\text{ parts}}}.
 \label{eq:demand-graph}
\end{equation}
Each part represents one input UOM.  An edge therefore joins exactly two
blocks from different inputs.

\begin{definition}
\label{def:added-column-system}
An added column system is a family of new formal columns, each constant
on every decomposition block.  Every two blocks joined in $D_{q,t}$ receive
mate symbols in at least one added column.
\qed
\end{definition}

\medskip
\noindent\emph{Column weight.}
The weight $\mu_c$ of an added column $c$ is the largest number of whole
decomposition blocks contained in one fibre of $c$.

\medskip
\noindent\emph{Matching added columns.}
Given a one-factorization $(M_c)$ of $D_{q,t}$, its matching added column
system has one column $c$ for each $M_c$.  The endpoint blocks of each edge
of $M_c$ receive a new mate pair.  Different edges and columns use different
mate pairs.

Every fibre is one decomposition block.
Thus every matching added column has weight one.

\begin{definition}
\label{def:block-cover-condition}
Let $R(X)=A_1\disj\cdots\disj A_{N+1}$ be a row decomposition of an
$N$-column input UOM.  It satisfies the block cover condition if
\begin{equation}
 \kappa_X\!\left(\bigcup_{i\in I}A_i\right)\ge |I|
 \qquad(\varnothing\ne I\subseteq[N+1]).
 \label{eq:block-cover-condition}
\end{equation}
\qed
\end{definition}

Thus covering all rows of any $k$ decomposition blocks needs at
least $k$ original columns.

If all row decompositions satisfy the block cover condition, the joint
capacity is at most $N$.
Indeed, let $(I_v,S_v)_{v=1}^t$ be a feasible choice in
\eqref{eq:joint-capacity}.
Equation~\eqref{eq:block-cover-condition} gives $|I_v|\le|S_v|$ for each $v$.
The disjoint supports in \eqref{eq:joint-capacity} give
\begin{equation}
 \sum_v|I_v|\le\sum_v|S_v|
 =\left|\bigcup_vS_v\right|\le N.
 \label{eq:block-cover-condition-capacity}
\end{equation}

\medskip
\noindent\emph{An $8\times5$ input.}
Consider the following $8\times5$ input UOM.  To keep the display compact,
a number $s$ in column $j$ is a local label for a symbol $s_j$.
The column subscript is omitted in the display.
In each column, the mate pairs are $1\leftrightarrow2$,
$3\leftrightarrow4$, $5\leftrightarrow6$, and $7\leftrightarrow8$,
as in \eqref{eq:mate-rule}.
\begin{equation}
Y_5=\begin{pmatrix}
1&1&1&1&1\\
1&2&3&3&3\\
2&3&5&5&5\\
2&4&7&7&7\\
3&3&6&2&4\\
3&4&8&4&2\\
4&1&2&6&8\\
4&2&4&8&6
\end{pmatrix}.
\label{eq:Y5}
\end{equation}
\Needspace{7\baselineskip}
Number the rows by $1,\ldots,8$ from top to bottom.  Using the
row-decomposition convention in Definition~\ref{def:row-decomposition},
divide them as
\begin{equation}
\begin{aligned}
 A_1&=\{1\},& A_2&=\{2,7\},& A_3&=\{3\},\\
 A_4&=\{4,5\},& A_5&=\{6\},& A_6&=\{8\}.
\end{aligned}
\label{eq:Y5-decomposition-blocks}
\end{equation}

\begin{example}
For the blocks above, take $I=\{1,2\}$, so that
$A_1\cup A_2=\{1,2,7\}$.  No fibre contains all three rows.
The two fibres
\[
 F_{Y_5,1}(1)=\{1,2\},
 \qquad F_{Y_5,3}(2)=\{7\}
\]
cover them.  Hence
\[
 \kappa_{Y_5}(A_1\cup A_2)=2=|I|.
\]
This is one instance of Definition~\ref{def:block-cover-condition}.

Lemma~\ref{lem:Y5-profile} proves the block cover condition for every
nonempty choice of blocks.  The joint-capacity calculations used later are
given with the general construction in
Lemma~\ref{lem:certified-row-decompositions}.
\end{example}

\subsection{Completion graphs, matching cores, and square kernels}

The next construction starts from an orthogonal matrix $K$.
It adds mutually orthogonal completion rows until no further row can be
added.  The completion graph records these choices.

\begin{definition}
\label{def:completion-graph}
\begin{enumerate}[label=(\roman*)]
\item Let $K$ be an orthogonal matrix; it need not be a UOM.  A completion
row of $K$ is a row orthogonal to every row of $K$.  The set of all such rows
is
\begin{equation}
 \Comp(K)=\{r:r\perp s\text{ for every }s\in R(K)\}.
 \label{eq:completion-row-set}
\end{equation}
\item The completion graph $\Gamma(K)$ has vertex set $\Comp(K)$.  Two
vertices are joined exactly when their rows are orthogonal.
\end{enumerate}
\qed
\end{definition}

A completion graph can be infinite.  The square-kernel completion graphs
used here are finite.

\begin{lemma}
\label{lem:full-link}
\begin{enumerate}[label=(\roman*)]
\item
For a finite set $C\subseteq\Comp(K)$,
\begin{equation}
\begin{gathered}
 K\cup C\text{ is a UOM}\\[-1mm]
 \Longleftrightarrow\\[-1mm]
 C\text{ is an inclusion-maximal clique of }\Gamma(K).
\end{gathered}
 \label{eq:full-link}
\end{equation}

\item
If $K$ is $N\times N$ and every fibre has exactly one row, then
\begin{equation}
 \Comp(K)=\{r_\sigma:\sigma\in\Sym_N\},
 \qquad
 r_\sigma(j)=\overline{K_{\sigma(j),j}}.
 \label{eq:completion-permutation}
\end{equation}
In this case, each permutation gives one completion row.  Different
permutations give different rows.
\end{enumerate}
\end{lemma}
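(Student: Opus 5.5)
Part~(i) follows from the definitions. First I will check that $K\cup C$ is orthogonal exactly when $C$ is a clique. The rows of $K$ are pairwise orthogonal because $K$ is orthogonal. Each row of $C$ is orthogonal to every row of $K$ because $C\subseteq\Comp(K)$. The rows of $C$ are pairwise orthogonal exactly when $C$ is a clique in $\Gamma(K)$.

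Next I will show that, for a clique $C$, an extension row of $K\cup C$ is the same thing as a vertex outside $C$ that can be added to $C$. Let $y$ extend $K\cup C$, in the sense of Definition~\ref{def:uom}(iii). Then $y\perp s$ for every $s\in R(K)$, so $y\in\Comp(K)$. Also $y\perp r$ for every $r\in C$. Formal orthogonality is irreflexive, since $\bar x\ne x$ by \eqref{eq:mate-rule}; hence $y\notin C$, and $C\cup\{y\}$ is a larger clique. Conversely, suppose $v\in\Comp(K)\setminus C$ is joined to all of $C$. Then $v$ is orthogonal to every row of $K\cup C$, so it extends $K\cup C$. Together these give \eqref{eq:full-link}. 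The only point needing care is that the extension row must be a new vertex, and irreflexivity settles this.

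For part~(ii), suppose every fibre of the $N\times N$ matrix $K$ has exactly one row. Then each column $j$ takes $N$ distinct symbols $K_{1,j},\dots,K_{N,j}$. Let $r\in\Comp(K)$. In column $j$, the row $r$ is orthogonal to exactly the rows of the fibre $F_{K,j}(\bar r_j)$, as noted before Theorem~\ref{thm:fibre-cover}. This fibre has at most one row. The $N$ rows of $K$ must all be covered by these $N$ fibres, one per column, so each fibre is a single row and the rows they cover are distinct.

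Define $\sigma(j)$ to be the row covered in column $j$. By counting, $\sigma$ is a bijection of $[N]$. Since $r_j=\overline{K_{\sigma(j),j}}$, we get $r=r_\sigma$. Conversely, for any $\sigma\in\Sym_N$, the row $r_\sigma$ is orthogonal to row $\sigma(j)$ in column $j$. Because $\sigma$ is surjective, $r_\sigma$ is orthogonal to every row of $K$, so $r_\sigma\in\Comp(K)$.

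For injectivity, suppose $r_\sigma=r_\tau$. Then for each $j$ we have $K_{\sigma(j),j}=K_{\tau(j),j}$. Entries in a column are distinct, so $\sigma(j)=\tau(j)$ for all $j$.

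I expect no real obstacle here. The step needing the most care is the covering count in part~(ii): the fibres must be singletons for the pigeonhole argument to force a bijection.
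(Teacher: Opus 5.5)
Your proof is correct and follows the paper's argument: in part (i) you reduce orthogonality of $K\cup C$ to the clique property and identify extension rows with vertices of $\Gamma(K)$ adjacent to all of $C$, and in part (ii) you use the same singleton-fibre counting argument to produce the permutation $\sigma$ and to show that distinct permutations give distinct rows. You also invoke irreflexivity (from $\bar x\ne x$) to show that an extension row lies outside $C$; the paper leaves this step implicit.
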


\begin{proof}
For (i), $K\cup C$ is orthogonal exactly when $C$ is a clique.
When this holds, an extension of the matrix is a vertex of $\Gamma(K)$
joined to every vertex of $C$, by
Definition~\ref{def:completion-graph}(i)--(ii).
Such a vertex exists exactly when $C$ is not maximal.
This proves \eqref{eq:full-link}.

For (ii), we first show that each column of a completion row must cover
a different core row.  Indeed, one column can cover at most one core row because its fibres
are singletons; see Definition~\ref{def:equality-fibre}(i).
Covering all $N$ rows with $N$ entries therefore covers
each row exactly once.  Let $\sigma(j)$ be the row covered in column $j$.
Then $\sigma\in\Sym_N$, with entries given by
\eqref{eq:completion-permutation}.

Conversely, that formula makes $r_\sigma$ orthogonal to core row
$\sigma(j)$ in column $j$.  A permutation lists every core row, so
$r_\sigma\in\Comp(K)$.
Singleton fibres also make the rows from distinct permutations distinct.
\end{proof}

\medskip
\noindent\emph{Matching cores.}
A matching core assigns every pair of rows to a column that makes them
orthogonal.  Index the rows by $[N]$.
Let $F_1,\ldots,F_N$ be matchings on $[N]$ which satisfy
\begin{equation}
 E(K_N)=F_1\disj\cdots\disj F_N.
 \label{eq:matching-core-partition}
\end{equation}
Construct an $N\times N$ matrix $K$ as follows.  Fix a column $j$.  For
every edge of $F_j$, choose a new mate pair.  Put its two symbols in the
rows at the endpoints of that edge.  Use different mate pairs for different
edges.  If a vertex is not covered by $F_j$, put a new symbol in its row;
do not use its mate in column $j$.  The resulting matrix $K$ is called the
matching core of $F_1,\ldots,F_N$.

\begin{theorem}
\label{thm:matching-core-full-link}
\begin{enumerate}[label=(\roman*)]
\item
The matching core $K$ is orthogonal, with one row in every fibre.
For $\sigma\in\Sym_N$, put
\begin{equation}
 r_\sigma(j)=\overline{K_{\sigma(j),j}}.
 \label{eq:matching-core-completion}
\end{equation}
Then $\Comp(K)=\{r_\sigma:\sigma\in\Sym_N\}$.

\item
For $\sigma,\tau\in\Sym_N$,
\begin{equation}
 r_\sigma\perp r_\tau
 \quad\Longleftrightarrow\quad
 \{\sigma(j),\tau(j)\}\in F_j
 \quad\text{for some }j.
 \label{eq:matching-core-adjacency}
\end{equation}

\item
For $C\subseteq\Sym_N$, the following two statements are equivalent:
\begin{equation}
\begin{gathered}
 K\cup\{r_\sigma:\sigma\in C\}\text{ is a UOM}\\[-1mm]
 \Longleftrightarrow\\[-1mm]
 C\text{ is an inclusion-maximal clique under }
 \eqref{eq:matching-core-adjacency}.
\end{gathered}
 \label{eq:matching-core-maximality}
\end{equation}
\end{enumerate}
\end{theorem}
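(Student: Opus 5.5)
The plan is to verify the three parts in order, each time by reducing to Lemma~\ref{lem:full-link}.

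For (i), I first check orthogonality of $K$. Take two distinct rows $u,v\in[N]$. By \eqref{eq:matching-core-partition} the edge $uv$ lies in exactly one matching $F_j$. In column $j$, the construction places a new mate pair at the two endpoints of that edge. Hence $K_{u,j}=\overline{K_{v,j}}$, and $u\perp v$ by \eqref{eq:row-orthogonal}.

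Next I show every fibre is a singleton. In column $j$:
\begin{itemize}
\item distinct edges of $F_j$ use distinct mate pairs;
\item the two endpoints of one edge carry the two different symbols of that pair, since $\bar x\ne x$ by \eqref{eq:mate-rule};
\item an uncovered vertex gets a fresh symbol not used elsewhere in the column.
\end{itemize}
So no two rows share a symbol in column $j$, i.e.\ every $F_{K,j}(a)$ has at most one row. Lemma~\ref{lem:full-link}(ii) then gives $\Comp(K)=\{r_\sigma:\sigma\in\Sym_N\}$ with the stated formula, and distinct permutations give distinct rows.

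For (ii), orthogonality of $r_\sigma$ and $r_\tau$ is decided column by column, so the task is to determine when $r_\sigma(j)=\overline{r_\tau(j)}$. By \eqref{eq:matching-core-completion}, this condition reads $\overline{K_{\sigma(j),j}}=K_{\tau(j),j}$, i.e.\ the symbols of rows $\sigma(j)$ and $\tau(j)$ in column $j$ are mates. I then claim that two rows carry mate symbols in column $j$ exactly when they are the endpoints of an edge of $F_j$. The reverse direction is immediate from the construction. The forward direction is the only step needing care:
\begin{itemize}
\item each mate pair in column $j$ was introduced for a single edge, and its two symbols sit only at that edge's endpoints;
\item the fresh symbol at an uncovered vertex has its mate absent from the column by construction, so that row is mate to no row in column $j$;
\item the case $\sigma(j)=\tau(j)$ gives equal symbols, which are not mates, consistent with $\{\sigma(j),\tau(j)\}$ not being an edge.
\end{itemize}
This yields \eqref{eq:matching-core-adjacency}.

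For (iii), combine (i), (ii) and Lemma~\ref{lem:full-link}(i). By (i), the map $\sigma\mapsto r_\sigma$ is a bijection $\Sym_N\to\Comp(K)$. By (ii), this bijection carries the adjacency \eqref{eq:matching-core-adjacency} onto the edges of $\Gamma(K)$, so it is a graph isomorphism. Cliques and inclusion-maximal cliques therefore correspond. Since $\Sym_N$ is finite, $C$ is finite, and \eqref{eq:full-link} gives \eqref{eq:matching-core-maximality}.

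The only real obstacle is the bookkeeping in the forward direction of (ii): ruling out mate relations that were not created by an edge of $F_j$. The "new mate pair" and "do not use its mate" clauses of the construction dispose of it.
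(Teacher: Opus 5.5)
Your proposal is correct and follows the paper's proof. Part (i) uses the partition \eqref{eq:matching-core-partition}, distinct symbols in each column and Lemma~\ref{lem:full-link}(ii); part (ii) substitutes \eqref{eq:matching-core-completion} into \eqref{eq:row-orthogonal}; part (iii) applies Lemma~\ref{lem:full-link}(i). You also spell out two points the paper leaves implicit: mate symbols in column $j$ occur only at the endpoints of an edge of $F_j$ (using the clause that the mate of an uncovered vertex's symbol is not used), and $\sigma\mapsto r_\sigma$ is a graph isomorphism onto $\Gamma(K)$.
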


\begin{proof}
For (i), \eqref{eq:matching-core-partition} assigns every pair of core
rows to a matching $F_j$.  Column $j$ makes that pair orthogonal.
Different rows in one column have distinct symbols, so its fibres are
singletons.  Lemma~\ref{lem:full-link}(ii) now gives
\eqref{eq:matching-core-completion}.

For (ii), substitute \eqref{eq:matching-core-completion} into
\eqref{eq:row-orthogonal}.  The entries in column $j$ are mates exactly
when $\{\sigma(j),\tau(j)\}\in F_j$.
This proves \eqref{eq:matching-core-adjacency}.

For (iii), Lemma~\ref{lem:full-link}(i) applies to this adjacency rule.
The result is \eqref{eq:matching-core-maximality}.
\end{proof}

Let $p$ be a positive odd integer.  Take a set $A$ of $p$ elements, numbered
$1,\ldots,p$.  For a single vertex, take the empty matching as the only
near-one-factor.  For larger odd orders,
Theorem~\ref{thm:standard-graph-tools}(ii) gives a
near-one-factorization
\begin{equation}
 E(K_A)=\disj_{a\in A}F_a,
 \qquad F_a\text{ misses }a,
 \label{eq:near-factorization}
\end{equation}
of $K_A$.

\begin{definition}
\label{def:square-kernel}
The matching core of the near-one-factorization
\eqref{eq:near-factorization} is denoted by $\mathcal K_p$ and is called a
square kernel.
\qed
\end{definition}

Theorem~\ref{thm:matching-core-full-link} shows that every fibre of
$\mathcal K_p$ contains one row.
Equations~\eqref{eq:near-factorization} and
\eqref{eq:matching-core-adjacency} describe its completion graph on
the vertex set $\Sym_p$.

Figure~\ref{fig:completion-example} shows the graph used in the next
example.  Its two vertices are orthogonal completion rows.
Figure~\ref{fig:full-link-picture} gives the general construction in
Lemma~\ref{lem:full-link}(i).  The drawings follow the style of
\cite{johnston2014structure}.

\begin{figure}[H]
\centering
\begin{tikzpicture}[x=1cm,y=1cm]
  \coordinate (v) at (-4.55,.22);
  \coordinate (w) at (-2.45,.22);

  \draw[line width=.45pt] (v)--(w);
  \fill (v) circle[radius=2pt];
  \fill (w) circle[radius=2pt];

  \node[anchor=south,inner sep=2pt] at (v) {$v=(2,4)$};
  \node[anchor=south,inner sep=2pt] at (w) {$w=(1,2)$};
  \node[anchor=north,inner sep=3pt] at (-3.50,.10)
    {$C=\{v,w\}=V\!\left(\Gamma(K)\right)$};

  \draw[-{Latex[length=2.0mm,width=1.2mm]},line width=.6pt]
    (-1.65,.22)--(-.70,.22);

  \node[anchor=west] at (-.42,.22) {$
    K\cup C=
    \left[
    \begin{array}{c|cc}
      \text{row}&1&2\\ \hline
      1&1&1\\
      2&2&3\\ \hline
      v&2&4\\
      w&1&2
    \end{array}
    \right]
  $};
\end{tikzpicture}
\caption{The completion graph for
$K=\left(\begin{smallmatrix}1&1\\2&3\end{smallmatrix}\right)$.
Its vertices are $v=(2,4)$ and $w=(1,2)$; the line means that these rows are
orthogonal.}
\label{fig:completion-example}
\end{figure}

\begin{example}
Figure~\ref{fig:completion-example} gives a $2\times2$ core.  The two rows
are orthogonal in the first column.  Each fibre contains one row.
The two permutations in $\Sym_2$ give, by
\eqref{eq:completion-permutation}, the completion rows
$v=(2,4)$ and $w=(1,2)$.

These are all completion rows by Lemma~\ref{lem:full-link}(ii).
They are orthogonal, so $\{v,w\}=V(\Gamma(K))$ is a maximal clique.
Part (i) of that lemma gives the UOM $K\cup\{v,w\}$.
\qed
\end{example}

\begin{figure}[!htbp]
  \centering
  \begingroup
  \setlength{\arraycolsep}{3.2pt}
  \renewcommand{\arraystretch}{1.16}

  {\small\textbf{The matrix when every fibre has one row}}
  \[
  Y_C=K\cup\{r_{\sigma_1},\ldots,r_{\sigma_s}\}
  =
  \left(
  \begin{array}{c|cccc}
   \text{row}\backslash\text{column}&1&2&\cdots&N\\ \hline
   1&k_{11}&k_{12}&\cdots&k_{1N}\\
   2&k_{21}&k_{22}&\cdots&k_{2N}\\
   \vdots&\vdots&\vdots&\ddots&\vdots\\
   N&k_{N1}&k_{N2}&\cdots&k_{NN}\\ \hline
   r_{\sigma_1}
    &\overline{k_{\sigma_1(1),1}}
    &\overline{k_{\sigma_1(2),2}}
    &\cdots&\overline{k_{\sigma_1(N),N}}\\
   \vdots&\vdots&\vdots&\ddots&\vdots\\
   r_{\sigma_s}
    &\overline{k_{\sigma_s(1),1}}
    &\overline{k_{\sigma_s(2),2}}
    &\cdots&\overline{k_{\sigma_s(N),N}}
  \end{array}
  \right).
  \tag*{(a)}
  \]
  \vspace{-1.0ex}
  \[
    \begin{aligned}
    r_\sigma(j)&=\overline{k_{\sigma(j),j}},\qquad
    C=\{r_{\sigma_1},\ldots,r_{\sigma_s}\}\subseteq \Comp(K),\\
    &\hspace{2.2cm}\sigma_1,\ldots,\sigma_s\in\Sym_N .
    \end{aligned}
    \tag*{(b)}
  \]
  Every lower entry is the mate of one original entry in
  the same column.  For an arbitrary core, write
  \(Y_C=\bigl(\begin{smallmatrix}K\\ C\end{smallmatrix}\bigr)\), where
  \(C\subseteq \Comp(K)\).  A lower matrix row and its vertex in the
  completion graph are denoted by the same label \(r_\sigma\).

  \vspace{1.0ex}
  \begin{tikzpicture}[x=1cm,y=1cm]
    \tikzset{
      vertex/.style={circle,fill=black,inner sep=0pt,minimum size=4pt},
      gedge/.style={draw=black,line width=.45pt}
    }

    \begin{scope}[xshift=-3.30cm]
      \node[font=\small] at (0,1.50) {(c) a completion row can be added};
      \coordinate (c1) at (-1.10,.68);
      \coordinate (c2) at (-1.10,-.68);
      \coordinate (c3) at (.10,0);
      \coordinate (u)  at (1.42,0);
      \draw[gedge] (c1)--(c2)--(c3)--cycle;
      \draw[gedge] (u)--(c1) (u)--(c2) (u)--(c3);
      \draw[dashed,line width=.45pt,rounded corners=7pt]
        (-1.42,-.98) rectangle (.43,.98);
      \node[vertex] at (c1) {};
      \node[vertex] at (c2) {};
      \node[vertex] at (c3) {};
      \node[vertex] at (u) {};
      \node[font=\scriptsize,anchor=east] at (-1.48,0) {$C$};
      \node[font=\scriptsize,anchor=west] at (1.50,0) {$r_\pi$};
      \node[font=\footnotesize,align=center] at (0,-1.40)
        {$r_\pi\perp r_\sigma$ for every $r_\sigma\in C$\\[2pt]
         $\Longrightarrow\
         \left[\begin{smallmatrix}Y_C\\ r_\pi\end{smallmatrix}\right]$
         is orthogonal};
    \end{scope}

    \draw[-{Latex[length=2.1mm,width=1.25mm]},line width=.65pt]
      (-.48,0)--(.48,0);
    \node[font=\scriptsize,align=center] at (0,.45)
      {no row outside $C$\\is adjacent to all of $C$};

    \begin{scope}[xshift=3.30cm]
      \node[font=\small] at (0,1.50) {(d) $Y_C$ is unextendible};
      \coordinate (d1) at (-.60,.68);
      \coordinate (d2) at (-.60,-.68);
      \coordinate (d3) at (.60,0);
      \draw[gedge] (d1)--(d2)--(d3)--cycle;
      \draw[dashed,line width=.45pt,rounded corners=7pt]
        (-.95,-.98) rectangle (.95,.98);
      \node[vertex] at (d1) {};
      \node[vertex] at (d2) {};
      \node[vertex] at (d3) {};
      \node[font=\scriptsize,anchor=east] at (-1.02,0) {$C$};
      \node[font=\footnotesize,align=center] at (0,-1.40)
        {$\nexists r_\pi\in \Comp(K)\setminus C:\
          r_\pi\perp r_\sigma\ (r_\sigma\in C)$};
    \end{scope}
  \end{tikzpicture}

  \vspace{.4ex}
  \[
    K\cup C\text{ is a UOM}
    \quad\Longleftrightarrow\quad
    C\text{ is an inclusion-maximal clique of }\Gamma(K).
    \tag*{(e)}
  \]
  \endgroup
  \caption{Construction from the completion graph.  In (a)--(b), \(k_{ij}\)
  denotes the original entry in row \(i\) and column \(j\).
  \(\overline{k_{ij}}\) denotes its mate in the same column.  Hence every new
  entry is the mate of the original entry given by the two subscripts.  In
  (c)--(d), every black point represents a completion row.  A line joins
  two orthogonal rows.  The set \(C\) is marked by the dashed box.}
  \label{fig:full-link-picture}
\end{figure}

\section{Two main construction theorems}
\label{sec:master-constructions}

The next lemma separates the fibres chosen in the original columns
from those chosen in the added columns.  It applies the test in
\eqref{eq:fibre-cover} to stacked inputs.

\medskip
\noindent\emph{Fibre choices.}
Let $X^{(1)},\ldots,X^{(t)}$ be input UOMs with the same $N$ original
columns.  In each original column, suppose that their alphabets are pairwise disjoint.  Put these
matrices one below another.  Add columns making rows from different
inputs orthogonal.  Call the resulting matrix $Y$.
Let $\sigma$ select at most one fibre from every added column.
Let $T_v(\sigma)$ be the rows of input $v$ left uncovered.

\Needspace{7\baselineskip}
\begin{lemma}
\label{lem:added-column-reduction}
The matrix $Y$ is extendible if and only if there
exist such a choice $\sigma$ and pairwise disjoint sets $S_v$ of original
columns satisfying
\begin{equation}
 S_v\in\Sfam_{X^{(v)}}(T_v(\sigma))\qquad(v\in[t]),
 \label{eq:added-column-reduction}
\end{equation}
by using \eqref{eq:support-family}.
\end{lemma}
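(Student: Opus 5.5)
The plan is to apply Theorem~\ref{thm:fibre-cover}(i) directly to $Y$ and split the chosen fibres into those in added columns and those in original columns. The columns of $Y$ are the $N$ original columns together with the added ones. A fibre of $Y$ in an original column $j$ is $F_{Y,j}(a)$. Because the input alphabets in column $j$ are pairwise disjoint, the symbol $a$ belongs to the alphabet of exactly one input $X^{(v)}$. Hence $F_{Y,j}(a)=F_{X^{(v)},j}(a)$, a fibre of a single input. This is the key structural observation, and I will state it first.

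For the forward direction, suppose $Y$ is extendible. By \eqref{eq:fibre-cover}, $R(Y)$ is covered by at most one fibre from each column of $Y$. The fibres chosen in the added columns give the selection $\sigma$. The rows of input $v$ that they leave uncovered, $T_v(\sigma)$, must then be covered by the fibres chosen in original columns. Let $S_v$ be the set of original columns whose chosen fibre lies in the alphabet of input $v$. By the observation above, the sets $S_v$ are pairwise disjoint, since each column contributes at most one fibre and that fibre belongs to one input. A fibre assigned to input $u\ne v$ contains no rows of input $v$. So $T_v(\sigma)$ is covered by the fibres indexed by $S_v$, which are fibres of $X^{(v)}$. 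This gives $S_v\in\Sfam_{X^{(v)}}(T_v(\sigma))$.

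For the converse, take $\sigma$ and disjoint sets $S_v$ satisfying \eqref{eq:added-column-reduction}. For each $v$ and each $j\in S_v$, choose the fibre $F_{X^{(v)},j}(a_j)$ given by the support; it equals the fibre $F_{Y,j}(a_j)$ of $Y$. Disjointness ensures that each original column is used at most once, and $\sigma$ uses each added column at most once. The union of all these fibres contains every row of every input, so it is all of $R(Y)$. Theorem~\ref{thm:fibre-cover}(i) then shows that $Y$ is extendible, provided $Y$ is orthogonal. Orthogonality holds because each $X^{(v)}$ is orthogonal and the added columns make rows from different inputs orthogonal.

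The only delicate point is the identification of the fibres of $Y$ in original columns with fibres of a single input. This relies on the disjoint-alphabet hypothesis, and it is also what forces the supports to be disjoint. I would also note that empty fibres cause no problem, since at most one fibre per column is required. An empty set $T_v(\sigma)$ is allowed by taking $S_v=\varnothing$, because the empty union covers the empty set in \eqref{eq:support-family}.
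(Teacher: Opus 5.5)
Your proof is correct and matches the paper's: you apply the fibre cover test \eqref{eq:fibre-cover} to $Y$, read $\sigma$ off the added columns, and use the disjoint alphabets to assign each used original column to a single input, which makes the supports $S_v$ disjoint. The converse reassembles the cover in the same way. One small point: a chosen original-column fibre that is empty (for example, from a new symbol) should simply be dropped. The definition of $\Sfam_{X^{(v)}}$ uses only the nonempty fibres in $\F_{X^{(v)},j}$.
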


\begin{proof}
First suppose that an extension exists.
Its added-column entries select $\sigma$.
In an original column, its entry can cover rows from at most one input,
because the input alphabets are disjoint.
Assign each used column to that input.
The assigned sets $S_v$ are disjoint.  Their fibres cover $T_v(\sigma)$
as required by \eqref{eq:support-family}.

Conversely, take $\sigma$ and disjoint supports satisfying
\eqref{eq:added-column-reduction}.
Choose fibres that give these supports.
Together with the fibres selected by $\sigma$, they cover all rows
using at most one fibre per column.  Equation~\eqref{eq:fibre-cover} gives an extension.
\end{proof}

\subsection{Construction from row decompositions}

The next theorem uses the row decompositions and added column systems in
Definitions~\ref{def:row-decomposition} and
\ref{def:added-column-system}.

\medskip
\noindent\emph{Stacked construction.}
Let $X^{(1)},\ldots,X^{(t)}$ be input UOMs with $N$ columns.  Decompose
the rows of every matrix into $q$ blocks.  Give the input UOMs pairwise
disjoint alphabets in every original column.  Put them one below another.
Attach an added column system containing $k$ columns.  The resulting matrix $Y$ has
\begin{equation}
 \sum_{v=1}^t|R(X^{(v)})|\text{ rows and }N+k\text{ columns}.
 \label{eq:row-decomposition-output}
\end{equation}

\Needspace{8\baselineskip}
\begin{theorem}
\label{thm:row-decomposition-construction}
For the matrix $Y$ above, the following statements hold.
\begin{enumerate}[label=(\roman*)]
\item If
\begin{equation}
 \sum_{c=1}^k\mu_c\le q(t-1),
 \qquad
 \bcap(X^{(1)},\ldots,X^{(t)})\le q-1,
 \label{eq:row-decomposition-hypothesis}
\end{equation}
then $Y$ is a UOM.

\item If the added column system is a matching added column system, then
$k=q(t-1)$ and
\begin{equation}
 Y\text{ is extendible}
 \quad\Longleftrightarrow\quad
 \bcap(X^{(1)},\ldots,X^{(t)})\ge q.
 \label{eq:row-decomposition-exactness}
\end{equation}

\item If $q$ is even and $t\ge2$, then there exists a matching added
column system.
\end{enumerate}
\end{theorem}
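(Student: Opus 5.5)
All three parts rest on Lemma~\ref{lem:added-column-reduction}. First, $Y$ is orthogonal. Two rows from the same input are orthogonal in some original column, because each $X^{(v)}$ is orthogonal. Two rows from different inputs lie in blocks joined in $D_{q,t}$. Definition~\ref{def:added-column-system} makes them mates in some added column, since each added column is constant on blocks. So $Y$ is a UOM exactly when no choice $\sigma$ and disjoint supports $S_v$ satisfy \eqref{eq:added-column-reduction}.

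For (i), suppose such a choice exists. Each fibre of an added column is constant on blocks, hence a union of blocks, and it contains at most $\mu_c$ whole blocks. Each row is covered either by $\sigma$ or by the original-column supports. A block not wholly covered by $\sigma$ still meets $T_v(\sigma)$.

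The key reduction is this. An added column $c$ is constant on each block, so a chosen fibre either contains a whole block or misses it. Hence $T_v(\sigma)$ is a union of whole blocks of $X^{(v)}$; call their index set $I_v$. Then $S_v\in\Sfam_{X^{(v)}}(I_v)$, and the data $(I_v,S_v)$ are feasible in \eqref{eq:joint-capacity}. This gives $\sum_v|I_v|\le\bcap\le q-1$.

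The blocks covered by $\sigma$ number at most $\sum_c\mu_c\le q(t-1)$. The total number of blocks is $qt$. So
\[
qt\le q(t-1)+\sum_v|I_v|\le q(t-1)+q-1,
\]
a contradiction. This proves (i).

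For (ii), a matching added column system has one column per one-factor of $D_{q,t}$. The graph $D_{q,t}$ is $q(t-1)$-regular, so it has $q(t-1)$ one-factors and $k=q(t-1)$. Each column has weight one, so the first inequality in \eqref{eq:row-decomposition-hypothesis} holds with equality. If $\bcap\le q-1$, part (i) shows that $Y$ is a UOM.

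Conversely, take feasible $(I_v,S_v)$ with $\sum_v|I_v|\ge q$. Discarding blocks, we may assume $\sum_v|I_v|=q$; shrinking $I_v$ keeps $S_v$ a fibre support. Put $U=\bigcup_v I_v$, viewed as $q$ vertices of $D_{q,t}$ in their respective parts. The remaining $q(t-1)$ blocks must each be covered by its own added column, since a fibre of a matching column is a single block.

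We need a bijection from the $q(t-1)$ columns to the uncovered blocks, with each block matched to a column in which it is an endpoint of an edge. The fibre of that block in the column is then a fibre of that column. Each one-factor $M_c$ covers every vertex, so any assignment works. Choose an arbitrary bijection between columns and uncovered blocks; every block appears in every perfect matching. This gives $\sigma$ with $T_v(\sigma)=\bigcup_{i\in I_v}A_i$, and Lemma~\ref{lem:added-column-reduction} gives an extension.

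The subtle point is to check that each added column is used at most once. This holds because the assignment is a bijection. It is also worth noting that \eqref{eq:fibre-cover} allows unused columns.

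For (iii), with $q$ even, $t\ge2$ and $D_{q,t}$ equal to the equal-part complete multipartite graph, $qt$ is even. Theorem~\ref{thm:standard-graph-tools}(iv) then gives a one-factorization with $q(t-1)$ one-factors. The matching added column system built from it exists.

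The main obstacle is the reduction in (i) showing that $T_v(\sigma)$ is a union of whole blocks, since feasibility in \eqref{eq:joint-capacity} needs whole blocks. Constancy of the added columns on blocks settles it.
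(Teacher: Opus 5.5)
Your proposal is correct and follows essentially the same route as the paper. You count the blocks left uncovered by the added-column fibres against the joint capacity via Lemma~\ref{lem:added-column-reduction} for (i) and the forward direction of (ii), assign the $q(t-1)$ leftover blocks to distinct matching columns for the converse, and invoke Theorem~\ref{thm:standard-graph-tools}(iv) for (iii), exactly as the paper does. The only cosmetic difference is that the paper closes the converse of (ii) directly with the fibre cover test \eqref{eq:fibre-cover} rather than through the lemma.
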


\begin{proof}
For (i) and (ii), compare the number of blocks left by the added columns
with the joint capacity of the original columns.
A one-factorization gives the added columns in (iii).

The matrix $Y$ is orthogonal.
Rows from one input remain orthogonal in the original columns.
Blocks from different inputs are joined in \eqref{eq:demand-graph}.
Definition~\ref{def:added-column-system} makes their rows orthogonal
in an added column.

For (i), suppose that an extension exists.
We claim that at least $q$ blocks must be covered by the original columns.
Indeed, a fibre in added column $c$ covers at most $\mu_c$ whole blocks.
It covers each block completely or not at all, by
Definition~\ref{def:added-column-system}.
Let $I_v$ index the blocks of input $v$ left for the original columns.
The first bound in \eqref{eq:row-decomposition-hypothesis} gives
\begin{equation}
 \sum_v|I_v|\ge qt-\sum_c\mu_c\ge q.
 \label{eq:decomposition-block-count}
\end{equation}
Lemma~\ref{lem:added-column-reduction} gives disjoint supports for these
blocks.  Equations~\eqref{eq:joint-capacity} and
\eqref{eq:decomposition-block-count} therefore give
\begin{equation}
 \bcap(X^{(1)},\ldots,X^{(t)})
 \ge\sum_v|I_v|\ge q.
 \label{eq:block-capacity-contradiction}
\end{equation}
This contradicts \eqref{eq:row-decomposition-hypothesis}, proving (i).

For (ii), a matching added column system has $k=q(t-1)$ with every $\mu_c=1$.
The same count gives the forward implication, without using
the bound $\bcap\le q-1$.

For the reverse implication, we show that a cover of $q$ blocks in the
original columns can be completed to a cover of all rows.
Suppose that
$\bcap(X^{(1)},\ldots,X^{(t)})\ge q$.  Choose feasible sets
$(I_v,S_v)_{v=1}^t$ as in \eqref{eq:joint-capacity}, with
$\sum_v|I_v|\ge q$.  Remove indices from the sets $I_v$ until their total
size is $q$.  The same sets $S_v$ still support them.  Let the corresponding
original fibres cover these blocks.  The number of blocks outside the
selected sets $I_v$ is
\begin{equation}
 qt-q=q(t-1)=k.
 \label{eq:remaining-block-count}
\end{equation}
Each block is a fibre in every matching added column, since different
factor edges use different mate pairs.
Assign the $k$ blocks counted in \eqref{eq:remaining-block-count} to
distinct added columns.  These fibres complete the original cover.
Equation~\eqref{eq:fibre-cover} proves the reverse implication in
\eqref{eq:row-decomposition-exactness}.

For (iii), the graph in \eqref{eq:demand-graph} has $t$ parts of even
size $q$.  Theorem~\ref{thm:standard-graph-tools}(iv) gives the
one-factorization required for matching added columns.
\end{proof}

The cases with two, four, and six decomposition blocks are given in
Appendix~\ref{app:row-decomposition-constructions}.
Figure~\ref{fig:two-added-columns} shows the smallest construction using
two input UOMs.  Figure~\ref{twon:fig-four-group-lift} gives the block
matrix for four decomposition blocks.

\begin{figure}[!htbp]
  \centering
  \begingroup
  \setlength{\arraycolsep}{4.0pt}
  \renewcommand{\arraystretch}{1.22}

  {\small\textbf{The matrix for \(q=t=2\)}}\par
  \smallskip
  Let the two input UOMs be \(\widetilde X^{(1)}\) and
  \(\widetilde X^{(2)}\).  Their decomposition blocks are denoted by
  \(A_0,A_1\) and \(B_0,B_1\), respectively.  The two added
  columns are denoted by \(s_0,s_1\).
  \[
  Y=
  \left[
  \begin{array}{c|c|cc}
   \text{decomposition block}&\text{original columns }[N]&s_0&s_1\\ \hline
   A_0&\widetilde X^{(1)}|_{A_0}
      &\alpha_0\mathbf 1_{|A_0|}&\beta_0\mathbf 1_{|A_0|}\\
   A_1&\widetilde X^{(1)}|_{A_1}
      &\alpha_1\mathbf 1_{|A_1|}&\beta_1\mathbf 1_{|A_1|}\\ \hline
   B_0&\widetilde X^{(2)}|_{B_0}
      &\overline{\alpha_0}\mathbf 1_{|B_0|}
      &\overline{\beta_1}\mathbf 1_{|B_0|}\\
   B_1&\widetilde X^{(2)}|_{B_1}
      &\overline{\alpha_1}\mathbf 1_{|B_1|}
      &\overline{\beta_0}\mathbf 1_{|B_1|}
  \end{array}
  \right].
  \tag*{(a)}
  \]
  Here every input UOM \(\widetilde X^{(v)}\) uses a separate alphabet,
  and no entry in its original columns is changed.  The notation
  \(\gamma\mathbf1_{|A|}\) means that the added column has the same value
  \(\gamma\) on all rows of the decomposition block \(A\).

  \vspace{1.2ex}
  \begin{tikzpicture}[x=1cm,y=1cm]
    \tikzset{
      block/.style={draw=black,line width=.45pt,
        rounded corners=5pt,minimum width=.78cm,minimum height=.52cm,
        inner sep=0pt},
      factor/.style={draw=black,line width=.45pt}
    }
    \newcommand{\drawblocks}{%
      \node[block] (Azero) at (-.88,.58) {};
      \node[block] (Aone)  at (-.88,-.58) {};
      \node[block] (Bzero) at (.88,.58) {};
      \node[block] (Bone)  at (.88,-.58) {};
      \fill (Azero.center) circle[radius=2pt];
      \fill (Aone.center) circle[radius=2pt];
      \fill (Bzero.center) circle[radius=2pt];
      \fill (Bone.center) circle[radius=2pt];
      \node[font=\scriptsize,anchor=east] at (-1.34,.58) {$A_0$};
      \node[font=\scriptsize,anchor=east] at (-1.34,-.58) {$A_1$};
      \node[font=\scriptsize,anchor=west] at (1.34,.58) {$B_0$};
      \node[font=\scriptsize,anchor=west] at (1.34,-.58) {$B_1$};
    }

    \begin{scope}[xshift=-2.55cm]
      \node[font=\small] at (0,1.28) {(b) column \(s_0\)};
      \drawblocks
      \draw[factor] (Azero)--(Bzero) (Aone)--(Bone);
      \node[font=\scriptsize] at (0,-1.12)
        {$M_0=\{A_0B_0,A_1B_1\}$};
    \end{scope}
    \begin{scope}[xshift=2.55cm]
      \node[font=\small] at (0,1.28) {(c) column \(s_1\)};
      \drawblocks
      \draw[factor] (Azero)--(Bone) (Aone)--(Bzero);
      \node[font=\scriptsize] at (0,-1.12)
        {$M_1=\{A_0B_1,A_1B_0\}$};
    \end{scope}
  \end{tikzpicture}

  \vspace{.7ex}
  \[
   \begin{aligned}
    s_0:&\quad \alpha_i\leftrightarrow\overline{\alpha_i},
       &&\{A_i,B_i\}\in M_0,\\
    s_1:&\quad \beta_i\leftrightarrow\overline{\beta_i},
       &&\{A_i,B_{1-i}\}\in M_1,
       \qquad i\in\{0,1\}.
   \end{aligned}
   \tag*{(d)}
  \]
  The four mate pairs in (d) are all new, and every pair is used on one
  edge in one column.  Therefore, any two decomposition blocks which come
  from different input UOMs become orthogonal in exactly one added column. For two blocks which come from the same input UOM, their orthogonality is
  still given by \(\widetilde X^{(v)}\).
  \endgroup
  \caption{The construction with two added columns.  The two input UOMs
  use different alphabets.  In each added column, the entries are constant
  on every decomposition block.  The bars indicate the mate relation between the
  entries.  The two factor graphs used for the added columns are shown in
  panels (b) and (c).}
  \label{fig:two-added-columns}
\end{figure}

\subsection{Square kernel transfer and compressed construction}

The next construction enlarges the square kernels in
Definition~\ref{def:square-kernel}.  It uses the split factorization in
Lemma~\ref{lem:split-factorization}, whose proof is in
Appendix~\ref{app:square-kernel}.
The resulting UOMs satisfy the block cover condition in
Definition~\ref{def:block-cover-condition}.
This condition allows two such inputs to be combined with fewer added
columns.

For an odd integer $p$, put
\begin{equation}
 Q_p=\{p\}\cup\{q:q\ge2p+1,\ q\text{ odd}\}.
 \label{eq:Qp}
\end{equation}

\medskip
\noindent\emph{Square-kernel data.}
Let $\mathcal K_p$ be an odd square kernel.  Take its completion graph
from Definition~\ref{def:completion-graph}(ii).  By
\eqref{eq:matching-core-completion}, its vertices are the rows $r_\sigma$
with $\sigma\in\Sym_p$.  Write $\sigma$ for the vertex $r_\sigma$.  Let
$C,D\subseteq\Sym_p$ be two nonempty inclusion-maximal cliques of this graph,
with inclusion-maximal understood as in
Definition~\ref{def:graph-notation}(ii).  Write $|C|=c$ and $|D|=d$.  Define
\begin{equation}
 C^\sharp=\{r_\sigma:\sigma\in C\},
 \qquad
 D^\sharp=\{r_\sigma:\sigma\in D\}.
 \label{eq:square-clique-rows}
\end{equation}

\Needspace{13\baselineskip}
\begin{theorem}
\label{thm:square-transfer}
With the data above, the following statements hold.
\begin{enumerate}[label=(\roman*)]
\item For every $q\in Q_p$, there exists a $(q+c)\times q$ UOM which has a
row decomposition into $q+1$ blocks satisfying the block cover condition.

\item For every $q\in Q_p$ and every integer $s$ satisfying
$0\le s\le(q+1)/2$, there exists a
\begin{equation}
 (2q+c+d)\times(2q+1-s)\text{ UOM}.
 \label{eq:compressed-shape}
\end{equation}
\end{enumerate}
\end{theorem}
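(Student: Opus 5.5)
The plan is to prove (i) by enlarging the square kernel $\mathcal K_p$ to a square kernel $\mathcal K_q$ that contains it, and to deduce (ii) from (i) together with Theorem~\ref{thm:row-decomposition-construction}(i).

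\emph{Part (i), enlarging the kernel.} The case $q=p$ is the kernel itself, so assume $q\ge 2p+1$. Put $[q]=A\disj B$ with $|A|=p$ and $|B|=q-p$, which is even. I will build a near-one-factorization $(F'_x)_{x\in[q]}$ of $K_{[q]}$ as follows.

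For $a\in A$, let $F'_a=F_a\cup P_a$. Here $F_a$ is the original near-one-factor from \eqref{eq:near-factorization}, and $P_1,\dots,P_p$ are $p$ edge-disjoint perfect matchings of $K_B$. These exist by Theorem~\ref{thm:standard-graph-tools}(i), since $p\le q-p-1$.

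The remaining edges are $K_{A,B}$ together with $H=K_B-\bigcup_a P_a$. The graph $H$ is $(q-2p-1)$-regular on $B$. Adjoin one auxiliary vertex $*$ and put $S=A\cup\{*\}$, so $|S|=p+1$ is even and $|S|\le|B|$. Apply Lemma~\ref{lem:split-factorization} to $K_{S,B}\cup H$, whose regularity is $|B|-|S|$. This gives $q-p$ one-factors. Each factor matches $*$ to a different vertex of $B$, because $*$ has degree $q-p$. Label the factor by that vertex $b$ and delete $*$. The result is a near-one-factor $F'_b$ missing $b$.

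The matching core $\mathcal K_q$ of $(F'_x)_{x\in[q]}$ is a square kernel. Its restriction to the edges of $K_A$ in the columns of $A$ is exactly the original near-one-factorization.

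\emph{Part (i), the extended clique and the decomposition.} Extend each $\sigma\in C$ to $\hat\sigma\in\Sym_q$ by the identity on $B$, and put $\hat C=\{\hat\sigma:\sigma\in C\}$. By \eqref{eq:matching-core-adjacency}, two such permutations are adjacent only through columns of $A$, since they agree on $B$. On $A$, adjacency coincides with the old adjacency. Hence $\hat C$ is a clique.

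The main obstacle is \emph{maximality} of $\hat C$. Suppose $\pi\in\Sym_q$ is adjacent to every $\hat\sigma$. I would first show that $\pi$ must fix $B$ pointwise and preserve $A$, using these facts:
\begin{itemize}
\item in a column $b\in B$, the value $\hat\sigma(b)=b$ is exactly the vertex missed by $F'_b$, so that column can never give adjacency;
\item in columns $a\in A$, the values $\hat\sigma(a)$ lie in $A$, so the edges available there are those of $F_a$, together with $P_a$ only if $\pi(a)\in B$.
\end{itemize}
A counting or inductive argument over the columns of $A$ should then force $\pi|_A\in\Sym_p$ adjacent to all of $C$. Maximality of $C$ gives $\pi|_A\in C$, so $\pi\in\hat C$. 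Lemma~\ref{lem:full-link}(i) then shows that $\mathcal K_q\cup\hat C^\sharp$ is a $(q+c)\times q$ UOM.

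For the row decomposition into $q+1$ blocks, take the $q$ core rows as singleton blocks and all of $\hat C^\sharp$ as one block. Every completion entry is a mate of a core entry, and core symbols in one column are pairwise distinct and non-mated. Hence every fibre contains at most one core row, and any $k$ core-row blocks have cover cost at least $k$.

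For a family containing the block $\hat C^\sharp$ and $k-1$ core rows, suppose it could be covered with $k-1$ columns. The other $q-k+1$ core rows would then have to be covered in the remaining $q-k+1$ columns, one row per column. This would give a cover of all rows of the UOM, contradicting Theorem~\ref{thm:fibre-cover}(i). So the block cover condition \eqref{eq:block-cover-condition} holds.

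\emph{Part (ii).} Take the two inputs from (i), built from $C$ and from $D$, each with $q+1$ blocks satisfying the block cover condition. By \eqref{eq:block-cover-condition-capacity}, their joint capacity is at most $q=(q+1)-1$.

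For the demand graph $K_{q+1,q+1}$, use the cyclic one-factorization $M_k=\{A_iB_{i+k}\}$ with indices in $\mathbb Z_{q+1}$. Since $q+1$ is even, $M_k\cup M_{k+(q+1)/2}$ is a disjoint union of $4$-cycles $A_iB_jA_{i'}B_{j'}$. For each such cycle, give $A_i,A_{i'}$ a symbol $\alpha$ and $B_j,B_{j'}$ the mate $\bar\alpha$. This merges the two factors into one column of weight $2$.

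Merge $s\le(q+1)/2$ such pairs and keep the other factors as weight-one matching columns. This gives $q+1-s$ added columns with $\sum\mu_c=2s+(q+1-2s)=q+1$. Theorem~\ref{thm:row-decomposition-construction}(i), with $t=2$ and $q+1$ blocks, then yields a UOM of shape \eqref{eq:compressed-shape}.
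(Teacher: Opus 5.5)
\textbf{Gap: the maximality of the extended clique.} Your kernel enlargement, the block cover argument, and part (ii) match the paper's proof. The same holds for the split factorization with an auxiliary vertex, the singleton-fibre count on core rows, and the merging of opposite cyclic factors into weight-two $K_{2,2}$ columns. The gap is the step you flag as the main obstacle, and your sketch there points the wrong way.

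First, your second bullet is wrong. $P_a$ is a matching inside $B$, and $\widehat\sigma(a)=\sigma(a)\in A$, so $P_a$ can never witness adjacency with $\widehat\sigma$ in column $a$. The relevant fact is that $F'_a=F_a\cup P_a$ contains no $A$--$B$ edge. Hence column $a\in A$ witnesses $\pi\sim\widehat\sigma$ only if $\pi(a)\in A$, and then the witnessing edge lies in $F_a$.

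Second, you cannot show directly that $\pi$ fixes $B$ pointwise or maps $A$ onto $A$. Columns in $B$ never give adjacency. So the adjacency conditions say nothing about $\pi(b)$ for $b\in B$, nor about $\pi(a)$ for non-witness columns $a$. Thus $\pi|_A$ need not be a permutation of $A$, and the unspecified ``counting or inductive argument'' has nothing to work with. Those claims become (vacuously) true only after one knows that no such $\pi$ exists. Also, maximality of $C$ does not give $\pi|_A\in C$: a permutation adjacent to every member of $C$ can never lie in $C$, since the graph has no loops.

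\textbf{The missing idea.} Avoid any structural claim about $\pi$ and build a permutation of $A$ instead:
\begin{enumerate}
\item Put $J=\{a\in A:\pi(a)\in A\}$. By the observation above, every $\sigma\in C$ has a witness column in $J$.
\item The restriction $\pi|_J$ is injective into $A$, so extend it to some $\widetilde\pi\in\Sym(A)$.
\item For each $\sigma\in C$, its witness column $a\in J$ gives $\{\widetilde\pi(a),\sigma(a)\}=\{\pi(a),\sigma(a)\}\in F_a$. So $\widetilde\pi$ is adjacent to every vertex of $C$ in the completion graph of $\mathcal K_p$.
\item There are no loops, so $\widetilde\pi\notin C$. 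This contradicts the maximality of $C$, so no such $\pi$ exists and $\widehat C$ is maximal.
\end{enumerate}
With this step supplied, the rest of your argument goes through.

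Two minor points remain. In the block-cover paragraph, core symbols in one column are pairwise distinct but are mated in pairs, which is how core rows become orthogonal; only distinctness is needed. In part (ii), you should relabel so that the two inputs have disjoint alphabets in every original column before applying Theorem~\ref{thm:row-decomposition-construction}(i).
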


\begin{proof}
Part (i) starts from the UOM given by the original kernel and maximal clique.
For a larger kernel, the main step is to prove that the extended clique is
still maximal.  The same count of covered core rows gives the block cover
condition in both cases.
For (ii), pairs of matching added columns are replaced by single columns.
Their total weight is unchanged, so
Theorem~\ref{thm:row-decomposition-construction}(i) still applies.

First take $q=p$ in (i).
Equation~\eqref{eq:full-link} makes
$X(C)=\mathcal K_p\cup C^\sharp$ a UOM.
Its blocks are the $p$ singleton core rows and the completion block
$C^\sharp$ from \eqref{eq:square-clique-rows}.
It remains to prove \eqref{eq:block-cover-condition}.
Covering $u$ core blocks needs at least $u$ columns, since a fibre meets
the core in at most one row.  This proves the block cover condition for
sets of blocks which omit $C^\sharp$.

A set consisting of $C^\sharp$ and $u$ core blocks needs at least
$u+1$ columns.  Indeed, suppose that at most $u$ columns cover their union.
At least $p-u$ columns remain unused.  Assign the $p-u$ remaining core
rows to different unused columns.  In each assigned column, select the
fibre containing that row.  These fibres cover $X(C)$ using at most $p$
columns.  This contradicts Theorem~\ref{thm:fibre-cover}(i), since $X(C)$
is a UOM.  Thus these $u+1$ blocks need at least $u+1$ columns.
Both types of block sets satisfy \eqref{eq:block-cover-condition}.

For odd $q\ge2p+1$, first enlarge the kernel by extending its
near-one-factorization.  Take $A=[p]$ as the vertex
set of the original kernel.  Write
\begin{equation}
 V=A\disj B,\qquad |A|=p,\qquad |B|=b=q-p.
 \label{eq:square-split}
\end{equation}
Here $b$ is even and $b\ge p+1$.  The graph $K_B$ has a factorization into
$b-1$ one-factors by Theorem~\ref{thm:standard-graph-tools}(i).  Since
$b-1\ge p$, choose distinct factors $M_a$.  Put
\begin{equation}
 \widehat F_a=F_a\disj M_a\quad(a\in A).
 \label{eq:old-lifted-factors}
\end{equation}
Add a new vertex $\omega$.  Put
\begin{equation}
 S=A\cup\{\omega\},\qquad
 H=K_B-\disj_{a\in A}M_a.
 \label{eq:square-residual}
\end{equation}
The graph $H$ is $(b-p-1)$-regular.  Lemma~\ref{lem:split-factorization}
factors $K_{S,B}\cup H$ into $b$ one-factors.  Each factor has exactly one
edge incident with $\omega$.  Write it as $\omega x$.  Removing the
auxiliary vertex gives the near-one-factor
$\widehat F_x$, which misses $x$.
Every edge at the removed vertex occurs in exactly one factor.
Thus the new factors miss distinct vertices.
Together with the factors in \eqref{eq:old-lifted-factors}, the new factors
form a near-one-factorization of $K_V$.  Definition~\ref{def:square-kernel}
now gives a square kernel $\mathcal K_q$.

The square kernel $\mathcal K_q$ is orthogonal by
Theorem~\ref{thm:matching-core-full-link}(i).
Its entries are as follows.  If $u,a\in A$, put
\[
 k_{u,a}=(\mathcal K_p)_{u,a}.
\]
For every $\{x,y\}\in M_a$, choose a mate pair not used elsewhere in
column $a$.  Write
\[
 \xi_{x,a}=\overline{\xi_{y,a}}.
\]
For $z\in B$ and $\{v,w\}\in\widehat F_z$, choose a mate pair not used
elsewhere in column $z$.  Write
\[
 \eta_{v,z}=\overline{\eta_{w,z}}.
\]
The factor $\widehat F_z$ misses $z$.  Put a new symbol
$\eta_{z,z}$ in row $z$ and column $z$.  Its mate is not used in column
$z$.

Extend each $\sigma\in C$ to $V$ by putting
\begin{equation}
 \widehat\sigma(a)=\sigma(a)\ (a\in A),\qquad
 \widehat\sigma(x)=x\ (x\in B).
 \label{eq:lifted-permutation}
\end{equation}
The extended matrix has the block form
\[
 \widehat X(C)=
 \left[
 \begin{array}{c|cc}
  \text{row}\backslash\text{column}&a\in A&z\in B\\ \hline
  u\in A&k_{u,a}&\eta_{u,z}\\
  x\in B&\xi_{x,a}&\eta_{x,z}\\ \hline
  r_{\widehat\sigma},\ \sigma\in C
    &\overline{k_{\sigma(a),a}}&\overline{\eta_{z,z}}
 \end{array}
 \right].
\]
The first two block rows form $\mathcal K_q$.  The last block row contains
the $|C|$ extended completion rows.

The extended completion rows are pairwise orthogonal in the original columns.
They form a clique by Definition~\ref{def:completion-graph}(ii).
We now prove that this clique is maximal in the larger completion graph.
Suppose that an outside vertex $\pi\in\Sym(V)$ is
adjacent to every $\widehat\sigma$.  Fix
$\sigma\in C$.  A new column $x\in B$ cannot make $\pi$ adjacent to
$\widehat\sigma$: $\widehat\sigma(x)=x$, but $\widehat F_x$ misses $x$.
The adjacency must come from an original column $a\in A$.

We claim that this outside vertex gives a permutation of $A$ adjacent
to every member of $C$.  Indeed, for each $\sigma\in C$, choose an
original column $a\in A$ that makes $\pi$ adjacent to $\widehat\sigma$.
In this column, \eqref{eq:lifted-permutation} gives
$\widehat\sigma(a)=\sigma(a)\in A$.
Equation~\eqref{eq:matching-core-adjacency} implies
$\{\pi(a),\widehat\sigma(a)\}\in\widehat F_a$.  The factor
$\widehat F_a$ has no $A$--$B$ edge.  Hence $\pi(a)\in A$.  Put
\[
 J=\{a\in A:\pi(a)\in A\}.
\]
The restriction $\pi|_J$ is a one-to-one map from $J$ into $A$.  Extend it
to a permutation $\widetilde\pi\in\Sym(A)$.  For every $\sigma\in C$, some
original column in $J$ makes $\pi$ adjacent to $\widehat\sigma$.  That
column also makes $\widetilde\pi$ adjacent to $\sigma$.
Thus $\widetilde\pi$ is adjacent to every vertex of $C$.
It cannot itself belong to $C$, since the graph has no loops.
This contradicts maximality in Definition~\ref{def:graph-notation}(ii).

The maximality just proved gives a $(q+c)\times q$ UOM by
\eqref{eq:full-link}.  Its blocks again consist of singleton core rows
and one completion block.
The argument for $q=p$ proves \eqref{eq:block-cover-condition}, completing (i).

For (ii), the two inputs from (i) have joint capacity at most $q$ by
\eqref{eq:block-cover-condition-capacity}.
To apply \eqref{eq:row-decomposition-hypothesis}, it remains to construct
the added columns with total weight $q+1$.
Take the two extended input UOMs constructed from $C^\sharp$ and
$D^\sharp$.  Relabel their mate pairs so that the two alphabets are
disjoint in every original column.  This preserves orthogonality and
all fibre supports.  Each input has $A_0=q+1$ decomposition blocks.
Write $L_i$ and $R_i$, with $i\in\mathbb Z_{A_0}$, for the blocks of the
first and second inputs, respectively.  The cyclic factors
\begin{equation}
 M_e=\{L_iR_{i+e}:i\in\mathbb Z_{A_0}\}
 \qquad(e\in\mathbb Z_{A_0})
 \label{eq:cyclic-cross-factors}
\end{equation}
give a factorization of $K_{A_0,A_0}$.  Two opposite factors satisfy
\begin{equation}
 \begin{aligned}
 M_e\cup M_{e+A_0/2}
 &=\disj_{i\in\mathbb Z_{A_0/2}}\!
 K_{2,2}\bigl[\{L_i,L_{i+A_0/2}\},\\[-1mm]
 &\hspace{38mm}\{R_{i+e},R_{i+e+A_0/2}\}\bigr].
 \end{aligned}
 \label{eq:opposite-factors}
\end{equation}
Equation~\eqref{eq:opposite-factors} allows each opposite pair to be
put in one column of weight two.  These opposite pairs are disjoint.
Combine $s$ such pairs into $s$
compressed columns.  For every
$K_{2,2}$, choose a mate pair not used elsewhere in that column.
Put one symbol on the two blocks in one part.
Put its mate on the two blocks in the other part.  Keep
every other factor in a separate column.  Any two blocks from different input UOMs now have mate symbols in
exactly one added column.  There are $A_0-s$ added columns.  Their total weight is
\begin{equation}
 (A_0-2s)+2s=A_0=q+1.
 \label{eq:compressed-capacity}
\end{equation}

Equations~\eqref{eq:compressed-capacity} and
\eqref{eq:block-cover-condition-capacity} give
\eqref{eq:row-decomposition-hypothesis} with $q+1$ blocks per input.
Theorem~\ref{thm:row-decomposition-construction}(i) gives a UOM.
Its dimensions are \eqref{eq:compressed-shape}.
\end{proof}

\phantomsection\label{par:general-compression}
The compression in Theorem~\ref{thm:square-transfer}(ii) applies to
any two UOMs with $q$ columns, disjoint alphabets, and $q+1$ blocks
satisfying \eqref{eq:block-cover-condition}.
The inputs need not come from square kernels.  For any such inputs with
$m_1,m_2$ rows and odd $q$, the same construction gives
\[
 (m_1+m_2)\times(2q+1-s)\text{ UOMs},
 \qquad 0\le s\le(q+1)/2.
\]
The two bounds used above are still
\eqref{eq:block-cover-condition-capacity} and
\eqref{eq:compressed-capacity}, so
Theorem~\ref{thm:row-decomposition-construction}(i) applies unchanged.

\Needspace{4\baselineskip}
The two maps used in the square kernel transfer are shown in
Figure~\ref{fig:group-perfect-square-core-lift}.  The matrix of the
compressed construction is given in
Figure~\ref{fig:group-perfect-two-parent-join}.

\begin{figure}[!htbp]
  \centering
  \begingroup
  \small
  \setlength{\arraycolsep}{4.2pt}
  \renewcommand{\arraystretch}{1.16}

  {\small\textbf{The lift of a square kernel}}\par
  \smallskip

  \begin{tikzpicture}[x=.88cm,y=.88cm]
    \tikzset{
      point/.style={circle,fill=black,inner sep=0pt,minimum size=4pt},
      edge/.style={draw=black,line width=.45pt},
      lift/.style={-{Latex[length=2.2mm,width=1.3mm]},line width=.68pt},
      lab/.style={font=\scriptsize,inner sep=1pt}
    }

    \node[font=\small] at (-1.35,3.22)
      {(a) Add a one-factor to an original factor};
    \node[font=\footnotesize] at (-1.35,2.83)
      {$F_1=\{\{2,3\}\}$,\quad
       $M_1=\{\{x,y\},\{z,w\}\}$,\quad
       $\widehat F_1=F_1\disj M_1$};

    \foreach \xx/\name/\lab in {-6.10/Aone/1,-5.25/Atwo/2,-4.40/Athree/3}{
      \node[point] (\name) at (\xx,2.05) {};
      \node[lab,anchor=south] at (\xx,2.17) {$\lab$};
    }
    \draw[edge] (Atwo)--(Athree);
    \node[lab] at (-5.25,1.53) {$A$};
    \node[lab,anchor=north] at (-5.25,1.35) {$F_1$ misses vertex $1$};

    \draw[lift] (-3.85,2.05)--(-2.85,2.05);

    \foreach \xx/\name/\lab in {-2.10/Bone/1,-1.25/Btwo/2,-.40/Bthree/3}{
      \node[point] (\name) at (\xx,2.05) {};
      \node[lab,anchor=south] at (\xx,2.17) {$\lab$};
    }
    \foreach \xx/\name/\lab in {.75/Bx/x,1.60/By/y,2.45/Bz/z,3.30/Bw/w}{
      \node[point] (\name) at (\xx,2.05) {};
      \node[lab,anchor=south] at (\xx,2.17) {$\lab$};
    }
    \draw[edge] (Btwo)--(Bthree) (Bx)--(By) (Bz)--(Bw);
    \draw[dashed,line width=.38pt] (.18,1.55)--(.18,2.55);
    \node[lab] at (-1.25,1.55) {$A$};
    \node[lab] at (2.03,1.55) {$B$};

    \node[font=\small] at (-1.35,.83)
      {(b) Extend an original permutation};
    \node[font=\footnotesize] at (-1.35,.43)
      {$\sigma:1\mapsto2,\ 2\mapsto3,\ 3\mapsto1,
       \qquad \widehat\sigma=\sigma\sqcup\operatorname{id}_B$};

    \foreach \xx/\name/\lab in {-6.10/Ctone/1,-5.25/Cttwo/2,-4.40/Ctthree/3}{
      \node[point] (\name) at (\xx,-.25) {};
      \node[lab,anchor=south] at (\xx,-.13) {$\lab$};
    }
    \foreach \xx/\name/\lab in {-6.10/Cbone/1,-5.25/Cbtwo/2,-4.40/Cbthree/3}{
      \node[point] (\name) at (\xx,-1.45) {};
      \node[lab,anchor=north] at (\xx,-1.57) {$\lab$};
    }
    \draw[edge] (Ctone)--(Cbtwo) (Cttwo)--(Cbthree)
      (Ctthree)--(Cbone);
    \node[lab] at (-5.25,-1.93) {$A$};

    \draw[lift] (-3.85,-.85)--(-2.85,-.85);

    \foreach \xx/\name/\lab in {-2.10/Done/1,-1.25/Dtwo/2,-.40/Dthree/3}{
      \node[point] (Dt\name) at (\xx,-.25) {};
      \node[point] (Db\name) at (\xx,-1.45) {};
      \node[lab,anchor=south] at (\xx,-.13) {$\lab$};
      \node[lab,anchor=north] at (\xx,-1.57) {$\lab$};
    }
    \foreach \xx/\name/\lab in {.75/Dx/x,1.60/Dy/y,2.45/Dz/z,3.30/Dw/w}{
      \node[point] (Dt\name) at (\xx,-.25) {};
      \node[point] (Db\name) at (\xx,-1.45) {};
      \node[lab,anchor=south] at (\xx,-.13) {$\lab$};
      \node[lab,anchor=north] at (\xx,-1.57) {$\lab$};
      \draw[edge] (Dt\name)--(Db\name);
    }
    \draw[edge] (DtDone)--(DbDtwo) (DtDtwo)--(DbDthree)
      (DtDthree)--(DbDone);
    \draw[dashed,line width=.38pt] (.18,-1.87)--(.18,.17);
    \node[lab] at (-1.25,-1.93) {$A$};
    \node[lab] at (2.03,-1.93) {$B$};
  \end{tikzpicture}

  \vspace{1.0ex}
  {\small\textbf{The resulting matrix}}
  \[
  \widehat X(C)=
  \left[
  \begin{array}{c|cc}
   \text{row}\backslash\text{column}&a\in A&z\in B\\ \hline
   u\in A&k_{u,a}&\eta_{u,z}\\
   x\in B&\xi_{x,a}&\eta_{x,z}\\ \hline
   r_{\widehat\sigma},\ \sigma\in C
     &\overline{k_{\sigma(a),a}}&\overline{\eta_{z,z}}
  \end{array}
  \right].
  \tag*{(c)}
  \]

  \vspace{.2ex}
  \[
  \begin{array}{c@{\quad\Longrightarrow\quad}l}
    F_a & k_{u,a}\ (u\in A),\\
    M_a & \xi_{x,a}\ (x\in B),\\
    \widehat F_z & \eta_{v,z}\ (v\in A\disj B),\\
    \widehat\sigma &
      \overline{k_{\sigma(a),a}}\text{ and }\overline{\eta_{z,z}}.
  \end{array}
  \tag*{(d)}
  \]
  If \(\{v,w\}\) is an edge of \(\widehat F_z\), put
  \(\eta_{v,z}=\overline{\eta_{w,z}}\).  Since the factor
  \(\widehat F_z\) misses \(z\), the symbol \(\eta_{z,z}\) is the only
  unmatched symbol in column \(z\).
  \endgroup
  \caption{The lift of a square kernel.  In panel (a), the one-factor
  \(M_1\) is added to the original factor \(F_1\).  In panel (b), the
  permutation on \(A\) is kept, while every element of \(B\) is fixed.
  The full matrix obtained from the construction is given in panel (c),
  and panel (d) explains where each family of its entries comes from.}
  \label{fig:group-perfect-square-core-lift}
\end{figure}
\begin{figure}[!htbp]
  \centering
  \begingroup
  \small
  \setlength{\arraycolsep}{3.5pt}
  \renewcommand{\arraystretch}{1.18}

  {\small\textbf{The matrix after two factor columns are merged}}\par
  \smallskip
  Let \(\widetilde X_L\) and \(\widetilde X_R\) be the two input UOMs,
  and let their original alphabets be different.  Their decomposition
  blocks are denoted by \(L_i\) and \(R_i\), where
  \(i\in\mathbb Z_{A_0}\), \(A_0=q+1\), and \(h=A_0/2\).  For
  \(P=\{e_P,e_P+h\}\in\mathcal P\), the two factor columns marked by
  this set are replaced by one merged column.  The factors which are not
  merged form \(\mathcal U\), and \([i]\) denotes the class of \(i\) in
  \(\mathbb Z_h\).  In the matrix below, the four column blocks give the
  row labels, the original columns, the separate factor columns, and the
  merged columns, respectively.
  \[
  Y_s=
  \left[
  \begin{array}{c|c|c|c}
   \text{decomposition block}
     &[q]&(s_e)_{e\in\mathcal U} &(z_P)_{P\in\mathcal P}\\ \hline
   L_i
     &\widetilde X_L|_{L_i}
     &\bigl(\alpha_{e,i}\mathbf1_{|L_i|}\bigr)_{e\in\mathcal U}
     &\bigl(\gamma_{P,[i]}\mathbf1_{|L_i|}\bigr)_{P\in\mathcal P}\\
   R_j
     &\widetilde X_R|_{R_j}
     &\bigl(\overline{\alpha_{e,j-e}}\mathbf1_{|R_j|}\bigr)_{e\in\mathcal U}
     &\bigl(\overline{\gamma_{P,[j-e_P]}}\mathbf1_{|R_j|}\bigr)_{P\in\mathcal P}
  \end{array}
  \right].
  \tag*{(a)}
  \]

  {\small\textbf{One opposite pair before and after merging}}
  \[
  \left[
  \begin{array}{c|cc}
   &s_e&s_{e+h}\\ \hline
   L_i&\alpha&\delta\\
   L_{i+h}&\beta&\varepsilon\\
   R_{i+e}&\overline\alpha&\overline\varepsilon\\
   R_{i+e+h}&\overline\beta&\overline\delta
  \end{array}
  \right]
  \quad\longmapsto\quad
  \left[
  \begin{array}{c|c}
   &z_{\{e,e+h\}}\\ \hline
   L_i&\gamma_{[i]}\\
   L_{i+h}&\gamma_{[i]}\\
   R_{i+e}&\overline\gamma_{[i]}\\
   R_{i+e+h}&\overline\gamma_{[i]}
  \end{array}
  \right].
  \tag*{(b)}
  \]
  In the matrix on the left, the four matching edges use four different
  mate pairs.  After merging, as shown in the matrix on the right, one new
  mate pair is put on the two sets \(L_i\cup L_{i+h}\) and
  \(R_{i+e}\cup R_{i+e+h}\).

  \vspace{1.0ex}
  \begin{tikzpicture}[x=1cm,y=1cm]
    \tikzset{
      point/.style={circle,fill=black,inner sep=0pt,minimum size=4pt},
      block/.style={draw=black,line width=.45pt,rounded corners=5pt},
      edge/.style={draw=black,line width=.45pt}
    }
    \begin{scope}[xshift=-4.05cm]
      \node[font=\small] at (0,1.18) {(c) \(M_e\)};
      \node[point] (L0) at (-.72,.45) {};
      \node[point] (L1) at (-.72,-.45) {};
      \node[point] (R0) at (.72,.45) {};
      \node[point] (R1) at (.72,-.45) {};
      \draw[edge] (L0)--(R0) (L1)--(R1);
      \node[font=\scriptsize,anchor=east] at (-.84,.45) {$L_i$};
      \node[font=\scriptsize,anchor=east] at (-.84,-.45) {$L_{i+h}$};
      \node[font=\scriptsize,anchor=west] at (.84,.45) {$R_{i+e}$};
      \node[font=\scriptsize,anchor=west] at (.84,-.45) {$R_{i+e+h}$};
    \end{scope}
    \node[font=\small] at (-2.05,0) {$+$};
    \begin{scope}[xshift=-.05cm]
      \node[font=\small] at (0,1.18) {(d) \(M_{e+h}\)};
      \node[point] (LL0) at (-.72,.45) {};
      \node[point] (LL1) at (-.72,-.45) {};
      \node[point] (RR0) at (.72,.45) {};
      \node[point] (RR1) at (.72,-.45) {};
      \draw[edge] (LL0)--(RR1) (LL1)--(RR0);
      \node[font=\scriptsize,anchor=east] at (-.84,.45) {$L_i$};
      \node[font=\scriptsize,anchor=east] at (-.84,-.45) {$L_{i+h}$};
      \node[font=\scriptsize,anchor=west] at (.84,.45) {$R_{i+e}$};
      \node[font=\scriptsize,anchor=west] at (.84,-.45) {$R_{i+e+h}$};
    \end{scope}
    \draw[-{Latex[length=2.1mm,width=1.25mm]},line width=.65pt]
      (1.55,0)--(2.28,0);
    \begin{scope}[xshift=3.75cm]
      \node[font=\small] at (0,1.18) {(e) the merged column};
      \draw[block] (-1.08,-.76) rectangle (-.34,.76);
      \draw[block] (.34,-.76) rectangle (1.08,.76);
      \node[point] (ML0) at (-.71,.38) {};
      \node[point] (ML1) at (-.71,-.38) {};
      \node[point] (MR0) at (.71,.38) {};
      \node[point] (MR1) at (.71,-.38) {};
      \draw[edge] (ML0)--(MR0) (ML0)--(MR1)
        (ML1)--(MR0) (ML1)--(MR1);
      \node[font=\scriptsize,align=center] at (-.71,-1.08)
        {$L_i$\\$L_{i+h}$};
      \node[font=\scriptsize,align=center] at (.71,-1.08)
        {$R_{i+e}$\\$R_{i+e+h}$};
      \node[font=\scriptsize] at (0,-1.57) {$K_{2,2}$};
    \end{scope}
  \end{tikzpicture}
  \endgroup
  \caption{The joining of two lifted input UOMs by merging factor columns.
  All entries of the resulting matrix are given in panel (a).  Panel (b)
  compares a pair of factor columns with the merged column.  The same operation on the decomposition blocks is shown
  in panels (c)--(e).  The four edges in panel (e) form the graph
  \(K_{2,2}\).}
  \label{fig:group-perfect-two-parent-join}
\end{figure}

\section{Existence results}
\label{sec:existence-synthesis}

This section combines the constructions into intervals of UOM sizes.
The finite inputs are listed first.  Square kernels give the low sizes.
Row decompositions give intervals that reach the upper range.
Theorem~\ref{thm:existence-intervals} joins these intervals in each parity
class.

\subsection{Finite inputs}

Let $A=\mathbb Z_5$.  The cyclic five point square kernel in
Definition~\ref{def:square-kernel} uses the following factors:
\begin{equation}
\begin{aligned}
F_0&=\{14,23\},&F_1&=\{02,34\},&F_2&=\{04,13\},\\
F_3&=\{01,24\},&F_4&=\{03,12\}.&&
\end{aligned}
\label{eq:five-kernel}
\end{equation}
Write $\sigma\sim\tau$ when the completion rows indexed by
$\sigma,\tau\in\Sym(A)$ are orthogonal.  Equation~\eqref{eq:matching-core-adjacency}
gives
\begin{equation}
 \sigma\sim\tau
 \quad\Longleftrightarrow\quad
 \exists a\in\mathbb Z_5:
 \{\sigma(a),\tau(a)\}\in F_a.
 \label{eq:five-adjacency}
\end{equation}

\begin{lemma}
\label{lem:five-kernel-cliques}
The graph in \eqref{eq:five-adjacency} has an inclusion-maximal clique of
every order $c\in\Lambda_0$, where
\begin{equation}
\begin{aligned}
 \Lambda_0&=\{1\}\cup[4,19],\\
 \Lambda_0+\Lambda_0
 &:=\{c+d:c,d\in\Lambda_0\}=\{2\}\cup[5,38].
\end{aligned}
 \label{eq:kernel-orders}
\end{equation}
\end{lemma}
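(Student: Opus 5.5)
The lemma has two parts: the sumset identity, which is elementary, and the existence of maximal cliques, which is a finite statement about a graph on the $120$ vertices of $\Sym(\mathbb Z_5)$. I will treat them separately.

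\emph{The sumset.} With $\Lambda_0=\{1\}\cup[4,19]$:
\begin{itemize}
\item $1+1=2$;
\item $1+[4,19]=[5,20]$;
\item $[4,19]+[4,19]=[8,38]$.
\end{itemize}
The union of these three sets is $\{2\}\cup[5,38]$. The values $3$ and $4$ are missed, since the smallest sums are $2$ and then $5$.

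\emph{Order $1$.} I would first check the five factors in \eqref{eq:five-kernel}. Each $F_a$ is a near-one-factor of $K_5$ missing $a$, and together they partition the ten edges of $K_5$. So $\mathcal K_5$ is a genuine square kernel. Its completion graph is described by \eqref{eq:five-adjacency}, via Theorem~\ref{thm:matching-core-full-link}(ii).

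The identity permutation is isolated. Indeed, $\mathrm{id}\sim\tau$ would require $\{a,\tau(a)\}\in F_a$ for some $a$, but $F_a$ misses $a$. An isolated vertex is a maximal clique, which gives $c=1$.

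\emph{Orders $4$ through $19$.} Here I would use a computer certificate, in the spirit of Appendix~\ref{app:certificate-contracts}. For each $c\in[4,19]$ I would list an explicit set $C_c\subseteq\Sym(\mathbb Z_5)$ of $c$ permutations. The checker verifies two things:
\begin{itemize}
\item \emph{Clique.} Every pair in $C_c$ satisfies \eqref{eq:five-adjacency}. This is $\binom c2$ tests, each against five columns.
\item \emph{Maximality.} No $\pi\in\Sym(\mathbb Z_5)\setminus C_c$ is adjacent to all of $C_c$. This is at most $120\cdot c$ tests.
\end{itemize}
By Theorem~\ref{thm:matching-core-full-link}(iii), these two checks say exactly that $\mathcal K_5\cup C_c^\sharp$ is a UOM. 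As an independent cross-check, the fibre test \eqref{eq:fibre-dp-uom} can be run on that matrix.

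To find the cliques, I would run a randomized greedy search. Start from a random vertex, repeatedly add a random common neighbour until none remains, and record the size of the resulting maximal clique. The cyclic symmetry $x\mapsto x+1$ maps $F_a$ to $F_{a+1}$, and it acts on the completion graph by conjugation-type relabelling. This symmetry can be used to reduce the search.

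\emph{Main obstacle.} The hard part is the two ends of the range. Small maximal cliques of orders $4$ and $5$ need vertices whose neighbourhoods are nearly exhausted. The largest order, $19$, needs a near-maximum clique. A greedy search may not hit $19$, so there I would first try exact branch and bound, for example Bron--Kerbosch on $120$ vertices. This is easily feasible, and it simultaneously enumerates all maximal-clique orders, confirming that every value in $[4,19]$ occurs.

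No structural proof is needed beyond the certificate. The lemma is a finite statement, and the exact tests are the ones stated above.
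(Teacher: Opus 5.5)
Your approach is essentially the paper's: the sumset is elementary, and the clique orders are certified by one explicit witness $C_c$ per order, checked pairwise for adjacency and for maximality against all $120$ vertices, exactly as in \eqref{eq:cert-five-clique}--\eqref{eq:cert-five-maximal}; as with the paper's proof, yours is complete only once those witnesses have actually been produced and recorded. Your hand argument that the identity permutation is isolated (because $F_a$ misses $a$) is a nice self-contained treatment of $c=1$, which the paper simply includes among its stored witnesses.
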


For each $c\in\Lambda_0$,
\nolinkurl{five_point_cliques.json} in the \certsite{}
gives a set $C_c\subset\Sym(A)$.
The program \nolinkurl{check_five_point_cliques.py}
checks its clique property by \eqref{eq:cert-five-clique}.
It checks maximality against all $120$ vertices by
\eqref{eq:cert-five-maximal}.
These checks prove the clique orders needed in \eqref{eq:kernel-orders}.
Appendix~\ref{app:cert-five-point} gives the full verification conditions.

\medskip
\noindent\emph{Repeated-input capacity.}
The constructions may use the same input more than once.
The following form of \eqref{eq:joint-capacity} includes these repetitions.
For a finite family $\mathcal C$ of $N$-column UOMs, each with $q$ blocks,
define
\begin{equation}
 \bcap^{\rm rep}(\mathcal C)
 =\max_{t\ge1}\ 
   \max_{X^{(1)},\ldots,X^{(t)}\in\mathcal C}
   \bcap(X^{(1)},\ldots,X^{(t)}).
 \label{eq:repetition-capacity}
\end{equation}
\Needspace{7\baselineskip}
\begin{lemma}
\phantomsection
\label{lem:certified-row-decompositions}
\noindent\textup{(i)}
There are families of input UOMs with the following row numbers and
decomposition properties.  In each row of the table, an input UOM may be
used more than once, and different listed UOMs may be used together.
\begin{equation}
\begin{array}{c|c|c|c}
N&q&\text{row numbers}&\text{condition}\\ \hline
4&2&[6,10]&\kappa_X(A_i)>2\quad(i\in[2])\\
5&2&\{6,8,9,10\}&\kappa_X(A_i)>5/2\quad(i\in[2])\\
4&4&\{8,9,10,12\}&\bcap^{\rm rep}\le3\\
5&4&[8,12]&\bcap^{\rm rep}\le3\\
6&4&\{8,9,11,12\}&\bcap^{\rm rep}\le3\\
7&4&\{8,10,11,12,13\}&\bcap^{\rm rep}\le3.
\end{array}
 \label{eq:row-decomposition-data}
\end{equation}

\medskip
\noindent\textup{(ii)}
There also exist input UOMs with five columns and six decomposition blocks,
and their row decompositions satisfy the block cover condition in
Definition~\ref{def:block-cover-condition}.  The row numbers of these UOMs
form the set $\Lambda_6$.  If $t$ input UOMs are used,
the sums of their row numbers form $t\Lambda_6$:
\begin{equation}
 \Lambda_6=\{6\}\cup[8,16],
 \qquad
 t\Lambda_6=\{6t\}\cup[6t+2,16t]
 \qquad(t\ge1).
 \label{eq:six-parent-sums}
\end{equation}
\end{lemma}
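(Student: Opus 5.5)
The statement is an existence claim backed by explicit finite data, so the plan is to exhibit concrete input UOMs with their row decompositions and verify each listed property with a finite check. Part~(ii) also requires a short arithmetic argument for the sumset formula $t\Lambda_6$.

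\emph{Part (i).} For each row of \eqref{eq:row-decomposition-data}, I will produce one explicit $m\times N$ UOM for every listed row number $m$, together with a partition of its rows into $q$ blocks. I would take the matrices from the known spectra for $n\le 7$, using $\Theta_4\supseteq[6,10]$, $\Theta_5\ni 6,8,9,10,11,12$, and so on. The unextendibility of each matrix is checked by the fibre dynamic programme \eqref{eq:fibre-dp-uom}.

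For $q=2$ the condition concerns a single input. The requirement $\kappa_X(A_i)>2$, respectively $>5/2$, says that no two original columns can cover a block. I check this by enumerating all pairs of columns, or all triples when $N=5$, together with their fibre choices.

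For $q=4$ the bound $\bcap^{\rm rep}\le 3$ quantifies over arbitrarily many repeated inputs. I reduce it to a finite check, as follows. In a feasible choice $(I_v,S_v)$ with $\sum|I_v|\ge 4$, discard every input with $I_v=\varnothing$. Only inputs with nonempty $I_v$ then matter, and each of them uses at least one column, because a nonempty block needs a fibre. Since the supports are disjoint, at most $N\le 7$ inputs remain. Moreover, a total of $4$ covered blocks is already witnessed by at most $4$ inputs after dropping extra indices. It therefore suffices to enumerate, for each input $X$ in the family, the pairs $(I,\min$-support sets$)$ for $I\subseteq[4]$. Then one checks that no collection of at most $4$ entries, possibly repeating the same $X$, has pairwise disjoint supports and block counts summing to $4$. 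This is a small set-packing search over subsets of $[N]$.

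I expect the main obstacle to be finding matrices and decompositions with small capacity for every listed size, not the verification itself. My first attempt would be to place rows that are pairwise "far" in the same block, so that each block needs many columns to cover. I would search by computer over row partitions of the known UOMs.

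\emph{Part (ii).} For each $m\in\{6\}\cup[8,16]$, I will give a $5$-column UOM $X$ with $6$ blocks. I then check \eqref{eq:block-cover-condition} by computing $\kappa_X$ for all $63$ nonempty unions of blocks via the fibre supports \eqref{eq:support-family}. A natural source for sizes $6$ and $8$ through $10$ is Theorem~\ref{thm:square-transfer}(i) with $p=5$, $q=5$ and $c\in\Lambda_0$. That theorem yields $(5+c)\times5$ UOMs with $q+1=6$ blocks satisfying the block cover condition, which covers $m=6$ and $m\in[9,24]$ in particular. The remaining size $m=8$ is supplied by $Y_5$ with the decomposition \eqref{eq:Y5-decomposition-blocks}, via Lemma~\ref{lem:Y5-profile}. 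The sumset identity is elementary. By induction on $t$, $t\Lambda_6=(t-1)\Lambda_6+\Lambda_6$. The value $6t$ arises only from all summands equal to $6$, and $6t+1$ is unattainable because one summand would have to be $7$. The interval $[6t+2,16t]$ is filled because $[6(t-1)+2,16(t-1)]+[8,16]$ is an interval once the lengths overlap, while $6(t-1)+[8,16]$ supplies the low end.
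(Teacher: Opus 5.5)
Your part~(i) is essentially the paper's argument. The paper also rests on explicit stored matrices (in \nolinkurl{construction_data.json}) and computes all minimal supports \eqref{eq:cert-min-supports}. It handles arbitrary repetitions by a recurrence over used column sets, with values capped at $q$ \eqref{eq:cert-repetition-capacity}. Your reduction is an equivalent formulation of the same finite search: drop empty $I_v$, trim to a total of exactly four blocks, and replace supports by minimal subsets. As in the paper, the substance of~(i) is that suitable matrices and partitions exist for every listed size. You describe how to search for and verify them rather than supplying them, which is unavoidable here.

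There is one slip. For $N=5$ the condition $\kappa_X(A_i)>5/2$ means $\kappa_X(A_i)\ge 3$, so, exactly as for $N=4$, it suffices to rule out supports of size at most two. Enumerating triples would test the stronger property $\kappa_X(A_i)\ge4$, which is not claimed and may fail for valid inputs.

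For part~(ii) your route genuinely differs from the paper's. The paper takes the six-block inputs as further stored matrices (all except $Y_5$) and certifies the block cover condition with the same program. You instead derive them from results already proved. Theorem~\ref{thm:square-transfer}(i) with $q=p=5$, applied to the maximal cliques of Lemma~\ref{lem:five-kernel-cliques}, gives $(5+c)\times5$ UOMs with six blocks satisfying \eqref{eq:block-cover-condition} for every $c\in\Lambda_0$, that is, row numbers $\{6\}\cup[9,24]$. Lemma~\ref{lem:Y5-profile} then supplies the size $8$. This is valid and non-circular, since neither result uses the lemma. It needs no certificate beyond the five-point clique check, and it actually yields $\{6\}\cup[8,24]$, from which you keep the sizes in $\Lambda_6$. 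The paper's choice keeps part~(ii) independent of the square-kernel machinery at the cost of extra stored data. Your sumset induction is the paper's computation $t\Lambda_6=6t+t(\{0\}\cup[2,10])$ and is correct.
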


The finite matrices and their decomposition blocks are stored in
\nolinkurl{construction_data.json}.  The only exception is the $8\times5$
matrix $Y_5$ in \eqref{eq:Y5}.
Lemma~\ref{lem:Y5-profile} proves its required decomposition properties.
Equation~\eqref{eq:cert-min-supports} gives the minimal supports of the
listed matrices.  The recurrence \eqref{eq:cert-repetition-capacity}
accounts for arbitrary repetitions of the inputs.
The program \nolinkurl{check_construction_data.py} carries out these
calculations.  Appendix~\ref{app:certificate-contracts} states the checks.

\subsection{Results from square kernels}

\begin{corollary}
\label{cor:low-kernel-intervals}
The following inclusions hold:
\begin{alignat}{2}
 \{n+1\}\cup[n+4,n+19]&\subseteq\Theta_n
   &\;&(n\ge11\text{ odd}),
   \label{eq:odd-kernel-interval}\\
 \{n+2\}\cup[n+5,n+38]&\subseteq\Theta_n
   &\;&(n\ge22,\ n\equiv2\!\pmod4),
   \label{eq:two-kernel-interval}\\
 \{n+4\}\cup[n+7,n+40]&\subseteq\Theta_n
   &\;&(n\ge20,\ n\equiv0\!\pmod4).
   \label{eq:zero-kernel-interval}
\end{alignat}
\end{corollary}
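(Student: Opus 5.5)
The plan is to instantiate Theorem~\ref{thm:square-transfer} with the cyclic five-point square kernel, so that $p=5$. The clique orders $c,d$ are then drawn from $\Lambda_0$ in Lemma~\ref{lem:five-kernel-cliques}. By \eqref{eq:Qp}, $Q_5=\{5\}\cup\{q:q\ge11,\ q\text{ odd}\}$. Every clique supplied by Lemma~\ref{lem:five-kernel-cliques} is nonempty and inclusion-maximal, so it is admissible data for the theorem. All three inclusions then follow by choosing $q$ and $s$ so that the number of columns is $n$, and reading off the row counts.

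For \eqref{eq:odd-kernel-interval}, I will use part~(i). Take odd $n\ge11$ and put $q=n\in Q_5$. For each $c\in\Lambda_0=\{1\}\cup[4,19]$ there is a maximal clique of order $c$. Part~(i) then gives an $(n+c)\times n$ UOM, so $n+\Lambda_0=\{n+1\}\cup[n+4,n+19]\subseteq\Theta_n$.

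For the two even classes I will use part~(ii), which yields $(2q+c+d)\times(2q+1-s)$ UOMs with $c,d\in\Lambda_0$. By \eqref{eq:kernel-orders}, the sums $c+d$ range over $\Lambda_0+\Lambda_0=\{2\}\cup[5,38]$.

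If $n\equiv2\pmod4$ and $n\ge22$, choose $q=n/2$ and $s=1$. Then $q$ is odd and $q\ge11$, so $q\in Q_5$. Also $0\le 1\le(q+1)/2$. The number of columns is $2q=n$, and the number of rows is $n+c+d$. This gives $\{n+2\}\cup[n+5,n+38]\subseteq\Theta_n$.

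If $n\equiv0\pmod4$ and $n\ge20$, choose $q=(n+2)/2$ and $s=3$. Then $q$ is odd because $n+2\equiv2\pmod4$, and $q\ge11$. Also $3\le(q+1)/2$, since $(q+1)/2\ge6$. The number of columns is $2q+1-3=n$, and the number of rows is $2q+c+d=n+2+c+d$. This yields $\{n+4\}\cup[n+7,n+40]\subseteq\Theta_n$.

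There is no real obstacle once Theorem~\ref{thm:square-transfer} and Lemma~\ref{lem:five-kernel-cliques} are granted. The only points to check are these: the parity of $q$, so that it lies in $Q_5$; the range condition $0\le s\le(q+1)/2$; and that the lower bounds $n\ge11,22,20$ are exactly what $q\ge 2p+1=11$ requires in each case. The substantive content, namely the clique orders and the maximality of lifted cliques, is already supplied by those earlier results.
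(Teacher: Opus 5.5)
Your proposal is correct and follows essentially the same argument as the paper. You take $p=5$, use $q=n$ with part~(i) for odd $n$, and use part~(ii) with $(q,s)=(n/2,1)$ and $((n+2)/2,3)$ for the two even classes, checking parity, $q\ge 11$, and $s\le (q+1)/2$ exactly as the paper does. The paper additionally remarks that relabelling $\mathbb Z_5$ as $[5]$ preserves the adjacency \eqref{eq:five-adjacency}, so the clique orders in $\Lambda_0$ transfer; this step is trivial.
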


\begin{proof}
Take $p=5$ in Theorem~\ref{thm:square-transfer}.
Relabel the vertices $\mathbb Z_5$ as $[5]$ in both the factors and
permutations.  This preserves adjacency in \eqref{eq:five-adjacency}.
The maximal clique orders in \eqref{eq:kernel-orders} are therefore unchanged.

For \eqref{eq:odd-kernel-interval}, set $q=n$.
Since $n\ge11$ is odd, $q\in Q_5$ by \eqref{eq:Qp}.
Theorem~\ref{thm:square-transfer}(i) gives every row number $n+c$ with $c\in\Lambda_0$.
Equation~\eqref{eq:kernel-orders} gives the stated set.

For \eqref{eq:two-kernel-interval}, set $q=n/2$ and $s=1$.
Here $q\ge11$ is odd, since $n\ge22$ and $n\equiv2\pmod4$.
Thus $q\in Q_5$.  Also, $s\le(q+1)/2$.
Theorem~\ref{thm:square-transfer}(ii) applies.
Equation~\eqref{eq:compressed-shape} gives $n$ columns
and $n+c+d$ rows.
The sumset in \eqref{eq:kernel-orders} proves the inclusion.

For \eqref{eq:zero-kernel-interval}, set $q=(n+2)/2$ and $s=3$.
Again $q\ge11$ is odd, since $n\ge20$ and $n\equiv0\pmod4$.
Thus $q\in Q_5$.  Also, $s\le(q+1)/2$.
Equation~\eqref{eq:compressed-shape} now gives $n$ columns
and $n+2+c+d$ rows.
Equation~\eqref{eq:kernel-orders} gives
$\{n+4\}\cup[n+7,n+40]$.
\end{proof}

The next result supplies the endpoints missing from
Corollary~\ref{cor:low-kernel-intervals}.
Its two padding constructions are proved in Appendix~\ref{app:square-kernel}.

\begin{theorem}
\label{thm:odd-padding}
If $q=5$, or $q\ge11$ is odd, then there exists a $(q+3)\times q$ UOM
whose row decomposition has $q+1$ blocks and satisfies the block cover
condition.  For every such $q$, there also exists a $(q+1)\times q$ UOM,
and its singleton decomposition blocks satisfy the block cover condition.
\end{theorem}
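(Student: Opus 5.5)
We reduce both statements to the square kernel machinery of Theorem~\ref{thm:square-transfer}(i) and Lemma~\ref{lem:full-link}. The kernel is left as $\mathcal K_q$ and only the completion clique is chosen: a maximal clique of order $3$ for the first claim and of order $1$ for the second. The block cover argument in the proof of Theorem~\ref{thm:square-transfer}(i) uses only two facts. First, every fibre meets the core in at most one row. Second, kernel plus completion block is a UOM. So once such cliques exist, the decomposition into $q$ singleton core blocks and one completion block satisfies \eqref{eq:block-cover-condition} automatically.

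\emph{Step 1: base case $q=5$.} We search the completion graph \eqref{eq:five-adjacency} of the cyclic kernel \eqref{eq:five-kernel}. Lemma~\ref{lem:five-kernel-cliques} already gives a maximal clique of order $1$, since $1\in\Lambda_0$. Order $3$ is not in $\Lambda_0$ for this kernel. We therefore either pick a different near-one-factorization of $K_5$, or use a padding modification. The modification takes a maximal clique $\{\sigma\}$ of order one and adds two extra rows, each orthogonal to the kernel rows it must meet. These rows are allowed to use new symbols, so their rows lie outside $\Comp(\mathcal K_5)$ but still behave like one block. Unextendibility is then checked by the fibre-cover test \eqref{eq:fibre-dp-uom}, which is a finite check. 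For the $(q+1)\times q$ statement with singleton blocks, a single permutation $\sigma$ must be maximal. This means no $\pi\ne\sigma$ satisfies $\{\pi(a),\sigma(a)\}\in F_a$ for some $a$. The block cover condition for the completion row is then the same counting argument as before.

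\emph{Step 2: lift to odd $q\ge11$.} We repeat verbatim the lifting in the proof of Theorem~\ref{thm:square-transfer}(i) with $p=5$, $B$ of even size $q-5\ge6$, and the factors \eqref{eq:old-lifted-factors} and Lemma~\ref{lem:split-factorization}. Extended permutations fix $B$ pointwise, and every $\widehat F_x$ misses $x$. Hence any row adjacent to all lifted clique members restricts, as in that proof, to a permutation of $A$ adjacent to the original clique. This contradicts maximality. For the padded three-row block, the extra rows must be extended to $B$ as well. I would give them the entries $\overline{\eta_{z,z}}$ in the new columns, exactly like the lifted completion rows. The same restriction argument then applies, provided the $q=5$ padding was certified through a fibre-cover condition that is stable under this lift.

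\emph{Main obstacle.} I expect the hardest step to be the $q=5$ padding with three rows, because $3\notin\Lambda_0$. Maximality for the padded rows is not a clique statement in $\Gamma(\mathcal K_5)$. So I would first try to rephrase it as a fibre-cover statement on the $5$ core columns and verify it directly by \eqref{eq:fibre-dp}. I would then check that the restriction argument of Theorem~\ref{thm:square-transfer}(i) carries that statement through the lift. If that fails, the fallback is to treat $q=5$ and $q\ge11$ separately, each by an explicit construction checked with \eqref{eq:fibre-dp-uom}.
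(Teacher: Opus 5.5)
\textbf{The $(q+1)\times q$ claim.} Your argument is correct. The only isolated vertex of $\Gamma(\mathcal K_q)$ is the identity: every $F_a$ misses $a$, and any $\sigma$ that moves some $a$ has neighbours. The matrix $\mathcal K_q\cup\{r_{\mathrm{id}}\}$ is exactly what the paper builds directly from the one-factorization $\{F_a\cup\{a\omega\}\}$ of $K_{q+1}$. So the detour through Lemma~\ref{lem:five-kernel-cliques} and the lift in Theorem~\ref{thm:square-transfer}(i) is valid but not needed.

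\textbf{The $(q+3)\times q$ claim has a genuine gap.} Step~1 never produces the $8\times5$ input, and neither of your remedies can.
\begin{itemize}
\item Changing the near-one-factorization does nothing. Every near-one-factorization of $K_5$ comes from the unique one-factorization of $K_6$, so all five-point square kernels and their completion graphs are isomorphic.
\item In that graph there is in fact no maximal clique of order $3$. A case check on the supports of three pairwise adjacent permutations, using that each column must be fixed by some member (mate completeness), always produces a common neighbour.
\item The padding modification is impossible as stated. The order-one clique must be $\{\mathrm{id}\}$, and $\mathcal K_5\cup\{r_{\mathrm{id}}\}$ is already a UOM, so no additional row can be orthogonal to all six of its rows. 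New symbols do not help: a row orthogonal to every kernel row lies in $\Comp(\mathcal K_5)$ by definition, and a row outside it fails orthogonality with some kernel row.
\end{itemize}
Consequently Step~2 has nothing to lift. The fallback of separate explicit checks cannot cover infinitely many $q$.

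\textbf{The missing idea} is to leave the kernel-plus-clique framework altogether. The paper uses the explicit UOM $Y_5$ of \eqref{eq:Y5}, whose first two columns consist entirely of doubleton fibres. It takes the six blocks \eqref{eq:Y5-decomposition-blocks}, two of which have two rows. Lemma~\ref{lem:Y5-profile} proves the block cover condition by splitting on how many two-row blocks are chosen.

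For odd $q\ge11$ the paper then pads at the level of blocks:
\begin{itemize}
\item add $t=q-5$ singleton rows;
\item place five one-factors $D_1,\dots,D_5$ of $K_t$ in the original columns;
\item factor $K_{6,t}\cup H$, with $H=K_t-\disj_{j=1}^{5}D_j$, by Lemma~\ref{lem:split-factorization} into $t$ new columns, each constant on every old block.
\end{itemize}
The bound \eqref{eq:block-cover-condition} follows from three counts. The $a$ new columns used cover at most $a$ whole blocks. The remaining $u$ old blocks need $u$ original columns by Lemma~\ref{lem:Y5-profile}. The remaining $v$ new rows carry fresh singleton symbols in the original columns, so they need $v$ further columns.
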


\begin{corollary}
\label{cor:padded-endpoints}
\begin{align}
 n+3&\in\Theta_n&&(n\ge11\text{ odd}),\label{eq:odd-endpoint}\\
 n+4&\in\Theta_n&&(n\ge22,\ n\equiv2\pmod4),\label{eq:two-endpoint}\\
 n+6&\in\Theta_n&&(n\ge20,\ n\equiv0\pmod4).\label{eq:zero-six-endpoint}
\end{align}
\end{corollary}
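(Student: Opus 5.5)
We derive the three inclusions from Theorem~\ref{thm:odd-padding}, feeding its outputs into the row-decomposition machinery exactly as in Corollary~\ref{cor:low-kernel-intervals}.

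For \eqref{eq:odd-endpoint}, set $q=n$. Since $n\ge11$ is odd, Theorem~\ref{thm:odd-padding} gives an $(n+3)\times n$ UOM directly, so $n+3\in\Theta_n$.

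For \eqref{eq:two-endpoint} and \eqref{eq:zero-six-endpoint}, we use the general compression recorded after Theorem~\ref{thm:square-transfer} (paragraph~\ref{par:general-compression}). That construction takes two $q$-column UOMs with $q$ odd, each having $q+1$ blocks that satisfy the block cover condition, with $m_1$ and $m_2$ rows. It produces an $(m_1+m_2)\times(2q+1-s)$ UOM for $0\le s\le(q+1)/2$. Theorem~\ref{thm:odd-padding} supplies the inputs: a $(q+1)\times q$ UOM whose singleton blocks satisfy the condition, and a $(q+3)\times q$ UOM with $q+1$ blocks satisfying it. These exist for $q=5$ and for every odd $q\ge11$. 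After relabelling we may assume the alphabets are disjoint; this preserves fibre supports.

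For $n\equiv2\pmod4$ with $n\ge22$, take $q=n/2$, which is odd and at least $11$, and $s=1$. Then the number of columns is $2q+1-1=n$. Using two copies of the $(q+1)\times q$ input gives $2q+2=n+2$ rows, recovering the known minimum. Using one $(q+1)\times q$ and one $(q+3)\times q$ input gives $2q+4=n+4$ rows, which proves \eqref{eq:two-endpoint}.

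For $n\equiv0\pmod4$ with $n\ge20$, take $q=(n+2)/2$, which is odd and at least $11$, and $s=3$. We need $3\le(q+1)/2$, and this holds. The number of columns is $2q+1-3=n$. Two $(q+1)\times q$ inputs give $2q+2=n+4$ rows, and one of each kind gives $2q+4=n+6$ rows, proving \eqref{eq:zero-six-endpoint}.

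The only delicate point is checking the hypotheses of the compressed construction. The inputs must have exactly $q+1$ blocks: the singleton decomposition of a $(q+1)\times q$ UOM has $q+1$ blocks, so this holds. Both inputs must satisfy \eqref{eq:block-cover-condition}, which Theorem~\ref{thm:odd-padding} guarantees. With these two facts, \eqref{eq:block-cover-condition-capacity} gives joint capacity at most $q$. The compressed added columns have total weight $q+1$ by \eqref{eq:compressed-capacity}. Theorem~\ref{thm:row-decomposition-construction}(i) then applies, and the matrix is a UOM.
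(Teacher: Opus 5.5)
Your proposal is correct and follows the paper's proof essentially line for line. The odd case is Theorem~\ref{thm:odd-padding} with $q=n$. The other two cases apply the general compression after Theorem~\ref{thm:square-transfer} to the $(q+1)\times q$ and $(q+3)\times q$ padded inputs, with $(q,s)=(n/2,1)$ and $((n+2)/2,3)$ respectively; your side remarks about using two $(q+1)\times q$ inputs are correct but not needed.
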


\begin{proof}
Equation~\eqref{eq:odd-endpoint} is Theorem~\ref{thm:odd-padding}
with $q=n$.

The other two inclusions use the compression
\hyperref[par:general-compression]{after Theorem~\ref{thm:square-transfer}}.
Theorem~\ref{thm:odd-padding} supplies inputs with $q+3$ and $q+1$ rows.
Both have $q$ columns and $q+1$ blocks satisfying
\eqref{eq:block-cover-condition}.  Give them disjoint column alphabets.

For \eqref{eq:two-endpoint}, take $q=n/2$ and $s=1$.
For \eqref{eq:zero-six-endpoint}, take $q=(n+2)/2$ and $s=3$.
In each case the assumed range gives odd $q\ge11$.
It also gives $0\le s\le(q+1)/2$.
The output has $2q+1-s=n$ columns.
The row number $2q+4$ is $n+4$ in the first case and $n+6$ in the second.
Figure~\ref{fig:group-perfect-universal-odd-lift} shows the padding
construction used for the two inputs.
\end{proof}

\subsection{Constructions from row decompositions and the interval from
\texorpdfstring{$2n$}{2n}}

The constructions with two and four decomposition blocks follow from
Theorem~\ref{thm:row-decomposition-construction}.

The upper interval also uses a direct sum.  This operation puts two UOMs
with the same number of columns one above the other.  It adds one column.
Its proof and the proof of the following theorem are given in
Appendix~\ref{app:row-decomposition-constructions}.

\begin{theorem}
\label{thm:uniform-tail}
For every $n\ge4$,
\begin{equation}
 [2n,2^n-6]\subseteq\Theta_n.
 \label{eq:intro-two-n-tail}
\end{equation}
\end{theorem}

The six-block construction connects the low sizes to the interval in
Theorem~\ref{thm:uniform-tail}.  Its parameters are
\begin{equation}
 t_{\mathrm{odd}}=\left\lceil\frac{n+1}{6}\right\rceil,
 \qquad
 t_{\mathrm{even}}=2\left\lceil\frac{n+1}{12}\right\rceil.
 \label{eq:six-parameters}
\end{equation}

\begin{proposition}
\label{prop:six-block-interval}
\begin{align}
 [6t_{\mathrm{odd}}+2,16t_{\mathrm{odd}}]&\subseteq\Theta_n
   && (n\ge11\text{ odd}),
 \label{eq:six-odd}\\
 [6t_{\mathrm{even}}+2,16t_{\mathrm{even}}]&\subseteq\Theta_n
   && (n\ge20\text{ even}).
 \label{eq:six-even}
\end{align}
\end{proposition}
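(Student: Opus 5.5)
The plan is to apply Theorem~\ref{thm:row-decomposition-construction}(i) to $t$ inputs taken from Lemma~\ref{lem:certified-row-decompositions}(ii). Each input is a UOM with $N=5$ original columns and $q=6$ decomposition blocks satisfying the block cover condition. We stack $t$ such inputs, with disjoint alphabets, and attach added columns for the demand graph $D_{6,t}$. Theorem~\ref{thm:row-decomposition-construction}(iii) gives a matching added column system with $k=6(t-1)$ columns, because $q=6$ is even. The output then has $5+6(t-1)=6t-1$ columns.

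The first hypothesis, $\sum_c\mu_c\le q(t-1)$, holds with equality because every matching column has weight one. The capacity hypothesis needs $\bcap\le q-1=5$. The block cover condition is stated in Definition~\ref{def:block-cover-condition} for $N+1$ blocks, and here $6=5+1$. Hence \eqref{eq:block-cover-condition-capacity} gives $\sum_v|I_v|\le|\bigcup_v S_v|\le N=5$, and Theorem~\ref{thm:row-decomposition-construction}(i) yields a UOM. The possible row numbers are the sums in $t\Lambda_6$. By \eqref{eq:six-parent-sums} these sums contain $[6t+2,16t]$, which therefore lies in $\Theta_{6t-1}$.

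The column count $6t-1$ must then be brought to $n$. I will try two devices.

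\textbf{Fewer columns.} Merge pairs of matching columns into weight-two columns, as in the proof of Theorem~\ref{thm:square-transfer}(ii). Merging keeps $\sum_c\mu_c=6(t-1)$, so part~(i) still applies, and each merge removes one column. For $t\ge2$ the factors of $D_{6,t}$ should contain enough disjoint opposite pairs, arranged as $K_{2,2}$ configurations between two parts, to reach any column count from about $3t+2$ up to $6t-1$.

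\textbf{Odd case.} Here $t=\lceil(n+1)/6\rceil$, so $6t-1\ge n$, and $6t-1-n\le5$. Only a few merges are needed, and even $t=2$ (the case $n=11$) allows several merges.

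\textbf{Even case.} Here $t=2\lceil(n+1)/12\rceil$ is even. Pair the $t$ inputs into $t/2$ couples and use the cyclic factors \eqref{eq:cyclic-cross-factors} within each couple. Extend these to a one-factorization of $D_{6,t}$ in which many factors split into opposite pairs, then merge until exactly $n$ columns remain. The required number of merges is $6t-1-n$, which is at most about $11$. This should be feasible because $n\ge20$ forces $t\ge4$.

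The main obstacle is justifying the merge count. We need a one-factorization of $K_{6,\ldots,6}$ containing the required number of disjoint opposite pairs, each pair forming vertex-disjoint $K_{2,2}$'s between two parts. We also need each merged column to keep a consistent mate assignment: every block must receive a single symbol, and blocks from the same input must never be forced to be mates.

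The safest route is to restrict merging to pairs of parts and use the cyclic structure \eqref{eq:opposite-factors} with $A_0=6$. Each pair of parts then supports $3$ merges. The remaining cross edges should be covered by a one-factorization of the complementary multipartite graph, and a concrete construction of that one-factorization is what I would check first.
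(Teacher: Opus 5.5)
Your base step is correct: $t$ six-block inputs, the block cover condition giving $\bcap\le5$ via \eqref{eq:block-cover-condition-capacity}, and row sums $t\Lambda_6$. The gap is the column-reduction step. In the odd case it fails outright whenever $t_{\mathrm{odd}}$ is odd and merges are needed. Examples are $n=13$ and $n=15$: here $t=3$, and you need $k=n-5=8$ or $10$ added columns instead of $12$.

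Two distinct factors of a one-factorization of $D_{6,t}$ are disjoint perfect matchings on $6t$ blocks, so their union is a spanning $2$-regular graph. A single weight-two column can realize a cycle of this union only if it is a $4$-cycle. Along a longer cycle, alternate vertices all receive the mate of their common neighbour's symbol, so at least three blocks fall into one fibre. Hence a merge needs $4\mid 6t$, i.e.\ $t$ even.

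More generally, a column of weight $\mu$ makes at most $3t\mu$ block pairs orthogonal, while $D_{6,t}$ has $18t(t-1)$ edges. So your constraint $\sum_c\mu_c\le6(t-1)$ forces every column to be a spanning $K_{\mu,\mu}$-factor. This gives $\mu\mid 3t$, so no even-weight column exists when $t$ is odd. For $n=13$, the only admissible weights for $8$ columns are $3,3,1,1,1,1,1,1$. Your remark that "$t\ge2$ should contain enough disjoint opposite pairs" is therefore false for every odd $t$.

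The missing idea is weight-three columns, which is what the paper uses for odd $n$. Split the six blocks of each input into two triples. Take a one-factorization of $K_{2t}$ containing the matching $tK_2$ that pairs the two triples of each input, and discard that factor. Each of the remaining $2t-2$ factors gives a spanning union of $K_{3,3}$'s. Each such union is used either as one weight-three column or split into three matching columns. This yields every even $k\in[2(t-1),6(t-1)]$ for every $t\ge2$, and $k=n-5$ is even with $2(t-1)\le n-5\le 6(t-1)$.

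For even $n$, your couples idea is the paper's second construction, but you have not carried it far enough. A merged column must span all $6t$ blocks, so one fixed coupling of the inputs supplies only three merges. Up to $11$ merges can be needed, for example when $n=24$ and $t=6$. The fix is to take a one-factorization of $K_t$ into $t-1$ couplings, using the cyclic factors of $K_{6,6}$ within each couple. This gives $3(t-1)$ opposite-pair merges, which suffices: $t=4$ occurs only for $n=20,22$, needing $3$ and $1$ merges, and $t\ge6$ gives at least $15$. The paper also groups triples and all six factors, reaching every $k\in[t-1,6(t-1)]$.
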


\begin{figure}[!htbp]
  \centering
  \begingroup
  \setlength{\arraycolsep}{3.8pt}
  \renewcommand{\arraystretch}{1.13}

  {\small\textbf{The full \((8+t)\times(5+t)\) matrix}}
  \[
   \begin{gathered}
    t=q-5,\qquad B=\{\text{new single rows}\},\\[-1pt]
    A_1=\{1\},\quad A_2=\{2,7\},\quad A_3=\{3\},\\[-1pt]
    A_4=\{4,5\},\quad A_5=\{6\},\quad A_6=\{8\}.
   \end{gathered}
  \]
  Rows \(x\in B\) use the \(\delta\)-symbols.  The added columns use the
  \(\pi\)-symbols.
  \[
  \resizebox{.78\textwidth}{!}{$
  Y_q=
  \left[
  \begin{array}{c|c|c|c|c|c|c}
   \text{row}&1&2&3&4&5&\text{added columns }\ell\in[t]\\ \hline
   1&1&1&1&1&1&(\pi_{\ell,1})_{\ell\in[t]}\\
   2&1&2&3&3&3&(\pi_{\ell,2})_{\ell\in[t]}\\
   3&2&3&5&5&5&(\pi_{\ell,3})_{\ell\in[t]}\\
   4&2&4&7&7&7&(\pi_{\ell,4})_{\ell\in[t]}\\
   5&3&3&6&2&4&(\pi_{\ell,4})_{\ell\in[t]}\\
   6&3&4&8&4&2&(\pi_{\ell,5})_{\ell\in[t]}\\
   7&4&1&2&6&8&(\pi_{\ell,2})_{\ell\in[t]}\\
   8&4&2&4&8&6&(\pi_{\ell,6})_{\ell\in[t]}\\ \hline
   x\in B&\delta_{1,x}&\delta_{2,x}&\delta_{3,x}&\delta_{4,x}&\delta_{5,x}
      &(\pi_{\ell,x})_{\ell\in[t]}
  \end{array}
  \right].
  $}
  \tag*{(a)}
  \]
  The numerical block is \(Y_5\), with no change.  In each added column, the
  same \(\pi_{\ell,u}\) appears on every row of decomposition block \(A_u\).

  \vspace{-.2ex}
  \[
  \begin{array}{rcll}
   \delta_{j,x}&=&\overline{\delta_{j,y}},
      &\{x,y\}\in D_j,\quad j\in[5],\\
   \pi_{\ell,u}&=&\overline{\pi_{\ell,x}},
      &\{A_u,x\}\in P_\ell,\\
   \pi_{\ell,x}&=&\overline{\pi_{\ell,y}},
      &\{x,y\}\in P_\ell\cap H,
  \end{array}
  \qquad
  H=K_B-\disj_{j=1}^{5}D_j.
  \tag*{(b)}
  \]
  Different factor edges and columns use different symbols.  The factors
  \(D_j\) and \(P_\ell\) determine original column \(j\) and added column
  \(\ell\), respectively.

  \vspace{-.5ex}
  \begin{tikzpicture}[x=1cm,y=1cm]
    \tikzset{
      block/.style={draw=black,line width=.45pt,
        rounded corners=4pt,minimum width=1.00cm,minimum height=.54cm,
        inner sep=0pt},
      point/.style={circle,fill=black,inner sep=0pt,minimum size=4pt},
      edge/.style={draw=black,line width=.45pt}
    }
    \begin{scope}[xshift=-3.35cm]
      \node[font=\small] at (0,1.60) {(c) the new rows in original column \(j\)};
      \node[point] (dx) at (-.80,.48) {};
      \node[point] (dy) at (.80,.48) {};
      \node[point] (dz) at (-.80,-.48) {};
      \node[point] (dw) at (.80,-.48) {};
      \draw[edge] (dx)-- node[font=\scriptsize,above]
        {$\delta_{j,x}\leftrightarrow\delta_{j,y}$} (dy);
      \draw[edge] (dz)-- node[font=\scriptsize,below]
        {$\delta_{j,z}\leftrightarrow\delta_{j,w}$} (dw);
      \node[font=\scriptsize,anchor=east] at (-.92,.48) {$x$};
      \node[font=\scriptsize,anchor=west] at (.92,.48) {$y$};
      \node[font=\scriptsize,anchor=east] at (-.92,-.48) {$z$};
      \node[font=\scriptsize,anchor=west] at (.92,-.48) {$w$};
      \node[font=\scriptsize] at (0,-1.08) {$\{x,y\},\{z,w\}\in D_j$};
    \end{scope}
    \begin{scope}[xshift=2.35cm]
      \node[font=\small] at (0,1.60) {(d) one added column \(\ell\)};
      \node[block] (Au) at (-1.45,.48) {};
      \node[point] (Auone) at (-1.72,.48) {};
      \node[point] (Autwo) at (-1.45,.48) {};
      \node[point] (Authree) at (-1.18,.48) {};
      \node[font=\scriptsize,anchor=east] at (-2.02,.48) {$A_u$};
      \node[font=\scriptsize,anchor=south] at (-1.45,.79)
        {$\pi_{\ell,u}\mathbf1_{|A_u|}$};
      \node[point] (bx) at (.05,.48) {};
      \node[font=\scriptsize,anchor=south] at (.05,.58)
        {$x:\overline{\pi_{\ell,u}}$};
      \draw[edge] (Auone)--(bx) (Autwo)--(bx) (Authree)--(bx);
      \node[point] (by) at (.52,-.48) {};
      \node[point] (bz) at (1.62,-.48) {};
      \node[font=\scriptsize,anchor=north] at (.52,-.58)
        {$y:\pi_{\ell,y}$};
      \node[font=\scriptsize,anchor=north] at (1.62,-.58)
        {$z:\overline{\pi_{\ell,y}}$};
      \draw[edge] (by)--(bz);
      \node[font=\scriptsize] at (.10,-1.14)
        {$\{A_u,x\},\{y,z\}\in P_\ell$};
    \end{scope}
  \end{tikzpicture}
  \endgroup
  \caption{Odd padding used in Corollary~\ref{cor:padded-endpoints}.  The
  \(8\times5\) block is the original matrix \(Y_5\).  The \(\delta\)-entries give the five
  removed factors \(D_j\) on the new single rows.  The \(\pi\)-entries
  give the factors \(P_\ell\).  They have the same value on all rows of one
  original decomposition block.  Panels (c)--(d) show how the mate
  equations in (b) determine the entries.}
  \label{fig:group-perfect-universal-odd-lift}
\end{figure}

The right endpoints in Proposition~\ref{prop:six-block-interval} are at
least $2n$.  The two left endpoints are at most $n+7$ and $n+14$;
see Appendix~\ref{app:row-decomposition-constructions}.

\subsection{The \texorpdfstring{$n+5$}{n+5} construction and the three
parity classes}

Figure~\ref{fig:proof-architecture} lists the ranges needed in each parity
class.  Only the endpoint from the good column construction remains to be
proved.  Theorem~\ref{thm:existence-intervals} then checks that the ranges
overlap.  The figure uses the endpoint bounds above;
Proposition~\ref{prop:six-block-interval} gives the full intervals.

\begingroup
\newcommand{\ProofLink}[2]{\hyperref[#2]{\textcolor{LinkBlue}{#1}}}
\renewcommand{\baselinestretch}{1}
\begin{figure}[!htb]
\centering
\resizebox{.92\linewidth}{!}{%
\begin{tikzpicture}[
  x=1cm,y=1cm,
  every node/.style={font=\rmfamily\fontsize{9}{11}\selectfont},
  source/.style={draw=black!75,rounded corners=2pt,fill=black!3,
    text width=7.15cm,align=left,minimum height=1.55cm,inner sep=5pt},
  output/.style={draw=black,line width=.8pt,rounded corners=2pt,
    fill=black!10,text width=3.45cm,align=center,
    minimum height=.82cm,inner sep=4pt},
  title/.style={font=\rmfamily\bfseries\fontsize{10}{12}\selectfont,anchor=west},
  arr/.style={-{Latex[length=1.8mm]},line width=.55pt,draw=black!80}
]

\node[title] at (-6.35,6.90) {(a) Odd $n\ge11$};
\node[source] (os) at (-2.20,5.72)
  {\ProofLink{Padding}{cor:padded-endpoints}: $\{n+3\}$\\
   \ProofLink{Square kernels}{cor:low-kernel-intervals}: $[n+4,n+19]$\\
   \ProofLink{Six decomposition blocks}{prop:six-block-interval}: $[n+7,2n]$\\
   \ProofLink{Uniform tail}{thm:uniform-tail}: $[2n,2^n-6]$};
\node[output] (oo) at (4.10,5.72)
  {Target: $[n+3,2^n-6]\subseteq\Theta_n$};
\draw[arr] (os.east) -- (oo.west);

\draw[black!35] (-6.35,4.60) -- (6.20,4.60);
\node[title] at (-6.35,4.30)
  {(b) $n\equiv2\pmod4$, $n\ge22$};
\node[source] (ts) at (-2.20,3.15)
  {\ProofLink{Padding}{cor:padded-endpoints}: $\{n+4\}$\\
   \ProofLink{Square kernels}{cor:low-kernel-intervals}: $[n+5,n+38]$\\
   \ProofLink{Six decomposition blocks}{prop:six-block-interval}: $[n+14,2n]$\\
   \ProofLink{Uniform tail}{thm:uniform-tail}: $[2n,2^n-6]$};
\node[output] (to) at (4.10,3.15)
  {Target: $[n+4,2^n-6]\subseteq\Theta_n$};
\draw[arr] (ts.east) -- (to.west);

\draw[black!35] (-6.35,2.03) -- (6.20,2.03);
\node[title] at (-6.35,1.73)
  {(c) $n\equiv0\pmod4$, $n\ge20$};
\node[source,minimum height=1.82cm] (zs) at (-2.20,.48)
  {\ProofLink{Square kernels}{cor:low-kernel-intervals}:
     $\{n+4\}\cup[n+7,n+40]$\\
   \ProofLink{Good column lift}{thm:n-plus-five-endpoint}: $\{n+5\}$\\
   \ProofLink{Padding}{cor:padded-endpoints}: $\{n+6\}$\\
   \ProofLink{Six decomposition blocks}{prop:six-block-interval}: $[n+14,2n]$\\
   \ProofLink{Uniform tail}{thm:uniform-tail}: $[2n,2^n-6]$};
\node[output] (zo) at (4.10,.48)
  {Target: $[n+4,2^n-6]\subseteq\Theta_n$};
\draw[arr] (zs.east) -- (zo.west);
\end{tikzpicture}}
\caption{Roadmap for the three parity classes and the proof of
Theorem~\ref{thm:existence-intervals}.  In each panel, the left box lists the
ranges to be joined, and the right box records the target interval.  For the
six-block construction, only the bound needed for the overlap is shown.}
\label{fig:proof-architecture}
\end{figure}
\endgroup

Appendix~\ref{app:good-column-lift} gives the construction from the good
column structure in Definition~\ref{def:good-column-structure}.  Its starting
point is the $13\times8$ matrix in
Theorem~\ref{thm:good-column-thirteen-by-eight-matrix}.
The conditions in \eqref{eq:cert-good-column-acceptance} verify its UOM
property and the two cover bounds.
Equation~\eqref{eq:cert-good-column-minima} gives the exact minima.

\begin{theorem}
\label{thm:n-plus-five-endpoint}
For every $n\ge20$ with $n\equiv0\pmod4$,
\begin{equation}
 n+5\in\Theta_n.
 \label{eq:n-plus-five-endpoint}
\end{equation}
\end{theorem}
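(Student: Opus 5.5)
We aim to mirror the proofs of \eqref{eq:zero-six-endpoint} and \eqref{eq:zero-kernel-interval}. Write $n=2q-2$ with $q=(n+2)/2$; for $n\ge20$, $n\equiv0\pmod4$, this $q$ is odd and $q\ge11$. With $s=3$, the general compression after Theorem~\ref{thm:square-transfer} produces a $(m_1+m_2)\times n$ UOM from two $q$-column inputs. Each input must have $q+1$ blocks satisfying \eqref{eq:block-cover-condition}. The target $n+5=2q+3$ therefore needs $m_1+m_2=2q+3$.

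The available inputs have $q+1$ rows (Theorem~\ref{thm:odd-padding}), $q+3$ rows, or $q+c$ rows with $c\in\Lambda_0$. Since $2\notin\Lambda_0$, the sum $2q+3$ cannot be reached from these inputs. The missing ingredient is a $(q+2)\times q$ UOM with a $(q+1)$-block row decomposition satisfying the block cover condition. Combined with the $(q+1)\times q$ padding input, it gives $2q+3$ rows. Alternatively, one could relax the condition to an asymmetric one: the joint capacity only needs to be at most $q$ in \eqref{eq:row-decomposition-hypothesis}, while the total added weight stays $q+1$ by \eqref{eq:compressed-capacity}. This is what the ``two cover bounds'' of the good column structure should encode.

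The plan is as follows.

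\textbf{Step 1 (base matrix).} Take the $13\times8$ matrix of Theorem~\ref{thm:good-column-thirteen-by-eight-matrix}, whose UOM property and cover minima are certified by \eqref{eq:cert-good-column-acceptance} and \eqref{eq:cert-good-column-minima}. Its good column structure (Definition~\ref{def:good-column-structure}) should supply a block decomposition with the right cover costs.

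\textbf{Step 2 (lift to all odd $q\ge11$).} Lift this matrix along the lines of the proof of Theorem~\ref{thm:odd-padding} and Figure~\ref{fig:group-perfect-universal-odd-lift}. Add $t$ new single rows, using a split one-factorization from Lemma~\ref{lem:split-factorization} in the new columns. Keep the original block symbols constant in each added column, so the blocks persist. To show the lifted matrix is still a UOM, argue as in the maximality step of Theorem~\ref{thm:square-transfer}(i): a cover by fibres must use the new columns on the new rows injectively, since each new column covers at most one new row beyond its factor edges. A covering therefore restricts to a cover of the base matrix, contradicting Theorem~\ref{thm:fibre-cover}(i). The same restriction transfers the cover-cost lower bounds from the base to the lift.

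\textbf{Step 3 (join).} Join the lifted input with the $(q+1)\times q$ input from Theorem~\ref{thm:odd-padding}, using disjoint alphabets. Bound their joint capacity by $q$ via a variant of \eqref{eq:block-cover-condition-capacity}: $|I_v|\le|S_v|$ for each input, and the supports are disjoint. Since the $s=3$ compression has total weight $q+1$, Theorem~\ref{thm:row-decomposition-construction}(i) then yields the required $(2q+3)\times n$ UOM.

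The main obstacle is Step 2. Adding a column can create new fibre supports that cover a block more cheaply, so one must verify that the block cover condition, or the two-bound substitute, survives the lift for every odd $q\ge11$. If the lift only preserves a weaker bound on one input, the first thing to try is to exploit the padding input's singleton blocks, whose condition is tight, and check directly that the two capacities still sum to at most $q$.
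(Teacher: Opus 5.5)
Your proposal has a genuine gap, and it lies in the overall strategy, not only in Step~2.

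\textbf{The two-input join cannot reach $n+5$.} In your set-up ($q=(n+2)/2$ odd, $s=3$, two $q$-column inputs with $q+1$ blocks each), both inputs must be UOMs, whatever block conditions you impose. If one input had an extension, its original-column entries would cover that whole input. Add one fibre in each of the $q-2$ added columns: three distinct antipodal pairs $R_j\cup R_{j+h}$ (with $h=(q+1)/2$) in the three merged columns, and one block in each of the other $q-5$ columns. This covers the other input, so the output would be extendible. Since every $m_i\ge q+1$, the target $m_1+m_2=2q+3$ forces $\{m_1,m_2\}=\{q+1,q+2\}$. But $q+2\notin\Theta_q$ for odd $q$ by the parity theorem of \cite{johnston2014structure}. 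So your ``missing ingredient'' does not exist, and no asymmetric relaxation inside this join can give $n+5$.

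\textbf{Step~2 could not supply it anyway.} Adding $t$ single rows and $t$ columns preserves rows minus columns. Every such lift of the $13\times8$ matrix therefore has $q+5$ rows, and joining it with the $(q+1)\times q$ input gives $n+8$ rows, not $n+5$. Lemma~\ref{lem:split-factorization}, with the ten blocks as $S$, also forces $t$ even, hence $8+t$ even, while the compression needs odd $q$. Moreover, the ten blocks of $X_8$ do not satisfy the block cover condition. By \eqref{eq:cert-good-column-minima} some two blocks are covered by one fibre, and there are $q+2$ blocks rather than $q+1$.

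\textbf{What you are missing.} No join is needed: $13=8+5$ already has the right excess. Lift $X_8$ itself to $n$ columns with $t=n-8$ new rows and columns, as in Theorem~\ref{thm:good-column-even-lift}.
\begin{itemize}
\item Put a one-factor $N_j$ of $K_B$ in each original column $j\ne g$.
\item Put the two one-factors of Lemma~\ref{lem:compressible-factor-pair} together in column $g$ as $(t/4)K_{2,2}$. This is where $t\equiv0\pmod4$, i.e.\ $n\equiv0\pmod4$, is used.
\item Factor $K_{S,B}\cup H$ into the $t$ new columns by Lemma~\ref{lem:split-factorization}. Here $H$ is $(t-10)$-regular precisely because the eight original columns absorb nine one-factors of $K_B$.
\end{itemize}
The invariant to carry along is the two-bound good column structure \eqref{eq:good-column-conditions}, not the block cover condition. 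The slack $|I|-1$ pays for the two-row fibres of column $g$ on the new rows. The proof splits according to whether the fibre chosen in column $g$ lies on the original rows or on $B$.

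Your restriction argument in Step~2 would also fail as written. Fibres in the new columns contain whole original blocks, so a cover of the lift does not restrict to a cover of the base. One must instead count uncovered blocks as in \eqref{eq:uncovered-decomposition-blocks}.
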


\begin{theorem}
\label{thm:existence-intervals}
\begin{align}
 [n+3,2^n-6]&\subseteq\Theta_n
   &&(n\ge11\text{ odd}),
   \label{eq:group-perfect-final-odd-tail}\\
 [n+4,2^n-6]&\subseteq\Theta_n
   &&(n\ge22,\ n\equiv2\pmod4),
   \label{eq:group-perfect-final-two-mod-four-tail}\\
 [n+4,2^n-6]&\subseteq\Theta_n
   &&(n\ge20,\ n\equiv0\pmod4).
   \label{eq:group-perfect-final-zero-mod-four-tail}
\end{align}
\end{theorem}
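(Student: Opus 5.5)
The proof is a union-of-intervals argument: in each parity class I take the ranges already established and check that consecutive ranges overlap or abut. The roadmap in Figure~\ref{fig:proof-architecture} lists them, so no new construction is needed. Everything reduces to comparing endpoints of Corollaries~\ref{cor:low-kernel-intervals} and \ref{cor:padded-endpoints}, Theorem~\ref{thm:n-plus-five-endpoint}, Proposition~\ref{prop:six-block-interval}, and Theorem~\ref{thm:uniform-tail}.

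For odd $n\ge11$, \eqref{eq:odd-endpoint} gives $n+3$ and \eqref{eq:odd-kernel-interval} gives $[n+4,n+19]$. Next I use \eqref{eq:six-odd} with $t=t_{\mathrm{odd}}=\lceil (n+1)/6\rceil$. Its left endpoint satisfies $6t+2\le (n+6)+2=n+8\le n+19$. Its right endpoint satisfies $16t\ge 16(n+1)/6\ge 2n$. Hence $[n+4,n+19]\cup[6t+2,16t]\supseteq[n+4,2n]$. Theorem~\ref{thm:uniform-tail} supplies $[2n,2^n-6]$, which closes the odd case.

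For $n\equiv2\pmod4$, $n\ge22$, \eqref{eq:two-endpoint} gives $n+4$ and \eqref{eq:two-kernel-interval} gives $[n+5,n+38]$. With $t=t_{\mathrm{even}}=2\lceil (n+1)/12\rceil$, I bound $t\le 2\bigl((n+1)/12+11/12\bigr)=(n+12)/6$. So $6t+2\le n+14\le n+38$, and $16t\ge 16(n+1)/6\ge 2n$. Proposition~\ref{prop:six-block-interval} then covers $[n+14,2n]$, and the tail follows as before.

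For $n\equiv0\pmod4$, $n\ge20$, I assemble $\{n+4\}$ from \eqref{eq:zero-kernel-interval}, $n+5$ from Theorem~\ref{thm:n-plus-five-endpoint}, $n+6$ from \eqref{eq:zero-six-endpoint}, and $[n+7,n+40]$ from \eqref{eq:zero-kernel-interval}. The same bound on $t_{\mathrm{even}}$ gives $[n+14,2n]$, which overlaps $[n+7,n+40]$, and Theorem~\ref{thm:uniform-tail} finishes the case.

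The only delicate step is the ceiling estimate in each case. This amounts to verifying the two facts stated after Proposition~\ref{prop:six-block-interval}: the left endpoints are at most $n+7$ and $n+14$, and the right endpoints are at least $2n$. I check these directly from \eqref{eq:six-parameters} by writing $\lceil x\rceil\le x+(k-1)/k$ for $x$ with denominator $k$. This uses that $16(n+1)/6\ge 2n$ holds for all $n\ge 1$, and that the hypotheses $n\ge11$ and $n\ge20$ exceed what the inequalities require.
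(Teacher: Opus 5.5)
Your proposal is correct and follows the paper's proof essentially verbatim: in each parity class you chain the padding and square-kernel low ranges, the six-block interval of Proposition~\ref{prop:six-block-interval} (left endpoint at most $n+7$, resp.\ $n+14$; right endpoint $16t\ge\frac83(n+1)>2n$), and the tail $[2n,2^n-6]$ from Theorem~\ref{thm:uniform-tail}. One inconsistency is cosmetic only: in the odd case your displayed bound is $6t+2\le n+8$, not the $n+7$ you later claim to verify (parity of $n+1$ sharpens it to $6t\le n+5$), but $n+8\le n+19$ is all the overlap needs.
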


\begin{proof}
Each claimed interval has three parts: the low sizes, a six-block
interval, and the upper interval \eqref{eq:intro-two-n-tail}.
It is enough to show that the six-block interval meets the other two.
We first prove the endpoint bounds needed for these two overlaps.
The parameters are those in \eqref{eq:six-parameters}.

For odd $n$, the least multiple of six at least $n+1$ is at most $n+5$.
Thus
\[
 6t_{\mathrm{odd}}+2\le n+7,
 \qquad 16t_{\mathrm{odd}}\ge \tfrac83(n+1)>2n.
\]
For even $n$, $6t_{\mathrm{even}}$ is the least multiple of twelve
which is at least $n+1$.  Therefore
\[
 6t_{\mathrm{even}}+2\le n+14,
 \qquad 16t_{\mathrm{even}}\ge \tfrac83(n+1)>2n.
\]
For \eqref{eq:group-perfect-final-odd-tail},
\eqref{eq:odd-endpoint} and \eqref{eq:odd-kernel-interval} give
$n+3$ and $[n+4,n+19]$.
The interval \eqref{eq:six-odd} starts at or below $n+7$.
Its upper endpoint is greater than $2n$.
It therefore joins this range to \eqref{eq:intro-two-n-tail}.

For \eqref{eq:group-perfect-final-two-mod-four-tail},
\eqref{eq:two-endpoint} and \eqref{eq:two-kernel-interval} give
$n+4$ and $[n+5,n+38]$.
The interval \eqref{eq:six-even} starts at or below $n+14$.
Its upper endpoint is greater than $2n$.
It joins this range to \eqref{eq:intro-two-n-tail}.

For \eqref{eq:group-perfect-final-zero-mod-four-tail},
\eqref{eq:zero-kernel-interval}, \eqref{eq:n-plus-five-endpoint}, and
\eqref{eq:zero-six-endpoint} give $[n+4,n+40]$.
The same endpoint bounds for \eqref{eq:six-even} join this interval
to \eqref{eq:intro-two-n-tail}.
\end{proof}
\FloatBarrier
\section{Nonexistence for small and large sizes}
\label{sec:nonexistence}

This section gives the exclusions needed for the complete spectrum.
The new proofs apply to arbitrary UOMs.
They use the fibres in Definition~\ref{def:equality-fibre}(i) and the mate
rule in \eqref{eq:mate-rule}.  The main counts concern unordered pairs of rows.

For large row numbers, the following results hold for every
$n\ge3$:
\begin{equation}
\begin{gathered}
 2^n-5,\ 2^n-3,\ 2^n-2,\ 2^n-1\notin\Theta_n,\\
 2^n-4,\ 2^n\in\Theta_n.
\end{gathered}
\label{eq:finite-high-obstructions}
\end{equation}
Chen and Djokovic proved the nonexistence of
$2^n-5$~\cite{chen2018nonexistence}.  The other statements and the standard
constructions can be found in \cite{johnston2014structure}.

For small row numbers, Johnston's theorem \cite{johnston2013minimum} gives
the minimum values used in Theorem~\ref{thm:complete-spectrum}.  The parity
result in \cite{johnston2014structure} also shows that $n+2$ cannot occur
when $n$ is odd.

\medskip
\noindent\emph{Small fibres.}
Two short names for fibres will be useful.  By
Definition~\ref{def:equality-fibre}(i), a fibre is a set of rows.  A
fibre with two rows is called a doubleton fibre.  One with three rows is a
triple fibre.

\begin{theorem}
\label{thm:even-obstruction}
If $n\ge14$ and $n\equiv2\pmod4$, then
\begin{equation}
 n+3\notin\Theta_n.
 \label{eq:even-obstruction}
\end{equation}
\end{theorem}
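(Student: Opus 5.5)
Suppose an $m\times n$ UOM $X$ exists with $m=n+3$, $n\equiv2\pmod4$, $n\ge14$. The argument counts the $\binom{m}{2}$ unordered row pairs. Each pair must be orthogonal in at least one column, and in column $j$ the number of orthogonal pairs is $\sum |F_{X,j}(a)|\,|F_{X,j}(\bar a)|$ over mate pairs. The plan is to show that unextendibility forces every column to have small fibres. Then the per-column capacities are too small to reach $\binom{n+3}{2}$, except in a few equality configurations, and those are handled separately.

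\textbf{Step 1: a fibre-union bound.} Derive a bound from Theorem~\ref{thm:fibre-cover}(i), which says that no choice of at most one fibre per column covers $R(X)$. If $f_j$ denotes the largest fibre size in column $j$, then any $n$ fibres from distinct columns cover at most $m-1$ rows. A greedy choice therefore constrains the fibre profiles: take the largest fibres, then the largest fibres disjoint from what has already been chosen. This should be the paper's Lemma~\ref{even:lem-union-inequality}. Because $m-n=3$, only a few columns can contain a fibre of size at least $2$ that adds new rows, and most columns must have all fibres essentially singletons relative to the rest. From this I would classify columns into a small number of types, according to their multiset of fibre sizes: all singletons, one doubleton, two doubletons, or one triple fibre. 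Each type has an explicit maximum number of orthogonal pairs. For instance, a column with singleton fibres is a matching and gives at most $\lfloor m/2\rfloor$ pairs, while larger fibres paired against their mates give products.

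\textbf{Step 2: sum the capacities.} Add the capacities over the $n$ columns and compare the total with $\binom{n+3}{2}$. Here $m=n+3$ is odd, since $n\equiv2\pmod 4$ makes $n$ even. A singleton-type column is therefore a matching on an odd number of rows, leaves at least one row unmatched, and gives at most $(n+2)/2$ pairs. The total is then at most $n(n+2)/2+(\text{extra from big-fibre columns})$, against a required $(n+3)(n+2)/2$. The deficit $3(n+2)/2$ must be made up by the few big-fibre columns that Step 1 allows. Showing this is impossible, or possible only with equality, is the core inequality. The residue condition mod 4 should enter through the parity of how many pairs a column covers: $(n+2)/2$ is even exactly when $n\equiv2\pmod4$.

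\textbf{Step 3: the equality cases.} These are the main obstacle. When the count is tight, no row pair may be covered in two different columns. I would count repeated pairs in different columns, using the fact that orthogonal pairs among the rows of a doubleton or triple fibre must be covered elsewhere. This forces a rigid structure: the singleton columns form near-one-factors of $K_{n+3}$ that partition the remaining edges. Each such structure amounts to a square-kernel-like matching core plus three extra rows. In that setting Lemma~\ref{lem:full-link}(ii) and Theorem~\ref{thm:matching-core-full-link} describe the completion rows as permutations, and exhibiting a completion row orthogonal to all rows gives an extension, which is a contradiction. If a few residual configurations cannot be handled uniformly, I expect them to reduce to finite matching checks independent of $n$, in the spirit of the certificate appendix. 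The condition $n\ge14$ ensures the counting slack dominates the bounded contributions from exceptional columns.
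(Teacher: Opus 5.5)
Your overall shape (a fibre-union bound, then counting row pairs against column capacities, then excluding the tight cases) matches the paper's. However, Step~1's conclusion is false and the decisive step is missing.

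The bound you need is the graded one of Lemma~\ref{even:lem-union-inequality}: $|F_1\cup\cdots\cup F_i|\le i+2$ for fibres from $i$ distinct columns. It follows by padding the at most $n-i$ leftover rows with the fibres containing them in unused columns. Your version with $i=n$ is just unextendibility restated. The graded bound forbids, for example, three pairwise disjoint doubletons from three columns, but it does not make most columns singleton-type. The opposite is true:
\begin{itemize}
\item If $F_{X,j}(x)\ne\varnothing$ but $F_{X,j}(\bar x)=\varnothing$, changing $x$ to $\bar x$ in one row gives an extension. Hence nonempty fibres come in mate pairs.
\item Fibres have at most three rows and $m=n+3$ is odd, so every column has an \emph{odd} number of doubleton fibres.
\item So there are no all-singleton columns and no two-doubleton columns. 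Meanwhile three-doubleton columns ($O_3$), which you omit, are exactly the critical ones.
\end{itemize}
With $n=4k+2$, the correct baseline is capacity $2k+3$ per column (one doubleton each). This leaves a deficit of $2k+4$, not a deficit of $3(n+2)/2$ supplied by a few exceptional columns.

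Capacity counting then does not finish the proof. Taking $k+2$ columns of type $O_3$ (capacity $2k+5$) and all others of type $O_1$ already reaches $\binom{n+3}{2}$. The missing idea is structural:
\begin{itemize}
\item The three doubletons of an $O_3$ column form a $3$-edge matching.
\item The union bound forbids a transversal matching of size three among these matchings.
\item A finite enumeration (Lemma~\ref{even:lem-complete-transversal-certificate}) shows that at most four such matchings can coexist, contradicting $k+2\ge5$.
\end{itemize}
This is where $n\ge14$ enters, not through the parity of $(n+2)/2$.

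Columns with triple fibres need the analogous structural treatment. Triples from different columns meet in at least two rows, and Lemma~\ref{even:lem-two-intersecting-triples} then bounds how many such columns exist and restricts the doubletons in the others. The tight cases are closed by exhibiting a row pair covered in two different columns.

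Your Step~3 plan, recognizing a matching core plus three extra rows and producing a completion row, has no footing here. No column has only singleton fibres, and nothing in your argument produces a permutation row orthogonal to the three extra rows.
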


The proof in Appendix~\ref{even:sec-global-obstruction} uses a fibre
covering inequality followed by a count of row pairs.
Only the transversal matching statement in
Lemma~\ref{even:lem-complete-transversal-certificate} requires a finite
enumeration.  The recurrence and its
exact counts are \eqref{eq:cert-rgs-recurrence}--\eqref{eq:cert-fixed-set-shape}
in Appendix~\ref{app:certificate-contracts}.
The program \nolinkurl{check_finite_obstructions.py} checks these counts.

It remains to consider the boundary case with ten columns.

\begin{theorem}
\label{thm:ten-boundary}
\begin{equation}
 13\notin\Theta_{10}.
 \label{eq:ten-boundary}
\end{equation}
\end{theorem}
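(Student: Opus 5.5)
Suppose, for a contradiction, that $X$ is a $13\times10$ UOM. We reduce to finitely many configurations by a counting argument on fibres and orthogonal row pairs, then finish with a finite matching check.

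The first step is the pair count. Orthogonality of $X$ (Definition~\ref{def:uom}(ii)) requires each of the $\binom{13}{2}=78$ unordered row pairs to be made orthogonal in at least one column. In column $j$, the number of orthogonal pairs is
\[
\sum_{\{x,\bar x\}}|F_{X,j}(x)|\,|F_{X,j}(\bar x)|,
\]
summed over mate pairs. So the ten column capacities must sum to at least $78$, an average of $7.8$ per column.

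The second step bounds the fibres using unextendibility. Theorem~\ref{thm:fibre-cover}(i) says no choice of at most one fibre per column covers $R(X)$. Greedy covers then bound the fibre sizes. If column $j$ has a fibre of size $f_j$, then taking it together with singleton fibres in the other columns shows that the largest fibres in any set $J$ of columns cover, together with $10-|J|$ further rows, fewer than $13$ rows. This is the analogue of the fibre-union inequality, Lemma~\ref{even:lem-union-inequality}, used for Theorem~\ref{thm:even-obstruction}.

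\begin{itemize}
\item If every column had a fibre of size at least $2$, greedily picking disjoint large fibres would give a cover.
\item Hence only a limited number of columns can contain doubleton or triple fibres, and those fibres must overlap heavily.
\end{itemize}

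These constraints restrict each column to a short list of types. A column type is a partition of the $13$ rows into fibres with a mate pairing, and each type has a maximal pair capacity. For $13$ rows, a column split as $6+7$ has capacity $42$. The fibre constraints, however, force most columns to be close to singleton-fibre columns, whose capacity is at most $6$ (six mate pairs of singletons). A few exceptional columns have larger fibres.

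The third step compares this with the demand. We sum the admissible capacities against the $78$ required pairs. The target is to show the total is exactly $78$ or falls a few pairs short, i.e.\ the equality or near-equality case. In that case almost no pair is repeated across columns. This is the ``pairs repeated in different columns'' count described in the introduction: any pair made orthogonal in two columns costs one unit of slack, and the slack is tiny.

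The last step handles the equality cases. The column structure is then determined up to relabelling: singleton columns are matchings on the $13$ rows, and the few columns with larger fibres have prescribed shapes. Orthogonality becomes the requirement that these matchings and shapes partition the edges of $K_{13}$. Unextendibility becomes the nonexistence of a transversal matching (Definition~\ref{def:graph-notation}(iv)) of the right size, one that selects one fibre per column covering all rows. We would then show that every such decomposition admits a covering transversal. We would try a Hall-type or pigeonhole argument first: thirteen rows and ten columns, with at least three columns having non-singleton fibres, and exploiting that singletons can be chosen freely. Any residual configurations would go to a small exhaustive check of the kind certified in Appendix~\ref{app:certificate-contracts}.

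The main obstacle is the second step: getting a fibre-union bound sharp enough at $n=10$ that the capacity sum is forced into near-equality. This boundary case is presumably exactly where the general argument of Theorem~\ref{thm:even-obstruction} (valid for $n\ge14$) loses its slack.
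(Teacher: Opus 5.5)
Your plan has a genuine gap: the structural reduction in your second step is false, and the final step has no argument behind it.

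\textbf{The reduction in your second step.} The implication ``if every column had a fibre of size at least $2$, greedily picking disjoint large fibres would give a cover'' does not hold. By the fibre-union bound, a cover exists exactly when some $i$ fibres from distinct columns cover at least $i+3$ rows. Three pairwise disjoint doubletons from three columns would do this, but doubletons from different columns may all meet pairwise, for instance in a star.

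Your conclusion is in fact the opposite of the truth. Suppose $F_{X,j}(x)\ne\varnothing$ but $F_{X,j}(\bar x)=\varnothing$. Replacing $x$ by $\bar x$ in one row of $F_{X,j}(x)$ then gives an extension. So nonempty fibres come in mate pairs, and each column has an even number of them. Fibres have at most three rows and $13$ is odd, so $C_{2,j}+2C_{3,j}$ is odd. Hence \emph{every} column has an odd, and so positive, number of doubleton fibres.

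Consequently there are no singleton-fibre columns of capacity $6$. Your picture, with most columns of capacity at most $6$ and a few exceptions, would make the theorem trivial, since the total would fall far below $78$. That is a sign it describes no possible configuration. The real baseline is ten columns with one doubleton each, each of capacity exactly $7$, for a total of $70$. The missing $8$ units must come from a few rich columns:
\begin{itemize}
\item three doubletons, capacity $8$ or $9$;
\item one triple fibre, capacity at most $9+h$;
\item two triple fibres.
\end{itemize}

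\textbf{The missing mechanism and the final step.} What your plan lacks is control of these rich columns and of repeated pairs. The usable local form of unextendibility is that doubletons from three distinct columns never form a transversal matching of size three. Through an exhaustive matching lemma, this caps the number of three-doubleton columns at four. It also forces rigid shapes, for example a list $M,M,N,N$ whose union is a six-cycle $C_6[A,D]$.

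One then has to split by the number of columns containing a triple fibre. In the surviving cases the capacity bound is between $78$ and $82$, so it is neither ``exactly $78$'' with an automatic contradiction nor ``a few short''. The contradiction must come from showing that the forced repeated pairs exceed the slack. Some cases yield to explicit arguments, for instance triples $U\setminus\{u_i\}$ forcing each $u_iu_j$ to be covered twice. Three residual families, including the zero-slack case with four $O_3$ columns, need exhaustive searches over the forced $K_{3,2}$ and $K_{2,2}$ subgraphs.

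Your last step, to ``show every such decomposition admits a covering transversal'' by a Hall-type or pigeonhole argument, identifies neither these configurations nor a method that would handle them. As written, the proposal does not reach a proof.
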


The proof in Appendix~\ref{sec:13-by-10} reduces the possible column profiles
to three finite families.  The exact state count
\eqref{eq:cert-O3-counts} and the lower bounds on repetition loss
\eqref{eq:cert-two-special-minima}--\eqref{eq:cert-one-special-counts}
exclude these families.

In both proofs, the symbolic arguments reduce every possible UOM to the
listed matching families.  The finite searches check these remaining
cases.  Thus no assumption on a special form of UOM is needed.
\section{Finite cases}
\label{sec:finite-ledger}

Some small dimensions are not covered by the infinite constructions.  This
section settles them.  The $23$ matrices used here are collected in
\nolinkurl{finite_uoms.txt}, available in the \certsite.

Each entry gives the size, a short construction note, and the matrix itself.
The mate rule in \eqref{eq:mate-rule} checks pairwise orthogonality.
The fibre cover test in Theorem~\ref{thm:fibre-cover}(i) checks
unextendibility.

\Needspace{8\baselineskip}
\begin{theorem}
\label{thm:finite-new-certificates}
The matrices in {\normalfont\nolinkurl{finite_uoms.txt}} give the following inclusions:
\begin{align}
 \{17,18,19,21,22,23\}&\subseteq\Theta_{12},
 \label{eq:finite12}\\
 \{18,19,21,22,23\}&\subseteq\Theta_{14},
 \label{eq:finite14}\\
 \{21,22,23,25\}&\subseteq\Theta_{16},
 \label{eq:finite16}\\
 \{22,23,25,26,27,29,30,31\}&\subseteq\Theta_{18}.
 \label{eq:finite18}
\end{align}
\end{theorem}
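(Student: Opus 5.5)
The theorem asserts that the $23$ explicit matrices in \nolinkurl{finite_uoms.txt} are UOMs of the listed sizes. The proof is therefore a finite verification, which I would organize in three steps.

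First, fix the exact format of each entry. Each matrix lists its rows over local column alphabets, with mate pairs $1\leftrightarrow2$, $3\leftrightarrow4,\ldots$ as in \eqref{eq:Y5}. Check that an entry announced as $m\times n$ really has $m$ rows and $n$ columns, with $(m,n)$ one of the pairs in \eqref{eq:finite12}--\eqref{eq:finite18}. Here $n\in\{12,14,16,18\}$.

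Second, verify orthogonality by Definition~\ref{def:uom}(i)--(ii). For each of the $\binom m2$ row pairs, find a column $j$ with $r_j=\overline{s_j}$. This is a direct scan. Where an entry arises from an earlier construction (for example Theorem~\ref{thm:row-decomposition-construction} or the compression after Theorem~\ref{thm:square-transfer}), orthogonality also follows structurally, but the direct check is cleaner.

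Third, verify unextendibility by the criterion \eqref{eq:fibre-dp-uom} of Theorem~\ref{thm:fibre-cover}(ii). For each column, compute the fibre partition $\F_{X,j}$. Then run the dynamic programme \eqref{eq:fibre-dp}, storing the families $\R_j$ as sets of bitmasks over the at most $31$ rows, with repeated row sets listed once. Finally, confirm that the full mask $R(X)$ is not in $\R_n$. Equivalently, by Theorem~\ref{thm:fibre-cover}(i), there is no cover of all rows using at most one fibre per column.

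The main obstacle is the size of $\R_j$ in this last step, especially for the $18$-column, $31$-row matrices. The number of distinct unions could blow up. I would prune in two ways:
\begin{itemize}
\item Discard any union $U$ that is a subset of another stored union, since supersets dominate for reaching $R(X)$.
\item Use the fact that only the target $R(X)$ matters, so a branch-and-bound search is enough. Choose an uncovered row of least index, branch over the columns that can still cover it, and prune when the number of remaining columns is less than the number of uncovered rows divided by the largest available fibre size.
\end{itemize}

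Since fibres in these near-minimal matrices are small, mostly singletons and doubletons, this bound is strong. It should keep every search small. Each of the $23$ checks then gives one inclusion, and together they prove the theorem.
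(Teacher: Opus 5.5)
Your proposal is correct and is essentially the paper's proof: the paper also checks each of the $23$ matrices by direct mate comparisons for orthogonality, \eqref{eq:cert-orth}, and by the fibre-cover recurrence \eqref{eq:cert-fibre-dp}, concluding from $[m]\notin\mathcal R_n$ via \eqref{eq:cert-uom-dp}. Your two prunings are sound: a superset dominates its subsets, and every full cover must cover the least uncovered row in some unused column. Your remark that fibres are mostly singletons and doubletons is only an efficiency heuristic (completion rows and merged added columns can produce larger fibres), and the paper's description of how each matrix was built is provenance, not part of the verification.
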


\begin{proof}
For every matrix $X$ in \nolinkurl{finite_uoms.txt}, the mate
comparisons give \eqref{eq:cert-orth}.
The recurrence \eqref{eq:cert-fibre-dp} gives $[m]\notin\mathcal R_n$.
Equation~\eqref{eq:cert-uom-dp} therefore proves
\eqref{eq:finite12}--\eqref{eq:finite18}.

The following four matrices use the compression described in the proof of
Theorem~\ref{thm:square-transfer}(ii):
\begin{equation}
 21\times12,\qquad21\times14,\qquad22\times14,\qquad23\times14.
\label{eq:finite-compressed}
\end{equation}
Here $q=7$.  The first matrix uses inputs with $13$ and $8$ rows.
Its compression parameter is $s=3$.
The other three use an $8$-row input paired with an input having
$13$, $14$, or $15$ rows, respectively.  In these cases, $s=1$.
The relevant construction is the consequence stated
\hyperref[par:general-compression]{after that proof}.
It does not require inputs from the five point kernel.
For the following five sizes, the listed matrix itself is the finite witness:
\begin{equation}
 18\times14,\quad19\times14,\quad21\times16,
 \quad22\times16,\quad22\times18.
 \label{eq:finite-direct}
\end{equation}
The $21\times16$ record is a finite completion witness.
The $17\times12$, $18\times12$, and $19\times12$ records use the same
matching core.  Each adds a maximal completion clique to this core, as in
Theorem~\ref{thm:matching-core-full-link}(iii).
The $22\times12$ and $23\times12$ records use
Theorem~\ref{thm:row-decomposition-construction}.
The other records use compression or the six-block inputs of
Lemma~\ref{lem:certified-row-decompositions}(ii).
Their parameters are in the file.
\end{proof}

\begin{remark}
The file \nolinkurl{finite_uoms.txt} is available in the \certsite.  It contains
only the size, method note, and matrix before the next entry begins.  The
exact finite-set calculation for the fibre cover test is
\eqref{eq:cert-fibre-dp}--\eqref{eq:cert-uom-dp} in
Appendix~\ref{app:certificate-contracts}.
\end{remark}

\Needspace{10\baselineskip}
\begin{theorem}
\label{thm:finite-boundary-spectra}
\begin{align}
\Theta_{10}&=\{12\}\cup[14,1018]\cup\{1020,1024\},
\label{eq:spectrum10}\\
\Theta_{12}&=[16,4090]\cup\{4092,4096\},
\label{eq:spectrum12}\\
\Theta_{14}&=\{16\}\cup[18,16378]\cup\{16380,16384\},
\label{eq:spectrum14}\\
\Theta_{16}&=[20,65530]\cup\{65532,65536\},
\label{eq:spectrum16}\\
\Theta_{18}&=\{20\}\cup[22,262138]\cup\{262140,262144\}.
\label{eq:spectrum18}
\end{align}
\end{theorem}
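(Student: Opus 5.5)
Each of the five spectra in Theorem~\ref{thm:finite-boundary-spectra} is a two-sided statement. For each $n\in\{10,12,14,16,18\}$ I would show the listed set is contained in $\Theta_n$ and that every other integer in $[n+1,2^n]$ is excluded. The two sides are assembled from results already stated.

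The exclusions come from known results and the two new obstructions. The upper end uses \eqref{eq:finite-high-obstructions}, which removes $2^n-5,2^n-3,2^n-2,2^n-1$ and supplies $2^n-4$ and $2^n$. At the lower end, Johnston's minimum theorem \cite{johnston2013minimum} gives the minimum as $n+2$ for $n\equiv2\pmod4$ and $n+4$ for $n\equiv0\pmod4$; this excludes $n+1$ (and $n+2,n+3$ in the second class). For $n=10$, $13=n+3$ is excluded by Theorem~\ref{thm:ten-boundary}; for $n=14,18$, $n+3$ is excluded by Theorem~\ref{thm:even-obstruction}. For $n=10,12$ the minimum theorem is stated only for $n\ge9$, which covers both.

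For existence, the top part is Theorem~\ref{thm:uniform-tail}, giving $[2n,2^n-6]$. The lower window differs by case.

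\begin{itemize}
\item \emph{$n=18$.} The padding/kernel results require $n\ge22$, so I would use Proposition~\ref{prop:six-block-interval} with $t_{\mathrm{even}}=2\lceil 19/12\rceil=4$, giving $[26,64]\supseteq[26,36]$. Theorem~\ref{thm:finite-new-certificates} supplies $22,23,25,\ldots,31$. That leaves $20$ (the minimum, realized in \cite{johnston2013minimum}) and $24$. For $24$ I would use Lemma~\ref{lem:certified-row-decompositions}(ii): $t\Lambda_6$ with $t=4$ gives $24$, and the six-block construction should output it, presumably included in the proposition's proof even though its stated interval begins at $26$. If not, I would find a four-block input combination from \eqref{eq:row-decomposition-data}.
\item \emph{$n=16$.} Here $t_{\mathrm{even}}=4$, giving $[26,64]$. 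The list $21,22,23,25$ comes from Theorem~\ref{thm:finite-new-certificates}, and $20$ is the Johnston minimum. The remaining value $24$ is handled as for $n=18$.
\item \emph{$n=14$.} Here $t_{\mathrm{even}}=2$, giving $[14,32]$, which covers $[18,28]$ along with the finite list. The values $16$ and $17$ are handled separately: $16$ is the minimum, and $17$ must be excluded, as Theorem~\ref{thm:even-obstruction} does.
\item \emph{$n=12$.} Here $t_{\mathrm{even}}=4$, giving $[26,64]$. Theorem~\ref{thm:finite-new-certificates} gives $17,18,19,21,22,23$, and $16$ is the minimum. The values $20,24,25$ still need to be produced, which I would attempt with the four-block row decompositions (e.g.\ $N=7,q=4$ inputs with $t$ copies, adding $q(t-1)$ columns) or with \cite{chen2018multiqubit}.
\item \emph{$n=10$.} Here $t_{\mathrm{even}}=2$, giving $[14,32]\supseteq[14,20]$, and $12$ is the minimum.
\end{itemize}

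The main obstacle is bookkeeping the handful of low sizes, such as $24$ in $\Theta_{16},\Theta_{18}$ and $20,24,25$ in $\Theta_{12}$, that fall between the finite certificates and the start of the six-block interval. For each I would first check whether the six-block construction with $t\Lambda_6$ (whose lower range $6t+2$ can drop below the stated bound) or a four-block construction from \eqref{eq:row-decomposition-data} produces the exact pair (rows, columns). If neither does, I would fall back to the earlier literature values.
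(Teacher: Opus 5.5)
Your exclusion side and your cases $n=10,16,18$ agree with the paper's proof. For $n=16,18$ the paper also uses four six-block inputs with $k=11,13$, and $6t=24$ is indeed an output. Strictly, Proposition~\ref{prop:six-block-interval} is stated only for $n\ge20$, so you should invoke the construction directly and check $t-1\le n-5\le 6(t-1)$.

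\textbf{The gap at $n=14$.} Here $t_{\mathrm{even}}=2\lceil 15/12\rceil=4$, not $2$. More importantly, two six-block inputs can never give a $14$-column UOM. Every added column has weight at least one, so there are at most $6(t-1)=6$ added columns, and the output has at most $11$ columns. Thus $[14,32]\subseteq\Theta_{14}$ is unsupported.

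The correct choice $t=4$, $k=9$ gives only $\{24\}\cup[26,64]$. Together with the certificates $\{18,19,21,22,23\}$, this leaves $20$ and $25$ unproved. The paper supplies exactly these two sizes by two applications of the compression stated after Theorem~\ref{thm:square-transfer}, with $q=9$ and $s=5$. Both inputs are cut out of the $25\times16$ record:
\begin{itemize}
\item a $10\times9$ UOM with singleton blocks;
\item a $15\times9$ UOM made of a $9\times9$ singleton-fibre core plus one $6$-row completion block. Its ten blocks satisfy \eqref{eq:block-cover-condition} by the count in the proof of Theorem~\ref{thm:square-transfer}(i).
\end{itemize}
Two copies of the first give $20\times14$, and one of each gives $25\times14$. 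The size $25$ also follows from three $N=6$ four-block inputs, since $3\{8,9,11,12\}=[24,36]$ in $6+8=14$ columns. But $20$ needs a genuinely new input.

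\textbf{The case $n=12$.} The sizes $24,25\ge2n$ already lie in the interval of Theorem~\ref{thm:uniform-tail}, so only $20$ is missing. Your suggested route cannot produce it. In $12$ columns, every row-decomposition construction of the paper has at least $24$ rows: four blocks force $N=4$, $t=3$, giving $3\{8,9,10,12\}$; six blocks force $t\ge4$; and $N=7$ never yields $12$ columns. The paper takes $20\in\Theta_{12}$ from Johnston's construction for sizes divisible by four \cite{johnston2014structure}. Your fallback to ``earlier literature values'' would have to name that result to count as a proof.

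\textbf{A repair within the paper.} The three-point square kernel $\mathcal K_3$ covers both missing $20$'s. In its completion graph the identity permutation is isolated, and the other five permutations are pairwise adjacent, so there are maximal cliques of orders $1$ and $5$. Since $7\in Q_3$, Theorem~\ref{thm:square-transfer}(ii) with $q=7$ gives $20\times(15-s)$ UOMs. Taking $s=1$ gives $20\times14$, and $s=3$ gives $20\times12$.
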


\begin{proof}
It is enough to decide the sizes between the known minimum and
$2n$.  Equation~\eqref{eq:intro-two-n-tail} gives $[2n,2^n-6]$.
Equation~\eqref{eq:finite-high-obstructions} determines the larger sizes.
For $n=14,16,18$, a common four-input construction supplies most of the
remaining sizes.  The listed matrices and two applications of compression
fill the gaps.

The minimum in each case is given in \cite{johnston2013minimum}.

For \eqref{eq:spectrum10}, the minimum is $12$.
The six-block construction in
Appendix~\ref{app:row-decomposition-constructions} permits $t=2$.
Choose $k=5$ from \eqref{eq:K66-decompositions}.
The output has $5+k=10$ columns.
Equation~\eqref{eq:six-parent-sums} gives
$2\Lambda_6=\{12\}\cup[14,32]$ as its row numbers.
This set meets the interval $[20,1018]$ from
Theorem~\ref{thm:uniform-tail}.
The remaining value $13$ is excluded by Theorem~\ref{thm:ten-boundary}.

For \eqref{eq:spectrum12}, the minimum is $16$.
The sizes still needed below $2n=24$ are $17$ through $23$.
Equation~\eqref{eq:finite12} gives $17,18,19,21,22,23$.
The standard construction for row numbers divisible by four gives $20$
\cite{johnston2014structure}.

For $n=14,16,18$, use four inputs with six blocks each.
The choices in \eqref{eq:K66-decompositions} allow every
$k\in[3,18]$ when $t=4$.
Take $k=9,11,13$, respectively.  Equation~\eqref{eq:six-parent-sums}
then gives
\[
 \{24\}\cup[26,64]\subseteq\Theta_n
 \qquad(n=14,16,18).
\]

For \eqref{eq:spectrum14}, the minimum is $16$.  Theorem~\ref{thm:even-obstruction}
excludes $17$.  Equation~\eqref{eq:finite14} gives $18,19,21,22,23$.
The sizes $20$ and $25$ follow from two applications of compression.
The inputs are submatrices of the $25\times16$ record in
\nolinkurl{finite_uoms.txt}.  To check the compression hypotheses, first
consider its first nine columns.  They restrict to a
$10\times9$ UOM on rows $1$--$10$ and a $15\times9$ UOM on rows
$11$--$25$.  These assertions follow from
\eqref{eq:cert-orth} and \eqref{eq:cert-uom-dp}.
Every fibre of the first input is a singleton.
In the second input, the first nine rows form an orthogonal $9\times9$
core with singleton fibres.  The other six rows form one completion block.
The block count in the proof of Theorem~\ref{thm:square-transfer}(i)
therefore gives ten blocks satisfying \eqref{eq:block-cover-condition}
for this input.  The singleton blocks of the first input have the same
property.

In each application, give the two input copies disjoint column alphabets.
The compression
\hyperref[par:general-compression]{after Theorem~\ref{thm:square-transfer}}
applies with $q=9$ and $s=5$.
Using the first input twice gives $20\times14$.
Using one of each gives $25\times14$.

The common four-input range supplies $24,26,27$ for $n=14$.
This accounts for all sizes below the upper interval.

For \eqref{eq:spectrum16}, the minimum is $20$.
Equation~\eqref{eq:finite16} supplies $21,22,23,25$.
The four-input range supplies $24$ and $[26,31]$.

For \eqref{eq:spectrum18}, the minimum is $20$.
Equation~\eqref{eq:even-obstruction} excludes $21$.
Equation~\eqref{eq:finite18} supplies $22,23,25,26,27,29,30,31$.
The four-input range supplies $24,28$ and $[32,35]$.
Together with \eqref{eq:intro-two-n-tail} and
\eqref{eq:finite-high-obstructions}, these checks prove
\eqref{eq:spectrum10}--\eqref{eq:spectrum18}.
\end{proof}

The finite cases above and Theorem~\ref{thm:existence-intervals} complete
the proof of Theorem~\ref{thm:complete-spectrum}.
\begin{proof}
The constructed intervals give the middle part of each spectrum.
The remaining sizes follow from the minimum results and
Section~\ref{sec:nonexistence}.
For $n\le8$, equation~\eqref{eq:small-spectra} follows from the known
results recalled in Section~\ref{sec:introduction}.  The remaining odd
base case is $n=9$.  The known minimum and parity results give
$10\in\Theta_9$ and $11\notin\Theta_9$.
The sizes still needed below the upper interval starting at $18$ are
$12$ through $17$.  Take two input UOMs with six
decomposition blocks.  By \eqref{eq:six-parent-sums}, their possible row
sums are
\[
 2\Lambda_6=\{12\}\cup[14,32].
\]
For $t=2$, the second six-block construction in
Appendix~\ref{app:row-decomposition-constructions} uses the choices in
\eqref{eq:K66-decompositions}.  These permit every
$k\in[1,6]$.  Choose $k=4$.  The output has $5+4=9$ columns.  Thus
$\{12\}\cup[14,32]\subseteq\Theta_9$.
The matrix $M_{13,9}$ in Appendix~\ref{app:explicit-matrices} supplies $13$.
Equation~\eqref{eq:intro-two-n-tail} gives $[18,506]$.
Equation~\eqref{eq:finite-high-obstructions} determines the six larger
sizes.  This proves \eqref{eq:odd-spectrum} for $n=9$.

For odd $n\ge11$, \eqref{eq:group-perfect-final-odd-tail} gives the middle
interval.  The minimum and parity results in
\cite{johnston2013minimum,johnston2014structure} determine the smaller sizes.
Equation~\eqref{eq:finite-high-obstructions} determines the larger sizes.
These results prove \eqref{eq:odd-spectrum}.

Suppose that $n\equiv2\pmod4$.  Theorem~\ref{thm:finite-boundary-spectra}
gives the cases $n=10,14,18$.  For $n\ge22$,
\eqref{eq:group-perfect-final-two-mod-four-tail} gives the middle interval.
The minimum result in \cite{johnston2013minimum} and
Theorem~\ref{thm:even-obstruction} determine the smaller sizes.
Together with \eqref{eq:finite-high-obstructions}, this proves
\eqref{eq:two-spectrum}.

Suppose that $n\equiv0\pmod4$.  Theorem~\ref{thm:finite-boundary-spectra}
gives the cases $n=12,16$.  For $n\ge20$,
\eqref{eq:group-perfect-final-zero-mod-four-tail} gives the middle interval.
The minimum result in \cite{johnston2013minimum} determines the smaller sizes.
The larger sizes are given by \eqref{eq:finite-high-obstructions}.
Thus \eqref{eq:zero-spectrum} holds.
All $n\ge1$ are covered.
\end{proof}

As an example, the file \nolinkurl{20_qubit_uoms.txt} in the
\certsite\ gives the complete $20$-qubit size spectrum from
\eqref{eq:zero-spectrum}.
It lists an explicit $m\times20$ UOM for every $m\in[24,150]$.

\section{Conclusion}

Theorem~\ref{thm:complete-spectrum} determines all possible multiqubit UPB
sizes.  The constructions use row decompositions and completion graphs.
Fibre covers prove unextendibility.  Counts of row pairs exclude the
remaining sizes.  Extending these methods to higher local dimensions is
a natural next question.  Classifying all multiqubit UPBs of a fixed
size also remains open.

\section*{AI-Statement}

The authors used ChatGPT (GPT-5.6 Sol) for language polishing and LaTeX
editing during the development and preparation of this paper.
Some ideas used in the proof of Theorem~\ref{thm:even-obstruction} arose
during interactions with GPT-5.6 Sol.
All mathematical statements and proofs were independently verified by the
authors. The authors take responsibility for the final manuscript.

\section*{Acknowledgments}

Authors were supported by the NNSF of China (Grant No. 12471427).
\appendix
\section{Constructions from row decompositions and the upper interval}
\label{app:row-decomposition-constructions}

This appendix gives the inputs and added columns needed to apply
Theorem~\ref{thm:row-decomposition-construction} with two, four, or six
decomposition blocks.  The constructions give intervals of row numbers.
A direct sum then extends these intervals to the upper end of the spectrum.
Row decompositions and added column systems are defined in
Definitions~\ref{def:row-decomposition} and
\ref{def:added-column-system}.  The program
\nolinkurl{check_construction_data.py} checks the finite inputs.  It uses the
orthogonality and fibre cover conditions in
\eqref{eq:cert-orth}--\eqref{eq:cert-uom-dp}, followed by the minimal support
and repetition conditions in \eqref{eq:cert-min-supports} and
\eqref{eq:cert-repetition-capacity} of
Appendix~\ref{app:certificate-contracts}.

\subsection{Two decomposition blocks}

\begin{lemma}
\label{lem:two-block-test}
Suppose that every input UOM with $N$ columns has the row decomposition
$A_0\disj A_1$ and satisfies
\begin{equation}
 \kappa_X(A_0)>N/2,\qquad \kappa_X(A_1)>N/2.
 \label{eq:two-block-cost}
\end{equation}
Here $\kappa_X$ is the cover cost in
Definition~\ref{def:equality-fibre}(iii) and \eqref{eq:cover-cost}.  Then the
joint block capacity of any finite list of these input UOMs is at most one,
where an input UOM is allowed to appear more than one time.
\end{lemma}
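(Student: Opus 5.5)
Take a feasible family $(I_v,S_v)_{v=1}^t$ in \eqref{eq:joint-capacity}, with each $I_v\subseteq\{0,1\}$. The plan is to show $\sum_v|I_v|\le1$ by bounding each support size in terms of $|I_v|$ and then using disjointness of the supports.

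\textbf{Step 1: single blocks need many columns.} Suppose $I_v\ne\varnothing$, and let $A_i$ be a block with $i\in I_v$. The support $S_v$ lies in $\Sfam_{X^{(v)}}(\bigcup_{i\in I_v}A_i)$. Since $\Sfam$ is monotone under shrinking the target set, $S_v$ is also a fibre support of $A_i$ alone. By \eqref{eq:cover-cost} and \eqref{eq:two-block-cost},
\[
|S_v|\ge\kappa_{X^{(v)}}(A_i)>N/2 .
\]
Repetition of inputs is harmless here: each copy carries its own alphabet, but its fibres and cover costs are exactly those of the original UOM.

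\textbf{Step 2: both blocks of one input are impossible.} Suppose $I_v=\{0,1\}$. Then $S_v$ supports $A_0\cup A_1=R(X^{(v)})$, using at most one fibre per column. By Theorem~\ref{thm:fibre-cover}(i), $X^{(v)}$ would then be extendible. This contradicts the assumption that $X^{(v)}$ is a UOM. Hence $|I_v|\le1$ for every $v$.

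\textbf{Step 3: at most one input contributes.} If two distinct indices $u\ne v$ both had $|I_u|=|I_v|=1$, then Step~1 gives $|S_u|,|S_v|>N/2$. These supports are disjoint subsets of $[N]$, so
\[
|S_u|+|S_v|\le N,
\]
which is a contradiction. Hence $\sum_v|I_v|\le1$, and taking the maximum over feasible families in \eqref{eq:joint-capacity} gives $\bcap\le1$.

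\textbf{Main obstacle.} The only delicate point is Step~2. The cost hypothesis alone gives $|S_v|>N/2$, which does not exclude both blocks of one input being covered inside $N$ columns. That case has to be ruled out by unextendibility via Theorem~\ref{thm:fibre-cover}(i), not by counting.
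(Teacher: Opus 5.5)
Your proof is correct and follows the paper's argument. Two covered blocks from the same input would cover its whole row set, contradicting unextendibility by the fibre cover test. Blocks from two distinct inputs would need disjoint supports, each of size greater than $N/2$, which cannot fit inside $N$ columns.
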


\begin{proof}
We claim that a feasible choice in \eqref{eq:joint-capacity} cannot cover
two blocks.  Indeed, suppose that it does.
If they belong to one input, the chosen fibres cover its whole
row set.  This contradicts \eqref{eq:fibre-cover}.
If they belong to distinct inputs, their supports must be disjoint by
\eqref{eq:joint-capacity}.  Equation~\eqref{eq:two-block-cost} makes their
total size greater than $N$.
This is impossible within $N$ original columns.
Thus the joint capacity is at most one.
\end{proof}

Lemma~\ref{lem:two-block-test} gives the capacity bound required by
Theorem~\ref{thm:row-decomposition-construction}(i).
Use the matching added columns described after
Definition~\ref{def:added-column-system}.  For $t$ input UOMs,
part (iii) of the theorem supplies $2(t-1)$ added columns.
The row numbers are sums of the input row numbers.
The verified inputs with four and five columns give
\begin{align}
 [3n-6,5n-10]&\subseteq\Theta_n&&(n\ge6\text{ even}),
 \label{eq:two-block-even}\\
 \{3n-9\}\cup[3n-7,5n-15]&\subseteq\Theta_n
 &&(n\ge5\text{ odd}).
 \label{eq:two-block-odd}
\end{align}
Here the first set is the sumset $t[6,10]$, where $n=4+2(t-1)$, and the
second set is $t\{6,8,9,10\}$, where $n=5+2(t-1)$.

\subsection{Four decomposition blocks and repeated input matrices}

Let $X$ be an input UOM with four decomposition blocks.  Take
$I\subseteq[4]$.  Let $\Sfam_X(I)$ be the fibre support family in
Definition~\ref{def:equality-fibre}(ii), using the block notation introduced
after Definition~\ref{def:row-decomposition}.  Denote its inclusion-minimal
members by $\mathcal M_X(I)$.  Let $\mathcal C$ be a family of
verified input UOMs, where every matrix has $N$ columns and four
decomposition blocks.

Theorem~\ref{thm:row-decomposition-construction}(i) requires
$\bcap^{\rm rep}(\mathcal C)\le3$.
The program \nolinkurl{check_construction_data.py} verifies it from all
minimal supports in \eqref{eq:cert-min-supports} using
\eqref{eq:cert-repetition-capacity}.
The recurrence allows an input to be chosen again at a later state.
It therefore includes repeated inputs.  Its proof is in
Appendix~\ref{app:certificate-contracts}.

Let $\alpha_j$ be the minimum size of a fibre support covering $j$ blocks.
The following table gives these lower bounds and the column restrictions
when equality holds:
\begin{equation}
\begin{gathered}
\begin{array}{c|c}
N&[\alpha_1,\alpha_2,\alpha_3]\\ \hline
4&[1,3,4]\\
5&[2,3,4]\\
6&[2,3,5]\\
7&[2,4,6]
\end{array}\\[.7em]
\begin{array}{c|l}
N&\text{additional equality check}\\ \hline
4&\text{every size-$1$ support of one block contains column }4\\
5&\text{none}\\
6&\text{every size-$3$ support of two blocks contains column }6\\
7&\text{none}.
\end{array}
\end{gathered}
\label{eq:four-support-profiles}
\end{equation}
The test uses the full list of minimal supports.  The three numbers
$\alpha_1,\alpha_2,\alpha_3$ give useful lower bounds, but they alone do
not determine the repetition capacity.

The capacity bound permits any list of the verified inputs, including
repetitions.  It remains to determine the sums of their row numbers:
\begin{align}
t\{8,9,10,12\}&=[8t,12t-2]\cup\{12t\},
\label{eq:four-sum1}\\
t[8,12]&=[8t,12t],
\label{eq:four-sum2}\\
t\{8,9,11,12\}&=[8t,12t]&&(t\ge2),
\label{eq:four-sum3}\\
t\{8,10,11,12,13\}&=\{8t\}\cup[8t+2,13t].
\label{eq:four-sum4}
\end{align}
These identities follow by induction on $t$ after subtracting $8t$ from
both sides.  The four sets which occur in the induction are
$\{0,1,2,4\}$, $[0,4]$, $\{0,1,3,4\}$, and
$\{0,2,3,4,5\}$, in the same order as the four equations above.

These sumsets give an interval near the lower end of the spectrum.
The two-block intervals extend it upwards.
For $n=N+4(t-1)$, Theorem~\ref{thm:row-decomposition-construction},
\eqref{eq:four-sum1}--\eqref{eq:four-sum4}, and the two-block intervals give
\begin{equation}
\begin{array}{c|c}
n\pmod4&\text{continuous interval in }\Theta_n\\ \hline
0&[2n,5n-10]\\
1&[2n-2,5n-15]\\
2&[2n-4,5n-10]\quad(n\ge10)\\
3&[2n-4,5n-15].
\end{array}
\label{eq:decomposition-block-interval}
\end{equation}

Figure~\ref{twon:fig-four-group-lift} shows the complete block matrix and
the four one-factors used in the added columns.  The term one-factor has the
meaning in Definition~\ref{def:graph-notation}(i).

\begin{figure}[!htbp]
  \centering
  \begingroup
  \setlength{\arraycolsep}{4.8pt}
  \renewcommand{\arraystretch}{1.16}
  \small
  \begin{equation*}
    X^{(v)}=
    \begin{bmatrix}X^{(v)}_1\\X^{(v)}_2\\X^{(v)}_3\\X^{(v)}_4\end{bmatrix},
    \qquad
    R\!\left(X^{(v)}_i\right)=A^{(v)}_i,
    \qquad
    r_{v,i}=|A^{(v)}_i|,
    \qquad v\in\{1,2\},
    \qquad\text{\rm(a)}
  \end{equation*}

  \vspace{.2em}

  \begin{align*}
    \pi_e(i)&=1+\bigl((i+e-1)\bmod4\bigr),
    &\rho_e&=\pi_e^{-1},\\
    M_e&=\bigl\{\{A^{(1)}_i,A^{(2)}_{\pi_e(i)}\}:i\in[4]\bigr\}.
    \tag*{\rm(b)}
  \end{align*}
  For every $e$ and $i$, the symbols $\alpha_{e,i}$ and
  $\bar\alpha_{e,i}$ form a new mate pair.
  Different edges use different mate pairs.

  \vspace{.25em}

  \resizebox{.985\linewidth}{!}{$
  Y=
  \left[
  \begin{array}{c|cccc}
    \multicolumn{1}{c|}{\text{$N$ original columns}}
       &c_0&c_1&c_2&c_3\\ \hline
    X^{(1)}_1
       &\alpha_{0,1}\mathbf 1_{r_{1,1}}
       &\alpha_{1,1}\mathbf 1_{r_{1,1}}
       &\alpha_{2,1}\mathbf 1_{r_{1,1}}
       &\alpha_{3,1}\mathbf 1_{r_{1,1}}\\
    X^{(1)}_2
       &\alpha_{0,2}\mathbf 1_{r_{1,2}}
       &\alpha_{1,2}\mathbf 1_{r_{1,2}}
       &\alpha_{2,2}\mathbf 1_{r_{1,2}}
       &\alpha_{3,2}\mathbf 1_{r_{1,2}}\\
    X^{(1)}_3
       &\alpha_{0,3}\mathbf 1_{r_{1,3}}
       &\alpha_{1,3}\mathbf 1_{r_{1,3}}
       &\alpha_{2,3}\mathbf 1_{r_{1,3}}
       &\alpha_{3,3}\mathbf 1_{r_{1,3}}\\
    X^{(1)}_4
       &\alpha_{0,4}\mathbf 1_{r_{1,4}}
       &\alpha_{1,4}\mathbf 1_{r_{1,4}}
       &\alpha_{2,4}\mathbf 1_{r_{1,4}}
       &\alpha_{3,4}\mathbf 1_{r_{1,4}}\\ \hline
    X^{(2)}_1
       &\bar\alpha_{0,\rho_0(1)}\mathbf 1_{r_{2,1}}
       &\bar\alpha_{1,\rho_1(1)}\mathbf 1_{r_{2,1}}
       &\bar\alpha_{2,\rho_2(1)}\mathbf 1_{r_{2,1}}
       &\bar\alpha_{3,\rho_3(1)}\mathbf 1_{r_{2,1}}\\
    X^{(2)}_2
       &\bar\alpha_{0,\rho_0(2)}\mathbf 1_{r_{2,2}}
       &\bar\alpha_{1,\rho_1(2)}\mathbf 1_{r_{2,2}}
       &\bar\alpha_{2,\rho_2(2)}\mathbf 1_{r_{2,2}}
       &\bar\alpha_{3,\rho_3(2)}\mathbf 1_{r_{2,2}}\\
    X^{(2)}_3
       &\bar\alpha_{0,\rho_0(3)}\mathbf 1_{r_{2,3}}
       &\bar\alpha_{1,\rho_1(3)}\mathbf 1_{r_{2,3}}
       &\bar\alpha_{2,\rho_2(3)}\mathbf 1_{r_{2,3}}
       &\bar\alpha_{3,\rho_3(3)}\mathbf 1_{r_{2,3}}\\
    X^{(2)}_4
       &\bar\alpha_{0,\rho_0(4)}\mathbf 1_{r_{2,4}}
       &\bar\alpha_{1,\rho_1(4)}\mathbf 1_{r_{2,4}}
       &\bar\alpha_{2,\rho_2(4)}\mathbf 1_{r_{2,4}}
       &\bar\alpha_{3,\rho_3(4)}\mathbf 1_{r_{2,4}}
  \end{array}
  \right].
  $}

  \vspace{.8em}

  For $a\in[r_{v,i}]$,
  \begin{equation*}
    Y_{(v,i,a),j}=
    \begin{cases}
      X^{(v)}_{(i,a),j},
        &1\le j\le N,\\
      \alpha_{e,i},
        &v=1,\ j=N+e+1,\\
      \bar\alpha_{e,\rho_e(i)},
        &v=2,\ j=N+e+1,
    \end{cases}
    \qquad e\in\{0,1,2,3\}.
    \qquad\text{\rm(c)}
  \end{equation*}

  \vspace{.35em}

  \begin{tikzpicture}[
    x=1cm,y=1cm,
    pt/.style={circle,fill=black,inner sep=0pt,minimum size=3.8pt},
    edge/.style={draw=black,line width=.5pt},
    lab/.style={font=\scriptsize}
  ]
    \node[font=\footnotesize,anchor=west] at (0,2.48)
      {(d) the one-factor used for each added column};
    \foreach \e/\x in {0/0,1/3.15,2/6.30,3/9.45}{
      \node[font=\footnotesize,anchor=west] at (\x,2.02) {$M_{\e}$};
      \foreach \i/\yy in {1/1.52,2/1.02,3/.52,4/.02}{
        \node[pt] (L\e\i) at (\x+.72,\yy) {};
        \node[pt] (R\e\i) at (\x+2.10,\yy) {};
        \node[lab,anchor=east] at (\x+.58,\yy) {$\i$};
        \node[lab,anchor=west] at (\x+2.24,\yy) {$\i$};
      }
    }
    \foreach \i in {1,...,4}{
      \pgfmathtruncatemacro{\j}{mod(\i-1+0,4)+1}
      \draw[edge] (L0\i)--(R0\j);
      \pgfmathtruncatemacro{\j}{mod(\i-1+1,4)+1}
      \draw[edge] (L1\i)--(R1\j);
      \pgfmathtruncatemacro{\j}{mod(\i-1+2,4)+1}
      \draw[edge] (L2\i)--(R2\j);
      \pgfmathtruncatemacro{\j}{mod(\i-1+3,4)+1}
      \draw[edge] (L3\i)--(R3\j);
    }
    \node[lab] at (1.41,-.62) {$A_i^{(1)}\longleftrightarrow A_{\pi_0(i)}^{(2)}$};
    \node[lab] at (4.56,-.62) {$A_i^{(1)}\longleftrightarrow A_{\pi_1(i)}^{(2)}$};
    \node[lab] at (7.71,-.62) {$A_i^{(1)}\longleftrightarrow A_{\pi_2(i)}^{(2)}$};
    \node[lab] at (10.86,-.62) {$A_i^{(1)}\longleftrightarrow A_{\pi_3(i)}^{(2)}$};
  \end{tikzpicture}
  \endgroup
  \caption{The construction from four decomposition blocks in every input
  UOM.  The original part of a row is the same as the corresponding row of
  its input UOM.  The entries in the four added columns are given by
  formula (c), and panel (d) gives the one-factor which is used for each
  added column.}
  \label{twon:fig-four-group-lift}
\end{figure}

\subsection{Weighted added column systems for six blocks}

\medskip
\noindent\emph{Paired triples.}
This construction gives every even number of added columns in
$[2(t-1),6(t-1)]$, with total weight $6(t-1)$.
The number of columns changes by splitting selected columns into three.
Their total weight stays the same.
To construct the initial columns, divide the six decomposition blocks of
each input UOM into two triples.  For
$t$ input UOMs there are $2t$ triples.  Pair the two triples from each
input.  Remove these pairs from the complete graph on the triples:
\begin{equation}
 Q_t=K_{2t}-tK_2.
 \label{eq:cocktail}
\end{equation}
By Theorem~\ref{thm:standard-graph-tools}(i), the deleted matching belongs
to a one-factorization of $K_{2t}$.  Remove its factor.  Each of the other
$2t-2$ factors gives a union of $K_{3,3}$ components.
These components contain every decomposition block vertex in
\eqref{eq:demand-graph}.  Over all factors, they give every edge of that
demand graph.
For every component, put one symbol on one part and its mate on the other.
Use a different mate pair for each component.
This gives one column of weight three for the whole factor.
The column can instead be split as follows.
Factor every component into three one-factors.
Use the first one-factor of every component in one column, then do the
same for the second and third one-factors.
This gives three columns of weight one.
Make the same choice for all components of a factor.  If the columns
for $a$ factors are each split into three, then
\begin{equation}
 k=2(t-1)+2a,\qquad0\le a\le2(t-1),\qquad
 \sum_c\mu_c=6(t-1).
 \label{eq:six-even-counts}
\end{equation}
The possible values of $k$ are all the even numbers in
$[2(t-1),6(t-1)]$.

\medskip
\noindent\emph{Factorized input pairs.}
When $t$ is even, the number of added columns can be any integer in
$[t-1,6(t-1)]$.  Their total weight is $6(t-1)$.
For this construction, divide $K_t$ into $t-1$ one-factors using
Theorem~\ref{thm:standard-graph-tools}(i).  Here each vertex represents one
input.  Each edge of $K_t$ represents one copy of $K_{6,6}$ between the
blocks of its two inputs.  For a
fixed factor, label both parts of each copy by $\mathbb Z_6$.  Use
the cyclic one-factors
\[
 M_e=\{L_iR_{i+e}:i\in\mathbb Z_6\},\qquad e\in\mathbb Z_6.
\]
These six factors partition $K_{6,6}$.
The union $M_0\cup M_2\cup M_4$ consists of two copies of $K_{3,3}$.
The same holds for $M_1\cup M_3\cup M_5$.
The union $M_e\cup M_{e+3}$ consists of three copies of $K_{2,2}$.
These decompositions allow any number of groups from one to six.
All six factors can be kept together, divided into two triples,
or divided into three opposite pairs.  Keeping one triple together and
the other three factors separate gives four groups.
Keeping one opposite pair together and the other four factors separate
gives five groups.  Keeping all factors separate gives six groups.
For each group, use a different mate pair on every complete bipartite
component.  Make the same grouping choice for every input pair in the fixed
factor of $K_t$.  Put corresponding groups in the same column.
The number of columns and their weights are
\begin{equation}
\begin{array}{c|c}
1&6\\2&3+3\\3&2+2+2\\4&3+1+1+1\\
5&2+1+1+1+1\\6&1+1+1+1+1+1.
\end{array}
\label{eq:K66-decompositions}
\end{equation}
Choose one line of
\eqref{eq:K66-decompositions} for each factor of $K_t$.
Every $k\in[t-1,6(t-1)]$ is a sum of $t-1$ integers from $[1,6]$.
The total weight is always $6(t-1)$.

The added columns above satisfy the first bound in
\eqref{eq:row-decomposition-hypothesis}: their total weight is $6(t-1)$.
For the second bound, the six-block inputs satisfy
\eqref{eq:block-cover-condition}.  Thus
\eqref{eq:block-cover-condition-capacity} bounds their joint capacity by
five.  Theorem~\ref{thm:row-decomposition-construction}(i) gives UOMs.

For Proposition~\ref{prop:six-block-interval}, the required number of
added columns is $k=n-5$.  The row sums are
\begin{equation}
 \begin{aligned}
 t\Lambda_6
   &=6t+t(\{0\}\cup[2,10])\\
   &=\{6t\}\cup[6t+2,16t],
 \end{aligned}
 \label{eq:six-row-sums}
\end{equation}
which proves \eqref{eq:six-parent-sums}.
For \eqref{eq:six-odd}, take $t=t_{\mathrm{odd}}$ from
\eqref{eq:six-parameters}.  Then $k=n-5$ is even.
The bounds $6t\ge n+1$ and $t\le(n+5)/6$ give
\[
 2(t-1)\le(n-1)/3\le n-5\le6(t-1).
\]
Thus \eqref{eq:six-even-counts} supplies $k$ columns.

For \eqref{eq:six-even}, take $t=t_{\mathrm{even}}$.
This is even, with $6t\ge n+1$ and $t\le(n+12)/6$.
Hence
\[
 t-1\le(n+6)/6\le n-5\le6(t-1).
\]
The choices in \eqref{eq:K66-decompositions} supply every such $k$.
In each case \eqref{eq:six-row-sums} gives the required rows in
$5+k=n$ columns.

\subsection{Closure and the upper interval}

\begin{lemma}
\label{lem:direct-sum}
If $a,b\in\Theta_N$, then $a+b\in\Theta_{N+1}$.
\end{lemma}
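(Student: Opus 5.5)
Take an $a\times N$ UOM $X$ and a $b\times N$ UOM $X'$, relabel their alphabets so that they are disjoint in every original column, and stack $X$ above $X'$. Then attach one new column $c$. In column $c$, every row of $X$ receives a symbol $\alpha$ and every row of $X'$ receives its mate $\bar\alpha$. The result $Y$ is an $(a+b)\times(N+1)$ formal matrix. This is exactly the stacked construction of Theorem~\ref{thm:row-decomposition-construction} with $t=2$ inputs and $q=1$ decomposition block each, so $R(X)$ and $R(X')$ are the blocks. The added column is a matching added column system for $D_{1,2}=K_2$, with $k=q(t-1)=1$.

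Orthogonality needs two checks. Two rows from the same input are orthogonal in the original columns. Two rows from different inputs are orthogonal in column $c$.

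For unextendibility, I will use Theorem~\ref{thm:row-decomposition-construction}(ii), which reduces the claim to $\bcap(X,X')\le 0$, that is, $\bcap<q=1$. A feasible choice in \eqref{eq:joint-capacity} with $\sum_v|I_v|\ge1$ would give a set $I_v=\{1\}$ and a fibre support $S_v$ covering all of $R(X)$ or all of $R(X')$. By \eqref{eq:fibre-cover}, that input UOM would then be extendible, which is a contradiction. Hence the joint capacity is $0$, and $Y$ is a UOM.

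For a direct argument, one can instead use the fibre cover test of Theorem~\ref{thm:fibre-cover}(i). The fibre chosen in column $c$ covers at most one of the two inputs. The rows of the other input must therefore be covered by fibres in the original columns. Those fibres would be an extension of that input UOM, because the alphabets are disjoint and a fibre of $Y$ restricted to one input is a fibre of that input or empty.

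Since $Y$ has $a+b$ rows and $N+1$ columns, it follows that $a+b\in\Theta_{N+1}$. No step here is a real obstacle. The only point needing care is that relabeling the alphabets to make them disjoint preserves the UOM property of each input. This holds because the UOM property depends only on the mate structure within each column.
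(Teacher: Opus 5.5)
Your proof is correct, and your second, direct argument is essentially the paper's proof: the fibre in the added column covers at most one input, so the first $N$ entries of an extension would extend the other input. Your first route, Theorem~\ref{thm:row-decomposition-construction}(ii) with $q=1$ and $t=2$, is a valid specialization of that same argument. Relabelling to disjoint alphabets is harmless but not needed for the direct argument, since the restriction of a fibre of the stacked matrix to one input's rows is always a (possibly empty) fibre of that input.
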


\begin{proof}
Stack UOMs of sizes $a\times N$ and $b\times N$.
Add one column with $x$ on the first input and $\bar x$ on the second.
Rows within either input remain orthogonal.
Rows from distinct inputs are orthogonal in the added column.

An extension cannot cover both inputs in that column, by
\eqref{eq:mate-rule}.  Its first $N$ entries would therefore extend one
whole input, contrary to Definition~\ref{def:uom}(iii).
\end{proof}

The following rule gives another way to combine UOMs.
It replaces each row of one UOM by a group of rows.

\begin{lemma}
\label{lem:row-replacement}
Let $X$ be a $t\times s$ UOM, and for every $i\in[t]$ let $Y_i$ be an
$a_i\times r$ UOM.  Replace row $i$ of $X$ by $a_i$ copies of this row,
and put $Y_i$ beside these copies in $r$ new columns.  Then the resulting
matrix is a
\begin{equation}
 \left(\sum_{i=1}^t a_i\right)\times(s+r)\text{ UOM}.
\end{equation}
\end{lemma}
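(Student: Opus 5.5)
The proof checks orthogonality first and then unextendibility via the fibre cover test in Theorem~\ref{thm:fibre-cover}(i). Write $Z$ for the resulting matrix. Its rows are indexed by pairs $(i,u)$ with $i\in[t]$ and $u\in R(Y_i)$. The first $s$ entries of row $(i,u)$ copy row $i$ of $X$, and the last $r$ entries are the row $u$ of $Y_i$. The alphabets of the $r$ new columns are taken separately for each $i$, so that the blocks $Y_i$ use pairwise disjoint mate pairs. This is the reading under which the lemma is meant, as in the stacked constructions above. Any set of alphabets that only avoids accidental mates would also work.

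For orthogonality, take two distinct rows $(i,u)$ and $(i',u')$. If $i\ne i'$, rows $i$ and $i'$ of $X$ are orthogonal in some original column, and this column makes the two rows of $Z$ orthogonal. If $i=i'$, then $u\ne u'$ are distinct rows of the UOM $Y_i$. They are therefore orthogonal in one of the $r$ new columns.

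For unextendibility, suppose $y=(y',y'')$ extends $Z$, with $y'$ its first $s$ entries and $y''$ its last $r$ entries. First I would show that $y''$ can serve a whole group $\{i\}\times R(Y_i)$ for at most one index $i$. In each new column, $y''$ has a single entry, whose mate lies in the alphabet of at most one block $Y_i$ because the alphabets are disjoint. Even without this, the argument goes per index. Fix $i$. If every row $(i,u)$ were orthogonal to $y$ in a new column, then $y''$, restricted to the alphabet of $Y_i$, would extend $Y_i$. More precisely, in the columns where $y''_j$ is not the mate of a $Y_i$-symbol, replace $y''_j$ by a symbol from a fresh mate pair. The result is orthogonal to every row of $Y_i$, which contradicts the fact that $Y_i$ is a UOM by Definition~\ref{def:uom}(iii).

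So for every $i$ there is some row $(i,u_i)$ that $y$ does not meet in the new columns. That row must then be orthogonal to $y$ in an original column, which means $y'\perp$ row $i$ of $X$. This holds for every $i\in[t]$, so $y'$ extends $X$, contradicting that $X$ is a UOM. The size count $\sum_i a_i$ rows and $s+r$ columns is immediate.

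The only delicate point is the replacement step, which turns $y''$ into an honest extension of $Y_i$ when some entries of $y''$ belong to other blocks' alphabets. That step is routine, because the entries that match nothing can be filled with new mate pairs, exactly as in the proof of Theorem~\ref{thm:fibre-cover}(i).
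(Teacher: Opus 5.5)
Your proof is correct and is essentially the paper's argument. The paper runs it in the other order: since $y'$ cannot extend $X$, some row $i$ is missed in the original columns, so the last $r$ entries must extend $Y_i$; your per-index step is the same reasoning arranged contrapositively. The disjoint-alphabet convention and the fresh-symbol replacement are harmless but unnecessary. An entry that is not the mate of any $Y_i$-symbol already acts as a fresh symbol, so the argument works for any choice of alphabets in the new columns.
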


\begin{proof}
Rows in the same replacement group are orthogonal in the $Y_i$ columns.
Rows in distinct groups are orthogonal in the $X$ columns.

For unextendibility, suppose that an extension exists.
Its last $r$ entries would extend one of the $Y_i$.
Indeed, its first $s$ entries leave some row $i$ of $X$ uncovered by
\eqref{eq:fibre-cover}.  They therefore leave that whole replacement
group uncovered.  Its last $r$ entries must then extend $Y_i$.
This is impossible.
\end{proof}

The new matrix is shown in Figure~\ref{fig:row-replacement-picture}, where
the source of every entry is also given.

\begin{figure}[!htbp]
  \centering
  \begingroup
  \setlength{\arraycolsep}{4.2pt}
  \renewcommand{\arraystretch}{1.13}
  \small
  \begin{equation*}
    X=(X_{ij})_{t\times s},
    \qquad
    Y_i=\bigl((Y_i)_{a\ell}\bigr)_{a_i\times r},
    \qquad
    Z=X[Y_1,\ldots,Y_t].
    \qquad\text{\rm(a)}
  \end{equation*}

  \vspace{.4em}

  \resizebox{.985\linewidth}{!}{$
  \begin{array}{c@{\qquad}c}
  \begin{array}{c}
    \text{original row}\vphantom{\displaystyle\sum_{a=1}^{a_i}}\\[4pt]
    X_{1,*}\\[-1pt]\vdots\\[-1pt]
    X_{i,*}\\[-1pt]\vdots\\[-1pt]
    X_{t,*}
  \end{array}
  &
  Z=
  \left[
  \begin{array}{c|cccc|cccc}
    &\multicolumn{4}{c|}{\text{$s$ original columns}}
    &\multicolumn{4}{c}{\text{$r$ added columns}}\\
    \text{row}&1&2&\cdots&s&1&2&\cdots&r\\ \hline
    (1,1)&X_{11}&X_{12}&\cdots&X_{1s}
      &(Y_1)_{11}&(Y_1)_{12}&\cdots&(Y_1)_{1r}\\
    \vdots&\vdots&\vdots&&\vdots&\vdots&\vdots&&\vdots\\
    (1,a_1)&X_{11}&X_{12}&\cdots&X_{1s}
      &(Y_1)_{a_1,1}&(Y_1)_{a_1,2}&\cdots&(Y_1)_{a_1,r}\\ \hline
    \vdots&\vdots&\vdots&&\vdots&\vdots&\vdots&&\vdots\\[-2pt]
    (i,1)&X_{i1}&X_{i2}&\cdots&X_{is}
      &(Y_i)_{11}&(Y_i)_{12}&\cdots&(Y_i)_{1r}\\
    \vdots&\vdots&\vdots&&\vdots&\vdots&\vdots&&\vdots\\
    (i,a_i)&X_{i1}&X_{i2}&\cdots&X_{is}
      &(Y_i)_{a_i,1}&(Y_i)_{a_i,2}&\cdots&(Y_i)_{a_i,r}\\ \hline
    \vdots&\vdots&\vdots&&\vdots&\vdots&\vdots&&\vdots\\[-2pt]
    (t,1)&X_{t1}&X_{t2}&\cdots&X_{ts}
      &(Y_t)_{11}&(Y_t)_{12}&\cdots&(Y_t)_{1r}\\
    \vdots&\vdots&\vdots&&\vdots&\vdots&\vdots&&\vdots\\
    (t,a_t)&X_{t1}&X_{t2}&\cdots&X_{ts}
      &(Y_t)_{a_t,1}&(Y_t)_{a_t,2}&\cdots&(Y_t)_{a_t,r}
  \end{array}
  \right]
  \end{array}
  $}

  \vspace{.85em}

  More precisely, the entries of $Z$ are given by
  \begin{equation*}
    Z_{(i,a),j}=
    \begin{cases}
      X_{ij},&1\le j\le s,\\
      (Y_i)_{a,j-s},&s<j\le s+r,
    \end{cases}
    \qquad i\in[t],\quad a\in[a_i].
    \qquad\text{\rm(b)}
  \end{equation*}
  The replacement group $i$ contains $a_i$ copies of the original row
  $X_{i,*}$.  Its $a$th copy has the $a$th row of $Y_i$ beside it.

  \vspace{.55em}

  \begin{tikzpicture}[
    x=1cm,y=1cm,
    pt/.style={circle,fill=black,inner sep=0pt,minimum size=4pt},
    group/.style={draw=black,line width=.5pt,rounded corners=4pt},
    edge/.style={draw=black,line width=.55pt},
    lab/.style={font=\footnotesize}
  ]
    \node[lab,anchor=west] at (.25,2.12) {(c) two cases for the orthogonality in $Z$};

    \node[group,minimum width=4.25cm,minimum height=1.18cm]
      (same) at (3.00,.95) {};
    \node[pt] (ia) at (2.15,.95) {};
    \node[pt] (ib) at (3.85,.95) {};
    \draw[edge] (ia)-- node[lab,above] {an added column of $Y_i$} (ib);
    \node[lab,anchor=north] at (2.15,.82) {$(i,a)$};
    \node[lab,anchor=north] at (3.85,.82) {$(i,b)$};
    \node[lab] at (3.00,-.02) {same replacement group \(i\)};

    \node[group,minimum width=4.25cm,minimum height=1.18cm]
      (different) at (8.25,.95) {};
    \node[pt] (ic) at (7.40,.95) {};
    \node[pt] (kb) at (9.10,.95) {};
    \draw[edge] (ic)-- node[lab,above] {an original column of $X$} (kb);
    \node[lab,anchor=north] at (7.40,.82) {$(i,a)$};
    \node[lab,anchor=north] at (9.10,.82) {$(k,b)$};
    \node[lab] at (8.25,-.02) {different replacement groups: \(i\ne k\)};
  \end{tikzpicture}
  \endgroup
  \caption{The row replacement construction.  In the left part, the
  original row $X_{i,*}$ is repeated, and in the right part the rows of
  $Y_i$ are inserted.  Formula (b) gives all entries of the
  $(\sum_i a_i)\times(s+r)$ matrix.  The two cases used to prove
  orthogonality are shown in panel (c).}
  \label{fig:row-replacement-picture}
\end{figure}

\begin{lemma}
\label{lem:upper-bootstrap}
If
\begin{equation}
 [L,2^p-6]\subseteq\Theta_p,\qquad L\le2^p-9,
 \label{eq:bootstrap-hyp}
\end{equation}
then
\begin{equation}
 [2L,2^{p+1}-6]\subseteq\Theta_{p+1}.
 \label{eq:bootstrap-out}
\end{equation}
\end{lemma}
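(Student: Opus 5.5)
The plan is to use the direct sum in Lemma~\ref{lem:direct-sum} for almost the whole target interval, and to reach the top end with the known sizes $2^p-4$ and $2^p$ from \eqref{eq:finite-high-obstructions}. By Lemma~\ref{lem:direct-sum}, $\Theta_p+\Theta_p\subseteq\Theta_{p+1}$, so it is enough to show that the sumset of the set
\[
 T=[L,2^p-6]\cup\{2^p-4,2^p\}\subseteq\Theta_p
\]
contains $[2L,2^{p+1}-6]$. Here the $p\ge3$ needed for \eqref{eq:finite-high-obstructions} follows from $L\le2^p-9$ together with $L\ge p+1\ge1$, which gives $2^p\ge10$.

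First I cover the main part. Since $[L,2^p-6]$ is an integer interval, $[L,2^p-6]+[L,2^p-6]=[2L,2^{p+1}-12]$, and this interval is nonempty. Next come the five sizes $2^{p+1}-11,\dots,2^{p+1}-6$ near the top. Adding $2^p-4$ to the elements of $[L,2^p-6]$ gives $[L+2^p-4,\,2^{p+1}-10]$. Because $L\le2^p-9$, this interval starts at or below $2^{p+1}-13$, so it contains $2^{p+1}-11$ and $2^{p+1}-10$. Adding $2^p$ to the elements of $[L,2^p-6]$ gives $[L+2^p,\,2^{p+1}-6]$, which starts at or below $2^{p+1}-9$. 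It therefore contains $2^{p+1}-9,\dots,2^{p+1}-6$. Together with $[2L,2^{p+1}-12]$, every integer in $[2L,2^{p+1}-6]$ is covered.

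The main point to check is the top end, and in particular the role of the hypothesis $L\le2^p-9$. It guarantees that $[L,2^p-6]$ contains at least four elements, so the translates by $2^p-4$ and by $2^p$ reach far enough down to meet $[2L,2^{p+1}-12]$ with no gap. I would verify the index $2^{p+1}-9$ explicitly, since it is the one that needs $L+2^p\le2^{p+1}-9$, that is, exactly $L\le2^p-9$. If I wanted to avoid using the isolated size $2^p-4$, I would instead try Lemma~\ref{lem:row-replacement}, but the direct-sum argument above should already suffice.
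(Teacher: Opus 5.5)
Your proposal is correct and follows the paper's route: the three direct sums $[L,2^p-6]+[L,2^p-6]$, $[L,2^p-6]+(2^p-4)$ and $[L,2^p-6]+2^p$, with $L\le 2^p-9$ used exactly to make $2^{p+1}-9$ lie in the last interval. One trivial slip: $2^{p+1}-11,\dots,2^{p+1}-6$ are six sizes, not five, but your covering argument already handles all six.
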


\begin{proof}
Three direct sums give the lower, middle, and upper parts of
\eqref{eq:bootstrap-out}.
Apply Lemma~\ref{lem:direct-sum} to \eqref{eq:bootstrap-hyp} and the
two sizes $2^p-4,2^p$ in \eqref{eq:finite-high-obstructions}:
\begin{align*}
[L,2^p-6]+[L,2^p-6]&=[2L,2^{p+1}-12],\\
[L,2^p-6]+(2^p-4)&=[L+2^p-4,2^{p+1}-10],\\
[L,2^p-6]+2^p&=[L+2^p,2^{p+1}-6].
\end{align*}
It remains to check that these intervals leave no gap.
The first two intervals overlap or are adjacent because
$L+2^p-4\le2^{p+1}-11$ follows from $L\le2^p-7$.
The last two overlap or are adjacent because
$L+2^p\le2^{p+1}-9$ follows from $L\le2^p-9$.
Both bounds on the lower endpoint follow from \eqref{eq:bootstrap-hyp}.
Their union is \eqref{eq:bootstrap-out}.
\end{proof}

The following induction proves \eqref{eq:intro-two-n-tail} in
Theorem~\ref{thm:uniform-tail}.
\begin{proof}
The induction uses two intervals in each dimension.
Lemma~\ref{lem:upper-bootstrap} extends the interval in the preceding
dimension to the upper end.
The row-decomposition intervals in \eqref{eq:decomposition-block-interval}
fill the lower end.  Their overlap will give \eqref{eq:intro-two-n-tail}.

For the initial dimensions, we show that
$[L_n,2^n-6]\subseteq\Theta_n$ for $4\le n\le10$,
with $L_n\le2n$.  For $4\le n\le8$, \eqref{eq:small-spectra} gives
the endpoints in the following table.  For $n=9$, apply
Lemma~\ref{lem:upper-bootstrap} to $[15,250]\subseteq\Theta_8$.
It gives $[30,506]\subseteq\Theta_9$.
The interval $[16,30]$ from \eqref{eq:decomposition-block-interval}
therefore gives $L_9=16$.
Another application of Lemma~\ref{lem:upper-bootstrap} gives
$[32,1018]\subseteq\Theta_{10}$.
The interval $[16,40]$ from \eqref{eq:decomposition-block-interval}
then gives $L_{10}=16$.  Thus the initial values are
\begin{equation}
\begin{array}{c|rrrrrrr}
n&4&5&6&7&8&9&10\\ \hline
L_n&8&8&12&10&15&16&16.
\end{array}
\label{eq:tail-bases}
\end{equation}

For the induction step, it is enough that the lower interval meets the
upper one.  We show that $4n-4$ belongs to both.
Let $n\ge11$.  Suppose that the interval in dimension $n-1$ starts at
$L\le2(n-1)$.  Then $L\le2^{n-1}-9$.
Lemma~\ref{lem:upper-bootstrap} applies, so
\eqref{eq:bootstrap-out} gives an upper interval whose lower endpoint is
at most $4n-4$.
The lower intervals in \eqref{eq:decomposition-block-interval} end at
$5n-10$ for even $n$ and $5n-15$ for odd $n$.
Both endpoints are at least $4n-4$.
Their lower endpoints are at most $2n$.
Thus the lower and upper intervals give \eqref{eq:intro-two-n-tail},
completing the induction.
\end{proof}
\section{Square kernel lemmas and an extension for odd dimensions}
\label{app:square-kernel}

This appendix proves the graph factorization used in the odd extension.
It then verifies the initial matrix and its extension to odd dimensions.
The last subsection states the finite check for the kernel with five points.
Square kernels are defined in Definition~\ref{def:square-kernel}.
The graph terms are defined in Definition~\ref{def:graph-notation}(i)--(iii).
The extension
also uses row decompositions and the block cover condition from
Definitions~\ref{def:row-decomposition} and
\ref{def:block-cover-condition}.

\subsection{Proof of Lemma~\ref{lem:split-factorization}}

\begin{proof}
We first partition $E(H)$ into matchings of equal size.
Each matching misses $r$ vertices of $B$.
A second factorization assigns edges from $S$ to these missing vertices.
This completes every matching to a one-factor of
\eqref{eq:split-factorization}.

To obtain matchings of equal size,
Theorem~\ref{thm:standard-graph-tools}(v) gives at most
$b-r+1\le b$ colours.
Lemma~\ref{lem:balanced-color-classes} gives $b$ classes satisfying
\eqref{eq:balanced-color-classes}.
Regularity gives
\begin{equation}
 |E(H)|=\frac{b(b-r)}2.
 \label{eq:split-edge-count}
\end{equation}
By \eqref{eq:split-edge-count}, the average class size is the integer
$(b-r)/2$.  We claim that every class has this size.
Indeed, by \eqref{eq:balanced-color-classes}, the sizes differ by
at most one.  Since their average is an integer, they must all have that
size.  Each class is a matching on $b-r$ vertices.
It misses $r$ vertices.

Each class must now be completed at its missing vertices.
The completions must use every edge of $K_{S,B}$ once.
Form a bipartite graph $J$ on the $b$ colours and the $b$ vertices of $B$.
Join $c$ to $v$ if colour $c$ misses $v$.
This graph is regular, as required by
Theorem~\ref{thm:standard-graph-tools}(iii).
Indeed, each colour vertex has degree $r$.
Each $v\in B$ also has degree $r$, since its $b-r$ edges in $H$ have
distinct colours.

Theorem~\ref{thm:standard-graph-tools}(iii) factors $J$ into $r$
one-factors.  Label them by $s\in S$.
If the factor labelled $s$ contains $cv$, give $sv$ colour $c$.
For each $s$, this assigns every edge from $s$ to $B$ once.
For each $c$, it matches $S$ to the vertices missed by colour $c$.
Each completed colour class is therefore a one-factor of
\eqref{eq:split-factorization}.
\end{proof}

\subsection{The initial \texorpdfstring{$8\times5$}{8 by 5} matrix}

The extension starts from the input UOM $Y_5$ in \eqref{eq:Y5}, with the
row decomposition in \eqref{eq:Y5-decomposition-blocks}.  Its integer
labels are local to each column, as explained before \eqref{eq:Y5}.

\begin{lemma}
\label{lem:Y5-profile}
The matrix $Y_5$ is a UOM.  Moreover, its row decomposition in
\eqref{eq:Y5-decomposition-blocks} satisfies the block cover condition.
\end{lemma}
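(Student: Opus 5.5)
The proof will rest on one observation: the fibres of $Y_5$ are very small. Reading \eqref{eq:Y5} with the mate pairs $1\leftrightarrow2$, $3\leftrightarrow4$, $5\leftrightarrow6$, $7\leftrightarrow8$, the fibres are as follows.
\[
\F_{Y_5,1}=\{\{1,2\},\{3,4\},\{5,6\},\{7,8\}\},\qquad
\F_{Y_5,2}=\{\{1,7\},\{2,8\},\{3,5\},\{4,6\}\}.
\]
In columns $3,4,5$ the eight entries are pairwise distinct, so all fibres there are singletons.

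\emph{Orthogonality.} I will check the $28$ row pairs directly against \eqref{eq:row-orthogonal}. The pairs $\{1,2\},\{3,4\},\{5,6\},\{7,8\}$, which share column $1$, are handled in column $2$ or $3$. For example, rows $1,2$ have entries $1,2$ in column $2$. The remaining pairs are handled by a mate in column $1$ (labels $1$ against $2$, or $3$ against $4$), or else by columns $3$--$5$. This step is a routine table check.

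\emph{Unextendibility.} By Theorem~\ref{thm:fibre-cover}(i), an extension would have to cover all eight rows using at most one fibre per column. The largest possible total is $2+2+1+1+1=7<8$, so $Y_5$ is a UOM.

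\emph{Block cover condition.} The same counting handles almost every case of \eqref{eq:block-cover-condition}. Fix $\varnothing\ne I\subseteq[6]$ and put $T=\bigcup_{i\in I}A_i$. Let $d\in\{0,1,2\}$ be the number of the doubleton blocks $A_2,A_4$ lying in $I$, so that $|T|=|I|+d$. Suppose $S\in\Sfam_{Y_5}(T)$ has $|S|\le|I|-1$. The chosen fibres cover at most $|S|+|S\cap\{1,2\}|\le|I|+1$ rows.
\begin{itemize}
\item If $d=2$, this bound is below $|T|=|I|+2$, a contradiction.
\item If $d\le1$, equality-type slack remains only in a few situations. Either $|S|=|I|-1$, both columns $1$ and $2$ are used, and both of their chosen fibres are doubletons inside $T$ (with at most one row of slack when $d=0$). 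Or $d=0$ and exactly one doubleton fibre is used.
\end{itemize}

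The main obstacle is the short enumeration of these tight cases. I will take each doubleton fibre of column $1$ or $2$ that is contained in a union of blocks. Such a fibre must contain whole blocks, or meet $A_2$ or $A_4$ partially. For instance, $\{1,2\}$ meets $A_2=\{2,7\}$ only partially, so $7$ must then be covered separately. I will show that such a fibre always either splits a doubleton block, forcing an extra column for the partner row, or merges two singleton blocks. The second possibility occurs for $\{5,6\}$, $\{7,8\}$, $\{3,5\}$, $\{4,6\}$ and $\{2,8\}$, $\{1,7\}$, and in each case the saving is then lost on $A_2$ or $A_4$.

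Concretely, I will list for each pair of chosen doubleton fibres (one from column $1$, one from column $2$) the set $I$ of blocks they cover completely. I will then verify that the number of covered blocks never exceeds the number of columns spent. The worked example $\kappa_{Y_5}(A_1\cup A_2)=2$ is of exactly this form. Finally, the case $I=[6]$ is just unextendibility, already proved. If the hand enumeration becomes error-prone, I will fall back on the finite check of all $63$ subsets $I$ via \eqref{eq:fibre-dp}, restricted to $T$.
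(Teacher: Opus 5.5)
Your fibre lists are right. Your unextendibility argument (one fibre per column covers at most $2+2+1+1+1=7<8$ rows) is correct and more direct than the paper's, which reads unextendibility off the block cover condition at $I=[6]$. Your count settles $d=2$ exactly as the paper does. (One small slip in the orthogonality sketch: $\{1,8\},\{2,7\},\{3,6\},\{4,5\}$ are orthogonal only in column $2$, not in column $1$ or columns $3$--$5$; the table check still succeeds.)

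The gap is in the case $d\le1$, which is the real content of the lemma, and there your description of the fibres is false. Every doubleton fibre of columns $1$ and $2$ contains exactly one row of $A_1\cup A_3\cup A_5\cup A_6=\{1,3,6,8\}$ and exactly one row of $A_2\cup A_4=\{2,4,5,7\}$. For example, $\{5,6\}$ splits $A_4=\{4,5\}$ and $\{7,8\}$ splits $A_2=\{2,7\}$. So no fibre ``merges two singleton blocks.'' If one did, those two blocks would have cover cost $1<2$ and the lemma would be false, and nothing could be ``lost on $A_2$ or $A_4$'' when $d=0$. The other branch of your dichotomy fails too: splitting a doubleton block does not by itself force an extra column. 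The partner row can be picked up by the fibre chosen in the other column; for instance, $\{1,2\}$ in column $1$ and $\{1,7\}$ in column $2$ together contain $A_2$.

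What is missing are the two facts that actually close $d\le1$.
\begin{itemize}
\item For $d=0$, the target set lies in $\{1,3,6,8\}$, which every fibre meets at most once, so there is no tight case.
\item For $d=1$, a cover by $|I|-1$ columns needs disjoint fibres from columns $1$ and $2$, each containing two target rows. Each must join a row of the chosen doubleton block to a singleton-block row. The candidates are therefore $\{\{1,2\},\{7,8\}\}\times\{\{1,7\},\{2,8\}\}$ for $A_2$ and $\{\{3,4\},\{5,6\}\}\times\{\{3,5\},\{4,6\}\}$ for $A_4$, and every such pair intersects. So these two fibres cover at most three target rows, and the $|I|-1$ columns cover at most $|I|<|I|+1$ rows.
\end{itemize}
This is the paper's argument.

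Your concrete criterion is sound: the number of chosen blocks lying entirely in the union of the fibres from columns $1$ and $2$ must not exceed the number of those columns used, because every other chosen block needs its own singleton column. It does hold, since each such fibre contains exactly one whole block and each pair exactly two. But it holds precisely because of the two facts above, which your proposal neither states nor checks. Likewise, the fallback computation over the $63$ sets $I$ would certify the lemma, but it is not carried out.
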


\Needspace{10\baselineskip}
\begin{proof}
For the block cover condition \eqref{eq:block-cover-condition}, the
count depends on how many chosen blocks have two rows.
In each case, the number of columns must be at least the number of chosen
blocks.  Taking all six blocks then proves unextendibility.

For orthogonality, the following table gives the unique witness column for every row pair
of \eqref{eq:Y5}:
\begin{equation}
\begin{array}{c|rrrrrrr}
 &2&3&4&5&6&7&8\\ \hline
1&2&1&1&4&5&3&2\\
2&&1&1&5&4&2&3\\
3&&&2&3&2&4&5\\
4&&&&2&3&5&4\\
5&&&&&2&1&1\\
6&&&&&&1&1\\
7&&&&&&&2
\end{array}
\label{eq:Y5-witnesses}
\end{equation}
In the listed column, the two entries are mates.
Thus \eqref{eq:Y5-witnesses} proves \eqref{eq:row-orthogonal} for every pair.

It remains to prove \eqref{eq:block-cover-condition}.
Only $A_2,A_4$ in \eqref{eq:Y5-decomposition-blocks} have two rows.
In the first two columns, the fibres with more than one row are
\begin{align}
\F_1&=\{\{1,2\},\{3,4\},\{5,6\},\{7,8\}\},
\label{eq:Y5-f1}\\
\F_2&=\{\{1,7\},\{2,8\},\{3,5\},\{4,6\}\};
\label{eq:Y5-f2}
\end{align}
every fibre in columns three, four, and five contains only one row.

Choose any $k$ decomposition blocks.  Let $d$ denote the number of
chosen blocks which contain two rows.  The union of the chosen blocks
contains $k+d$ target rows.  The three possibilities are $d=0,1,2$.

If $d=0$, each fibre covers at most one target row.
The chosen rows lie in $\{1,3,6,8\}$.
The lists \eqref{eq:Y5-f1}--\eqref{eq:Y5-f2} show that every fibre meets
this set in at most one row.  Thus fewer than $k$ columns cannot cover the
chosen rows.

If $d=2$, fewer than $k$ columns cannot cover the $k+2$ target rows.
Only the first and second columns can cover a second row.
Thus any $h$ columns with $0\le h\le k-1$ cover at most
$h+2\le k+1<k+2$ rows.

If $d=1$, a cover with at most $k-2$ columns covers at most $k$ rows.
To cover $k+1$ rows with $k-1$ columns, the fibres chosen from the
first two columns would each have to cover two target rows without overlap.
We claim that these two fibres must overlap.
Indeed, if the chosen two-row block is $\{2,7\}$, the pair of fibres belongs to
\begin{equation}
 \{\{1,2\},\{7,8\}\}
 \times\{\{1,7\},\{2,8\}\}.
 \label{eq:Y5-overlap-27}
\end{equation}
If the chosen two-row block is $\{4,5\}$, the pair of fibres belongs to
\begin{equation}
 \{\{3,4\},\{5,6\}\}
 \times\{\{3,5\},\{4,6\}\}.
 \label{eq:Y5-overlap-45}
\end{equation}
Each pair in \eqref{eq:Y5-overlap-27} or \eqref{eq:Y5-overlap-45}
has a common row.  The first two fibres therefore cover at most three target
rows in total.  Each remaining fibre covers at most one target row.
Hence $k-1$ columns cover at most $k$ rows, a contradiction.

All cases give \eqref{eq:block-cover-condition}.
In particular, the six blocks cannot be covered by five columns.
Equation~\eqref{eq:fibre-cover} therefore makes $Y_5$ a UOM.
\end{proof}

\subsection{Proof of Theorem~\ref{thm:odd-padding}: extension for odd
\texorpdfstring{$q$}{q}}

\begin{proof}
The first construction extends $Y_5$ to size $(q+3)\times q$.
The second gives $(q+1)\times q$ using singleton blocks.
In both constructions, graph edges specify the pairs of rows that need
mate symbols.  A factorization assigns these edges to columns.
A fibre cover of $h$ blocks will need at least $h$ columns, as required by
\eqref{eq:block-cover-condition}.
Taking all blocks then gives unextendibility.

\smallskip
\noindent\emph{Extension of $Y_5$.}
The case $q=5$ is Lemma~\ref{lem:Y5-profile}.
Now let $q\ge11$ be odd.  Set $t=q-5$.
Then $t$ is even with $t\ge6$.  Add $t$ new rows to $Y_5$.
Each new row forms one decomposition block.
The new matrix has $6+t=q+1$ blocks in total.

First assign mate symbols to some pairs of new rows in the original
columns.  Theorem~\ref{thm:standard-graph-tools}(i) gives
$K_t$ a factorization with $t-1$
one-factors.  Choose five and denote them by
$D_1,\ldots,D_5$.  For every edge of $D_j$, choose a new mate pair.
Put its two symbols on the endpoint rows in original column $j$.
The pairs of new rows still requiring mate symbols form
\begin{equation}
 H=K_t-\disj_{j=1}^5D_j.
 \label{eq:padding-H}
\end{equation}
The graph $H$ is $(t-6)$-regular.  It remains to cover these edges and
all pairs consisting of an original block and a new row.
Lemma~\ref{lem:split-factorization}
splits $K_{6,t}\cup H$ into $t$ one-factors.
Make one new column for each factor.  Use a new mate pair for every edge.
If an edge joins an
original decomposition block $A$ to a new row $b$, put one symbol on all
rows of $A$ and its mate on $b$.  For an edge joining two new rows, put the
two mate symbols on those rows.  The factors now cover every pair for which
at least one row is new.  The old row pairs remain orthogonal by
Lemma~\ref{lem:Y5-profile}.  Thus the new matrix is orthogonal.

The block cover condition means that any $h$ blocks of the new
decomposition need at least $h$ columns.
Every fibre in a new column is exactly one decomposition block.
Suppose that a fibre cover for the union of $h$ blocks uses $a$ new
columns.  These columns cover at most $a$ complete blocks.
Every other chosen block must be covered entirely by the original columns.
Let $u$ and $v$ be the numbers of original and new blocks still to be
covered, respectively.  Then $u+v\ge h-a$.

We claim that the remaining $u+v$ blocks need at least $u+v$ original
columns.  Indeed, Lemma~\ref{lem:Y5-profile} requires $u$ columns for the original
blocks.  A fibre containing a new row is a singleton with a new symbol.
It covers no original row.  A cover uses at most one fibre per column
by Theorem~\ref{thm:fibre-cover}.
Thus the $v$ new blocks require $v$ further columns, separate from those
used for the original blocks.
Thus every cover uses at least $a+u+v\ge h$ columns, proving
\eqref{eq:block-cover-condition}.
Taking all $q+1$ blocks in \eqref{eq:fibre-cover} proves unextendibility.
The size of the matrix is
\begin{equation}
 (8+t)\times(5+t)=(q+3)\times q.
\end{equation}

\smallskip
\noindent\emph{Singleton construction.}
For the input UOM with singleton decomposition blocks, split $K_{q+1}$ into
$q$ one-factors by Theorem~\ref{thm:standard-graph-tools}(i).
Use one factor in every column.  Put a new mate pair on the endpoint rows
of every edge.  Every row pair then has mate symbols in one column.
Each fibre contains one row.  Any $h\le q$ rows need $h$
columns.  All $q+1$ rows cannot be covered by only $q$ columns.  By
Theorem~\ref{thm:fibre-cover}, the resulting matrix is a
$(q+1)\times q$ UOM.  The same count proves the block cover condition
for its row decomposition.
\end{proof}

\subsection{The kernel with five points}

The program \nolinkurl{check_five_point_cliques.py} verifies
\eqref{eq:cert-five-required} using the clique and maximality tests in
\eqref{eq:cert-five-clique} and \eqref{eq:cert-five-maximal}.
The first test checks every pair in each supplied set.
The second checks that no vertex outside the set can be added.
Thus, for each $c\in\Lambda_0$, the supplied $c$-vertex set is an
inclusion-maximal clique in the full graph on $120$ vertices.
For each witness, Lemma~\ref{lem:full-link}(i) gives the corresponding
UOM used in the proof.
\section{A lift preserving good column structure and the
\texorpdfstring{$13\times8$}{13 by 8} initial matrix}
\label{app:good-column-lift}

The construction starts from an explicit $13\times8$ matrix.
It combines two one-factors into one column.
The two cover bounds in \eqref{eq:good-column-conditions} remain valid
after the lift.  This gives \eqref{eq:n-plus-five-endpoint}.

\subsection{Restricted cover costs and good column structure}

\begin{definition}
\label{def:restricted-cover-cost}
For a $q$-column matrix $X$, $J\subseteq[q]$, and $T\subseteq R(X)$,
define
\begin{equation}
 \begin{aligned}
 \kappa_X^J(T)
 &=\min\bigl\{|D|:\ D\subseteq J,\\
 &\qquad T\subseteq\displaystyle\bigcup_{j\in D}F_{X,j}(a_j),\\[-1mm]
 &\qquad F_{X,j}(a_j)\in\F_{X,j}\quad(j\in D)\bigr\},\\
 \min\varnothing&=\infty.
 \end{aligned}
 \label{eq:restricted-cover-cost}
\end{equation}
\qed
\end{definition}

When all columns are allowed, $\kappa_X^{[q]}=\kappa_X$, with $\kappa_X$
as in Definition~\ref{def:equality-fibre}(iii) and \eqref{eq:cover-cost}.
The empty set has cover cost
$\kappa_X^J(\varnothing)=0$.

\Needspace{11\baselineskip}
\begin{definition}
\label{def:good-column-structure}
Let $q$ be even, and let
\begin{equation}
 R(X)=A_1\disj\cdots\disj A_{q+2}
 \label{eq:good-column-decomposition}
\end{equation}
be a row decomposition into nonempty blocks.  Choose $g\in[q]$.  The pair
$(A_1,\ldots,A_{q+2};g)$ is a good column structure of $X$ if, for every
nonempty $I\subseteq[q+2]$,
\begin{equation}
 \kappa_X^{[q]}\!\left(\bigcup_{i\in I}A_i\right)\ge |I|-1,
 \qquad
 \kappa_X^{[q]\setminus\{g\}}\!\left(\bigcup_{i\in I}A_i\right)
 \ge |I|.
 \label{eq:good-column-conditions}
\end{equation}
\qed
\end{definition}

An orthogonal matrix with a good column structure is a UOM.
Indeed, take $I=[q+2]$ in the first inequality of
\eqref{eq:good-column-conditions}.  A cover of all rows would need at
least $q+1$ columns.  The matrix $X$ has only $q$ columns, so
Theorem~\ref{thm:fibre-cover}(ii) applies.

\subsection{A compressible pair of one-factors}

\begin{lemma}
\label{lem:compressible-factor-pair}
If $t\equiv0\pmod4$, then $K_t$ has a one-factorization which contains two
one-factors $M_0,M_1$ satisfying
\begin{equation}
 M_0\cup M_1\cong \frac t4 K_{2,2}.
 \label{eq:compressible-factor-pair}
\end{equation}
\end{lemma}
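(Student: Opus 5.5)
The plan is to write down $M_0\cup M_1$ explicitly and then use Theorem~\ref{thm:standard-graph-tools}(i)--(iv) to factor the rest of $K_t$.

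\emph{Step 1: the two factors.} Since $t\equiv0\pmod4$, split the vertex set into $t/4$ quadruples $\{x_i,y_i,z_i,w_i\}$, $i\in[t/4]$. Put
\[
 M_0=\{x_iz_i,\ y_iw_i:i\in[t/4]\},
 \qquad
 M_1=\{x_iw_i,\ y_iz_i:i\in[t/4]\}.
\]
Each $M_e$ is a perfect matching of $K_t$, and $M_0\cap M_1=\varnothing$. On each quadruple, $M_0\cup M_1$ is the $4$-cycle $x_iz_iy_iw_ix_i$. This cycle is $K_{2,2}[\{x_i,y_i\},\{z_i,w_i\}]$. Hence $M_0\cup M_1\cong\frac t4K_{2,2}$, which is \eqref{eq:compressible-factor-pair}.

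\emph{Step 2: factoring the remainder.} It remains to show that $G=K_t-(M_0\cup M_1)$ has a one-factorization; adding $M_0,M_1$ then gives the required factorization of $K_t$.
\begin{itemize}
\item Put $N=\{x_iy_i,\ z_iw_i:i\in[t/4]\}$. Then $N$ is a perfect matching, and $M_0,M_1,N$ together form the $t/4$ copies of $K_4$ on the quadruples. Hence $G=N\cup H$, where $H$ is the set of edges between different quadruples.
\item $H$ is the complete multipartite graph $K_{4,\ldots,4}$ with $t/4$ parts. If $t/4\ge2$, Theorem~\ref{thm:standard-graph-tools}(iv) applies with $q=4$, since $qt'=4\cdot(t/4)$ is even, and factors $H$ into $t-4$ one-factors.
\item Together with $N$, this gives $t-3$ one-factors of $G$. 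Counting, $t-3+2=t-1$ is the right number of factors for $K_t$.
\item If $t=4$, then $H$ is empty. The factors $\{N,M_0,M_1\}$ are exactly the one-factorization of $K_4$ in the earlier example.
\end{itemize}

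An alternative route uses Theorem~\ref{thm:standard-graph-tools}(i) with the prescribed factor $N$. One can instead start from a one-factorization of $K_4$ on each quadruple and the multipartite factorization between quadruples.

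There is no serious obstacle. The only point needing care is that the complementary graph $G$ is one-factorable, and the split into the quadruple-internal matching $N$ plus the equal-part multipartite graph settles it directly. The degenerate case $t=4$ must be noted separately, because Theorem~\ref{thm:standard-graph-tools}(iv) requires at least two parts.
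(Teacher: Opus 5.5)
Your proof is correct, but it reaches the factorization through a different decomposition from the paper's.

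\textbf{The paper's route.} The paper splits the vertex set into two halves $U,V$ of size $t/2$ and places $M_0,M_1$ inside $K_{U,V}$, each $K_{2,2}$ living on $\{u_i,u_i'\}\cup\{v_i,v_i'\}$. It factors the $(t/2-2)$-regular bipartite remainder $K_{U,V}-M_0-M_1$ by Theorem~\ref{thm:standard-graph-tools}(iii). The last $t/2-1$ factors come from pairing, index by index, one-factorizations of $K_U$ and $K_V$ given by Theorem~\ref{thm:standard-graph-tools}(i); this step uses that $t/2$ is even.

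\textbf{Your route.} You place $M_0,M_1$ inside quadruples, so that $M_0\cup M_1\cup N$ is the standard one-factorization of $K_4$ on each quadruple. You then give the whole inter-quadruple graph $K_{4,\ldots,4}$ to Theorem~\ref{thm:standard-graph-tools}(iv). Your count $2+1+4(t/4-1)=t-1$ is right, and you correctly treat $t=4$ separately, since (iv) needs at least two parts.

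\textbf{The trade-off.} Your decomposition is more local and transparent. However, it rests on the Hoffman--Rodger factorization of complete multipartite graphs, a deeper cited result, and it needs a degenerate case. The paper's route uses only K\"{o}nig's theorem and the classical factorization of $K_{2h}$. Its count also works uniformly down to $t=4$, where the bipartite remainder is $0$-regular and contributes no factors.

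One small caution concerns your aside that Theorem~\ref{thm:standard-graph-tools}(i) with prescribed factor $N$ gives an alternative route. As stated, this does not work. Part~(i) lets you prescribe a single factor and gives no control over whether $M_0$ and $M_1$ appear in the resulting factorization. Rely on your main argument instead.
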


\begin{proof}
Write $B=U\disj V$, where $|U|=|V|=t/2$.
Divide the vertices into pairs as follows:
\begin{equation*}
 U=\mathop{\dot\bigcup}_{i=1}^{t/4}\{u_i,u_i'\},
 \qquad
 V=\mathop{\dot\bigcup}_{i=1}^{t/4}\{v_i,v_i'\}.
\end{equation*}
Set
\begin{align*}
 M_0&=\mathop{\dot\bigcup}_{i=1}^{t/4}\{u_iv_i,u_i'v_i'\},&
 M_1&=\mathop{\dot\bigcup}_{i=1}^{t/4}\{u_iv_i',u_i'v_i\}.
\end{align*}
Each group of four vertices gives one $K_{2,2}$.
This proves \eqref{eq:compressible-factor-pair}.
We claim that $M_0,M_1$ belong to a one-factorization of $K_t$.
Indeed, $K_{U,V}-M_0-M_1$ is bipartite and $(t/2-2)$-regular.
Theorem~\ref{thm:standard-graph-tools}(iii) factors it into $t/2-2$
one-factors.
Since $t/2$ is even, Theorem~\ref{thm:standard-graph-tools}(i) gives
factorizations of $K_U$ and $K_V$.
Pairing factors with the same index gives $t/2-1$ further one-factors
on $B$.
Together with $M_0,M_1$, these partition the edges within and between
$U,V$.  Their number is
\begin{equation*}
 2+(t/2-2)+(t/2-1)=t-1.
\end{equation*}
\end{proof}

\subsection{A lift preserving good column structure}

\begin{theorem}
\label{thm:good-column-even-lift}
Let $X$ be an orthogonal $m\times q$ formal matrix with $q$ even, and put
$r=q+2$.  Suppose that $(A_1,\ldots,A_r;g)$ is a good column structure.
If
\begin{equation}
 t\equiv0\pmod4,\qquad t\ge r,
 \label{eq:good-column-lift-hypothesis}
\end{equation}
then there is an orthogonal $(m+t)\times(q+t)$ matrix $\widehat X$ such
that the blocks
\begin{equation}
 A_1,\ldots,A_r,\{b_1\},\ldots,\{b_t\}
 \label{eq:lifted-decomposition-blocks}
\end{equation}
form a row decomposition and, together with $g$, give $\widehat X$ a good
column structure.  In particular, $\widehat X$ is a UOM.
\end{theorem}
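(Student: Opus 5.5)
The plan follows the odd padding construction in the proof of Theorem~\ref{thm:odd-padding}. The one change is that the original column $g$ receives a compressed pair of one-factors. This is what makes the degree count work when $r=q+2$ is even.

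\emph{Construction.} Add $t$ new rows $b_1,\ldots,b_t$ and $t$ new columns.
\begin{enumerate}[label=(\alph*)]
\item By Lemma~\ref{lem:compressible-factor-pair}, since $t\equiv0\pmod4$, fix a one-factorization of $K_t$ on $\{b_1,\ldots,b_t\}$ containing $M_0,M_1$ with $M_0\cup M_1\cong\frac t4K_{2,2}$.
\item In original column $g$, give each $K_{2,2}$ component its own new mate pair, with one symbol on one part and its mate on the other. The new fibres in column $g$ are then pairs of new rows.
\item Choose $q-1$ further factors $D_j$ for $j\in[q]\setminus\{g\}$. This needs $q+1\le t-1$ factors in total, which holds because $t\ge q+2$. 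Put a new mate pair on each edge of $D_j$ in column $j$. The new fibres in these columns are singletons.
\item The remaining new-row pairs form $H=K_t-M_0-M_1-\disj_jD_j$. This graph is $(t-1-(q+1))=(t-r)$-regular.
\item Lemma~\ref{lem:split-factorization} applies with $S$ the set of $r$ old blocks and $B$ the new rows, since $r,t$ are even and $r\le t$. It factors $K_{S,B}\cup H$ into $t$ one-factors. Each factor becomes one new column, exactly as in the padding proof: a symbol on a whole old block and its mate on a new row, or a mate pair on two new rows.
\end{enumerate}
Orthogonality of $\widehat X$ follows. Old pairs are orthogonal in $X$, and every pair involving a new row is an edge in exactly one factor or in $M_0\cup M_1$.

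\emph{Cover counts.} Fix $I$ and a set of allowed columns $J$, either all $q+t$ columns or all except $g$. Three structural facts do the work.
\begin{itemize}
\item Every fibre of a new column is exactly one block of \eqref{eq:lifted-decomposition-blocks}. So $a$ new columns cover at most $a$ chosen blocks, and every other chosen block must be covered by original columns.
\item In an original column, every fibre meets only old rows or only new rows, because the new symbols are fresh.
\item A fibre on new rows covers at most one new row, except in column $g$, where it covers at most two.
\end{itemize}
Let $u$ and $v$ be the numbers of old and new blocks left for the original columns. The original columns then split into those covering old rows and those covering new rows. There are three cases.
\begin{itemize}
\item If $g$ is used for new rows, the old part avoids $g$ and needs at least $u$ columns by the second inequality of \eqref{eq:good-column-conditions} for $X$. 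The new part needs at least $v-1$ columns.
\item If $g$ is used for old rows, the old part needs at least $u-1$ columns and the new part needs at least $v$.
\item If $g$ is forbidden, the bounds are $u$ and $v$.
\end{itemize}
Adding $a$ gives at least $|I|-1$ columns in general and at least $|I|$ columns without $g$, which are the two inequalities of \eqref{eq:good-column-conditions} for $\widehat X$. The UOM property then follows from the remark after Definition~\ref{def:good-column-structure}.

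\emph{Main obstacle.} The delicate step is the case bookkeeping around column $g$. I must check that the single extra unit that $g$ can contribute on the new side is never combined with the extra unit already allowed on the old side. This is guaranteed because one column supplies at most one fibre. A second point to get right is that the regularity degree of $H$ equals exactly $t-r$, so that Lemma~\ref{lem:split-factorization} applies; this is why compressing $M_0$ and $M_1$ into the single column $g$ is needed.
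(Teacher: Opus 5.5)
Your proposal is correct and follows the paper's proof essentially step for step. It uses the same compressible pair $M_0,M_1$ in column $g$ and singleton-fibre factors in the other $q-1$ original columns, so that $H$ is $(t-r)$-regular and Lemma~\ref{lem:split-factorization} applies. It also uses the same case split on whether the fibre chosen in column $g$ lies on old rows or on new rows, or whether $g$ is unused. In the last case your ``$g$ forbidden'' bounds $u$ and $v$ apply unchanged.
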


\begin{proof}
To preserve \eqref{eq:good-column-conditions}, we control how many
blocks one fibre can cover.
Each fibre in a new column will be one whole block.
Among original columns, only $g$ can cover two new rows at once.
For any $h$ chosen blocks, the required lower bound is $h$ columns
without $g$.  If $g$ is allowed, it is $h-1$.

\noindent\emph{Graph data.}
Let $B=\{b_1,\ldots,b_t\}$ be the set of new rows, each forming one
decomposition block.  Regard the original blocks
$S=\{A_1,\ldots,A_r\}$ as graph vertices.  Choose the factorization in
Lemma~\ref{lem:compressible-factor-pair}.
Put $Q=M_0\cup M_1$.  Since $t-1\ge q+1$, the factorization also
contains the following distinct factors besides $M_0$ and $M_1$:
\begin{equation}
 N_j\qquad(j\in[q]\setminus\{g\})
 \label{eq:ordinary-removed-factors}
\end{equation}
On the vertex set $B$, define the graph $H$ by
\begin{equation}
 E(H)=E(K_B)\setminus
 \left(E(Q)\cup\bigcup_{j\ne g}E(N_j)\right).
 \label{eq:chosen-column-residual-graph}
\end{equation}
Deleting these $q+1$ edge-disjoint one-factors leaves
\begin{equation}
 \deg_H(b)=t-1-(q+1)=t-q-2=t-r\qquad(b\in B).
 \label{eq:chosen-column-residual-degree}
\end{equation}
Both $|S|=r$ and $|B|=t$ are even.
Equation~\eqref{eq:chosen-column-residual-degree} gives the required degree.
Lemma~\ref{lem:split-factorization} therefore gives
\begin{equation}
 K_{S,B}\cup H=P_1\disj\cdots\disj P_t,
 \label{eq:chosen-column-split-factors}
\end{equation}
where every $P_\ell$ is a one-factor on $S\disj B$.

\smallskip
\noindent\emph{The lifted matrix.}
Define the matrix $\widehat X$ as follows.  On the original rows and
columns, it agrees with $X$.  Each new mate pair uses symbols that have
not yet appeared in that column.

For $j\ne g$, put the matching $N_j$ in original column $j$.
Assign a different new mate pair to each edge.
Put its two symbols on the two endpoint rows.
Every fibre on the new rows then contains one row.
In the chosen column $g$, use one new mate pair for every
$K_{2,2}$ component of $Q$.  Put one symbol on the two vertices in one
part and put its mate on the two vertices in the other part.  Thus, every
fibre on the new rows in column $g$ contains two rows.

Put $P_\ell$ in new column $q+\ell$ for every $\ell\in[t]$.  If an
edge has the form $A_i b$, put one symbol on every row of $A_i$ and put
its mate on $b$.  If an edge has the form $bb'$, put the two symbols of a
new mate pair on $b$ and $b'$.  Different factor edges use different mate
pairs.  It follows that
\begin{equation}
 \begin{gathered}
 \text{every fibre in a new column}\\
 \text{is exactly one block in \eqref{eq:lifted-decomposition-blocks}.}
 \end{gathered}
 \label{eq:new-column-one-group-fibre}
\end{equation}

\smallskip
\noindent\emph{Orthogonality.}
Original row pairs remain orthogonal in $X$.
An original block and a new row are joined in
\eqref{eq:chosen-column-split-factors}, which supplies a witness column.
For two new rows, \eqref{eq:chosen-column-residual-graph} assigns their
edge to $Q$, some $N_j$, or $H$.
The pair is orthogonal in column $g$, column $j$, or a new column,
respectively.

\smallskip
\noindent\emph{Cover bounds.}
To prove \eqref{eq:good-column-conditions}, choose a nonempty set of $h$
blocks from \eqref{eq:lifted-decomposition-blocks}.
Fix a fibre cover of their union.
Let $a$ be the number of new columns used.  By
\eqref{eq:new-column-one-group-fibre}, these new columns cover at most $a$
complete blocks.  A chosen block not covered by these fibres must be
covered entirely by the original columns.
Let $u$ and $v$ be the numbers of original and new blocks remaining,
respectively.  Then
\begin{equation}
 u+v\ge h-a.
 \label{eq:uncovered-decomposition-blocks}
\end{equation}

We first show that a cover without column $g$ uses at least $h$ columns.
Indeed, the second bound in \eqref{eq:good-column-conditions} requires
$u$ original columns for the $u$ original blocks.
A fibre on $B$ in any original column $j\ne g$ is a singleton with no
original row.  The $v$ new blocks therefore need $v$ further columns.
Together with \eqref{eq:uncovered-decomposition-blocks}, this gives
\begin{equation}
 a+u+v\ge h.
 \label{eq:lifted-without-chosen-column-bound}
\end{equation}

We next show that allowing column $g$ still requires at least $h-1$
columns.
Let $c$ be the number of original columns in the cover.
If $g$ is not used, then $c\ge u+v$ by the preceding count.
Suppose that the chosen fibre in column $g$ lies on the original rows.
This fibre does not meet $B$.
The first inequality in \eqref{eq:good-column-conditions} gives the
bound for the $u$ remaining original blocks.
Each of the $v$ remaining new blocks needs a separate column.
Hence
\begin{equation*}
 c\ge
 \begin{cases}
 (u-1)+v,&u>0,\\
 v,&u=0,
 \end{cases}
\end{equation*}
so $c\ge u+v-1$.
The remaining case is that the chosen fibre in column $g$ lies on $B$.
It covers at most two new blocks.
The $u$ remaining original blocks must then be covered without using $g$.
This gives
\begin{equation}
 c\ge1+u+\max\{0,v-2\}\ge u+v-1.
 \label{eq:chosen-column-fibre-on-new-rows}
\end{equation}
In every case, $c\ge u+v-1$, so
\begin{equation}
 a+c\ge a+u+v-1\ge h-1.
 \label{eq:lifted-all-column-bound}
\end{equation}
Equations~\eqref{eq:lifted-without-chosen-column-bound} and
\eqref{eq:lifted-all-column-bound} give the two bounds in
\eqref{eq:good-column-conditions}.
For all $(q+t)+2$ blocks, \eqref{eq:lifted-all-column-bound} requires
at least $q+t+1$ columns.
Equation~\eqref{eq:fibre-cover} therefore shows that $\widehat X$ is a UOM.
\end{proof}

\subsection{The verified \texorpdfstring{$13\times8$}{13 by 8} initial matrix}

The lift starts from the matrix below.  Its integers are local symbols, with
mate pairs
\begin{equation}
 1\leftrightarrow2,\quad 3\leftrightarrow4,\quad
 5\leftrightarrow6,\quad 7\leftrightarrow8,\quad
 9\leftrightarrow10.
 \label{eq:x8-mate-map}
\end{equation}

\begin{equation}
X_8=
\left(\begin{array}{rrrrrrrr}
 9&1&3&2&5&8&1&3\\
 3&4&2&9&7&8&1&4\\
 4&5&9&3&6&4&3&1\\
 7&4&6&6&2&7&7&2\\
 3&1&2&9&7&7&1&1\\
 4&3&9&3&6&4&4&5\\
 5&2&7&4&9&9&8&3\\
 8&8&10&8&10&2&2&5\\
 10&6&8&10&1&6&5&6\\
 1&5&5&1&8&10&4&6\\
 6&3&4&7&4&5&3&2\\
 2&7&1&5&3&3&6&4\\
 3&1&2&9&7&1&2&1
\end{array}\right)
\label{eq:good-column-matrix-x8}
\end{equation}
For this matrix, take the following ten decomposition blocks:
\begin{equation}
\begin{aligned}
 A_1&=\{1,13\},&A_2&=\{2,5\},&A_3&=\{3,6\},\\
 A_4&=\{4\},&A_5&=\{7\},&A_6&=\{8\},\\
 A_7&=\{9\},&A_8&=\{10\},&A_9&=\{11\},\\
 A_{10}&=\{12\}
\end{aligned}
\label{eq:good-column-decomposition-x8}
\end{equation}

\begin{theorem}
\label{thm:good-column-thirteen-by-eight-matrix}
The matrix $X_8$ in \eqref{eq:good-column-matrix-x8} is a UOM.  With
$g=8$, the blocks in \eqref{eq:good-column-decomposition-x8} give it a
good column structure.
\end{theorem}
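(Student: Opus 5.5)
The plan is to verify the statement as a finite check on the explicit matrix $X_8$. The check has three parts: orthogonality, the two cover bounds in \eqref{eq:good-column-conditions} for all $2^{10}-1$ nonempty sets $I$ of blocks, and unextendibility. Unextendibility needs no separate work. As noted after Definition~\ref{def:good-column-structure}, taking $I=[10]$ in the first inequality forces any cover of all $13$ rows to use at least $9>8$ columns, so Theorem~\ref{thm:fibre-cover}(i) makes $X_8$ a UOM once it is orthogonal.

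\emph{Orthogonality.} For each of the $\binom{13}{2}=78$ row pairs, I will exhibit a witness column in which the two entries are mates under \eqref{eq:x8-mate-map}, and record these in a table like \eqref{eq:Y5-witnesses}. A useful check is that rows inside one block share many entries. For example, rows $1$ and $13$ agree in column $2$ ($1$) and column $8$ ($3$ versus $1$ differ). Such pairs need their witnesses confirmed individually, e.g.\ rows $2,5$ are mates in column $6$ ($8$ versus $7$).

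\emph{Cover bounds.} First I list, for each column $j$, the fibres with more than one row, since singletons cover at most one row. Next I compute, for each fibre, which blocks it meets and whether it contains a whole block; a multi-row fibre can contain at most the doubleton blocks $A_1,A_2,A_3$ or rows from distinct blocks. The counting argument of Lemma~\ref{lem:Y5-profile} then carries over. Take a union of $h$ blocks containing $h+d$ rows with $d\le3$. A cover by $c$ columns covers at most $c$ plus the total excess of its multi-row fibres over one. The second bound (columns other than $g=8$) follows if every choice of pairwise column-distinct fibres from columns $1$--$7$ has combined excess on the target less than $d+1$ beyond the count. The first bound allows a deficit of one, which column $8$ supplies: its fibres $\{1,5,13\}$ (symbol $1$) and $\{3,11\}$, $\{4,9\}$-type pairs are where the extra saving occurs.

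The main obstacle is this case analysis over all $I$. The doubleton blocks and the many repeated symbols (e.g.\ rows $2,5,13$ share entries in columns $1$--$5$) create fibres with excess two or three, and overlaps must be tracked carefully. Rather than argue by hand, I would rely on the recurrence \eqref{eq:fibre-dp} restricted to column sets $J=[8]$ and $J=[7]$. This computes the minimum restricted cover cost $\kappa^J$ of every union of blocks exactly, which is the acceptance test \eqref{eq:cert-good-column-acceptance} with minima \eqref{eq:cert-good-column-minima}. The proof then reduces to citing that the computed minima meet $|I|-1$ and $|I|$ respectively. By hand, I would verify only the tight cases, namely those $I$ containing $A_1$ together with column-$8$-covered rows, to explain why $g=8$ is special.
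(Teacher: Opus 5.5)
Your main argument matches the paper's proof. The paper's proof is a machine check with three parts: the $78$ row pairs by \eqref{eq:cert-orth}, unextendibility by \eqref{eq:cert-uom-dp}, and both restricted cover costs for all $1023$ nonempty block sets, with the minima in \eqref{eq:cert-good-column-minima}. As you say, unextendibility also follows from the first bound at $I=[10]$.

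One technical correction. The recurrence \eqref{eq:fibre-dp} only lists which row sets are coverable, so running it once for $J=[8]$ and once for $J=[7]$ does not produce $\kappa^J$. You must also carry the number of columns used, for example by storing with each union the least number of nonempty fibres that produce it. Alternatively, run the recurrence over every subset of $J$.

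The hand-level remarks misread $X_8$ and should not go into the proof.
\begin{itemize}
\item Column $8$ reads $3,4,1,2,1,5,3,5,6,6,2,4,1$. Its fibres are $\{3,5,13\},\{4,11\},\{1,7\},\{2,12\},\{6,8\},\{9,10\}$, in agreement with \eqref{eq:x8-chosen-column-graph}.
\item The fibre $\{1,5,13\}$ you attribute to column $8$ is the symbol-$1$ fibre of column $2$, and $\{3,11\}$ is a fibre of column $7$.
\item One feature special to column $8$ is that $\{4,11\}=A_4\cup A_9$ and $\{9,10\}=A_7\cup A_8$ are the only fibres of $X_8$ containing two whole blocks. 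This is why the $|I|=2$ minima in \eqref{eq:cert-good-column-minima} are $1$ and $2$.
\item Rows $2,5,13$ agree only in columns $1,3,4,5$, not in column $2$.
\end{itemize}

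More importantly, the counting of Lemma~\ref{lem:Y5-profile} does not carry over. Take \emph{combined excess} to mean the sum of $|F\cap T_I|-1$, as in your preceding sentence. Then your sufficient criterion is false even when restricted to at most $h-1$ fibres. For $I=\{1,2,5\}$ we have $T_I=\{1,2,5,7,13\}$, $h=3$ and $d=2$. The fibres $\{2,5,13\}$ of column $1$ and $\{1,5,13\}$ of column $2$ have total excess $4\ge d+1$. Yet their union has only four rows, and the second bound holds there, since row $7$ lies only in singleton fibres of columns $1$--$7$. In $Y_5$ only two columns had non-singleton fibres, all doubletons. In $X_8$ every column has non-singleton fibres, including overlapping triples. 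Overlaps must therefore be tracked exactly, which is what the exhaustive computation, and hence the actual proof, does.
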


\begin{proof}
The program \nolinkurl{check_construction_data.py} checks the $78$ row
pairs of $X_8$ by \eqref{eq:cert-orth}.
It also verifies \eqref{eq:cert-uom-dp}.

For the good column structure, it uses exactly the ten blocks in
\eqref{eq:good-column-decomposition-x8} and $g=8$.
The table in \eqref{eq:cert-good-column-minima} groups the $1023$
nonempty block sets by their number of blocks.
For each group, it gives the smallest cover cost with each allowed set
of columns.  Each minimum satisfies the corresponding bound in
\eqref{eq:cert-good-column-acceptance}.
Thus both bounds in \eqref{eq:good-column-conditions} hold for every
nonempty block set.

The chosen column has orthogonality graph
\begin{equation}
 K_{\{3,5,13\},\{4,11\}}
 \disj K_{\{1,7\},\{2,12\}}
 \disj K_{\{6,8\},\{9,10\}}.
 \label{eq:x8-chosen-column-graph}
\end{equation}
No other row decomposition is used in this proof.
\end{proof}

\begin{corollary}
\label{cor:good-column-zero-class-family}
For every $n\ge20$ satisfying $n\equiv0\pmod4$, there exists an
$(n+5)\times n$ UOM.  Equivalently,
\begin{equation}
 n+5\in\Theta_n.
 \label{eq:good-column-zero-class-family}
\end{equation}
\end{corollary}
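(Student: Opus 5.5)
Apply Theorem~\ref{thm:good-column-even-lift} to the $13\times8$ matrix $X_8$ of Theorem~\ref{thm:good-column-thirteen-by-eight-matrix}. Here $q=8$ is even and $r=q+2=10$. The blocks \eqref{eq:good-column-decomposition-x8} together with $g=8$ form a good column structure of $X_8$. Given $n\ge20$ with $n\equiv0\pmod4$, put $t=n-8$.

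First we check the hypotheses in \eqref{eq:good-column-lift-hypothesis}. Since $n\equiv0\pmod4$ and $8\equiv0\pmod4$, we get $t\equiv0\pmod4$. Since $n\ge20$, we get $t\ge12\ge10=r$. The matrix $X_8$ is orthogonal by Theorem~\ref{thm:good-column-thirteen-by-eight-matrix}, so the lift applies with these parameters.

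The lift then gives an orthogonal matrix $\widehat X$ of size $(m+t)\times(q+t)$ with a good column structure. Here $m=13$ and $q=8$, so the size is $(13+n-8)\times(8+n-8)=(n+5)\times n$. A good column structure makes an orthogonal matrix a UOM, by the remark after Definition~\ref{def:good-column-structure}; equivalently, take all $q+t+2$ blocks in the first bound of \eqref{eq:good-column-conditions}. Hence $\widehat X$ is an $(n+5)\times n$ UOM and $n+5\in\Theta_n$.

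The whole burden lies in the two inputs assumed earlier: the finite verification of the good column structure of $X_8$, and the lift theorem. Once these are granted, the corollary is only a parameter check. The one point to watch is $t\ge r$. It is exactly this bound that forces $n\ge20$, since $t=n-8\ge10$ together with $t\equiv0\pmod4$ gives $t\ge12$.
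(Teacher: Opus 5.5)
Your proof is correct and matches the paper's argument: set $t=n-8$, check $t\equiv0\pmod4$ and $t\ge12>r=10$, then apply Theorem~\ref{thm:good-column-even-lift} to $X_8$ to obtain $13+t=n+5$ rows in $8+t=n$ columns. Your remark that $t\ge r$, combined with $t\equiv0\pmod4$, is what fixes the threshold $n\ge20$ is also correct.
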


\begin{proof}
To obtain \eqref{eq:good-column-zero-class-family}, apply the lift to
the $13\times8$ UOM in
Theorem~\ref{thm:good-column-thirteen-by-eight-matrix}.
Set $t=n-8$.  Then $t\equiv0\pmod4$ and $t\ge12>8+2$, which verifies
\eqref{eq:good-column-lift-hypothesis}.
Theorem~\ref{thm:good-column-even-lift} gives $13+t=n+5$ rows
in $8+t=n$ columns.
\end{proof}
\section{A global obstruction in even dimensions}
\label{even:sec-global-obstruction}

Recall \eqref{eq:intro-theta} and the convention $2^n\in\Theta_n$.  This
appendix excludes the integer immediately above the minimum when
$n\equiv2\pmod4$.  The argument applies to every UOM.  It does not use the
constructions in Sections~\ref{sec:master-constructions} and
\ref{sec:existence-synthesis}.

\subsection{Fibres and the covering inequality}
\label{even:subsec-fibres}

Let $X$ be an $m\times n$ formal matrix.
For a symbol $x$ in column $j$, use the fibre $F_{X,j}(x)$ from
Definition~\ref{def:equality-fibre}(i).  A symbol not occurring in $X$ has
$F_{X,j}(x)=\varnothing$.  The set $F_{X,j}(\bar x)$ is the mate fibre.  The
fibre cover test in Theorem~\ref{thm:fibre-cover} permits at most one fibre
from each column.  An entry $\bar x$ in column $j$ of an extension row
covers exactly $F_{X,j}(x)$.

\begin{lemma}
\label{even:lem-union-inequality}
Let $X$ be an $(n+3)\times n$ UOM.  If
$F_1,\ldots,F_i$ are fibres selected from $i$ distinct
columns, then
\begin{equation}
 \left|F_1\cup\cdots\cup F_i\right|\leq i+2.
 \label{even:eq-union-inequality}
\end{equation}
\end{lemma}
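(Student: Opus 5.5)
Suppose, for a contradiction, that $F_1,\ldots,F_i$ are fibres from $i$ distinct columns $j_1,\ldots,j_i$ whose union $U$ has at least $i+3$ rows. The plan is to complete these fibres to a cover of $R(X)$ using at most one fibre from each of the other columns. Theorem~\ref{thm:fibre-cover}(i) then says $X$ is extendible, which contradicts the assumption that $X$ is a UOM.

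The count works as follows. After $U$ is covered, at most $(n+3)-(i+3)=n-i$ rows remain. Exactly $n-i$ columns are unused, so each remaining row should receive its own unused column. In column $j$, the singleton-or-larger fibre $F_{X,j}(r_j)$ always contains the row $r$. Hence any assignment of the leftover rows injectively to the unused columns yields a cover with at most one fibre per column.

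Concretely, list the uncovered rows $s_1,\ldots,s_k$ with $k\le n-i$. Choose distinct unused columns $c_1,\ldots,c_k$, which is possible since there are $n-i\ge k$ of them. Select the fibre $F_{X,c_\ell}((s_\ell)_{c_\ell})$ in column $c_\ell$. Together with $F_1,\ldots,F_i$, these fibres cover $R(X)$ and use each column at most once, so condition \eqref{eq:fibre-cover} holds. This is exactly the argument already used in the proof of Theorem~\ref{thm:square-transfer}(i), where the remaining core rows were assigned to unused columns.

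There is no real obstacle here. The only points to check are that the $F_\ell$ come from distinct columns, so the combined choice still respects the one-fibre-per-column rule, and that an entry in an unused column always yields a nonempty fibre containing the assigned row. Both hold by Definition~\ref{def:equality-fibre}(i). The resulting bound $|F_1\cup\cdots\cup F_i|\le i+2$ is \eqref{even:eq-union-inequality}.
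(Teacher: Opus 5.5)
Your proposal is correct and is essentially the paper's own proof. Both assume a union of at least $i+3$ rows, assign the at most $n-i$ leftover rows to distinct unused columns via the fibres containing them, and obtain a cover satisfying \eqref{eq:fibre-cover}, which contradicts unextendibility.
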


\begin{proof}
A violation of \eqref{even:eq-union-inequality} would give a full fibre
cover.  Indeed, if the selected union contains at least $i+3$ rows,
at most $(n+3)-(i+3)=n-i$ rows remain.
Assign them to distinct unused columns.
In each assigned column, choose the fibre containing that row.
Together with $F_1,\ldots,F_i$, these fibres satisfy \eqref{eq:fibre-cover},
contrary to unextendibility.
\end{proof}

\subsection{Two finite and elementary matching lemmas}
\label{even:subsec-local-lemmas}

The next two lemmas use matching and transversal matching in the sense of
Definition~\ref{def:graph-notation}(i) and
Definition~\ref{def:graph-notation}(iv) in
Section~\ref{sec:preliminaries}.

\begin{lemma}
\label{even:lem-complete-transversal-certificate}
Let $\mathcal M=(M_1,\ldots,M_s)$ be a list of matchings of size
three.  The same $M_i$ may occur more than once.
\begin{enumerate}[label=(\roman*)]
\item If $\mathcal M$ has no
transversal matching of size three, then
\begin{equation}
 s\leq4.
 \label{even:eq-complete-transversal-bound}
\end{equation}

\item If, in addition, $s=4$ and every edge in every $M_i$ meets the same
three-element set $A$, then the four matchings can be reordered so that
\begin{equation}
 M_1=M_2=M,\qquad M_3=M_4=N,
 \label{even:eq-complete-transversal-extremal}
\end{equation}
where $M\cup N$ is a six-cycle between $A$ and another
three-element set $D$.
\end{enumerate}
\end{lemma}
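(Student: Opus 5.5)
The plan is to reduce part (i) to a bounded configuration and settle it by a short structural argument plus a finite check, and then to obtain part (ii) directly from the rigid shape that the extra hypothesis forces. For (i), I argue by contradiction: suppose $s\ge5$ and that no transversal matching of size three exists. It suffices to treat $s=5$, since deleting members of the list cannot create a transversal matching.

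First, transversal matchings of size two exist. Take $e\in M_1$. Its two endpoints meet at most two edges of $M_2$, and $M_2$ has three edges, so some $f\in M_2$ is disjoint from $e$. Now fix a transversal matching $\{e,f\}$ with $e\in M_a$ and $f\in M_b$, and let $W=e\cup f$, which has four vertices. Since no third edge can be added, every edge of each of the other three matchings meets $W$. A matching of size three whose three edges all meet a four-set is very constrained: at least two of its edges have exactly one endpoint in $W$, and at most one edge lies inside $W$.

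Next I would exploit exchanges. Replace $e$ or $f$ by an edge of another matching that meets $W$ in a single vertex. This gives a new two-edge transversal matching and a new four-set $W'$, which again must meet every edge of the remaining matchings. Doing this for all choices confines the union of all edges to a bounded vertex set, at most $6\cdot5$ vertices, and forces many coincidences among the matchings.

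At that point I would close the case by a finite enumeration up to relabelling of the vertices. This is the step the paper delegates to \nolinkurl{check_finite_obstructions.py}. Each matching would be encoded as a restricted growth string of its endpoint labels, and one checks that every configuration of five matchings of size three on a bounded vertex set admits a transversal matching of size three. I expect this step to be the main obstacle. The difficulty is not the computation but justifying that the symbolic reduction really bounds the vertex set and the number of canonical forms, so that the recurrence counts in \eqref{eq:cert-rgs-recurrence}--\eqref{eq:cert-fixed-set-shape} are exhaustive. Repeated matchings also need care, since equal members still count as distinct indices by Definition~\ref{def:graph-notation}(iv).

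For (ii), the hypothesis makes the structure explicit. The three disjoint edges of each $M_i$ all meet the three-set $A$, so each edge contains exactly one vertex of $A$ and no edge lies inside $A$. Hence each $M_i$ is a bijection $\phi_i:A\to D_i$ onto a three-set $D_i$ disjoint from $A$. A transversal matching of size three is then a choice of distinct indices $i_a$ for $a\in A$ such that the images $\phi_{i_a}(a)$ are pairwise distinct.

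I would rule these out by cases. If some $\phi_i$ is not repeated within the list, or if two of the $\phi_i$ have different image sets, I would pick images greedily, for example via a Hall-type argument on the bipartite graph between $A$ and the available pairs (index, image), and obtain a transversal. This case analysis should leave only the possibility $\phi_1=\phi_2$, $\phi_3=\phi_4$, and the two distinct bijections agreeing nowhere, with a common image set $D$. The last condition holds because if they agreed at some $a$, one could use that vertex once and route the other two vertices of $A$ through the other bijection.

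Finally, two bijections $A\to D$ that agree nowhere, together with the fact that every $M_i$ is one of only two matchings, make $M\cup N$ a $2$-regular bipartite graph on $A\cup D$ with six vertices. It cannot split into a $4$-cycle plus a double edge, so it is a six-cycle, which is \eqref{even:eq-complete-transversal-extremal}.
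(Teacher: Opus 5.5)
Part (i) follows the paper: reduce to five matchings and rely on the exhaustive search up to relabelling. Your exchange step around $W=e\cup f$ does no work. Five matchings of size three span at most $30$ vertices anyway, and the search is exhaustive simply because old vertices keep their labels and new vertices get consecutive labels. So the obstacle you anticipate there is not a real one.

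The gap is in (ii). Several of your steps are correct:
\begin{itemize}
\item the reformulation by injective maps $\phi_i:A\to V\setminus A$;
\item the routing trick: if $\phi$ occurs twice, then a map that agrees with $\phi$ at some point, or sends some point outside $\phi(A)$, yields a transversal;
\item the final six-cycle step.
\end{itemize}
Together, however, they only treat lists in which some map is repeated. The decisive claim is that a list with an unrepeated map, in particular four pairwise distinct maps, always has a transversal. You only assert this, and the tool you propose cannot give it. A transversal needs distinct indices, distinct points of $A$ and distinct images at the same time, i.e.\ a rainbow (three-dimensional) matching. Neither a greedy choice nor Hall's theorem controls that, and without a repeated map your routing trick is unavailable. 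This is exactly what the paper gets from its fixed-set search with counts $(1,48,6,6)$.

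You can close the gap by hand. Any three of the four maps also have no transversal, so it suffices to show that three maps without a transversal include two equal ones.
\begin{enumerate}
\item Write the three maps as the rows of a $3\times3$ array with columns indexed by $A$; the rows are injective.
\item Two equal entries in different rows and columns lie on exactly one of the six diagonals, and every diagonal must contain such a pair.
\item A value occurring $k\le 3$ times (at most once per row) produces such pairs on at most $0,1,2$ diagonals for $k=1,2,3$. Hence at most $\tfrac23\cdot 9=6$ diagonals are covered in total. Equality holds only if every value occurs once in each row, twice in one column and once in another.
\item A parity count then shows that each column $c$ carries one value $x_c$ twice and another value $y_c$ once. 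Here $y_c=x_{\tau(c)}$ with $\tau$ a fixed-point-free permutation, hence a $3$-cycle.
\item Row injectivity forces the row carrying $y_c$ to carry $y_{\tau(c)}$ as well. So all single entries lie in one row, and the other two rows coincide.
\end{enumerate}
Applied to $3$-sublists, this excludes four pairwise distinct maps and the pattern $\phi,\phi,\psi,\chi$ with $\psi\ne\chi$. Three equal maps give a transversal directly. The list is therefore $\phi,\phi,\psi,\psi$, and your routing argument and the six-cycle step finish the proof.
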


\begin{proof}
For (i), it is enough to exclude lists of length five.
If a longer list has no transversal matching of size three, the same
holds for its first five members.
The program \nolinkurl{check_finite_obstructions.py} uses
\eqref{eq:cert-transversal-predicate} in the recurrence
\eqref{eq:cert-rgs-recurrence}.
The last count in \eqref{eq:cert-rgs-counts} is zero.
Thus no such list of length five exists, which gives
\eqref{even:eq-complete-transversal-bound}.

For (ii), each matching defines a one-to-one map from $A$ to vertices
outside $A$.  Indeed, its three disjoint edges meet the three-element set $A$.
Each edge must therefore use exactly one vertex there.
The fixed-set search therefore applies.
Equation~\eqref{eq:cert-fixed-set-counts} lists its counts.
Every final list has the form \eqref{eq:cert-fixed-set-shape}, which is
\eqref{even:eq-complete-transversal-extremal}.

Every list with the required properties is included in these searches
up to vertex relabelling.
Indeed, old vertices keep their labels at each step.
New vertices can be given consecutive labels without changing any intersections.
The searches test every matching allowed by these conditions, up to
vertex relabelling.
Thus every list has a representative in the recurrence.
\end{proof}

\begin{lemma}
\label{even:lem-two-intersecting-triples}
Let $\mathcal T$ be a family of at least three distinct
three-element sets such that
\[
 |T\cap T'|\geq2
 \qquad
 (T,T'\in\mathcal T).
\]
\begin{enumerate}[label=(\roman*)]
\item Exactly one of the following cases holds:
\begin{enumerate}[label=(\alph*)]
\item all members of $\mathcal T$ contain a fixed two-element set;
\item all members of $\mathcal T$ are contained in a fixed four-element
set.
\end{enumerate}

\item Suppose also that every $T_i\in\mathcal T$ has a two-element set $E_i$
which is disjoint from $T_i$ and meets every other set in $\mathcal T$.
Then
\begin{equation}
 |\mathcal T|\leq4.
 \label{even:eq-triple-family-bound}
\end{equation}

\item No three pairwise disjoint two-element sets can each meet
every member of $\mathcal T$.
\end{enumerate}
\end{lemma}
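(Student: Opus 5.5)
The plan is to prove the three parts in order, since (ii) and (iii) reduce to the dichotomy in (i).

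For (i), take two distinct members $T_1,T_2$. Their intersection has exactly two elements, say $P=T_1\cap T_2$, and $U=T_1\cup T_2$ is a four-element set. Let $T$ be any third member. It meets $T_1$ and $T_2$ in at least two elements each, and I split on whether $T\supseteq P$.

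\begin{itemize}
\item If $T\not\supseteq P$, then $T$ contains at most one element of $P$. To meet $T_1$ and $T_2$ in two elements each, it must contain one element of $P$ together with both $T_1\setminus P$ and $T_2\setminus P$. Hence $T\subseteq U$.
\item Otherwise $T\supseteq P$.
\end{itemize}

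It remains to show that the family cannot mix the two types. Suppose $T\supseteq P$ with $T\not\subseteq U$, and $T'\subseteq U$ with $T'\not\supseteq P$. Then $T\cap T'\subseteq P\cap T'$, which has at most one element, a contradiction. So either every member contains $P$, or every member lies in $U$.

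Exclusivity follows because at least three distinct triples cannot satisfy both conditions. The only triples containing $P$ inside $U$ are $T_1$ and $T_2$.

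For (ii), I would split by the cases of (i).

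In case (b), all members are subsets of a four-element set $U$. There are only four such triples, so $|\mathcal T|\le4$ immediately.

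In case (a), all members contain a common pair $P$, and write $T_i=P\cup\{x_i\}$ with distinct $x_i$. The set $E_i$ is disjoint from $T_i$, so it avoids $P$ and $x_i$. It must meet every other $T_j$ outside $P$, hence it must contain $x_j$ for every $j\ne i$. Since $|E_i|=2$, this forces $|\mathcal T|-1\le2$, so $|\mathcal T|\le3\le4$.

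This case check is the only real content of (ii). The obstacle is just making sure that disjointness of $E_i$ from $T_i$ is used to exclude $P$.

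For (iii), again split by the cases of (i).

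In case (a), there are at least three distinct points $x_i$. A pair meeting every member must either meet $P$ or contain all the $x_i$; the latter is impossible since there are at least three of them. So each of the three pairs meets $P$. Three pairwise disjoint pairs cannot all meet the two-element set $P$, since each would need its own element of $P$.

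In case (b), at least three of the four triples in $U$ occur. A pair meeting all of them must meet $U$; if it met $U$ in only one point $u$, that point would have to lie in every member. Any three triples of a four-set have common intersection of size exactly $1$. So each pair either contains two points of $U$, or contains the common point $u$ (only possible when exactly three triples occur).

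Three pairwise disjoint pairs use distinct points of $U$. At most one of them can contain $u$, and each of the others uses two points of $U$. Counting points of $U$ used by the three pairs gives at least $1+2+2=5>4$, a contradiction.

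I expect this counting in case (b) to need the most care.
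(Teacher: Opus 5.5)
Your proof is correct and follows essentially the same route as the paper: (i) uses the same dichotomy starting from two fixed triples, and (ii)--(iii) use the same split into cases (a) and (b), with the forced containment of every $x_j$ in case (a) and the point count $1+2+2>4$ in case (b). One step is left implicit in your exclusivity claim: any fixed pair in (a) must equal $T_1\cap T_2$, and any fixed four-element set in (b) must equal $T_1\cup T_2$, which is immediate.
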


\begin{proof}
Part (i) gives two possible forms of the triple family.
For each form, parts (ii) and (iii) count the vertices that the
two-element sets must use.

For (i), choose two distinct triples.
We claim that if some triple does not contain their common pair,
then all triples lie in a four-element set.  Indeed, label the chosen triples as
\[
 T_1=\{1,2,3\},
 \qquad
 T_2=\{1,2,4\}.
\]
If every other set contains $\{1,2\}$, then the first case holds.
Otherwise, some triple omits at least one of $1,2$.
By the two intersection conditions, this triple must be either
$\{1,3,4\}$ or $\{2,3,4\}$.
Any triple containing a vertex outside $\{1,2,3,4\}$ must also contain
$\{1,2\}$ to meet both $T_1$ and $T_2$ in two vertices.
It would then meet this third triple in only one vertex.
Thus every triple is contained in $\{1,2,3,4\}$, giving (i)(b).
The two cases cannot both hold.  Three distinct triples containing a
fixed pair have a union of at least five vertices.

For (ii), case (i)(a) permits at most three triples under the additional
condition on $E_i$.  Indeed, write the triples as
\[
 C\cup\{a_i\},
 \qquad |C|=2.
\]
The set $E_i$ is disjoint from $C\cup\{a_i\}$.
It must meet every other triple, so it contains every $a_j$ with $j\neq i$.  Since
$|E_i|=2$, there can be at most three triples.  For the second case, a
four-element set has only four subsets of size three.  This proves
\eqref{even:eq-triple-family-bound}.

For (iii), first consider (i)(a).  A two-element set avoiding $C$ would
have to contain all the distinct $a_i$ to meet every triple.
There are at least three of them.  Thus every two-element set meeting all
the triples must meet $C$.  Three disjoint sets would require three
vertices of $C$, contrary to $|C|=2$.

In case (i)(b), three disjoint sets meeting every triple would use more
than four vertices of the containing set.  The count depends on whether
there are three or four triples.
If there are three triples, a set using only one vertex there must use
their common vertex.  At most one of the disjoint sets can do so.
The other two each use at least two vertices, requiring at least five
in total.  If there are four triples, each set needs two vertices there,
requiring six.  Both counts are impossible.
\end{proof}

\Needspace{13\baselineskip}
\subsection{Proof of Theorem~\ref{thm:even-obstruction}}
\label{even:subsec-global-theorem}

\begin{proof}
The aim is to show that the columns cover too few distinct row pairs.
The fibre inequality \eqref{even:eq-union-inequality} restricts the
column types that need to be considered.
There are three cases: no triple fibre; two triples in one column;
or at least one triple fibre, with at most one in each column.
In each case, either the capacity is below the required count in
\eqref{even:eq-required-pairs}, or repeated pairs leave too few distinct
pairs.

Suppose that $X$ is a $(4k+5)\times(4k+2)$ UOM.
The hypothesis $n\ge14$ gives $k\ge3$.
Set
\[
 m=4k+5,\qquad n=4k+2.
\]
Orthogonality requires every row pair to be covered.

\smallskip
\noindent\emph{Column facts.}
For column $j$, let $C_{r,j}$ denote the number of fibres having
size $r$.  Let $e_j$ denote the number of unordered row pairs which
are orthogonal in this column.  Each covered pair is counted once in this
column.

We first show that every nonempty fibre has a nonempty mate fibre.
Indeed, suppose that $F_{X,j}(x)$ is nonempty but $F_{X,j}(\bar x)$ is empty.
Choose a row $v$ with entry $x$ in column $j$.  Replace that entry by
$\bar x$.
The new row is orthogonal to $v$ in column $j$.
Every other row was orthogonal to $v$ in another column, since $\bar x$
did not occur in column $j$.
The entries in those columns are unchanged.  Thus the new row extends $X$.
This contradiction proves mate completeness.

Each column has a positive odd number of doubletons.
Indeed, \eqref{even:eq-union-inequality} with $i=1$ shows that every fibre
has at most three rows.  Mate completeness makes the number of nonempty fibres even.
The row count is
\[
 4k+5=C_{1,j}+2C_{2,j}+3C_{3,j}.
\]
Subtracting the even fibre count shows that $C_{2,j}+2C_{3,j}$ is odd.
Thus $C_{2,j}$ is odd and nonzero, proving
\begin{equation}
 C_{2,j}\equiv1\pmod2,
 \qquad
 2\leq\max_x|F_{X,j}(x)|\leq3.
 \label{even:eq-small-fibre-parity}
\end{equation}

By Definition~\ref{def:equality-fibre}(i), the number of row pairs covered
by column $j$ is
\begin{equation}
 e_j=\sum_{\{x,\bar x\}}|F_{X,j}(x)|\,|F_{X,j}(\bar x)|.
 \label{even:eq-column-edge-count}
\end{equation}
In this sum, every unordered mate pair is used once.
For fixed fibre sizes, a maximum is obtained by pairing the largest sizes
together.  Indeed, if $a\ge b\ge c\ge d$, then
$ab+cd\ge ac+bd$ and $ab+cd\ge ad+bc$.
Successive exchanges give this pairing rule.
Put $f=C_{2,j}$ when $C_{3,j}=0$.
Put $h=C_{2,j}$ when $C_{3,j}\in\{1,2\}$.
Both numbers are positive and odd by \eqref{even:eq-small-fibre-parity}.
Pairing the largest fibres first gives
\begin{equation}
\begin{array}{c|c}
 \text{number of $3$-fibres in column $j$}
 &\text{upper bound for $e_j$}\\ \hline
 0&2k+2+f,\\
 1&2k+5+h,\\
 2&2k+8+h.
\end{array}
\label{even:eq-column-capacity-table}
\end{equation}
\medskip
\noindent\emph{Column types.}
Type $O_f$ means that the column has no triple fibre and has $f$
doubleton fibres.  Type $S_h$ means one triple fibre and $h$ doubletons.
Type $T_h$ means two triple fibres and $h$ doubletons.
For $O_f$, pair doubletons together except for one paired with a singleton.
For $S_h$, pair the triple with a doubleton.  Pair the remaining
doubletons together.
For $T_h$, pair the triples together.  Use the $O_f$ rule on the rest.
For example, the $O_f$ maximum is
\[
 4\frac{f-1}{2}+2+\frac{4k+4-2f}{2}=2k+2+f.
\]

By \eqref{even:eq-small-fibre-parity}, every column contains a doubleton
fibre.  One column cannot contain three triple fibres.
Indeed, take a doubleton fibre from another column.  It can meet at most
two of these three triple fibres, so it is disjoint from at least one of
them.  Their union then has size five, which contradicts
\eqref{even:eq-union-inequality} for $i=2$.

By \eqref{even:eq-column-capacity-table}, an $O_1$ column has capacity
at most $2k+3$.  Put
\begin{equation}
 B=(4k+2)(2k+3).
 \label{even:eq-baseline-capacity}
\end{equation}
The number of different unordered row pairs in the matrix is
\begin{equation}
 R=\binom{4k+5}{2}=B+2k+4.
 \label{even:eq-required-pairs}
\end{equation}
Every one of these pairs must be covered by some column.  Equation
\eqref{even:eq-column-edge-count} therefore gives
\begin{equation}
 \sum_{j=1}^{4k+2}e_j\geq R.
 \label{even:eq-total-column-capacity}
\end{equation}
Capacity counts include repeated pairs.
Equation~\eqref{eq:cert-repetition-loss} defines repetition loss by counting
every occurrence after the first.  The number of distinct pairs is
$\sum_j e_j-\delta$.
Lost capacity means the amount by which the columns fall below their
separate upper bounds.  If these bounds sum to $R+\varepsilon$, lost
capacity and repetition loss together must be at most $\varepsilon$.

\medskip
\noindent\textbf{Subcase 1: no column has a triple fibre.}
\phantomsection\label{even:subcase-no-triple}
In this subcase, we claim that every column has type $O_1$ or $O_3$.
Indeed, let $f_j=C_{2,j}$.  Every $f_j$ is positive and odd.
Suppose that some column has $f_j\geq5$.  If all the other columns had $f=1$, then
\eqref{even:eq-column-capacity-table} and
\eqref{even:eq-required-pairs} would give
\[
 (4k+3)+(4k+1)(2k+3)=R-4.
\]
Here $f_j\le2k+1$, since the doubletons are disjoint and $f_j$ is odd.
Thus the first column contributes at most $2k+2+f_j\le4k+3$.
A second column must have at least three doubleton fibres.
Choose one doubleton in a third column.  Among the three disjoint
doubletons in the second column, one avoids this set.  Among the five or
more disjoint doubletons in the first column, one further avoids all four
selected vertices.  These three disjoint doubletons come from distinct columns.
Their union has size six, contrary to
\eqref{even:eq-union-inequality} with $i=3$.

Only $O_1$ and $O_3$ remain.  At least five $O_3$ columns are required.
Indeed, let $s$ be their number.
Equation~\eqref{even:eq-column-capacity-table} gives
\[
 B+2s\geq R,
 \qquad\text{so}\qquad
 s\geq k+2\geq5.
\]
The three doubletons of each $O_3$ column form a matching.
Equation~\eqref{even:eq-union-inequality} excludes a transversal matching
of size three.
Equation~\eqref{even:eq-complete-transversal-bound} gives $s\le4$,
contrary to $s\ge k+2\ge5$.
\medskip
\noindent\textbf{Subcase 2: some column has two triple fibres.}
\phantomsection\label{even:subcase-two-triple}
In this subcase, the capacity bound can reach the required total only
if no pair is repeated.  The doubleton fibres will force a repeated pair.
Denote the two disjoint triple fibres by $A$ and $D$.
Their column has type $T_h$.  No other column can have a triple
fibre.  Otherwise, by \eqref{even:eq-union-inequality}, a third triple
would have to meet both $A$ and $D$ in at least two vertices, which is
impossible.

Every other column must in fact have type $O_1$.
To prove this, \eqref{even:eq-union-inequality} first shows that every
doubleton in an $O_f$ column meets both $A$ and $D$.
It is therefore an edge of $K_{3,3}[A,D]$.
Such a column contains one or three doubletons.
Every doubleton in the $T_h$ column lies
outside $A\cup D$.  Suppose that an $O_f$ column contains three
doubletons.  Choose a doubleton in another $O_f$ column.
One edge of the matching of three doubletons avoids it.
Finally, choose a doubleton from the $T_h$ column, outside $A\cup D$.  These three edges form
a transversal matching of size three, contrary to
\eqref{even:eq-union-inequality}.  Therefore, every $O_f$ column has type
$O_1$.

We now show that the total capacity is at most $R$.
Indeed, the $T_h$ column contains at most $2k-1$ doubleton fibres.  By
\eqref{even:eq-column-capacity-table}, the total capacity is therefore at
most
\begin{equation}
 (4k+7)+(4k+1)(2k+3)=R.
 \label{even:eq-two-triple-equality}
\end{equation}
Equation~\eqref{even:eq-total-column-capacity} forces equality in
\eqref{even:eq-two-triple-equality}, with no repeated row pair.
Reaching the bound in \eqref{even:eq-column-capacity-table} also requires
$A$ to be paired with $D$.  Thus this column covers $K_{3,3}[A,D]$.

We claim that some row pair is repeated.
Indeed, the unique doubletons in any two $O_1$ columns must meet.  Otherwise,
these two doubletons and a doubleton from the $T_h$ column outside
$A\cup D$ would form a transversal matching of size three, which is excluded
by \eqref{even:eq-union-inequality}.  By
Theorem~\ref{thm:standard-graph-tools}(vi), a pairwise intersecting family
of edges in a bipartite graph is a star.  After exchanging $A$ and $D$ if
necessary, let its centre be $a_0\in A$.  Thus every $O_1$ doubleton
contains $a_0$.  Some $O_1$ column must cover $a_0a_1$ for
$a_1\in A\setminus\{a_0\}$: the column containing the two triples cannot
cover a pair within either triple.
In this column, the fibre $\{a_0,d\}$ is paired with $\{a_1\}$.
The resulting $K_{2,1}$ and the $K_{3,3}$ from the $T_h$ column both
cover $a_1d$.  This contradicts the equality case above.

\medskip
\noindent\textbf{Subcase 3: triple fibres occur, but every column has
at most one triple fibre.}
\phantomsection\label{even:subcase-at-most-one-triple}
Let $q$ denote the number of $S_h$ columns.  All the other columns have
type $O_f$.  Since triple fibres occur, $q\ge1$.
For $q\ge3$, Lemma~\ref{even:lem-two-intersecting-triples}(ii) bounds the
number of such columns.
For $q=2$ and $q=1$, bounds on the $O_3$ columns give the remaining
capacity comparisons.

\smallskip
\noindent\emph{Restrictions for $q\geq2$.}
The triples from different $S_h$ columns are distinct.
Equation~\eqref{even:eq-union-inequality} shows that they meet in at least
two vertices.  Suppose that two of these triples were equal.
A doubleton in one column is disjoint from its own triple.
Its union with the triple in the other column would have size five,
contrary to \eqref{even:eq-union-inequality}.

Every $S_h$ column has exactly one doubleton fibre.  Otherwise, suppose
that it had three.  These doubletons are disjoint from the triple in the
same column.  Another triple has at most one vertex outside that triple.
One of the three doubletons therefore avoids the other triple.
Their union has size five, contrary to \eqref{even:eq-union-inequality}.  The unique doubleton in an
$S_h$ column is disjoint from its own triple.  By
\eqref{even:eq-union-inequality}, it must meet every triple from every
other $S_h$ column.

\smallskip
\noindent\emph{The case $q\geq3$.}
Only $q=3,4$ are possible.  Every remaining column has type $O_1$.
Indeed, Lemma~\ref{even:lem-two-intersecting-triples}(ii) gives
$q\leq4$.  Equation~\eqref{even:eq-union-inequality} also shows that every
doubleton in an $O_f$ column meets every triple from an $S_h$ column.  Lemma~\ref{even:lem-two-intersecting-triples}(iii) excludes three
disjoint doubletons meeting all these triples.
Hence every $O_f$ column has type $O_1$.

For $q=3$, Equations~\eqref{even:eq-baseline-capacity} and
\eqref{even:eq-column-capacity-table} give
\[
 B+9<R.
\]
For $q=4$, the total capacity is at most $B+12$.
We claim that the sum of the lost capacity and repetition loss is at
least three.  This gives at most $B+9<R$ covered pairs.
The proof of Lemma~\ref{even:lem-two-intersecting-triples}(ii)
excludes a common pair
in this case.  Part (i) then gives the following four triples:
\[
 T_i=U\setminus\{u_i\}
 \qquad (i=1,2,3,4)
\]
of a four-element set $U$.  If an $S_1$ column reaches its capacity
bound, then $T_i$ is paired with a doubleton containing $u_i$.  This column covers every $u_iu_j$ with $j\ne i$.
Two such $S_1$ columns $i,j$ therefore both cover $u_i u_j$.  Let $\ell$ denote the
amount by which the four $S_1$ columns are below their separate capacity
bounds.  If $\ell\le2$, at least $4-\ell$ of these columns reach their bounds.
They repeat at least
\[
 \binom{4-\ell}{2}
\]
pairs.  Adding the lost capacity gives at least three units of total
loss.  If $\ell\geq3$, the lost capacity alone gives this bound.  Hence
the number of different row pairs covered is at most
\[
 B+12-3=B+9<R.
\]

\smallskip
\noindent\emph{The case $q=2$.}
We next show that at most two remaining columns have type $O_3$.
Indeed, write the triples as
\[
 A=C\cup\{a\},
 \qquad
 A'=C\cup\{b\},
 \qquad
 C=\{x,y\}.
\]
By \eqref{even:eq-union-inequality}, every doubleton in an $O_f$ column
meets both triples.  In each column, each of $x,y$ belongs to at most one doubleton.  A doubleton avoiding $\{x,y\}$ must be $\{a,b\}$ to meet both
triples.  Hence an $O_f$ column has at most three doubletons.
If it has three, its matching has the form
\begin{equation}
 \{ab,\;xp_i,\;yq_i\},
 \qquad p_i\neq q_i.
 \label{even:eq-two-special-rich-shape}
\end{equation}
Thus such a column has type $O_3$.
The three edges are disjoint, so $p_i,q_i$ lie outside $A\cup A'$.
There can be at most two $O_3$ columns.  To see this, suppose that there are three.  Take
$ab$ from the third column.  For the other two columns $i,j$, both cross
choices avoid this edge.  The two edges in each choice must therefore meet;
otherwise, the three selected edges form a transversal matching.
The cross choices $(xp_i,yq_j)$ and $(yq_i,xp_j)$ both have intersecting edges only when
\[
 p_i=q_j,\qquad q_i=p_j.
\]
Thus $(p_i,q_i)$ is the reverse of $(p_j,q_j)$.  If this
were true for all three pairs of columns, then the second and third
ordered pairs would be equal and at the same time would be reverses of
each other.  It would follow that one column has $p_i=q_i$, contrary to
\eqref{even:eq-two-special-rich-shape}.

Let $s$ denote the number of $O_3$ columns.  The preceding argument gives
$s\leq2$.  Equations~\eqref{even:eq-baseline-capacity} and
\eqref{even:eq-column-capacity-table} now give
\begin{equation}
 B+6+2s\leq B+10.
 \label{even:eq-two-special-capacity}
\end{equation}
For $k\ge4$, \eqref{even:eq-required-pairs} makes this bound smaller
than $R$.  Only $k=3$, $s=2$ can reach $R$.
In that case both $S_1$ columns must reach their bounds.
The column with triple $A$ pairs it with a doubleton containing $b$.
The column with triple $A'$ pairs it with a doubleton containing $a$.
Both therefore cover $ab$.
The capacity equals the required count, so this repetition leaves some
pair in \eqref{even:eq-required-pairs} uncovered,
contrary to orthogonality.

\smallskip
\noindent\emph{The case $q=1$.}
Let $A$ be the unique triple fibre.  Let $h$ be the number of doubletons in
its $S_h$ column.  Let $s$ be the number of $O_3$ columns.  By
\eqref{even:eq-union-inequality}, every doubleton in an $O_f$ column meets
$A$.  Since these doubletons are disjoint, each such column has at most
three of them.  The parity condition \eqref{even:eq-small-fibre-parity}
leaves one or three.
An $O_3$ column therefore gives a matching of size three from $A$
to three vertices outside $A$.  Lemma~\ref{even:lem-complete-transversal-certificate}
gives
\begin{equation}
 s\leq4.
 \label{even:eq-one-special-rich-bound}
\end{equation}

If $h\ge5$, then $s=0$.
Indeed, suppose otherwise that $h\ge5$ and $s\ge1$.
Since \eqref{even:eq-one-special-rich-bound} gives $s\le4$, at least one of the
$4k+1\ge13$ other columns has type $O_1$.
Choose its doubleton.  Then choose a disjoint edge in an $O_3$ column.  The four selected vertices meet at most four among the
$h$ disjoint doubletons in the $S_h$ column, so one doubleton in that
column remains disjoint.  The three edges form a transversal matching of size
three, contrary to \eqref{even:eq-union-inequality}.
Thus $s=0$.  By \eqref{even:eq-column-capacity-table}, the bound
$h\le2k+1$ gives total capacity at most
\[
 B+2k+3<R.
\]

If $h=3$, then $s\le1$.
If $h\ge3$ and $s\ge2$, choose disjoint edges from two $O_3$ columns.
Their two endpoints outside $A$ meet at most two
doubletons in the $S_h$ column.  All these doubletons lie outside $A$.
A third disjoint edge therefore remains.  This contradicts \eqref{even:eq-union-inequality} with $i=3$.
Thus $s\le1$.  Equation~\eqref{even:eq-column-capacity-table} bounds
the capacity by
\[
 B+7<R.
\]

It remains to consider $h=1$.  Equations~\eqref{even:eq-column-capacity-table},
\eqref{even:eq-baseline-capacity}, and
\eqref{even:eq-required-pairs} give the following capacity condition:
\[
 B+3+2s\geq B+2k+4.
\]
Together with \eqref{even:eq-one-special-rich-bound}, this condition leaves
only the following case:
\begin{equation}
 k=3,\qquad h=1,\qquad s=4.
 \label{even:eq-one-special-critical}
\end{equation}
The capacity in this case is $R+1$, so at most one unit can be lost.
We claim that the repetition loss is at least three.
By
\eqref{even:eq-complete-transversal-extremal} in
Lemma~\ref{even:lem-complete-transversal-certificate}, the four matchings
coming from the $O_3$ columns have the form
\[
 M,M,N,N,
\]
where $M\cup N$ is a six-cycle between $A$ and a three-element set $D$.
The doubleton $E$ in the $S_1$ column is contained in $D$.
Indeed, every two-element subset
of $D$ is the pair of outside endpoints of two disjoint edges, one from
$M$ and the other from $N$.  If $E$ did not meet such a pair, these three
edges would form a transversal matching of size three.  This forces $E$ to
meet every two-element subset of $D$, so
\begin{equation}
 |E\cap D|=2.
 \label{even:eq-special-edge-meets-D}
\end{equation}

The $S_1$ column must pair $A$ with $E$.
Otherwise, its capacity is at least two below the bound.
The total capacity has only one extra unit.
This column must therefore contain all the edges of $K_{3,2}[A,E]$.  At
most one $O_3$ column can be below its capacity bound.  At least three
$O_3$ columns therefore pair two doubletons.  Each pair gives a $K_{2,2}$.  By
\eqref{even:eq-special-edge-meets-D}, at least one of the two outside
endpoints of each such pair lies in $E\cap D$.  This $K_{2,2}$ then contains an $A$--$E$ edge
which is already covered by the $K_{3,2}$ from the $S_1$ column.
Each of these three columns adds at least one repeated occurrence.
Thus the repetition loss is at least three, even if the same pair occurs
in several of these columns.  The capacity exceeds $R$ by only one.
This is a contradiction.

All possible column profiles have now been excluded.  The assumed UOM
does not exist.
\end{proof}

\begin{remark}
\label{even:rem-computation-scope}
The proof of Theorem~\ref{thm:even-obstruction} is valid for every UOM.
Its only computer check is the exhaustive matching calculation in
\eqref{eq:cert-transversal-predicate}--\eqref{eq:cert-fixed-set-shape}, carried
out by \nolinkurl{check_finite_obstructions.py} and used in
Lemma~\ref{even:lem-complete-transversal-certificate}.  A search restricted
to added columns, a square kernel, or a construction from input UOMs is
not sufficient for proving this theorem.
\end{remark}

\subsection{Consequences for the spectrum in even dimensions}
\label{even:subsec-spectrum-consequences}

Together with \eqref{eq:group-perfect-final-two-mod-four-tail},
Theorem~\ref{thm:even-obstruction} gives the
$n\equiv2\pmod4$ case of Theorem~\ref{thm:complete-spectrum}.  Thus, for
every $n\ge22$ in this class,
\[
 \Theta_n=\{n+2\}\cup[n+4,2^n-6]\cup\{2^n-4,2^n\}.
\]
\section{The boundary case
\texorpdfstring{$13\times10$}{13 by 10}}
\label{sec:13-by-10}

Theorem~\ref{thm:even-obstruction} treats $n\ge14$, so only
$(m,n)=(13,10)$ remains.  The proof starts from
Lemma~\ref{even:lem-union-inequality}, the parity condition
\eqref{even:eq-small-fibre-parity}, and the capacity bounds in
\eqref{even:eq-column-capacity-table}.  Three finite searches exclude the
last column types.  They are carried out together by
\nolinkurl{check_finite_obstructions.py}.  Their exhaustive outcomes are
recorded in \eqref{eq:cert-O3-counts},
\eqref{eq:cert-two-special-minima}, and
\eqref{eq:cert-one-special-counts} of
Appendix~\ref{app:certificate-contracts}.

\Needspace{13\baselineskip}
\subsection{Proof of Theorem~\ref{thm:ten-boundary}}

\begin{proof}
The proof uses the same row-pair count as
Theorem~\ref{thm:even-obstruction}, with $k=2$.
Two triple fibres in one column are excluded first.
The number $q$ of columns containing a triple then gives the cases
that remain.  In each case, the loss forced by repeated pairs must not
exceed the loss allowed by the capacity count.
The three finite checks in \eqref{eq:cert-O3-counts},
\eqref{eq:cert-two-special-minima}, and \eqref{eq:cert-one-special-counts}
exclude the cases left by these counts.

Suppose that $X$ is a $13\times10$ UOM.
The mate completeness argument preceding
\eqref{even:eq-small-fibre-parity} also holds at $k=2$.
Thus that parity condition and \eqref{even:eq-union-inequality} still apply.
In particular, doubletons from three distinct columns cannot be disjoint.
There are
\[
 R=\binom{13}{2}=78
\]
row pairs which must be covered.  Use the column types $O_f,S_h,T_h$ from the preceding proof.
Let $e(P)$ count the row pairs covered by a column of type $P$.  For $k=2$,
Equation~\eqref{even:eq-column-capacity-table} becomes
\[
 e(O_f)\leq6+f,\qquad
 e(S_h)\leq9+h,\qquad
 e(T_h)\leq12+h.
\]
An $O_1$ column has capacity $7$.  An $O_3$ column has capacity $9$ exactly when two doubletons are mates.
Otherwise, its capacity is $8$.
An $S_1$ column has capacity $10$ exactly when its triple and doubleton
are mates.  Otherwise, its capacity is $8$.  Falling below the bound loses one
unit for an $O_3$ column and two units for an $S_1$ column.
If two doubletons have singleton mates, pairing them together adds
$4+1-(2+2)=1$ row pair.
The term $+1$ counts the pair formed by the two singletons.
For a triple and a doubleton with singleton mates, the same change adds
$6+1-(3+2)=2$ row pairs.

First exclude two triple fibres in one column.
The argument in \hyperref[even:subcase-two-triple]{Subcase 2} uses
\eqref{even:eq-union-inequality} to make the other nine columns $O_1$.
It applies at $k=2$: the $T_h$ doubletons lie outside the two triples,
so an $O_3$ column would give three disjoint doubletons.
Equation~\eqref{even:eq-two-triple-equality} becomes $15+9\cdot7=78$.
Equality requires the two triples to be mates.
No row pair can then be repeated.
The $O_1$ doubletons must then form a star by
Theorem~\ref{thm:standard-graph-tools}(vi).
Covering a pair inside one triple repeats a pair between the two triples,
as in that subcase.  This excludes equality.
Three triple fibres are already excluded by
\eqref{even:eq-union-inequality} with $i=2$.
Let $q$ count the columns with one triple.
The remaining cases are $q=0$, $q\ge3$, $q=2$, and $q=1$.
\subsubsection{No triple fibres}
First exclude $O_5$.
Here $f\in\{1,3,5\}$.
If an $O_5$ column were the only column not of type $O_1$, the capacity
would be $11+9\cdot7=74<78$ by \eqref{even:eq-column-capacity-table}.
A second column must therefore have at least three doubletons.
Choose a doubleton in a third column.  Choose a disjoint one in the second.
The five doubletons in the first column cannot all meet these four vertices.
A third disjoint doubleton remains, contrary to
\eqref{even:eq-union-inequality} with $i=3$.
Thus only $O_1,O_3$ remain.
We claim that exactly four $O_3$ columns are required.
Each must have capacity $9$.  No pair can be repeated.
Indeed, let $s$ denote their number.
The capacity count gives
\[
 70+2s\geq78,
\]
while Lemma~\ref{even:lem-complete-transversal-certificate} gives
$s\leq4$.  Hence $s=4$.  All four columns must reach capacity $9$.
No row pair can be repeated.  Their doubletons give four matchings of
size three without a transversal matching of size three.
The search uses the $90$ such four-matching lists in
\eqref{eq:cert-rgs-counts}.
For each matching, a column at capacity $9$ has
\[
 3\cdot7\cdot15=315
\]
ways to select the mate pairs.  First select the doubleton paired with a
singleton.  Then select this singleton.  Finally, pair the other six
singletons.  Let $B_1,\ldots,B_4$ be the resulting edge sets on $[13]$.
These graphs must satisfy
\[
\begin{aligned}
 B_i\cap B_j&=\varnothing &&(i<j),\\
 \sum_{i=1}^4\bigl(\deg_{B_i}(v)-1\bigr)&\le2 &&(v\in[13]).
\end{aligned}
\]
The first condition means that no row pair is repeated.
For the second, mate completeness gives degree at least one at every row
in each of the ten columns.  With no repeated pair, the total degree at
each row is $12$.  Subtracting one degree unit for each of the ten columns
leaves only two.  The excess degree from the four selected columns is
therefore at most two.
Every such choice occurs in the finite domain.
None satisfies both conditions.  This is the last zero in \eqref{eq:cert-O3-counts}.
Thus the case without a triple fibre is impossible.

\subsubsection{At least three \texorpdfstring{$S_h$}{S-h} columns}
For $q\ge3$, the restrictions in
\hyperref[even:subcase-at-most-one-triple]{Subcase 3} use only
\eqref{even:eq-union-inequality}, so they apply at $k=2$.
The triples are distinct.  Every $S_h$ column is $S_1$.
Lemma~\ref{even:lem-two-intersecting-triples}(ii)--(iii) gives
$q\in\{3,4\}$.  It also shows that every other column is $O_1$.
If $q=3$, the maximum capacity $79$ allows only one repetition.
We will show that at least three pairs are repeated.
Each $S_1$ column must reach capacity $10$.
A smaller value loses two units, which is already too much.
The two forms of the triple family in
Lemma~\ref{even:lem-two-intersecting-triples}(i) give the repeated pairs.
First suppose that the triples share a pair $C$.
Write them as $C\cup\{a_i\}$ for $i=1,2,3$.  The doubleton paired with the $i$-th triple must then be
$\{a_j,a_k\}$.  The $i$-th and $j$-th columns both cover $a_i a_j$.
Thus three pairs are repeated.

For the other type of triple family, write
$T_i=U\setminus\{u_i\}$, $i=1,2,3$, inside a four-element set $U$.
The doubleton paired with $T_i$ must contain $u_i$.  It follows that each
of the three pairs $u_i u_j$, $1\leq i<j\leq3$, is again covered twice.
The maximum capacity $79$ permits only one repetition.
This is a contradiction.

If $q=4$, the maximum capacity is $82$.
Only four units can be lost.
The sum of the lost capacity and repetition loss is at least five.
Indeed, the common-pair case in the proof of
Lemma~\ref{even:lem-two-intersecting-triples}(ii) permits at most three triples.
Thus part (i) puts these four triples in a four-element set.
Write them as
\[
 T_i=U\setminus\{u_i\},\qquad i=1,2,3,4,
\]
In an $S_1$ column at capacity $10$, the doubleton paired with $T_i$
contains $u_i$.  Suppose that $a$ of
the four $S_1$ columns do not reach capacity $10$.  Since each such
column loses two capacity units, the four available units give $a\leq2$.
For each pair of the
remaining columns $i,j$, the pair $u_i u_j$ is covered twice.  Thus the
sum of the lost capacity and repetition loss is at least
\[
 2a+\binom{4-a}{2}
 =
 \begin{cases}
 6,&a=0,\\
 5,&a=1,\\
 5,&a=2.
 \end{cases}
\]
The maximum capacity is $82$.  Each value above is at least five, so fewer
than $78$ different pairs can be covered.

\subsubsection{Exactly two \texorpdfstring{$S_h$}{S-h} columns}
The restrictions for $q\ge2$ in
\hyperref[even:subcase-at-most-one-triple]{Subcase 3} make both triple columns $S_1$.
They also show that the triples are distinct.
Their intersection has two rows.
We next show that at most two other columns have type $O_3$.
Indeed, write the triples as $A=C\cup\{a\}$ and $A'=C\cup\{b\}$,
where $C=\{x,y\}$.
Equation~\eqref{even:eq-union-inequality} makes each doubleton in another
column meet both triples.
The matching argument leading to \eqref{even:eq-two-special-rich-shape}
therefore applies unchanged.
Every $O_3$ column has matching $\{ab,xp_i,yq_i\}$ with $p_i\ne q_i$.
At most two such columns can occur.
If $s$ counts them, \eqref{even:eq-two-special-capacity} at $k=2$
gives maximum capacity $76+2s$.
Thus only $s=1,2$ remain.
If $s=1$, the total capacity is only $78$, so no pair can be repeated.
But the pair $ab$ is covered twice.
Indeed, both $S_1$ columns must reach capacity
$10$.  The $O_3$
column must reach capacity $9$.  The doubleton paired with $A$ contains $b$.
The one paired with $A'$ contains $a$.  Therefore, the two
complete bipartite graphs arising from the $S_1$ columns both cover
$ab$.  This is impossible because the capacity count does not allow any
repeated pair.

If $s=2$, the finite check shows that every allowed column choice loses
more than the two units permitted by the maximum capacity $80$.
To state the check, write the two doubletons from the $S_h$ columns as
\[
 E=\{b,u\},\qquad E'=\{a,v\},
\]
where $u,v$ are selected from the other nine rows.  The second finite
search contains all $419\,904$ labelled choices.  Exactly $4\,104$ have
no transversal matching of size three.  Let
$s_0$ be the number of $S_1$ columns below capacity $10$.  Let $b_0$
be the number of $O_3$ columns below capacity $9$.  Capacity permits only
$2s_0+b_0\le2$.  For the four possible pairs
\[
 (s_0,b_0)=(0,0),(0,1),(0,2),(1,0),
\]
the forced subgraphs have minimum repetition losses $5,3,1,1$,
respectively, by \eqref{eq:cert-two-special-minima}.
The search retains the forced $K_{3,2}$ or $K_{2,2}$ in each column at
capacity.  Equation~\eqref{eq:repetition-loss-monotonicity} makes these
minima lower bounds for the complete columns.
After subtracting the lost capacity, only $2,1,0,0$ repetitions are allowed
in the four respective cases.
Each forced loss is larger than the loss allowed in the same case.
This excludes $q=2$.
\subsubsection{Exactly one \texorpdfstring{$S_h$}{S-h} column}
First reduce to $h=1$ with three or four $O_3$ columns.
Let $A$ be the triple.  Write $h$ for its column's doubleton count.
Let $s$ be the number of $O_3$ columns.
Equation~\eqref{even:eq-union-inequality} makes each $O_3$ matching join
$A$ to three outside rows.
Equation~\eqref{even:eq-complete-transversal-bound} gives $s\le4$.
The $q=1$ argument in
\hyperref[even:subcase-at-most-one-triple]{Subcase 3} also applies at $k=2$.
For $h\ge5$, there is still an $O_1$ column to choose, since there
are nine other columns and at most four $O_3$ columns.
The argument gives $s=0$.  The capacity is then at most $77$.
If $h=3$, it gives $s\le1$.  The same capacity bound applies.
For $h=1$, \eqref{even:eq-column-capacity-table} gives $73+2s$.
This reaches $78$ only for $s=3,4$.
For $s=3$, the maximum total is only $79$.
We claim that the sum of the lost capacity and repetition loss is at
least two.
First, the $S_1$ column must reach capacity $10$.
Otherwise, its loss of two units already gives a contradiction.
Denote its doubleton by $E$.  For an $O_3$ matching $M_i$, define
\[
 H_i=\{e\in M_i:e\cap E=\varnothing\}.
\]
For two $O_3$ columns at capacity $9$, we claim that either at least one
repeats a pair from the $S_1$ column, or the two columns have two common pairs.
Indeed,
\eqref{even:eq-union-inequality} does not allow a transversal
matching of size three containing $E$.  Thus every edge of
$H_i$ meets every edge of $H_j$.  An $O_3$ column reaching capacity $9$
pairs two edges of its matching.  If one of these two edges meets $E$,
the resulting $K_{2,2}$ repeats an edge of the $K_{3,2}[A,E]$ from the
$S_1$ column.  Suppose that two $O_3$ columns reaching capacity $9$ avoid
such a repeated edge.  Each of them must then choose two edges from its
set $H_i$.  Each selected edge must meet both disjoint edges from the
other matching.  Its endpoints therefore lie one on each of those edges.
The two choices are the alternating matchings of a four-cycle.
Their $K_{2,2}$ graphs have two common edges:
the pair within $A$ and the pair between the two outside vertices.

If all three $O_3$ columns reach capacity $9$, there are two possibilities.
If at least two repeat a pair from the $S_1$ column, they give repetition
loss at least two.  Otherwise, at least two avoid such a repetition.
The four-cycle argument shows that these two columns have two common
pairs.  Again the repetition loss is at least two.
The total capacity is $79$, which allows loss at most one.
If one $O_3$ column has capacity $8$, the other two have capacity $9$.
Either one repeats a pair from the $S_1$ column, or they have two common
pairs.  Both alternatives contradict total capacity $78$, which allows
no repeated pair.  Two columns of capacity $8$ would already give total
capacity at most $77$.

For $s=4$, the maximum total is $81$, so only three units can be lost.
The loss depends on whether the $S_1$ column reaches capacity $10$.
Equation~\eqref{even:eq-complete-transversal-extremal} in
Lemma~\ref{even:lem-complete-transversal-certificate} gives
\[
 M,M,N,N,
\]
where $M\cup N$ is a six-cycle between $A$ and a three-element set $D$.
The doubleton $E$ in the $S_1$ column must meet every two-element subset of
$D$.  Indeed, $E$ is disjoint from $A$.
The argument for \eqref{even:eq-special-edge-meets-D}
uses this fact, the six-cycle, and the absence of three disjoint
doubletons from distinct columns.
These conditions hold here, so it gives
$|E\cap D|=2$.

If the $S_1$ column reaches capacity $10$, the total loss is at least
four units.  Indeed, let
$b_0\leq3$ be the number of the four $O_3$ columns that do not reach
capacity $9$.  Every $K_{2,2}$ coming from an $O_3$ column of capacity
$9$ repeats one edge of the $K_{3,2}[A,E]$ coming from the $S_1$ column.
Therefore, the sum of the lost capacity and repetition loss is at
least
\[
 b_0+(4-b_0)=4,
\]
while the maximum capacity $81$ requires this total to be at most three.

If the $S_1$ column does not reach capacity $10$, it loses two of the
three allowed units.
At most one $O_3$ column can also fail to reach capacity $9$.
The last finite check rules out these two possibilities.
In the first choice, all four $O_3$ columns reach their bounds.
In the second, exactly one $O_3$ column falls below its bound.  Each
$O_3$ column at capacity $9$ pairs two of its three doubletons.
There are three choices for this pair.  The last finite search gives the two exact minima in
\eqref{eq:cert-one-special-counts}: the $81$ choices in the first case have
repetition loss at least four.
The $108$ choices in the second case have repetition loss at least two.
These counts use only the $K_{2,2}$
subgraphs from the $O_3$ columns at capacity $9$.
Equation~\eqref{eq:repetition-loss-monotonicity} shows that these lower
bounds also hold for the complete columns.
After subtracting the lost capacity, the first choice permits repetition
loss at most one.  The second permits none.
Both are impossible.
This rules out every possible column type.  A $13\times10$ UOM does not
exist.
\end{proof}
\section{Finite computations used in the proofs}
\label{app:certificate-contracts}

This appendix specifies the three finite checks used in the proofs:
construction data, five point cliques, and matching obstructions.
The programs use integers and finite sets, represented by integer bit masks.
The additional matrices in \nolinkurl{finite_uoms.txt} are checked by
\eqref{eq:cert-orth} and \eqref{eq:cert-uom-dp}.
The program \nolinkurl{check_finite_uoms.py} in
\nolinkurl{audit/finite_uom_check} checks these conditions directly from
the matrix entries.
\subsection{Construction data}

\medskip
\noindent\emph{Basic matrix checks.}
The program \nolinkurl{check_construction_data.py} checks the finite inputs
for the constructions based on row decompositions and on a good column
structure.  It reads
\nolinkurl{construction_data.json}.  Each column has its own alphabet.
The mate rule is \eqref{eq:mate-rule}.  For a symbol $a$ in column $j$,
let $F_{X,j}(a)$ be the fibre in
Definition~\ref{def:equality-fibre}(i) and \eqref{eq:equality-fibre}.  The basic checks are
\begin{align}
 \mathsf{Orth}(X)
 &\Longleftrightarrow
 \forall r<s\ \exists j\quad X_{rj}=\overline{X_{sj}},
 \label{eq:cert-orth}\\
 \mathsf{MC}(X)
 &\Longleftrightarrow
 \forall j,a\quad F_{X,j}(a)\ne\varnothing
       \Longrightarrow F_{X,j}(\bar a)\ne\varnothing.
 \label{eq:cert-mate-complete}
\end{align}
The first line checks row orthogonality in
Definition~\ref{def:uom}(i)--(ii).
The second condition is used only where mate completeness is required.

\smallskip
\noindent\emph{Fibre-cover recurrence.}
To test unextendibility, the program finds every row set that can be
covered by at most one fibre from each column.
For each column, put
$\mathcal F_{X,j}=\{F_{X,j}(a):F_{X,j}(a)\ne\varnothing\}$.  The fibre
cover calculation is
\begin{equation}
 \mathcal R_0=\{\varnothing\},\qquad
 \mathcal R_j=\{U\cup F:U\in\mathcal R_{j-1},\ 
              F\in\mathcal F_{X,j}\cup\{\varnothing\}\}.
 \label{eq:cert-fibre-dp}
\end{equation}
Repeated sets are removed after each column.
An orthogonal matrix is unextendible exactly when no set in the final
list contains all rows:
\begin{equation}
 \mathsf{UOM}_{\rm DP}(X)
 \Longleftrightarrow
 \mathsf{Orth}(X)\ \wedge\ [m]\notin\mathcal R_n.
 \label{eq:cert-uom-dp}
\end{equation}
$\mathsf{UOM}_{\rm DP}(X)$ therefore checks Definition~\ref{def:uom}(iii) by
Theorem~\ref{thm:fibre-cover}(ii).

\smallskip
\noindent\emph{Minimal supports and repetition.}
Let $\mathcal A=(A_1,\ldots,A_q)$ be a row decomposition in
Definition~\ref{def:row-decomposition}, and set
$T_I=\bigcup_{i\in I}A_i$.  The notation $\mathsf{Cover}_X(I,J)$ means
that one fibre can be chosen in each column $j\in J$ so that the union of
the chosen fibres contains $T_I$.  This means that $J$ is a fibre support of $T_I$ in
Definition~\ref{def:equality-fibre}(ii).
Here the rows are represented by their numbers.  The complete list of minimal column supports is
\begin{equation}
 \begin{aligned}
 \mathcal A_X^{\min}(I)=\{J\subseteq[n]:{}&
       \mathsf{Cover}_X(I,J),\\
      &\mathsf{Cover}_X(I,J')\text{ fails for every }J'\subsetneq J\}.
 \end{aligned}
 \label{eq:cert-min-supports}
\end{equation}
Theorem~\ref{thm:row-decomposition-construction}(i) needs joint block
capacity below $q$.
The next recurrence tests whether a list of inputs can reach this bound.
The same input may occur more than once.
All matrices in $\mathcal C$ have the same number $N$ of columns and the
same number $q$ of decomposition blocks.
Every nonempty proper subset $I\subsetneq[q]$ is used.
The full row set has no support because the input matrix is a UOM.
For a used column set $U$, the state value $d(U)$ records the largest
total number of selected blocks found so far.
A total of $q$ or more is recorded as $q$.
Initially, put $d(\varnothing)=0$ and
$d(U)=-\infty$ for $U\ne\varnothing$.  For every $X\in\mathcal C$, every
$I\ne\varnothing$ with $I\subsetneq[q]$, every
$J\in\mathcal A_X^{\min}(I)$, and every $U$ disjoint from $J$, make the
update
\[
 d(U\cup J)\leftarrow
 \max\{d(U\cup J),\min(q,d(U)+|I|)\}.
\]
A nonempty block set has no empty support.
Each transition therefore increases $|U|$.
Processing states in increasing order of $|U|$ gives a finite calculation.

We now check that the recurrence gives the repeated-input capacity in
\eqref{eq:repetition-capacity}, with values above $q$ recorded as $q$.
Each transition adds one input whose support avoids all previously used columns,
as required by \eqref{eq:joint-capacity}.
Conversely, replace each support in a feasible list by a minimal subset.
The corresponding transitions then reproduce that list.
The same input can be chosen in several transitions, so repetitions are
included.
Thus the largest reachable value is
\begin{equation}
 \bcap^{\rm rep}_q(\mathcal C)
   :=\min\{q,\bcap^{\rm rep}(\mathcal C)\}
   =\max_{U\subseteq[N]}d(U).
 \label{eq:cert-repetition-capacity}
\end{equation}
The values used in Lemma~\ref{lem:certified-row-decompositions} are obtained
from this finite recurrence.

\smallskip
\phantomsection\label{app:cert-good-column}
\noindent\emph{The input with a good column structure.}
The same program checks the fixed $13\times8$ matrix $X_8$, its ten stated
decomposition blocks, and the chosen column $g=8$.  For every nonempty
$I\subseteq[10]$ it verifies the restricted cover costs from
Definition~\ref{def:restricted-cover-cost}:
\begin{equation}
 \begin{gathered}
 \mathsf{Orth}(X_8),\qquad \mathsf{MC}(X_8),\qquad
 \mathsf{UOM}_{\rm DP}(X_8),\\
 \kappa_{X_8}^{[8]}(T_I)\ge |I|-1,\qquad
 \kappa_{X_8}^{[8]\setminus\{8\}}(T_I)\ge |I|.
 \end{gathered}
 \label{eq:cert-good-column-acceptance}
\end{equation}
For reference, the minima over the $1023$ possible nonempty sets $I$ are
\begin{equation}
 \begin{array}{c|cccccccccc}
 |I|&1&2&3&4&5&6&7&8&9&10\\ \hline
 \min\kappa^{[8]}_{X_8}(T_I)
  &1&1&2&3&4&5&6&7&8&\infty\\
 \min\kappa^{[8]\setminus\{8\}}_{X_8}(T_I)
  &1&2&3&4&5&6&7&\infty&\infty&\infty.
 \end{array}
 \label{eq:cert-good-column-minima}
\end{equation}
These checks concern the stated decomposition of $X_8$.
The later lifted matrices follow from
Theorem~\ref{thm:good-column-even-lift}.

\subsection{The five point completion graph}
\label{app:cert-five-point}

Let $A=\mathbb Z_5$ and let $F_a$ be the five near-one-factors in
\eqref{eq:five-kernel}.  Following Definition~\ref{def:completion-graph}(ii)
and the adjacency rule \eqref{eq:five-adjacency}, the program
\nolinkurl{check_five_point_cliques.py} constructs the completion graph on
the $120$ permutations in $\Sym(A)$.  Put
\[
 W(\sigma,\tau)=\{a\in A:\{\sigma(a),\tau(a)\}\in F_a\}.
\]
For every supplied set $C\subseteq\Sym(A)$, it checks
\begin{align}
 \mathsf{Clique}(C)&\Longleftrightarrow
   \forall\sigma\ne\tau\in C\quad W(\sigma,\tau)\ne\varnothing,
 \label{eq:cert-five-clique}\\
 \mathsf{Max}(C)&\Longleftrightarrow
   \forall\pi\in\Sym(A)\setminus C\ \exists\sigma\in C\quad
                         W(\pi,\sigma)=\varnothing.
 \label{eq:cert-five-maximal}
\end{align}
The file \nolinkurl{five_point_cliques.json} contains one witness for each
order actually used in the proof.  The required statement is
\begin{equation}
 \begin{aligned}
 \mathsf{FivePointReq}
 &=\bigwedge_{c\in\Lambda_0}\ \exists C_c\subseteq\Sym(A):\\
 &\quad |C_c|=c,\quad \mathsf{Clique}(C_c),\quad \mathsf{Max}(C_c),\\
 &\quad \Lambda_0=\{1\}\cup[4,19].
 \end{aligned}
 \label{eq:cert-five-required}
\end{equation}
One maximal clique of each required order is enough for
Lemma~\ref{lem:full-link}(i).

\subsection{Finite matching searches}
\label{app:cert-transversal}

\medskip
\noindent\emph{Transversal matching search.}
The program \nolinkurl{check_finite_obstructions.py} checks the matching
families below.  Let
$\mathbf M=(M_0,\ldots,M_{s-1})$ be an ordered list of matchings with
three edges.  Repetitions are allowed.
Write $V(\mathbf M)$ for the vertices appearing in the list.
The following condition tests for a transversal matching in
Definition~\ref{def:graph-notation}(iv):
\begin{equation}
 \begin{aligned}
 \mathsf{TM}_3(\mathbf M)\Longleftrightarrow{}&
 \exists\,0\le i<j<k<s,\\[-1mm]
 &\exists e_i\in M_i,e_j\in M_j,e_k\in M_k:\\[-1mm]
 &e_i,e_j,e_k\text{ are pairwise disjoint}.
 \end{aligned}
 \label{eq:cert-transversal-predicate}
\end{equation}
Fix $M_0=\{01,23,45\}$.  The $48$ automorphisms of $M_0$ reduce the second
matching to $27$ representatives.
At later levels, previously used vertices keep their labels.
A new vertex receives the next unused label.
For $0\le h\le6$, choose $6-h$ vertices from $\mathbb Z_v$
and add $v,\ldots,v+h-1$.  Let $\mathcal G(v)$ contain all perfect
matchings on every six-element set obtained in this way.  The recurrence is
\begin{equation}
 \begin{aligned}
 \mathcal P_{s+1}=\{(\mathbf M,M):{}&\mathbf M\in\mathcal P_s,\\
 &M\in\mathcal G(1+\max V(\mathbf M)),\\
 &\neg\mathsf{TM}_3(\mathbf M,M)\}.
 \end{aligned}
 \label{eq:cert-rgs-recurrence}
\end{equation}
Every list without a transversal matching of size three occurs after
vertex relabelling.  The level counts are
\begin{equation}
 (|\mathcal P_2|,|\mathcal P_3|,|\mathcal P_4|,|\mathcal P_5|)
 =(27,62,90,0).
 \label{eq:cert-rgs-counts}
\end{equation}
The last count excludes a list of length five.
This proves Lemma~\ref{even:lem-complete-transversal-certificate}(i).

\smallskip
\noindent\emph{A fixed three-point set.}
The next search gives the form of the four matchings in
Lemma~\ref{even:lem-complete-transversal-certificate}(ii).
Take $A=\{0,1,2\}$.
Represent a matching by a one-to-one map from $A$ to the vertices outside
$A$.  Start with $(3,4,5)$.  At each step the range may use old outside
vertices and consecutively labelled new vertices.  A list is retained
exactly when there are no indices $i<j<k$ and no permutation
$(a_i,a_j,a_k)$ of $A$ for which
$f_i(a_i),f_j(a_j),f_k(a_k)$ are distinct.  The resulting counts are
\begin{equation}
 (|\mathcal J_1|,|\mathcal J_2|,|\mathcal J_3|,|\mathcal J_4|)
 =(1,48,6,6),
 \label{eq:cert-fixed-set-counts}
\end{equation}
Each final list has the form
\begin{equation}
 \{M,M,N,N\},\qquad M\cup N=C_6[A,D],\qquad |A|=|D|=3.
 \label{eq:cert-fixed-set-shape}
\end{equation}

\smallskip
\phantomsection\label{app:cert-13x10}
\noindent\emph{The $13\times10$ searches.}
The same program treats the three finite families left in the
$13\times10$ proof.  The first check asks whether four $O_3$ columns can have pairwise
disjoint sets of covered row pairs.  The other two give lower bounds on
the loss from repeated pairs.  Appendix~\ref{sec:13-by-10} compares these
bounds with the total column capacity minus $78$.
For finite edge sets $B_1,\ldots,B_t$, put
\begin{equation}
 \delta(B_1,\ldots,B_t)=\sum_{i=1}^t|B_i|-
          \left|\bigcup_{i=1}^tB_i\right|.
 \label{eq:cert-repetition-loss}
\end{equation}
In the case with four $O_3$ columns, the input is the $90$ four-matching
families in $\mathcal P_4$.  Each matching has $315$ realizations which
reach capacity nine.  The search examines $28\,350$ single columns,
$2\,400\,840$ edge-disjoint pairs, and $13\,296\,960$ edge-disjoint
triples.  At depth three it also applies
$\sum_i(\deg_{B_i}(v)-1)\le2$ at every row $v$; no edge-disjoint fourth
choice remains.  The recorded tuple is
\begin{equation}
 (28\,350,\ 2\,400\,840,\ 13\,296\,960,\ 0).
 \label{eq:cert-O3-counts}
\end{equation}
For two $S_1$ and two $O_3$ columns, the complete labelled domain has
$419\,904$ members.
Among them, $4\,104$ have no transversal matching of size three.
For the loss calculation, let $B_i$ contain only the forced
$K_{3,2}$ in an $S_1$ column at capacity $10$, or the forced $K_{2,2}$
in an $O_3$ column at capacity $9$.  Set $B_i=\varnothing$ for a column
below capacity.  The remaining edges of each column are omitted.
The loss on these forced subgraphs is a lower bound for the loss on the
complete columns.
Indeed, let $E_i$ be the complete edge set of column $i$.
Then $B_i\subseteq E_i$.
For each row pair $e$, let $r_B(e)$ and $r_E(e)$ count the sets containing it.
Thus $r_B(e)\le r_E(e)$, so
\begin{equation*}
 \begin{aligned}
 \delta(B_1,\ldots,B_t)
 &=\sum_e\max\{r_B(e)-1,0\}\\
 &\le\sum_e\max\{r_E(e)-1,0\}\\
 &=\delta(E_1,\ldots,E_t).
 \end{aligned}
 \tag{\ref*{eq:cert-O3-counts}a}\label{eq:repetition-loss-monotonicity}
\end{equation*}
The exact minima for the forced subgraphs are
\begin{equation}
 \begin{array}{c|cccc}
 (s_0,b_0)&(0,0)&(0,1)&(0,2)&(1,0)\\ \hline
 \min\delta&5&3&1&1.
 \end{array}
 \label{eq:cert-two-special-minima}
\end{equation}
Here $s_0$ is the number of the two $S_1$ columns below capacity $10$, and
$b_0$ is the number of the two $O_3$ columns below capacity $9$.
For $(s_0,b_0)=(1,0)$, the program omits the first $S_1$
subgraph.  Interchanging the two $S_1$ columns preserves the search domain,
so this also covers the choice in which the second one is below capacity.
Finally, for one $S_1$ column and four $O_3$ columns, keep only the
$K_{2,2}$ subgraphs of the $O_3$ columns at capacity $9$.
The same inequality gives a lower bound for the complete columns.
The two searches give
\begin{equation}
 \begin{aligned}
 81\text{ patterns:}&\quad \min\delta=4,\\
 108\text{ patterns:}&\quad \min\delta=2.
 \end{aligned}
 \label{eq:cert-one-special-counts}
\end{equation}
In this last case the $S_1$ column is below capacity $10$.  The first search
has $3^4=81$ choices because all four $O_3$ columns reach capacity $9$.
The second has $4\cdot3^3=108$ choices because exactly one $O_3$ column is
below capacity $9$.
These are exhaustive finite domains, not random samples.  Their construction
is written directly in the program, together with the capacity and
edge-disjointness tests used in Appendix~\ref{sec:13-by-10}.

The three programs, their two JSON input files, and
\nolinkurl{finite_uoms.txt} are available in the \certsite.
\section{Example matrices}
\label{app:explicit-matrices}

The following three matrices come from the main constructions.
All entries are column-local symbols, as in
Section~\ref{sec:preliminaries}.
In every column, the mate pairs are
$1\leftrightarrow2$, $3\leftrightarrow4$, $5\leftrightarrow6$ and so on.
These are mate pairs in the sense of \eqref{eq:mate-rule}.

\subsection{A matrix from a completion graph
\texorpdfstring{$M_{13,9}$}{M13,9}}

The first nine rows form a matching core from
Theorem~\ref{thm:matching-core-full-link}(i).
An inclusion-maximal clique gives the last four rows by
Lemma~\ref{lem:full-link}(i).  The core matchings are not
cyclic.  This also gives a $13\times9$ example for \cite{ChenChen2026}; see
Definition~\ref{def:completion-graph}(ii).
\begingroup
\small
\setlength{\arraycolsep}{3pt}
\[
M_{13,9}=\left(\begin{array}{rrrrrrrrr}
1&3&3&3&3&3&3&3&3\\
3&1&5&5&5&5&5&4&5\\
5&5&1&7&6&4&7&5&7\\
4&4&7&1&7&7&8&7&9\\
7&7&6&4&1&9&9&8&8\\
9&8&8&9&9&1&6&6&4\\
6&9&4&6&10&10&1&9&10\\
10&10&9&8&4&8&10&1&6\\
8&6&10&10&8&6&4&10&1\\ \hline
2&2&2&7&2&8&2&5&2\\
2&2&8&7&2&2&2&6&2\\
9&2&7&2&2&4&2&6&2\\
2&2&8&8&2&2&2&2&2
\end{array}\right).
\]
\endgroup
Equation~\eqref{eq:matching-core-maximality} therefore shows that
$M_{13,9}$ is a UOM.

\Needspace{18\baselineskip}
\subsection{A matrix from a matching core
\texorpdfstring{$M_{15,10}$}{M15,10}}

In this example, the first ten rows form the matching core of
Theorem~\ref{thm:matching-core-full-link}.  Five core matchings contain five
edges.  The other five contain four edges.
The last five rows come from an inclusion-maximal clique in the completion
graph of Definition~\ref{def:completion-graph}(ii).
Lemma~\ref{lem:full-link}(i) gives the corresponding completion rows.
\begingroup
\small
\setlength{\arraycolsep}{2.5pt}
\[
M_{15,10}=\left(\begin{array}{rrrrrrrrrr}
1&1&1&1&1&9&1&1&1&1\\
3&3&3&3&3&11&3&3&3&2\\
5&2&5&5&5&1&4&9&5&3\\
7&5&4&7&7&3&2&11&7&4\\
9&9&2&6&9&5&5&4&8&5\\
10&7&9&4&8&7&7&2&6&7\\
8&8&7&2&6&6&9&5&4&9\\
6&11&8&9&4&4&8&7&2&6\\
4&6&11&10&2&2&6&6&9&8\\
2&4&6&8&10&8&10&8&10&11\\ \hline
7&12&12&8&7&12&2&10&7&12\\
6&10&12&7&3&12&8&12&2&10\\
9&12&12&7&4&10&6&10&8&10\\
5&10&10&7&4&10&5&12&6&10\\
10&12&12&7&4&10&3&12&5&10
\end{array}\right).
\]
\endgroup
Lemma~\ref{lem:full-link}(i) therefore shows that $M_{15,10}$ is a UOM.

\subsection{A matrix with two added columns
\texorpdfstring{$M_{21,9}$}{M21,9}}

This matrix uses the $8\times7$ and $13\times7$ inputs in
Theorem~\ref{thm:row-decomposition-construction}.  Its last two columns are
added as in Definition~\ref{def:added-column-system}.
\begingroup
\small
\setlength{\arraycolsep}{2.8pt}
\[
M_{21,9}=\left(\begin{array}{rrrrrrr|rr}
1&1&1&1&1&1&1&1&1\\
2&3&3&3&3&3&3&3&3\\
3&2&4&5&5&5&5&3&3\\
4&4&2&7&7&7&7&3&3\\
5&5&5&2&4&6&8&1&1\\
6&7&7&4&2&8&6&1&1\\
7&6&8&6&8&2&4&3&3\\
8&8&6&8&6&4&2&1&1\\ \hline
23&21&24&26&28&27&25&2&4\\
25&23&21&24&26&28&27&4&2\\
27&25&23&21&24&26&28&2&4\\
28&27&25&23&21&24&26&4&2\\
26&28&27&25&23&21&24&2&4\\
24&26&28&27&25&23&21&4&2\\
22&22&22&22&22&22&22&4&2\\
21&24&26&26&26&25&26&2&4\\
21&24&28&28&27&25&25&2&4\\
24&21&27&24&25&25&23&4&2\\
21&22&27&26&25&23&27&4&2\\
21&28&28&28&25&25&26&4&2\\
25&21&27&23&27&25&25&4&2
\end{array}\right).
\]
\endgroup
Figure~\ref{fig:two-added-columns} shows this construction with two
decomposition blocks.

\begingroup
\small
\bibliographystyle{unsrt}
\bibliography{cankao}

@article{Ransford_2026,
  title={A 98-qubit trapped-ion quantum computer with all-to-all connectivity},
  volume={655},
  ISSN={1476-4687},
  url={http://dx.doi.org/10.1038/s41586-026-10676-4},
  DOI={10.1038/s41586-026-10676-4},
  number={8121},
  journal={Nature},
  publisher={Springer Science and Business Media LLC},
  author={Ransford, Anthony and Allman, M. S. and Arkinstall, Jake and others},
  year={2026},
  month={6},
  pages={81--86}
}

@article{Horodecki2009Quantum,
  author = {Ryszard Horodecki and Pawe{\l} Horodecki and Micha{\l} Horodecki and Karol Horodecki},
  title = {Quantum entanglement},
  journal = {Reviews of Modern Physics},
  volume = {81},
  number = {2},
  pages = {865--942},
  year = {2009},
  doi = {10.1103/RevModPhys.81.865}
}

@article{Bennett1993Teleporting,
  author = {Charles H. Bennett and Gilles Brassard and Claude Cr{\'e}peau and Richard Jozsa and Asher Peres and William K. Wootters},
  title = {Teleporting an unknown quantum state via dual classical and Einstein-Podolsky-Rosen channels},
  journal = {Physical Review Letters},
  volume = {70},
  number = {13},
  pages = {1895--1899},
  year = {1993},
  doi = {10.1103/PhysRevLett.70.1895}
}

@article{Demianowicz2024Completely,
  author = {Maciej Demianowicz and Kajetan Vogtt and Remigiusz Augusiak},
  title = {Completely entangled subspaces of entanglement depth k},
  journal = {Physical Review A},
  volume = {110},
  number = {1},
  pages = {012403},
  year = {2024},
  doi = {10.1103/PhysRevA.110.012403}
}

@article{Horodecki1998Mixed,
  author = {Micha{\l} Horodecki and Pawe{\l} Horodecki and Ryszard Horodecki},
  title = {Mixed-state entanglement and distillation: Is there a bound entanglement in nature?},
  journal = {Physical Review Letters},
  volume = {80},
  number = {24},
  pages = {5239--5242},
  year = {1998},
  doi = {10.1103/PhysRevLett.80.5239}
}

@article{Augusiak2011Bell,
  author = {Remigiusz Augusiak and J. Stasi{\'n}ska and C. Hadley and J. K. Korbicz and Maciej Lewenstein and Antonio Ac{\'i}n},
  title = {Bell inequalities with no quantum violation and unextendable product bases},
  journal = {Physical Review Letters},
  volume = {107},
  number = {7},
  pages = {070401},
  year = {2011},
  doi = {10.1103/PhysRevLett.107.070401}
}

@article{Alon2001Unextendible,
  author = {Noga Alon and L{\'a}szl{\'o} Lov{\'a}sz},
  title = {Unextendible product bases},
  journal = {Journal of Combinatorial Theory. Series A},
  volume = {95},
  number = {1},
  pages = {169--179},
  year = {2001},
  doi = {10.1006/jcta.2000.3160}
}

@article{Agrawal2019Genuinely,
  author = {Sristy Agrawal and Saronath Halder and Manik Banik},
  title = {Genuinely entangled subspace with all-encompassing distillable entanglement across every bipartition},
  journal = {Physical Review A},
  volume = {99},
  pages = {032335},
  year = {2019},
  doi = {10.1103/PhysRevA.99.032335}
}

@article{Shi2020Unextendible,
  author = {Fei Shi and Xiande Zhang and Lin Chen},
  title = {Unextendible product bases from tile structures and their local entanglement-assisted distinguishability},
  journal = {Physical Review A},
  volume = {101},
  number = {6},
  pages = {062329},
  year = {2020},
  doi = {10.1103/PhysRevA.101.062329}
}

@article{You2023Unextendible,
  author = {Siwen You and Chen Wang and Fei Shi and Sihuang Hu and Yiwei Zhang},
  title = {Unextendible product bases from tile structures in bipartite systems},
  journal = {Journal of Physics A: Mathematical and Theoretical},
  volume = {56},
  number = {1},
  pages = {015303},
  year = {2023},
  doi = {10.1088/1751-8121/aca9b6}
}

@article{He2024Strong,
  author = {Yiyun He and Fei Shi and Xiande Zhang},
  title = {Strong quantum nonlocality and unextendibility without entanglement in n-partite systems with odd n},
  journal = {Quantum},
  volume = {8},
  pages = {1349},
  year = {2024},
  doi = {10.22331/q-2024-06-06-1349}
}

@article{Shi2022Strongly,
  author = {Fei Shi and Mao-Sheng Li and Mengyao Hu and Lin Chen and Man-Hong Yung and Yan-Ling Wang and Xiande Zhang},
  title = {Strongly nonlocal unextendible product bases do exist},
  journal = {Quantum},
  volume = {6},
  pages = {619},
  year = {2022},
  doi = {10.22331/q-2022-01-18-619}
}

@article{Shi2023Graph,
  author = {Fei Shi and Ge Bai and Xiande Zhang and Qi Zhao and Giulio Chiribella},
  title = {Graph-theoretic characterization of unextendible product bases},
  journal = {Physical Review Research},
  volume = {5},
  number = {3},
  pages = {033144},
  year = {2023},
  doi = {10.1103/PhysRevResearch.5.033144}
}

@article{Shi2022Strong,
  author = {Fei Shi and Mao-Sheng Li and Lin Chen and Xiande Zhang},
  title = {Strong quantum nonlocality for unextendible product bases in heterogeneous systems},
  journal = {Journal of Physics A: Mathematical and Theoretical},
  volume = {55},
  number = {1},
  pages = {015305},
  year = {2022},
  doi = {10.1088/1751-8121/ac3c9a}
}

@article{bennett1999unextendible,
  title={Unextendible product bases and bound entanglement},
  author={Bennett, Charles H and DiVincenzo, David P and Mor, Tal and Shor, Peter W and Smolin, John A and Terhal, Barbara M},
  journal={Physical Review Letters},
  volume={82},
  number={26},
  pages={5385},
  year={1999},
  publisher={APS}
}

@article{johnston2014structure,
  title={The structure of qubit unextendible product bases},
  author={Johnston, Nathaniel},
  journal={Journal of Physics A: Mathematical and Theoretical},
  volume={47},
  number={42},
  pages={424034},
  year={2014},
  publisher={IOP Publishing}
}

@article{chen2018multiqubit,
  title={Multiqubit UPB: the method of formally orthogonal matrices},
  author={Chen, Lin and {\DJ}okovi{\'c}, Dragomir {\v{Z}}.},
  journal={Journal of Physics A: Mathematical and Theoretical},
  volume={51},
  number={26},
  pages={265302},
  year={2018},
  publisher={IOP Publishing}
}

@article{wang2020construction,
  title={The construction of 7-qubit unextendible product bases of size ten},
  author={Wang, Kai and Chen, Lin},
  journal={Quantum Information Processing},
  volume={19},
  pages={1--17},
  year={2020},
  publisher={Springer}
}

@article{SunWang2024,
  author  = {Yize Sun and Baoshan Wang},
  title   = {The construction and weakly local indistinguishability of
             multiqubit unextendible product bases},
  journal = {Quantum Information Processing},
  volume  = {23},
  number  = {5},
  pages   = {191},
  year    = {2024},
  doi     = {10.1007/s11128-024-04379-w}
}

@article{chen2018nonexistence,
  title={Nonexistence of {$n$}-qubit unextendible product bases of size {$2^n-5$}},
  author={Chen, Lin and {\DJ}okovi{\'c}, Dragomir {\v{Z}}.},
  journal={Quantum Information Processing},
  volume={17},
  pages={24},
  year={2018},
  publisher={Springer}
}

@article{johnston2013minimum,
  title={The minimum size of qubit unextendible product bases},
  author={Johnston, Nathaniel},
  journal={arXiv preprint arXiv:1302.1604},
  year={2013}
}

@article{bennett1999quantum,
  title={Quantum nonlocality without entanglement},
  author={Bennett, Charles H and DiVincenzo, David P and Fuchs, Christopher A and Mor, Tal and Rains, Eric and Shor, Peter W and Smolin, John A and Wootters, William K},
  journal={Physical Review A},
  volume={59},
  number={2},
  pages={1070},
  year={1999},
  publisher={APS}
}

@article{bravyi2004unextendible,
  title={Unextendible product bases and locally unconvertible bound entangled states},
  author={Bravyi, S. B.},
  journal={Quantum Information Processing},
  volume={3},
  pages={309--329},
  year={2004},
  publisher={Springer}
}

@article{bhat2006completely,
  title={A completely entangled subspace of maximal dimension},
  author={Bhat, B. V. Rajarama},
  journal={International Journal of Quantum Information},
  volume={4},
  number={02},
  pages={325--330},
  year={2006},
  publisher={World Scientific}
}

@article{wang20194,
  title={4 * 4 4$\times$ 4 unextendible product basis and genuinely entangled space},
  author={Wang, Kai and Chen, Lin and Zhao, Lijun and Guo, Yumin},
  journal={Quantum Information Processing},
  volume={18},
  pages={1--28},
  year={2019},
  publisher={Springer}
}

@article{demianowicz2018unextendible,
  title={From unextendible product bases to genuinely entangled subspaces},
  author={Demianowicz, Maciej and Augusiak, Remigiusz},
  journal={Physical Review A},
  volume={98},
  number={1},
  pages={012313},
  year={2018},
  publisher={APS}
}

@article{chen2017orthogonal,
  title={Orthogonal product bases of four qubits},
  author={Chen, Lin and {\DJ}okovi{\'c}, Dragomir {\v{Z}}.},
  journal={Journal of Physics A: Mathematical and Theoretical},
  volume={50},
  number={39},
  pages={395301},
  year={2017},
  publisher={IOP Publishing}
}

@article{chen2015minimum,
  title={The minimum size of unextendible product bases in the bipartite case (and some multipartite cases)},
  author={Chen, Jianxin and Johnston, Nathaniel},
  journal={Communications in Mathematical Physics},
  volume={333},
  pages={351--365},
  year={2015},
  publisher={Springer}
}

@article{PhysRevA.85.042113,
  title = {Tight Bell inequalities with no quantum violation from qubit unextendible product bases},
  author = {Augusiak, R. and Fritz, T. and Kotowski, Ma. and Kotowski, Mi. and Paw{\l}owski, M. and Lewenstein, M. and Ac{\'i}n, A.},
  journal = {Phys. Rev. A},
  volume = {85},
  issue = {4},
  pages = {042113},
  numpages = {12},
  year = {2012},
  month = {Apr},
  publisher = {American Physical Society},
  doi = {10.1103/PhysRevA.85.042113},
  url = {https://link.aps.org/doi/10.1103/PhysRevA.85.042113}
}

@article{tura2012four,
  title={Four-qubit entangled symmetric states with positive partial transpositions},
  author={Tura, J and Augusiak, R and Hyllus, P and Ku{\'s}, M and Samsonowicz, J and Lewenstein, M},
  journal={Physical Review A},
  volume={85},
  number={6},
  pages={060302},
  year={2012},
  publisher={APS}
}

@article{divincenzo2003unextendible,
  title={Unextendible product bases, uncompletable product bases and bound entanglement},
  author={DiVincenzo, David P and Mor, Tal and Shor, Peter W and Smolin, John A and Terhal, Barbara M},
  journal={Communications in Mathematical Physics},
  volume={238},
  pages={379--410},
  year={2003},
  publisher={Springer}
}

@article{feng2006unextendible,
  title={Unextendible product bases and 1-factorization of complete graphs},
  author={Feng, Keqin},
  journal={Discrete applied mathematics},
  volume={154},
  number={6},
  pages={942--949},
  year={2006},
  publisher={Elsevier}
}

@article{HoffmanRodger1992,
  author  = {D. G. Hoffman and C. A. Rodger},
  title   = {The chromatic index of complete multipartite graphs},
  journal = {Journal of Graph Theory},
  volume  = {16},
  pages   = {159--163},
  year    = {1992},
  doi     = {10.1002/jgt.3190160207}
}

@article{Vizing1964,
  author  = {V. G. Vizing},
  title   = {On an estimate of the chromatic class of a {$p$}-graph},
  journal = {Diskretnyi Analiz},
  volume  = {3},
  pages   = {25--30},
  year    = {1964},
  note    = {MR0180505}
}

@book{LovaszPlummer1986,
  author    = {L{\'a}szl{\'o} Lov{\'a}sz and Michael D. Plummer},
  title     = {Matching Theory},
  series    = {North-Holland Mathematics Studies},
  volume    = {121},
  publisher = {North-Holland},
  address   = {Amsterdam},
  year      = {1986},
  isbn      = {978-0-444-87916-5}
}

@article{ChenChen2026,
  author  = {J. Chen and L. Chen},
  title   = {On the existence of {$13\times9$} unextendible orthogonal matrices},
  journal = {International Journal of Theoretical Physics},
  volume  = {65},
  pages   = {147},
  year    = {2026},
  doi     = {10.1007/s10773-026-06352-y}
}

@misc{UPBOnlineSource,
  key = {Cheng, Caohan},
  howpublished = {Cheng, C.: Online source for {Systematic construction of multiqubit unextendible product bases}. {GitHub}. \url{https://github.com/chechche123/upb-research}}
}
\endgroup
\end{document}